\theoremstyle{plain}
\newtheorem{assumption}{\protect\assumptionname}
\theoremstyle{plain}
\newtheorem{prop}{\protect\propositionname}
\theoremstyle{definition}
\theoremstyle{definition}
\theoremstyle{plain}
\newtheorem{lem}{\protect\lemmaname}
\theoremstyle{plain}
\newtheorem{thm}{\protect\theoremname}
\theoremstyle{remark}
\theoremstyle{plain}
\newcolumntype{P}[1]{>{\RaggedRight\arraybackslash}p{#1}}
\newcommand{\vertiii}[1]{{\left\vert\kern-0.05ex\left\vert\kern-0.05ex\left\vert #1 
		\right\vert\kern-0.05ex\right\vert\kern-0.05ex\right\vert}}
\DeclareMathOperator*{\argmin}{arg\,min}
\DeclareSymbolFont{fouriersymbols}{FMS}{futm}{m}{n}
\DeclareSymbolFont{fourierlargesymbols}{FMX}{futm}{m}{n}
\DeclareMathDelimiter{\VERT}{\mathord}{fouriersymbols}{152}{fourierlargesymbols}{147}
\providecommand{\corollaryname}{Corollary}
\providecommand{\definitionname}{Definition}
\providecommand{\examplename}{Example}
\providecommand{\lemmaname}{Lemma}
\providecommand{\remarkname}{Remark}
\providecommand{\theoremname}{Theorem}
\providecommand{\assumptionname}{Assumption}
\providecommand{\propositionname}{Proposition}
\definecolor{Red}{rgb}{1.0, 0.0, 0.0}   
\definecolor{Teal}{rgb}{0.0, 0.5, 0.5}  
\global\long\def\ddd{,\ldots,}%
\global\long\def\bB{\mathbf{B}}%
\global\long\def\bA{\mathbf{A}}%
\global\long\def\bS{\mathbf{S}}%
\global\long\def\bTheta{\boldsymbol{\Theta}}%
\global\long\def\bbeta{\boldsymbol{\beta}}%
\global\long\def\bdelta{\boldsymbol{\delta}}%
\global\long\def\hTheta{\widehat{\boldsymbol{\Theta}}}%
\global\long\def\bepsilon{\boldsymbol{\varepsilon}}%
\global\long\def\bSigma{\boldsymbol{\Sigma}}%
\global\long\def\bGamma{\boldsymbol{\Gamma}}%
\global\long\def\bDelta{\boldsymbol{\Delta}}%
\global\long\def\Tr{\mathrm{Tr}}%
\global\long\def\bY{\mathbf{Y}}%
\global\long\def\bM{\mathbf{M}}%
\global\long\def\bN{\mathbf{N}}%
\global\long\def\tr{\mathrm{Tr}}%
\global\long\def\bX{\mathbf{X}}%
\global\long\def\bW{\mathbf{W}}%
\global\long\def\bZ{\mathbf{Z}}%
\global\long\def\ba{\mathbf{a}}%
\global\long\def\bx{\mathbf{x}}%
\global\long\def\by{\mathbf{y}}%
\global\long\def\bz{\mathbf{z}}%
\global\long\def\bs{\mathbf{s}}%
\global\long\def\bm{\mathbf{m}}%
\global\long\def\R{\mathbb{R}}%
\global\long\def\E{\mathbb{E}}%
\global\long\def\Pr{\mathrm{Pr}}%
\global\long\def\sn{\sum_{i=1}^{N}}%
\global\long\def\ind{\mathbb{I}}%
\global\long\def\diag{\operatorname{diag}}%
\global\long\def\sgn{\operatorname{sgn}}%
\global\long\def\prox{\operatorname{prox}}%
\global\long\def\Proj{\operatorname{Proj}}%
\global\long\def\argmin{\operatorname*{arg\,min}}%
\global\long\def\card{\operatorname{card}}%
\global\long\def\vec{\operatorname{vec}}%
\begin{document}
\global\long\def\ddd{,\ldots,}%

\global\long\def\bB{\mathbf{B}}%

\global\long\def\bA{\mathbf{A}}%

\global\long\def\bS{\mathbf{S}}%

\global\long\def\hTheta{\widehat{\boldsymbol{\Theta}}}%

\global\long\def\bTheta{\boldsymbol{\Theta}}%

\global\long\def\bbeta{\boldsymbol{\beta}}%

\global\long\def\bdelta{\boldsymbol{\delta}}%

\global\long\def\bepsilon{\boldsymbol{\varepsilon}}%

\global\long\def\bSigma{\boldsymbol{\Sigma}}%

\global\long\def\bGamma{\boldsymbol{\Gamma}}%

\global\long\def\bDelta{\boldsymbol{\Delta}}%

\global\long\def\tr{\mathrm{Tr}}%

\global\long\def\bX{\mathbf{X}}%

\global\long\def\bY{\mathbf{Y}}%

\global\long\def\bW{\mathbf{W}}%

\global\long\def\bM{\mathbf{M}}%

\global\long\def\bN{\mathbf{N}}%

\global\long\def\bZ{\mathbf{Z}}%

\global\long\def\ba{\mathbf{a}}%

\global\long\def\bx{\mathbf{x}}%

\global\long\def\by{\mathbf{y}}%

\global\long\def\bz{\mathbf{z}}%

\global\long\def\bs{\mathbf{s}}%

\global\long\def\bm{\mathbf{m}}%

\global\long\def\R{\mathbb{R}}%

\global\long\def\E{\mathbb{E}}%

\global\long\def\Pr{\mathrm{Pr}}%

\global\long\def\sn{\sum_{i=1}^{N}}%

\global\long\def\ind{\mathbb{I}}%

\global\long\def\diag{\operatorname{diag}}%

\global\long\def\card{\operatorname{card}}%

\global\long\def\vec{\operatorname{vec}}%

\global\long\def\sgn{\operatorname{sgn}}%

\global\long\def\prox{\operatorname{prox}}%

\global\long\def\Proj{\operatorname{Proj}}%

\global\long\def\argmin{\operatorname*{arg\,min}}%

\title{Multivariate regression with missing response data for modelling regional DNA methylation QTLs}
\author{Shomoita Alam \thanks{Co-first author, Department of Mathematics and Statistics, McGill
University, Montreal, QC, Canada (shomoita.alam@mail.mcgill.ca)}, 
Yixiao Zeng \thanks{Co-first author, Lady Davis Institute for Medical Research, Jewish General Hospital, Montreal, QC, Canada (yixiao.zeng@mail.mcgill.ca)},
Sasha Bernatsky \thanks{Department of Medicine, McGill University, The Research Institute of the McGill University Health Centre, Montreal, Quebec, Canada (sasha.bernatsky@mcgill.ca)}, Marie Hudson \thanks{Lady Davis Institute for Medical Research, Jewish General Hospital, Department of Medicine, McGill University, Montreal, Quebec, Canada (marie.hudson@mcgill.ca)}, Inés Colmegna \thanks{Department of Medicine, McGill University, The Research Institute of the McGill University Health Centre, Montreal, Quebec, Canada (ines.colmegna@mcgill.ca)}, \\
David A. Stephens \thanks{Co-corresponding author, Department of Mathematics and Statistics, McGill
University, Montreal, QC, Canada (david.stephens@mcgill.ca)}, 
Celia M. T. Greenwood \thanks{Co-corresponding author, Lady Davis Institute for Medical Research, Jewish General Hospital, Montreal, QC, Canada (celia.greenwood@mcgill.ca)}, 
Archer Y. Yang \thanks{Co-corresponding author, Department of Mathematics and Statistics, McGill
University; Mila - Quebec AI institute, Montreal, QC, Canada (archer.yang@mcgill.ca)}}
\date{}
\maketitle

\begin{abstract}
Identifying genetic regulators of DNA methylation (mQTLs) with multivariate models enhances statistical power, but is challenged by missing data from bisulfite sequencing. Standard imputation-based methods can introduce bias, limiting reliable inference. We propose \texttt{missoNet}, a novel convex estimation framework that jointly estimates regression coefficients and the precision matrix from data with missing responses. By using unbiased surrogate estimators, our three-stage procedure avoids imputation while simultaneously performing variable selection and learning the conditional dependence structure among responses. We establish theoretical error bounds, and our simulations demonstrate that \texttt{missoNet} consistently outperforms existing methods in both prediction and sparsity recovery. In a real-world mQTL analysis of the CARTaGENE cohort, \texttt{missoNet} achieved superior predictive accuracy and false-discovery control on a held-out validation set, identifying known and credible novel genetic associations. The method offers a robust, efficient, and theoretically grounded tool for genomic analyses, and is available as an R package.
\end{abstract}

\section{Introduction}

DNA methylation in mammals involves adding a methyl group to a cytosine. When this occurs at  a CpG dinucleotide, methylation will occur on both DNA strands \citep{smith2013dna}. This epigenetic modification is crucial for regulating gene expression, either by inhibiting transcription factor binding or by recruiting chromatin-modifying proteins \citep{jones2012functions}. There is increasing evidence that perturbations in DNA methylation play a key role in how genetic and environmental factors contribute to complex diseases \citep{robertson2005dna}. Therefore, identifying the genetic determinants that influence DNA methylation -- referred to as methylation quantitative trait loci (mQTLs) -- is essential for understanding epigenetic regulation \citep{perzel2021genome}.

Most mQTL studies test each CpG site on its own, fitting a separate regression of methylation on genotype for every site \citep{hannon2018leveraging}. However, many CpG sites are clustered within specific genomic regions that may hold greater biological significance when examined collectively rather than in isolation \citep{jones2012functions}. For example, several CpGs in the alcohol-dehydrogenase gene (\textit{ADH4}) promoter share the same local genetic control \citep{zhang2014identification}, and dozens of correlated CpGs span the Major Histocompatibility Complex (MHC) on chromosome 6 under complex genetic regulation \citep{villicana2023genetic}. Treating a small neighbourhood of CpGs as a multivariate outcome, rather than running many separate tests, can boost statistical power and reveal coherent biological mechanisms \citep{gaunt2016systematic}. This article therefore adopts a region-based multivariate mQTL framework to take advantage of these spatial correlations.

\subsection{Motivation for our method}

We propose a multivariate regression model to analyze regional methylation data. In this analytical framework, methylation levels at multiple CpGs within a region are treated as multi-dimensional responses, and the corresponding cis-proximal SNPs as covariates. Typically, only a small subset of candidate SNPs in a regulatory region influences methylation. However, linkage disequilibrium—strong correlation among SNPs—can produce numerous spurious marginal associations. Our model adopts sparse penalization methods to select important SNPs that affect variations in methylation patterns across multiple correlated CpG sites, and simultaneously estimates their effects. Meanwhile, our method can model the partial correlation structure of the methylation levels at  multiple CpG sites, providing meaningful insights into the underlying epigenetic mechanisms. 

There has been extensive literature on sparse penalized multivariate regression methods to enable variable selection of covariates, as well as estimation of the conditional dependencies among the response variables. For example, \citet{yin2013adjusting} proposed a two-stage estimation procedure in a multivariate sub-Gaussian model to adjust for the covariate effects on the mean of the random vector, in order to obtain a more precise and interpretable estimate of the precision matrix. \citet{cai2013covariate} took a similar approach to estimate covariate-adjusted precision matrix, but without making a multivariate normal assumption on the error distribution. \citet{wang2015joint} proposed a method that decomposes the multivariate regression problem into a series of penalized conditional log-likelihood of each response conditional on the covariates and other responses.  

While these methods assume fully observed data, real-world DNA methylation datasets generated by sequencing techniques -- particularly bisulfite sequencing -- often contain missing values \citep{ singer2019practical}. Missing calls arise when reads have low depth or quality, when mapping is ambiguous, or when laboratories follow different library designs or protocols \citep{luo2023recall, ross2022batch}. For example, Bisulfite conversion -- the key step that distinguishes methylated from unmethylated cytosines -- can break DNA and lower coverage, especially in samples with limited or degraded input, leaving gaps at many CpGs \citep{olova2018comparison, simons2025comparative}. In addition, GC-rich promoters and repetitive elements are intrinsically hard to sequence; their complex structures cause systematic drop-outs that cluster within these loci \citep{pappalardo2021losing, ross2013characterizing}. 
Illumina BeadChip arrays report fewer missing values, but their probes are sparse and grouped in small clusters, so strong regional correlations are rare; for that reason, the present work focuses on sequencing-based datasets and does not analyse BeadChip data further \citep{illumina_methylation_guide}.

To address the missingness issue, ad hoc imputation methods have been proposed to fill in missing values prior to analysis \citep{little2019statistical, schafer1997analysis}, and likelihood-based methods using the EM algorithm can also be used in some situations \citep{stadler2012missing}. However, when the number of CpG sites greatly exceeds the number of samples, ad‑hoc imputation and standard EM‑based likelihood methods tend to become unstable and yield less reliable estimates \citep{wang2015high, brini2024missing}.

Much of the research on high-dimensional regression with missing data has focused on univariate outcomes, with particular attention to estimating the graphical structure of correlated predictors. For instance, \citet{stadler2012missing} introduced their Expectation-Maximization (EM)-based approach for estimating sparse inverse covariance matrices when covariates are missing and extended this method to sparse linear regression with missing predictors in the univariate outcome setting. In the errors-in-variables framework, missing data are modeled as multiplicative errors, and this approach has been extensively studied \citep{hwang1986multiplicative, Multiplicative_error}. \citet{loh2012} addressed high-dimensional univariate sparse regression with corrupted covariates by proposing unbiased surrogate estimators and solving the problem using a non-convex projected gradient descent algorithm, which converges to a near-global minimizer under certain conditions. However, this method relies on restrictive side constraints and assumes prior knowledge of an unknown constant, which can limit practical applicability. To address these limitations, \citet{datta2017} proposed the convex conditioned Lasso (CoCoLasso). Their method defines a nearest positive semi-definite matrix projection operator, ensuring the underlying optimization problem remains convex and results in superior performance compared to the non-convex approach. However, there is a notable gap in the literature regarding missing responses in multivariate regression settings, particularly when the responses are correlated.

\subsection{Summary of the proposed method}

In this paper, we propose a multivariate regression framework to jointly estimate the regression coefficients and precision matrix in the presence of missing responses. Unlike methods designed for complete data, our approach inherently handles missing data without requiring imputation, offering a distinct advantage. Our approach follows a three-stage estimation procedure, where each stage solves a convex optimization problem. In the first stage, we obtain an initial estimate of the regression coefficients under the assumption that the outcomes are independent. We leverage the unbiased surrogate estimator proposed by \citet{loh2012} to correct for the impact of missing data in the responses. Assuming independent outcomes, we solve the problem for each response individually using any suitable penalized least squares regression method.  In the second stage, we use the initial estimate of the regression coefficients as a plug-in to estimate the precision matrix. A key challenge here is that the plug-in covariance estimate may not be positive semi-definite when responses contain missing values, making the objective function non-convex or unbounded from below. To overcome this, we adopt the projected unbiased surrogate estimate from \citet{datta2017}, which ensures that the covariance matrix is positive semi-definite, transforming the problem into a convex one. This allows us to efficiently solve it using any existing graphical Lasso algorithm. In the third and final stage, we refine the estimate of the regression coefficients by incorporating the precision matrix obtained from the second stage. To efficiently perform this refinement, we developed a proximal gradient descent algorithm for estimating the regression coefficients.

On the theoretical side, we provide non-asymptotic bounds for the statistical errors at each stage of the algorithm. These bounds quantify how the statistical error at each stage is influenced by the degree of missingness, the error variance, the dimensions of the responses and covariates, and the sample size. To validate these theoretical guarantees, we conducted simulations that closely align with the predicted bounds. On the computational side, we developed an efficient proximal gradient descent algorithm and implemented it as a user-friendly \texttt{R} package, \texttt{missoNet}. We benchmarked its performance against several established methods, including multivariate regression with covariance estimation (\textit{MRCE}) \citep{rothman2010sparse}, separate Lasso with imputation, and conditional graphical Lasso with missing values (\textit{cglasso}) \citep{augugliaro2023cglasso}. Our method demonstrated strong performance across various simulation scenarios, consistently outperforming the competing approaches. \texttt{missoNet} also exhibited significant computational efficiency, achieving faster runtimes than other methods in our comparisons.

Most of our work uses targeted bisulfite sequencing reads from a subset of participants in  \href{www.cartagene.qc.ca}{the CARTaGENE Population Biobank}. That cohort has been widely described and gives access to rich health and genetic data \citep{awadalla2013cohort}. The original project asked whether methylation at differentially methylated cytosines (DMCs) was tied to anti-citrullinated-protein antibodies (ACPA), a key marker for rheumatoid arthritis \citep{shao2019rheumatoid, zeng2021thousands}. Here, we broaden the question: do nearby single nucleotide polymorphisms (SNPs) also shape those methylation patterns? 
To investigate this, we split the genome into small blocks of neighbouring CpG sites and tested each block against SNPs that fell inside the same window, focusing on \textit{cis} effects \citep{do2017genetic}. 
We ran our methodology \texttt{missoNet} and competing tools including single-site tests and existing multivariate mappers such as \texttt{MRCE} and \texttt{cglasso}. 
Simulations show that when we mixed true causal SNPs with synthetic variants, \texttt{missoNet} was better at telling the difference, showing higher power and lower false-positive rates than the other methods over a wide range of settings.
Moreover, most SNP–CpG links flagged by \texttt{missoNet} are already listed in large mQTL catalogues \citep{gaunt2016systematic}, confirming their credibility. In an independent validation set, SNPs selected by \texttt{missoNet} led to higher methylation prediction accuracy compared to alternative approaches.


The paper is organized as follows. In Section 2, we formulate the problem of jointly estimating the regression coefficients and the precision matrix for multivariate regression with missing data. Section 3 presents the three-stage estimation procedure, providing a detailed discussion of each step and addressing the associated challenges. In Section 4, we develop theoretical bounds for the statistical errors at each stage of the estimation process. Section 5 describes the simulation scenarios used to validate the theoretical results and reports comparative performance across different methods. Section 6 demonstrates the effectiveness of our method for modeling regional mQTLs. Finally, Section 7 summarizes the findings, highlights the advantages of our approach, and outlines directions for future work.

\subsection*{Notation:}

In this paper, we denote a random variable with an uppercase letter
and a data matrix with an uppercase bold letter, for example, $A$ is
a random variable, and $\mathbf{A}$ is a data matrix. For a matrix $\mathbf{A}$, we denote by $\mathbf{A}\succeq0$ when
$\mathbf{A}$ is positive semi-definite. Let $\VERT\mathbf{A}\VERT_{1}$
be the operator norm induced by $\ell_{1}$ norm for vectors, which
can be computed by $\VERT\mathbf{A}\VERT_{1}=\max_{j}\sum_{i}\mid a_{ij}\mid$,
i.e., the maximum absolute column sum of the matrix. Denoted by $\VERT\mathbf{A}\VERT_{2}$
the operator norm that can be computed as the greatest singular value
of $\mathbf{A}$, i.e., $\VERT\mathbf{A}\VERT_{2}=\max_{j}\sigma_{j}(\mathbf{A})$.
Let $\VERT\mathbf{A}\VERT_{\infty}$ be the operator norm induced
by $\ell_{\infty}$ norm, which can be computed by $\VERT\mathbf{A}\VERT_{\infty}=\max_{i}\sum_{j}\mid a_{ij}\mid$,
i.e., the maximum absolute row sum of the matrix. Let $\VERT\mathbf{A}\VERT_{1,1}=\sum_{i,j}\mid a_{ij}\mid$
be the elementwise $\ell_{1}$-norm, $\VERT\mathbf{A}\VERT_{F}=\sqrt{\sum_{i,j}\mid a_{ij}\mid^{2}}$
be the Frobenius norm, and $\VERT\mathbf{A}\VERT_{\max}=\max_{i,j}\mid a_{ij}\mid$
be the elementwise maximum norm. Let $\Lambda_{\min}(\mathbf{A})$
and $\Lambda_{\max}(\mathbf{A})$ denote the smallest and largest
eigenvalues of $\mathbf{A}$. 

For two matrices $\mathbf{A}=(a_{ij})$ and $\mathbf{B}=(b_{ij})$,
 define $\mathbf{A}\odot\mathbf{B}=(a_{ij}b_{ij})$ as their elementwise
product, and $\mathbf{A}\oslash\mathbf{B}=(a_{ij}/b_{ij})$ as their
elementwise division. We denote the $(k,l)$th element of a matrix $\bM$ by
$(\bM)_{kl}$, its $k$th row by $(\bM)_{k\bullet}$ and $l$th column
by $(\bM)_{\bullet l}$.  For any integer $a$,
we define $[a]\coloneqq \left\{ 1,\ldots,a\right\} $ as a set of variable indices. For
instance, for vector $\mathbf{a}\in\mathbb{R}^{p}$ and index set $I\subset[p]$, we use $\mathbf{a}_{I}$
to denote the same vector as $\mathbf{a}$ but with elements $[p]\backslash I$
set to zero. For matrix $\mathbf{A}\in \mathbb{R}^{p\times q}$ and  index set $I\subset[p]\times[q]$, we define $\mathbf{A}_{I}$ similarly, i.e., $(\mathbf{A}_I)_{ij}=(\mathbf{A})_{ij}$ for $(i,j)\in I$ and $(\mathbf{A}_I)_{ij}=0$ for $(i,j)\in ([p]\times[q]) \backslash I$.

Let $\mathbb{Q}$ be a generic Euclidean space. Let $\bM$ and $\bN$
be any conformable matrices or vectors in $\mathbb{Q}$. Let us define
the inner product $\langle\cdot,\cdot\rangle$ as $\langle\bM,\bN\rangle=\tr(\bM^{\top}\bN)$.
For a norm $\mathcal{R}$ defined on $\mathbb{Q}$, the dual norm
$\mathcal{R}^{*}$ can be defined by 
\[
\mathcal{R}^{*}(\bM)\equiv\sup_{\bN\in\mathbb{Q}\backslash\left\{ 0\right\} }\frac{\langle\bM,\bN\rangle}{\mathcal{R}(\bN)}.
\]
Let 
$S_{l}(\mathbf{B}^{*})\coloneqq\{k:\beta_{kl}^{*}\neq0\ \text{for }k\in(1,\ldots,p)\}$ be the support of the $l$th column of matrix $\mathbf{B}^{*}$
for $l=1\ddd q$. The corresponding cardinality is $s_{l}\coloneqq|S_{l}|$. 

\section{Methods}

Consider a $q$-dimensional random vector $Y \coloneqq (Y_{1}, \dots, Y_{q})^{\top} \in \mathbb{R}^{q}$ and a $p$-dimensional covariate vector $X \coloneqq (X_{1}, \dots, X_{p})^{\top} \in \mathbb{R}^{p}$. Assuming that the underlying random vector $Y$ follows a multivariate Gaussian distribution, we can model a linear relationship between $Y$ and $X$ as
\begin{equation}
    Y = \bB^{* \top} X + \varepsilon \label{eq:model1}
\end{equation}
where $\mathbf{B}^{*} \in \mathbb{R}^{p \times q}$ is the matrix of true regression coefficients, and $\varepsilon \in \mathbb{R}^{q}$ is the random error vector following a multivariate Gaussian distribution $N(\mathbf{0}, \bSigma_{\varepsilon\varepsilon}^{*})$, with $\bSigma_{\varepsilon\varepsilon}^{*} \in \mathbb{R}^{q \times q}$. Assuming $Y$ and $X$ are centered, the intercept term is omitted.

The model in \eqref{eq:model1} implies that the regression function has the form $\mathbb{E}(Y|X) = \bB^{* \top} X$ and that the conditional covariance $\mathrm{Cov}(Y|X) = \bSigma_{\varepsilon\varepsilon}^{*}$. Typically, the goal is to estimate the coefficient matrix $\bB^{*}$ and the precision matrix $\bTheta_{\varepsilon\varepsilon}^{*} \coloneqq (\bSigma^{*}_{\varepsilon\varepsilon})^{-1}$ using $n$ i.i.d. observations of $(Y, X)$.

Under the multivariate normal assumption, the precision matrix $\bTheta_{\varepsilon\varepsilon}^{*}$ represents a conditional Gaussian graphical model \citep{lauritzen1996graphical, friedman2008sparse, yuan2007model}, where a zero off-diagonal element indicates conditional independence among the responses. Since $\bTheta_{\varepsilon\varepsilon}^{*}$ captures the conditional dependencies among the response variables, it provides a highly interpretable network structure.

When covariates are high-dimensional, i.e., $p\gg n$, the estimation problem becomes challenging. For the case where all the errors $\varepsilon$'s in  \eqref{eq:model1} are uncorrelated, i.e., all off-diagonal elements of $\bSigma_{\varepsilon\varepsilon}^{*}$ are zeros, \citet{yuan2007dimension} proposed to impose a constraint that encourages sparsity on the singular values of $\widehat{\bB}$ and hence reducing the rank of $\widehat{\bB}$. There are also some other existing methods for the uncorrelated error case using certain group-wise regularization approaches \citep{turlach2005simultaneous, peng2010regularized, obozinski2011support}.

\citet{rothman2010sparse} shows that the estimation accuracy of $\bB^{*}$ can be improved by considering the dependence among outcomes. Under the correlated outcomes case, one needs to estimate both $\bB^{*}$ and $\bTheta_{\varepsilon\varepsilon}^{*}$. They proposed the sparse multivariate regression with covariance estimation (MRCE) to jointly estimate $\mathbf{B}^{*}$ and $\bTheta_{\varepsilon\varepsilon}^{*}$ by minimizing the negative log-likelihood with $\ell_{1}$ penalization as follows: 
\begin{equation}
(\widehat{\bTheta},\widehat{\bB})= \argmin_{\bTheta\succeq0,\bB} \tr\left[\frac{1}{2n}(\bY-\bX\bB)^{\top}(\bY-\bX\bB)\bTheta\right]-\frac{1}{2}\log\det(\bTheta)+\lambda_{\bTheta}\VERT\bTheta\VERT_{1,\mathrm{off}}+\lambda_{\bB}\VERT\bB\VERT_{1,1} \label{eq:mrcesol}
\end{equation}
with $\VERT\bTheta\VERT_{1,\mathrm{off}}=\sum_{j\neq j'}|\Theta_{jj'}|$,
$\VERT\mathbf{B}\VERT_{1,1}=\sum_{j,k}|B_{jk}|$, and $\lambda_{\bTheta},\lambda_{\bB}\geq0$
are tuning parameters controlling the sparsity in $\widehat{\bTheta}$
and $\widehat{\bB}$, respectively.

Suppose we have $n$ independent and identically distributed observations
from some joint distribution $(Y,X)$.
In matrix notation, we can rewrite \eqref{eq:model1} as a model of
$n$ stacked observations
\begin{equation}
	\bY=\bX\mathbf{B}^{*}+\boldsymbol{\varepsilon},\label{eq:main_model}
\end{equation}
where $\bX=(\bx_{1},\ldots,\bx_{n})^{\top}\in\mathbb{R}^{n\times p}$ with $\bx_{i}=(x_{i1},\ldots,x_{ip})^{\top}\in\mathbb{R}^{p}$ and $\bY=(\by_{1},\ldots,\by_{n})^{\top}\in\mathbb{R}^{n\times q}$ with $\mathbf{y}_{i}=(y_{i1},\ldots,y_{iq})^{\top}\in\mathbb{R}^{q}$ denote the data matrices for the predictors and responses, respectively, and $\boldsymbol{\varepsilon}=(\bepsilon_{1},\ldots,\bepsilon_{n})^{\top}\in\mathbb{R}^{n\times q}$ with $\bepsilon_{i}=(\varepsilon_{i1},\ldots,\varepsilon_{iq})^{\top}\in\mathbb{R}^{q}$ denotes the matrix of random noises. We assume $\mathbb{E}(\boldsymbol{\varepsilon}|\bX)=\mathbf{0}$ and also $\mathrm{Cov}(\boldsymbol{\varepsilon}|\bX)=\bSigma_{\varepsilon\varepsilon}^{*}$. The true regression coefficient is $\mathbf{B}^{*}\in\mathbb{R}^{p\times q}$ with the element in the $k$th row and the $j$th column being $\beta_{kj}^{*}$. We assume that the design matrix $\bX$ has normalized columns, that is, $(1/n)\sum_{i=1}^{n}x_{ij}^{2}=1$ for every $j=1,\ldots,p$.

The goal is to estimate the coefficient matrix $\bB^{*}$ and the precision matrix $\bTheta_{\varepsilon\varepsilon}^{*}\coloneqq(\bSigma_{\varepsilon\varepsilon}^{*})^{-1}$ using observations $(\bY,\bX)$.  When there are missing values in the response data $\bY$, one cannot directly estimate $\bTheta_{\varepsilon\varepsilon}^{*}$ and $\mathbf{B}^{*}$ using the MRCE method as in \eqref{eq:mrcesol}. To see this, we rewrite the objective function in \eqref{eq:mrcesol} as 
\begin{align}
F(\bTheta,\bB) \coloneqq \Tr\left[\frac{1}{2}(\mathbf{S}_{yy}-2{\bS}_{xy}^{\top}\bB+\mathbf{B}^{\top}\mathbf{S}_{xx}\mathbf{B})\bTheta\right]-\frac{1}{2}\log\det(\bTheta)+\lambda_{\bTheta}\VERT\bTheta\VERT_{1,\mathrm{off}}+\lambda_{B}\VERT\mathbf{B}\VERT_{1,1},\label{eq:mrceexpan}
\end{align}
where $\mathbf{S}_{yy}=\frac{1}{n}\bY^{\top}\bY$, $\mathbf{S}_{xy}=\frac{1}{n}\bX^{\top}\bY$
and $\mathbf{S}_{xx}=\frac{1}{n}\bX^{\top}\bX$ are empirical covariance
matrices, which are unbiased estimators of $\bSigma_{yy}^{*}=\mathbb{E}[YY^{\top}]$,
$\bSigma_{xy}^{*}=\mathbb{E}[XY^{\top}]$ and $\bSigma_{xx}^{*}=\mathbb{E}[XX^{\top}]$,
respectively. Since the data matrix $\bY$ is not fully observed, $\mathbf{S}_{yy}$
and $\mathbf{S}_{xy}$ based on the non-missing observations might be biased in \eqref{eq:mrceexpan}. In a fully observed and noiseless setting, the sample covariance is at least guaranteed to be positive semi-definite. However, with missing data, the sample covariance matrix may not be positive semi-definite. In fact, in a high-dimensional setting ($p > n$), the sample covariance is guaranteed to have negative eigenvalues and therefore, be indefinite. Moreover, the objective function is unbounded from below when the sample covariance has a negative eigenvalue, so an optimum may not necessarily exist.

Let $\bY=(y_{ij})\in\mathbb{R}^{n\times q}$ be the underlying unobserved true response matrix, and $\bZ=(z_{ij})\in\mathbb{R}^{n\times q}$ be the  observed response matrix with missingness. We consider a missing completely at random case, where we observe $z_{ij}=y_{ij}$ with probability $1-\rho_{j}$ and zero
otherwise, i.e., each element of $\bY$ in the $j$th column has probability $\rho_{j}$ of being missing. This can be modeled by assuming $\bZ=\bY\odot\bW$, where matrix $\bW\in\mathbb{R}^{n\times q}$
introduces missingness to $\bY$. Let $w_{ij}$ be the $(i,j)$th element of $\bW$ which is set to be a Bernoulli $(1-\rho_{j})$ random
variable
\begin{align}
w_{ij}=\left\{ \begin{array}{ll}
1 & \text{ with probability }1-\rho_{j},\\
0 & \text{ otherwise. }
\end{array}\right.
\end{align}
for $j=1,...,q$. Therefore $w_{i\bullet}$, the $i$th row  of $\bW$, is drawn independently
and identically from the multivariate distribution 
 of random vector $W$ with the first and the second-order expectations $\boldsymbol{\mu}_{W}=\mathbb{E}[W]\in\mathbb{R}^{q}$
and $\mathbb{E}[W{W}^{\top}]\in\mathbb{R}^{q\times q}$, respectively. Also assume
that $\bW$ is independent of the random
error matrix $\boldsymbol{\varepsilon}$ in \eqref{eq:main_model}, which have the following specific forms:
\begin{align}
\mathbb{E}[W]=[(1-\rho_{1}),\cdots,(1-\rho_{q})]^{\top},\ \ \mathbb{E}[W{W}^{\top}]_{ij}=\left\{ \begin{array}{ll}
(1-\rho_{i})(1-\rho_{j}) & \text{ if }i\neq j,\\
(1-\rho_{i}) & \text{ if }i=j.
\end{array}\right.
\end{align}
The problem reduces to the standard MRCE model \citep{rothman2010sparse}
for fully observed data when $\boldsymbol{\rho}=(\rho_{1},\ldots,\rho_{q})^{\top}=\mathbf{0}$.
However, in practice, $\boldsymbol{\rho}$ may not be known and must
be estimated empirically from the data. We can estimate $\rho_{j}$ using $\widehat{\rho}_{j}$, where $\widehat{\rho}_{j}$
is the empirical missing probability of the $j$th column. 

Due to missingness in responses, the empirical covariance matrices $\mathbf{S}_{zz}=\frac{1}{n}\bZ^{\top}\bZ$, $\mathbf{S}_{xz}=\frac{1}{n}\bX^{\top}\bZ$ cannot be directly used for replacing $\mathbf{S}_{yy}$ and $\mathbf{S}_{xy}$ in $\eqref{eq:mrceexpan}$, as they are biased for estimating $\bSigma_{yy}^{*}=\mathbb{E}[YY^{\top}]$ 
and $\bSigma_{xy}^{*}=\mathbb{E}[XY^{\top}]$, respectively. 
To address this problem, one can use the surrogate estimators  \citep{loh2012}
\begin{align}
\widehat{\bS}_{yy}\coloneqq\frac{1}{n}{\bZ}^{\top}\bZ\oslash\mathbb{E}[W{W}^{\top}],\ \ \widehat{\bS}_{xy}\coloneqq\frac{1}{n}\bX^{\top}\bZ\oslash[\mathbb{E}[W],...,\mathbb{E}[W]]^{\top}.\label{eq:surrog}
\end{align}
These are unbiased estimators for $\bSigma_{yy}^{*}=\mathbb{E}[YY^{\top}]$ and $\bSigma_{xy}^{*}=\mathbb{E}[XY^{\top}]$ under conditions of response missingness. In this paper, we propose to estimate $\bTheta_{\varepsilon\varepsilon}^{*}$ and $\mathbf{B}^{*}$ by minimizing the following objective function with respect to $\bTheta\succeq0$ and $\bB$.
\begin{equation}
Q(\bTheta,\bB)\coloneqq\tr\Big[\frac{1}{2}(\widehat{\bS}_{yy}-2\widehat{\bS}_{xy}^{\top}\bB+\mathbf{B}^{\top}\mathbf{S}_{xx}\mathbf{B})\bTheta\Big]-\frac{1}{2}\log\det(\bTheta)+\lambda_{\bTheta}\VERT\bTheta\VERT_{1,\mathrm{off}}+\lambda_{\bB}\VERT\mathbf{B}\VERT_{1,1}.\label{eq:surrogate_obj}
\end{equation}
Here, compared to \eqref{eq:mrceexpan}, we replace the unobserved covariance matrices $\mathbf{S}_{yy}$
and $\mathbf{S}_{xy}$ with $\widehat{\bS}_{yy}$ and $\widehat{\bS}_{xy}$ in \eqref{eq:surrogate_obj}. The matrix $\mathbf{S}_{xx}$ remains unchanged as the covariate matrix $\bX$ is fully observed.

\section{Estimation } \label{P2:estimation}

We develop an iterative three-stage procedure for solving \eqref{eq:surrogate_obj}. Specifically, in the first stage, we optimize the objective function $Q$ with respect to $\bB$  while setting $\bTheta=\mathbf{I}$; In the second stage, we optimize $Q$ with respect to $\bTheta$ given $\bB=\widehat{\bB}^{(1)}$ from the first stage estimation; In the final stage, we obtain a refined estimate $\widehat{\bB}^{(2)}$ by incorporating the estimated precision matrix $\widehat{\bTheta}_{\varepsilon\varepsilon}$ from the second stage into the optimization of $Q$ with respect to $\bB$.

\begin{align*}
\text{Stage I: }&\widehat{\bB}^{(1)}=  \argmin_{\bB}
 Q(\mathbf{I},\bB) \\
\text{Stage II: }& \widehat{\bTheta}_{\varepsilon\varepsilon}=  \argmin_{\bTheta\succeq0}
 Q(\bTheta,\widehat{\bB}^{(1)})\\
\text{Stage III: }& \widehat{\bB}^{(2)}=   \argmin_{\bB}Q(\widehat{\bTheta}_{\varepsilon\varepsilon},\bB) 
\label{eq:obj}
\end{align*}


The details of the procedure are described in the remainder of this section.
\subsection{Stage I} 

In the first stage, we set $\bTheta=\mathbf{I}$ in \eqref{eq:surrogate_obj} and minimize the objective function $Q$ with respect to $\bB$
\begin{equation}	\widehat{\bB}^{(1)}=\argmin_{\bB}\ \mathrm{\tr}(\mathbf{B}^{\top}\mathbf{S}_{xx}\mathbf{B}/2-\widehat{\bS}_{xy}^{\top}\bB)+\lambda_{\bB}\VERT\mathbf{B}\VERT_{1,1},\label{1stB}
\end{equation}
where $\widehat{\bS}_{xy}$ is an unbiased estimator
of $\mathbb{E}[XY^{\top}]$ defined in \eqref{eq:surrog}. 
This loss function is convex because $\bX$ is fully observable (or uncontaminated) and therefore $\bS_{xx}$ is positive semi-definite. Define $\widehat{\bB}^{(1)}=(\widehat{\bbeta}_{1}\ddd\widehat{\bbeta}_{q}) \in \mathbb{R}^{p \times q}$
with its $l$th column as $\widehat{\bbeta}_{l}=(\widehat{\beta}_{l1}\ddd\widehat{\beta}_{lp})^{\top}\in\R^{p}$.
Each column of $\widehat{\bbeta}_{l}$ can be computed separately, by solving  penalized least squares problems with univariate response in a column-by-column fashion, i.e., for each $l=1\ddd q$,
\begin{equation}
	\widehat{\bbeta}_{l}=\argmin_{\bbeta\in\mathbb{R}^{p}}\mathcal{L}_{l}(\bbeta)+\lambda_{l}\|\bbeta\|_1,\qquad \text{where }\mathcal{L}_{l}(\bbeta)=\bbeta^{\top}\bS_{xx}\bbeta/2-\left\{ (\widehat{\bS}_{xy})_{\bullet l}\right\} ^{\top}\bbeta
 \label{eq:Beta_obj}
\end{equation}
with $(\widehat{\bS}_{xy})_{\bullet l}$ being $l$th column of
$\widehat{\bS}_{xy}$.
Hence each sub-problem can be solved very efficiently by using
any standard Lasso algorithm  \citep{tibshirani1996regression}, e.g. coordinate descent \citep{friedman2010regularization} or proximal gradient descent algorithm \citep{parikh2014proximal}. 

\subsection{Stage II}

Given $\widehat{\bB}^{(1)}$ from the first stage estimation, it is
tempting to estimate $\bTheta_{\varepsilon\varepsilon}^{*}$ by the
solution to the graphical Lasso problem 
\begin{equation}
\min_{\bTheta\succeq\mathbf{0}}\left\{ \tr(\bTheta\widehat{\bS}_{\varepsilon\varepsilon})-\log\det(\bTheta)+\lambda_{\bTheta}\VERT\bTheta\VERT_{1,\mathrm{off}}\right\} ,\label{eq:glasso}
\end{equation}
where $\widehat{\bS}_{\varepsilon\varepsilon}\coloneqq\widehat{\bS}_{yy}-\widehat{\bB}^{(1)\top}\bS_{xx}\widehat{\bB}^{(1)}$
is a plug-in estimate of the error covariance  $\bSigma_{\varepsilon\varepsilon}^{*}$ (since $\bSigma_{yx}^{*} = \bB^{*\top}\bSigma_{xx}^{*}$), in which $\widehat{\bS}_{yy}$
is an unbiased surrogate estimator of $\mathbb{E}[YY^{\top}]$. The details of the derivation is provided in \ref{deriv_See}. This approach is expected to improve the regression coefficient estimates in the third stage by incorporating the precision matrix's dependence structure.

However, the objective function \eqref{eq:glasso} is no longer convex
since the input estimator of the covariance matrix $\widehat{\bS}_{\varepsilon\varepsilon}$
may not be positive semi-definite in the presence of missing data. We can illustrate this phenomenon with a specific
example. Suppose that all columns of responses have the same probability
$\rho\in[0,1]$ of being missing, hence $\widehat{\bS}_{yy}=\bZ^{\top}\bZ/n-\rho\text{diag}(\bZ^{\top}\bZ/n)$.
When $n\ll q$, $\bZ^{\top}\bZ/n$ has the
rank at most $n$. Subtracting the diagonal matrix $\rho\text{diag}(\bZ^{\top}\bZ/n)$
from $\bZ^{\top}\bZ/n$ causes $\widehat{\bS}_{yy}$
to have negative eigenvalues even if under moderate missingness. This
leads to an indefinite $\widehat{\bS}_{\varepsilon\varepsilon}$
that renders the objective function in \eqref{eq:glasso} non-convex
and unbounded from below.

To solve this problem, we can approximate $\widehat{\bS}_{\varepsilon\varepsilon}$
using a positive semi-definite matrix $\widetilde{\bS}_{\varepsilon\varepsilon}$
\citep{datta2017}. Specifically, we find the nearest positive semi-definite
matrix projection for $\widehat{\bS}_{\varepsilon\varepsilon}$ as
\begin{align}
\widetilde{\bS}_{\varepsilon\varepsilon}\coloneqq\underset{\mathbf{K}\succeq0}{\mathrm{argmin}}\ \VERT\widehat{\bS}_{\varepsilon\varepsilon}-\mathbf{K}\VERT_{\text{max}},\label{eq:Projected_Syy}
\end{align}
where $\mathbf{K}$ is a positive semi-definite matrix and $\VERT\cdot\VERT_{\text{max}}$
is the elementwise maximum norm. Using $\widetilde{\bS}_{\varepsilon\varepsilon}$ instead, we can estimate $\widehat{\bTheta}_{\varepsilon\varepsilon}$ by minimizing
the following objective function 
\begin{align}
\widehat{\bTheta}_{\varepsilon\varepsilon}=\argmin_{\bTheta\succeq\mathbf{0}}\left\{ \tr(\bTheta\widetilde{\bS}_{\varepsilon\varepsilon})-\log\det(\bTheta)+\lambda_{\bTheta}\VERT\bTheta\VERT_{1,\mathrm{off}}\right\} .\label{theta2}
\end{align}
The problem in \eqref{theta2} can be solved using any existing graphical Lasso
algorithm, such as GLASSO \citep{friedman2008sparse} or QUIC \citep{hsieh2014quic}.

 \subsection{Stage III}

In the final stage of estimation, we use the estimated precision matrix, $\widehat{\bTheta}_{\varepsilon\varepsilon}$, from the previous stage to obtain a refined estimate, $\widehat{\bB}^{(2)}$. This involves minimizing the following objective function with $\widehat{\bTheta}_{\varepsilon\varepsilon}$ as a fixed plug-in estimate:

\begin{equation}
	\widehat{\bB}^{(2)}=\argmin_{\bB}\ \mathrm{\tr}[(\mathbf{B}^{\top}\mathbf{S}_{xx}\mathbf{B}/2-\widehat{\bS}_{xy}^{\top}\bB)\widehat{\bTheta}_{\varepsilon\varepsilon}]+\lambda_{\bB}\VERT\mathbf{B}\VERT_{1,1}.\label{eq:Bobj}
\end{equation}

The problem \eqref{eq:Bobj} can be solved using a proximal gradient descent algorithm \citep{parikh2014proximal}. Let $\widehat{\bB}^{(2)}_{k}$ represent the $k$-th iteration update of $\widehat{\bB}^{(2)}$. For $k = 1, 2, \dots, K$, we apply the fast iterative shrinkage-thresholding algorithm (FISTA, \cite{beck2009fast}) for updates:
\begin{align}
\mathbf{V}_{k} & = \widehat{\mathbf{B}}^{(2)}_{k-1} + \frac{k-2}{k+1}(\widehat{\mathbf{B}}^{(2)}_{k-1} - \widehat{\mathbf{B}}^{(2)}_{k-2}),\nonumber\\
\widehat{\mathbf{B}}^{(2)}_{k} & = \text{prox}_{t_{k}}(\mathbf{V}_{k} - 2t_{k}(\mathbf{S}_{xx}\mathbf{V}_{k}-\widehat{\mathbf{S}}_{xy}) \widehat{\mathbf{\Theta}}_{\varepsilon\varepsilon}),
\label{FISTA}
\end{align}
starting with $\widehat{\mathbf{B}}^{(2)}_{0} = \widehat{\mathbf{B}}^{(2)}_{1} = \widehat{\mathbf{B}}^{(1)}$. Here, $t_k$ is the step size at iteration $k$, and $\text{prox}_{t_k}(\cdot) :\mathbb{R}^{p \times q} \rightarrow \mathbb{R}^{p \times q}$ represents the proximal operator, which simplifies to enforce $\ell_{1}$ sparsity by soft-thresholding each matrix element towards zero \citep{parikh2014proximal}:
\begin{align}
\text{prox}_{t_k}(v_{ij}) = \text{sign}(v_{ij})\cdot(|v_{ij}| - \lambda_\mathbf{B}t_k)_{+}, \nonumber
\end{align}
where $(v)_{+} = \text{max}(v,0)$ denotes the positive part of $v$. The FISTA method incorporates a momentum term similar to Nesterov's acceleration \citep{nesterov2013gradient}, which speeds up convergence from $\mathcal{O}(1/k)$ to $\mathcal{O}(1/k^2)$ compared to the standard ISTA \citep{beck2009fast}.

Choosing an appropriate step size $t_k$ is critical for effective convergence. Adaptive techniques like backtracking line search can be employed to determine $t_k$ at each iteration \citep{civicioglu2013backtracking}. The backtracking line search begins with an initial step size $t_k > 0$, reducing it by a factor $\eta$ (where $0 < \eta < 1$) until the Armijo condition \citep{armijo1966minimization} is met:
\begin{align}
    f(\widehat{\mathbf{B}}_{k-1}- t_{k} G_{t_{k}}(\widehat{\mathbf{B}}_{k-1})) \leq f(\widehat{\mathbf{B}}_{k-1})
    -t_{k} \langle \nabla f(\widehat{\mathbf{B}}_{k-1}), G_{t_{k}}(\widehat{\mathbf{B}}_{k-1}) \rangle
    +\frac{t_{k}}{2}\VERT G_{t_{k}}(\widehat{\mathbf{B}}_{k-1})\VERT^2_F,
\label{Armijo}
\end{align}
where $G_t(\mathbf{B}) = \frac{\mathbf{B}-\text{prox}_{t}(\mathbf{B} - t \nabla f(\mathbf{B}))}{t}$ represents the effective gradient adjusted for the proximal mapping. The term $\langle \mathbf{A}, \mathbf{C} \rangle := \mathrm{Tr}(\mathbf{C}^{\top} \mathbf{A})$ is the trace inner product, and $\VERT \cdot \VERT_F^2$ represents the squared Frobenius norm. Details of the accelerated proximal gradient method with backtracking line search are provided in Algorithm \ref{alg_FISTA} in \ref{Comp_details}. The computational details of implementing the \texttt{missoNet} \texttt{R}-package are outlined in  \ref{missonet_comp}.

\subsection{Tuning parameter selection}

The tuning parameters $\lambda_{\bB^{(1)}}$, $\lambda_{\bTheta}$ and $\lambda_{\bB^{(2)}}$ in the three-stage estimation procedure need to be tuned.  We use the Bayesian Information Criterion (BIC) to choose these parameters, favoring this criterion for its computational efficiency compared to cross-validation and its tendency to select more parsimonious models. The joint BIC \citep{yin2011sparse} can be defined as
\[\mathrm{BIC}(\widehat{\bB}, \widehat{\bTheta}_{\varepsilon\varepsilon} )=2 n  \tr  [\widehat{\bTheta}_{\varepsilon\varepsilon} (\widehat{\bB}^{\top}\bS_{xx}\widehat{\bB}/2-\widehat{\bS}_{xy}^{\top}\widehat{\bB})] -n \log\det(\widehat{\bTheta}_{\varepsilon\varepsilon}) + \log n \{q + \sum_{i<j} \mathbb{I}((\widehat{\bTheta}_{\varepsilon\varepsilon})_{i,j}\neq 0) + \sum_{i,j} \mathbb{I}(\widehat{\bB}_{i,j}\neq 0) \}.\]

Specifically, for the first stage, we fix $\bTheta=\mathbf{I}$ and use the following BIC formula to select $\lambda_{\bB^{(1)}}$:
\begin{align*}
\lambda_{\bB^{(1)}} &= \argmin_{\lambda}\mathrm{BIC}(\widehat{\bB}^{(1)}[\lambda], \mathbf{I})\\
&= \argmin_{\lambda} 2n \tr (\widehat{\bB}^{(1)\top}[\lambda]\bS_{xx}\widehat{\bB}^{(1)}[\lambda]/2-\widehat{\bS}_{xy}^{\top}\widehat{\bB}^{(1)}[\lambda]) + \log n \{q + \sum_{i,j} \mathbb{I}((\widehat{\bB}^{(1)}[\lambda])_{i,j}\neq 0) \}.
\end{align*}
The optimal $\lambda_{\bB^{(1)}}$ is determined as the one that minimizes the corresponding $\mathrm{BIC}(\widehat{\bB}^{(1)}[\lambda], \mathbf{I})$. Then we select $\widehat{\bB}^{(1)}$ corresponding to the optimal $\lambda_{\bB^{(1)}}$ and denote it by $\widehat{\bB}^{(1)}[\lambda_{\bB^{(1)}}]$. 
For the second stage, we plug-in $\widehat{\bB}^{(1)}[\lambda_{\bB^{(1)}}]$ obtained from the first stage and use the following BIC formula to select $\lambda_{\bTheta}$:
\begin{align*}
    \lambda_{\bTheta} & = \argmin_{\lambda}\mathrm{BIC}(\widehat{\bTheta}_{\varepsilon\varepsilon}[\lambda] , \widehat{\bB}^{(1)}[\lambda_{\bB^{(1)}}])\\
    &= \argmin_{\lambda} 2n \tr [\widehat{\bTheta}_{\varepsilon\varepsilon}[\lambda](\widehat{\bB}^{(1)\top}[\lambda_{\bB^{(1)}}]\bS_{xx}\widehat{\bB}^{(1)}[\lambda_{\bB^{(1)}}]/2-\widehat{\bS}_{xy}^{\top}\widehat{\bB}^{(1)}[\lambda_{\bB^{(1)}}])] -n \log\det(\widehat{\bTheta}_{\varepsilon\varepsilon}[\lambda])  \\
    &\hspace{3in}+ \log n \{q + \sum_{i<j} \mathbb{I}((\widehat{\bTheta}_{\varepsilon\varepsilon}[\lambda])_{i,j}\neq 0)\}. 
    \end{align*}
The optimal $\lambda_{\bTheta}$ is selected as the one that minimizes the corresponding $\mathrm{BIC}(\widehat{\bTheta}_{\varepsilon\varepsilon}[\lambda] , \widehat{\bB}^{(1)}[\lambda_{\bB^{(1)}}])$. We choose $\widehat{\bTheta}_{\varepsilon\varepsilon}$ with the optimal $\lambda_{\bTheta}$ and denote it by $\widehat{\bTheta}_{\varepsilon\varepsilon}[\lambda_{\bTheta}]$. Finally, for the third stage of estimation, we plug-in $\widehat{\bTheta}_{\varepsilon\varepsilon}[\lambda_{\bTheta}]$ obtained from the second stage and use the following BIC formula to estimate $\lambda_{\bB^{(2)}}$:
\begin{align*}
    \lambda_{\bB^{(2)}} &= \argmin_{\lambda}\mathrm{BIC}(\widehat{\bB}^{(2)}[\lambda], \widehat{\bTheta}_{\varepsilon\varepsilon}[\lambda_{\bTheta}])\\
    &= \argmin_{\lambda} 2n  \tr  [\widehat{\bTheta}_{\varepsilon\varepsilon}[\lambda_{\bTheta}] (\widehat{\bB}^{(2)\top}[\lambda]\bS_{xx}\widehat{\bB}^{(2)}[\lambda]/2-\widehat{\bS}_{xy}^{\top}\widehat{\bB}^{(2)}[\lambda])] + \log n \{q + \sum_{i,j} \mathbb{I}((\widehat{\bB}^{(2)}[\lambda])_{i,j}\neq 0) \}.
\end{align*}
The optimal $\lambda_{\bB^{(2)}}$ is selected as the one that minimizes the corresponding $\mathrm{BIC}(\widehat{\bB}^{(2)}[\lambda], \widehat{\bTheta}_{\varepsilon\varepsilon}[\lambda_{\bTheta}])$. We choose the estimate  $\widehat{\bB}^{(2)}[\lambda_{\bB^{(2)}}]$ corresponding to the optimal $\lambda_{\bB^{(2)}}$.

\section{Theoretical properties}

In this section, we present our main results, establishing estimation error bounds for $\widehat{\bB}^{(1)}$, $\widehat{\bTheta}_{\varepsilon\varepsilon}$, and $\widehat{\bB}^{(2)}$. We begin in Section \ref{stage1} with the necessary assumptions and lemmas to derive the $\ell_1$ and $\ell_2$ estimation error bounds for each column of $\widehat{\bB}^{(1)}$, denoted by $\widehat{\bbeta}_{l}$ for $l = 1, \ldots, q$, as stated in Proposition \ref{prop:1}. In Section \ref{stage2}, we introduce lemmas and the irrepresentability assumption, enabling us to derive the estimation error bound for the precision matrix $\widehat{\bTheta}_{\varepsilon\varepsilon}$ with respect to the elementwise maximum norm and operator norm, as presented in Proposition \ref{prop:2}. Finally, Section \ref{stage3} discusses the conditions and assumptions needed to establish Theorem \ref{thm:1}, which provides the estimation error bound for $\widehat{\bB}^{(2)}$ in terms of the Frobenius norm and elementwise $\ell_1$-norm. Across all theoretical bounds, we characterize the impact of missingness probability and error variance on estimation error.

\subsection{Estimation error for $\widehat{\protect\bB}^{(1)}$\label{stage1}}

The following assumption is imposed on the population covariance matrix
to mildly control the error of the Lasso solution. 
\begin{assumption}
[Restricted eigenvalue condition]\label{assu:1} Define the cone
set 
\[
\mathbb{C}(S_{l})=\left\{ \bdelta\in\mathbb{R}^{p}:\|(\bdelta)_{S_{l}^{c}}\|_{1}\leq3\|(\bdelta)_{S_{l}}\|_{1}\right\} .
\]
We assume the following restricted eigenvalue condition \citep[Page 208,][]{wainwright2019high}
for $\bS_{xx}$ over $\mathbb{C}(S_{l})$,
\[
0<\kappa_{l}=\min_{\bdelta\neq0,\ \bdelta\in\mathbb{C}(S_{l})}\frac{\bdelta^{\top}\bS_{xx}\bdelta}{\|\bdelta\|_{2}^{2}}.
\]
\end{assumption}
This assumption assumes restricted curvature of the loss function in
certain directions, specifically, along the defined cone set $\mathbb{C}(S_{l})$. We would also require the following lemmas to derive the estimation error bound for $\widehat{\protect\bB}^{(1)}$, as will be discussed in Proposition \ref{prop:1}. In \ref{Pr_Lem1} and \ref{Pr_Lem2} we provide proofs of these lemmas and in \ref{Pr_Prop1} we provide the proof of Proposition \ref{prop:1}. 
\begin{lem}
\label{lem:1} Assume that each row of the error matrix $\boldsymbol{\varepsilon}\in\mathbb{R}^{n\times q}$
and each row of multiplicative error matrix $\mathbf{W}\in\mathbb{R}^{n\times q}$
are two sub-Gaussian random vectors where each elements of the vectors
follow the sub-Gaussian distribution with parameters $\sigma_{\varepsilon}^{2}$
and $\sigma_{W}^{2}$ respectively. Under this assumption, the elementwise
max norm of the deviation between $\widehat{\bS}_{xy}$ and $\mathbf{S}_{xy}$
satisfies the following probability bound for any $t\leq t_{0}^{(1)}$
with $t_{0}^{(1)}=\sigma_{\varepsilon}\sigma_{W}X_{\max}/\mu_{\min}$,
\[
\mathrm{Pr}[\VERT\widehat{\bS}_{xy}-\mathbf{S}_{xy}\VERT_{\max}\geq t]\leq pqC\exp\left(-\frac{cn\mu_{\min}^{2}t^{2}}{\sigma_{W}^{2}X_{\max}^{2}\max(s_{\max}^{2}X_{\max}^{2}B_{\max}^{2},\sigma_{\varepsilon}^{2})}\right),
\]
where $\mu_{\min}=\min_{j}(1-\rho_{j})>0$, $X_{\max}=\max_{i,k}|X_{ik}|<\infty$,
$B_{\max}=\max_{k,j}|\beta_{kj}^{*}|$, $s_{\max}=\max_{j}s_{j}$,
for $j\in\{1,\ldots,q\}$. Let $(\widehat{\bS}_{xy})_{\bullet l}$
and $(\bS_{xy})_{\bullet l}$ be the $l$th columns of $\widehat{\bS}_{xy}$
and $\bS_{xy}$, respectively, then we have 
\[
\mathrm{Pr}[\|(\widehat{\bS}_{xy})_{\bullet l}-(\mathbf{S}_{xy})_{\bullet l}\|_{\infty}\geq t]\leq pC\exp\left(-\frac{cn\mu_{\min}^{2}t^{2}}{\sigma_{W}^{2}X_{\max}^{2}\max(s_{\max}^{2}X_{\max}^{2}B_{\max}^{2},\sigma_{\varepsilon}^{2})}\right),
\]
\end{lem}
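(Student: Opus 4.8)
The plan is to prove a single-entry concentration bound and then union bound over the entries. First I would fix a pair $(k,l)$ and rewrite the $(k,l)$ entry of the deviation in a form amenable to concentration. Using $z_{il}=y_{il}w_{il}$ and the definition of the surrogate in \eqref{eq:surrog},
\[
(\widehat{\bS}_{xy}-\mathbf{S}_{xy})_{kl}=\frac{1}{n}\sum_{i=1}^{n}X_{ik}\,y_{il}\Big(\frac{w_{il}}{1-\rho_{l}}-1\Big)=\frac{1}{n}\sum_{i=1}^{n}X_{ik}\,y_{il}\,\widetilde{w}_{il},
\]
where $\widetilde{w}_{il}\coloneqq w_{il}/(1-\rho_{l})-1$ satisfies $\mathbb{E}[\widetilde{w}_{il}]=0$, is bounded, and has variance $\rho_{l}/(1-\rho_{l})$. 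Substituting $y_{il}=(\bX\mathbf{B}^{*})_{il}+\varepsilon_{il}$ from \eqref{eq:main_model} splits the entry into a signal part and a noise part,
\[
(\widehat{\bS}_{xy}-\mathbf{S}_{xy})_{kl}=\underbrace{\frac{1}{n}\sum_{i=1}^{n}X_{ik}(\bX\mathbf{B}^{*})_{il}\,\widetilde{w}_{il}}_{T^{(S)}_{kl}}+\underbrace{\frac{1}{n}\sum_{i=1}^{n}X_{ik}\,\varepsilon_{il}\,\widetilde{w}_{il}}_{T^{(N)}_{kl}},
\]
and, conditioning on $\bX$ and using that $\mathbf{W}$ is independent of $\boldsymbol{\varepsilon}$, both sums have independent, mean-zero summands.

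Then I would control each part with its own concentration inequality. For $T^{(S)}_{kl}$, sparsity of the $l$th column gives $|(\bX\mathbf{B}^{*})_{il}|\leq s_{l}X_{\max}B_{\max}$, so each summand is a bounded multiple (at most $s_{l}X_{\max}^{2}B_{\max}$ in absolute value) of the sub-Gaussian variable $\widetilde{w}_{il}$, whose sub-Gaussian parameter scales like $\sigma_{W}^{2}/(1-\rho_{l})^{2}$; a Hoeffding-type sub-Gaussian tail then yields a bound of the form $\exp\!\big(-cn(1-\rho_{l})^{2}t^{2}/(s_{l}^{2}X_{\max}^{4}B_{\max}^{2}\sigma_{W}^{2})\big)$. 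For $T^{(N)}_{kl}$, the summand $X_{ik}\varepsilon_{il}\widetilde{w}_{il}$ is a product of the two independent sub-Gaussians $\varepsilon_{il}$ and $\widetilde{w}_{il}$ scaled by $|X_{ik}|\leq X_{\max}$; such a product is sub-exponential with $\psi_{1}$-norm of order $X_{\max}\sigma_{\varepsilon}\sigma_{W}/(1-\rho_{l})$. Bernstein's inequality then applies, and in the regime $t\leq t_{0}^{(1)}=\sigma_{\varepsilon}\sigma_{W}X_{\max}/\mu_{\min}$ the Gaussian (quadratic-in-$t$) branch governs the tail, giving $\exp\!\big(-cn(1-\rho_{l})^{2}t^{2}/(X_{\max}^{2}\sigma_{\varepsilon}^{2}\sigma_{W}^{2})\big)$; note that the $\mu_{\min}$ in $t_{0}^{(1)}$ is precisely the smallest transition scale over columns, so this restriction keeps every entry in the quadratic branch uniformly.

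Finally I would combine the two parts and union bound. Splitting the target level as $t/2$ across $T^{(S)}_{kl}$ and $T^{(N)}_{kl}$, taking the worst case $1-\rho_{l}\geq\mu_{\min}$ and $s_{l}\leq s_{\max}$, and absorbing the two variance proxies into a single $\max(s_{\max}^{2}X_{\max}^{2}B_{\max}^{2},\sigma_{\varepsilon}^{2})$ produces the per-entry bound $C\exp\!\big(-cn\mu_{\min}^{2}t^{2}/(\sigma_{W}^{2}X_{\max}^{2}\max(s_{\max}^{2}X_{\max}^{2}B_{\max}^{2},\sigma_{\varepsilon}^{2}))\big)$. A union bound over all $pq$ entries gives the $\VERT\cdot\VERT_{\max}$ statement, while a union bound over the $p$ entries of a single column gives the column-wise $\|\cdot\|_{\infty}$ statement.

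The hard part will be the noise term $T^{(N)}_{kl}$: because $\varepsilon_{il}\widetilde{w}_{il}$ is a product of (sub-)Gaussians it is only sub-exponential, so a single sub-Gaussian tail does not suffice. Identifying the correct $\psi_{1}$-norm and invoking the two-regime Bernstein inequality is what both forces the restriction $t\leq t_{0}^{(1)}$ (so the quadratic branch dominates) and, together with the $1/(1-\rho_{l})^{2}$ scaling of the centered multiplicative noise $\widetilde{w}_{il}$, produces the $\mu_{\min}^{2}$ factor in the exponent.
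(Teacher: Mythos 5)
Your proposal follows essentially the same route as the paper's proof: write the $(k,l)$ entry as a weighted sum of the centered multiplicative noise, substitute the true model to split it into a signal term and a noise term, control the signal term with a Hoeffding-type sub-Gaussian tail and the noise term with a Bernstein-type sub-exponential tail (the product $\varepsilon_{il}\widetilde{w}_{il}$ being only sub-exponential is exactly the issue the paper confronts too), and finish with union bounds over the $p$ entries of a column or all $pq$ entries. The one structural difference is cosmetic: you fold $1/(1-\rho_{l})$ into the centered variable $\widetilde{w}_{il}$, whereas the paper factors $1/(1-\rho_{j})$ out and replaces it by $1/\mu_{\min}$ \emph{before} invoking any concentration. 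On the signal term your version is in fact cleaner than the paper's: you keep the $i$-dependent weights $X_{ik}(\bX\bB^{*})_{il}$ inside the sum and apply the weighted Hoeffding inequality with bounded weights, while the paper pulls $s_{\max}B_{\max}X_{\max}^{2}$ out of a signed sum, an inequality of the form $|\sum_{i}a_{i}b_{i}|\le(\max_{i}|a_{i}|)\,|\sum_{i}b_{i}|$ that is not valid in general; your weighted argument reaches the same tail rigorously.

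One remark in your noise-term step is backwards, although the final bound survives. With your normalization, the Bernstein transition scale for column $l$ is of order $X_{\max}\sigma_{\varepsilon}\sigma_{W}/(1-\rho_{l})$, so $t_{0}^{(1)}=X_{\max}\sigma_{\varepsilon}\sigma_{W}/\mu_{\min}$ is the \emph{largest} of these scales over columns, not the smallest; consequently $t\le t_{0}^{(1)}$ does not place every column in the quadratic branch (a column with little missingness can fall into the linear branch). The per-entry bound nevertheless holds: in the linear branch, for $t\le t_{0}^{(1)}$,
\[
\frac{t(1-\rho_{l})}{X_{\max}\sigma_{\varepsilon}\sigma_{W}}\;\ge\;\frac{t\,\mu_{\min}}{X_{\max}\sigma_{\varepsilon}\sigma_{W}}\;\ge\;\frac{t^{2}\mu_{\min}^{2}}{X_{\max}^{2}\sigma_{\varepsilon}^{2}\sigma_{W}^{2}},
\]
where the last inequality uses precisely $t\le t_{0}^{(1)}$, so the linear-branch tail is still dominated by the claimed quadratic-in-$t$ expression. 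Either patch your write-up with this two-case argument, or adopt the paper's arrangement (bound $1/(1-\rho_{l})\le1/\mu_{\min}$ before applying Bernstein), which makes the summands' $\psi_{1}$-norms, and hence the transition scale, uniform over columns and located exactly at $t_{0}^{(1)}$.
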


\begin{lem}
\label{lem:2} Assume that each row of the error matrix $\boldsymbol{\varepsilon}\in\mathbb{R}^{n\times q}$
is a sub-Gaussian random vector where each element of the vector follows
the sub-Gaussian distribution with parameter $\sigma_{\varepsilon}^{2}$.
Under this assumption, the elementwise max norm of the deviation between
$\mathbf{S}_{xy}$ and $\bS_{xx}\bB^{*}$ satisfies the following
probability bound 
\[
\mathrm{Pr}(\VERT\bS_{xy}-\bS_{xx}\bB^{*}\VERT_{\max}\geq t)\leq pqC\exp\left[-\frac{cnt^{2}}{\sigma_{\varepsilon}^{2}X_{\max}^{2}}\right]
\]
and
\[
\mathrm{Pr}(\|(\bS_{xy})_{\bullet l}-\bS_{xx}\bbeta_{l}^{*}\|_{\infty}\geq t)\leq pC\exp\left[-\frac{cnt^{2}}{\sigma_{\varepsilon}^{2}X_{\max}^{2}}\right]
\]
where $(\bS_{xy})_{\bullet l}$ and $\bbeta_{l}^{*}$ are the $l$th
columns of $\bS_{xy}$ and $\bB^{*}$, respectively, and $X_{\max}=\max_{i,k}|X_{ik}|<\infty$.
\end{lem}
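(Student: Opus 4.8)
The plan is to exploit the exact algebraic identity that makes the deviation $\bS_{xy}-\bS_{xx}\bB^{*}$ collapse to a single noise term, and then apply a standard sub-Gaussian tail bound entrywise, followed by a union bound. First I would substitute the model \eqref{eq:main_model}, $\bY=\bX\bB^{*}+\boldsymbol{\varepsilon}$, into $\bS_{xy}=\tfrac{1}{n}\bX^{\top}\bY$ to obtain
\[
\bS_{xy}-\bS_{xx}\bB^{*}=\frac{1}{n}\bX^{\top}\bY-\frac{1}{n}\bX^{\top}\bX\bB^{*}=\frac{1}{n}\bX^{\top}\boldsymbol{\varepsilon}.
\]
At the level of individual entries this reads $(\bS_{xy}-\bS_{xx}\bB^{*})_{kl}=\tfrac{1}{n}\sum_{i=1}^{n}X_{ik}\varepsilon_{il}$, so the problem reduces to controlling the maximum over $(k,l)$ of a weighted average of the noise. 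This is the simpler, missingness-free analogue of Lemma \ref{lem:1}: here there is no multiplicative error matrix $\bW$, so only the additive noise $\boldsymbol{\varepsilon}$ contributes.

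Next I would fix a pair $(k,l)$ and argue conditionally on the design $\bX$. Because the rows of $\boldsymbol{\varepsilon}$ are drawn i.i.d., for a fixed column $l$ the summands $X_{ik}\varepsilon_{il}$ are independent across $i$; since $\varepsilon_{il}$ is sub-Gaussian with parameter $\sigma_{\varepsilon}^{2}$ and $|X_{ik}|\leq X_{\max}$, each summand is sub-Gaussian with parameter at most $X_{\max}^{2}\sigma_{\varepsilon}^{2}$. The sum of independent sub-Gaussians is sub-Gaussian with parameter equal to the sum of the individual parameters, so $\sum_{i=1}^{n}X_{ik}\varepsilon_{il}$ is sub-Gaussian with parameter at most $nX_{\max}^{2}\sigma_{\varepsilon}^{2}$. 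Applying the two-sided tail bound $\Pr(|Z|\geq s)\leq 2\exp(-s^{2}/(2\tau^{2}))$ with $s=nt$ and $\tau^{2}=nX_{\max}^{2}\sigma_{\varepsilon}^{2}$ then yields, for each fixed $(k,l)$,
\[
\Pr\!\left(\big|(\bS_{xy}-\bS_{xx}\bB^{*})_{kl}\big|\geq t\right)\leq 2\exp\!\left(-\frac{nt^{2}}{2X_{\max}^{2}\sigma_{\varepsilon}^{2}}\right).
\]

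Finally I would apply a union bound. Taking the union over all $pq$ entries gives the first displayed inequality with $C=2$ and $c=1/2$; restricting the union to the $p$ entries of a single column $l$ gives the second inequality, since its left-hand side equals $\max_{k}|(\bS_{xy}-\bS_{xx}\bB^{*})_{kl}|$ and $(\bS_{xy})_{\bullet l}-\bS_{xx}\bbeta_{l}^{*}=(\bS_{xy}-\bS_{xx}\bB^{*})_{\bullet l}$. I do not expect a genuine obstacle here; the only points requiring care are the conditioning on $\bX$ (so that the $\varepsilon_{il}$ remain independent sub-Gaussian and the deterministic bound $|X_{ik}|\leq X_{\max}$ may be used) and the bookkeeping of the sub-Gaussian parameter through the weighted sum. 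I note that the column-normalization assumption $\tfrac{1}{n}\sum_{i}X_{ik}^{2}=1$ would in fact permit replacing $nX_{\max}^{2}\sigma_{\varepsilon}^{2}$ by the sharper $n\sigma_{\varepsilon}^{2}$; the stated form simply uses the cruder entrywise bound, which is all that is required in the subsequent analysis of $\widehat{\bB}^{(1)}$.
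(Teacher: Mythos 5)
Your proposal is correct and follows essentially the same route as the paper's proof: the identity $\bS_{xy}-\bS_{xx}\bB^{*}=\frac{1}{n}\bX^{\top}\boldsymbol{\varepsilon}$, an entrywise sub-Gaussian tail bound for $\frac{1}{n}\sum_{i}X_{ik}\varepsilon_{il}$ (the paper invokes Vershynin's general Hoeffding inequality with $\psi_2$-norms, which is equivalent to your variance-proxy bookkeeping), and a union bound over $pq$ entries or over one column. Your closing remark that the column normalization $\frac{1}{n}\sum_{i}X_{ik}^{2}=1$ would yield a sharper constant than the crude $X_{\max}^{2}$ bound is a valid observation, but both you and the paper deliberately use the cruder form.
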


With Lemmas \ref{lem:1} and \ref{lem:2} established, we can now derive the estimation error bound for $\widehat{\bbeta}_{l}$ in \eqref{eq:Beta_obj} for the initial estimation of $\bB^{*}$ in \eqref{1stB}.
Without loss of generality, we can write the $l$th column of the
true regression coefficient matrix as $\boldsymbol{\beta}_{l}^{*}=(\boldsymbol{\beta}_{S_{l}}^{*\top},\mathbf{0}^{\top})^{\top}$
and the corresponding matrix $\mathbf{X}=((\mathbf{X})_{\bullet S_{l}},(\mathbf{X})_{\bullet S_{l}^{c}})$.
Hence, the true model for the $l$th column of $\mathbf{Y}$ can be
written as $(\mathbf{Y})_{\bullet l}=(\mathbf{X})_{\bullet S_{l}}\boldsymbol{\beta}_{S_{l}}^{*}+(\boldsymbol{\varepsilon})_{\bullet l}$.

\begin{prop}
\label{prop:1} We assume that the $l$th column of the true coefficient
matrix from \eqref{eq:model1}, $\bbeta_{l}^{*}$ has support $S_{l}\subseteq\{1,\ldots,p\}$
with cardinality $s_{l}\coloneqq|S_{l}|$, meaning that $\beta_{lj}^{*}=0$
for all $j\in S_{l}^{c}$, where $S_{l}^{c}$ denotes the complement
of $S_{l}.$ Let us consider that Assumption \ref{assu:1} on the
parameter $\kappa_{l}>0$ hold. We assume that the tuning parameter
$\lambda_{l}$ in \eqref{eq:Beta_obj} satisfies 
\begin{equation}
\lambda_{l}\geq C(\sigma_{W},\sigma_{\varepsilon})\sqrt{\frac{\log p}{n}},\label{eq:lambda_ell}
\end{equation}
where $C(\sigma_{W},\sigma_{\varepsilon})=2X_{\max}\max\left[\sigma_{W}s_{\max}X_{\max}B_{\max}/\mu_{\min},\sigma_{W}\sigma_{\varepsilon}/\mu_{\min},\sigma_{\varepsilon}\right].$
Then any estimate $\widehat{\bbeta}_{l}$ from \eqref{eq:Beta_obj}
satisfy the following estimation error bounds 
\[
\|\widehat{\bbeta}_{l}-\bbeta_{l}^{*}\|_{2}\leq3\sqrt{s_{l}}\lambda_{l}/\kappa_{l},\qquad\text{and}\qquad\|\widehat{\bbeta}_{l}-\bbeta_{l}^{*}\|_{1}\leq4\sqrt{s_{l}}\|\widehat{\bbeta}_{l}-\bbeta_{l}^{*}\|_{2}=12s_{l}\lambda_{l}/\kappa_{l}
\]
with a probability at least $1-C\exp(-c\log p)$. 
\end{prop}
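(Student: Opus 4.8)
The plan is to run the classical Lasso oracle-inequality argument applied to the surrogate loss $\mathcal{L}_l$, with the only genuinely new ingredient being the control of the gradient noise through the two preceding lemmas. Write $\widehat{\bdelta} = \widehat{\bbeta}_l - \bbeta_l^*$ for the error vector and note that $\nabla\mathcal{L}_l(\bbeta) = \bS_{xx}\bbeta - (\widehat{\bS}_{xy})_{\bullet l}$. First I would invoke the optimality of $\widehat{\bbeta}_l$ in \eqref{eq:Beta_obj} to obtain the basic inequality $\mathcal{L}_l(\widehat{\bbeta}_l) + \lambda_l\|\widehat{\bbeta}_l\|_1 \le \mathcal{L}_l(\bbeta_l^*) + \lambda_l\|\bbeta_l^*\|_1$, then expand the quadratic term to rewrite it as
\[
\tfrac12\,\widehat{\bdelta}^{\top}\bS_{xx}\widehat{\bdelta} \;\le\; \langle \mathbf{g},\widehat{\bdelta}\rangle + \lambda_l\big(\|\bbeta_l^*\|_1 - \|\widehat{\bbeta}_l\|_1\big),\qquad \mathbf{g} := (\widehat{\bS}_{xy})_{\bullet l} - \bS_{xx}\bbeta_l^*,
\]
where $\mathbf{g}$ is the effective gradient evaluated at the truth.

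The crucial observation is that $\mathbf{g}$ splits exactly along the two lemmas. By the triangle inequality,
\[
\|\mathbf{g}\|_\infty \;\le\; \|(\widehat{\bS}_{xy})_{\bullet l} - (\bS_{xy})_{\bullet l}\|_\infty + \|(\bS_{xy})_{\bullet l} - \bS_{xx}\bbeta_l^*\|_\infty,
\]
where the first term is the bias introduced by the missingness-correcting surrogate (controlled by Lemma \ref{lem:1}) and the second is the usual design--noise correlation $\tfrac1n\bX^{\top}(\boldsymbol{\varepsilon})_{\bullet l}$ (controlled by Lemma \ref{lem:2}). Applying each lemma at deviation level $\lambda_l/4$ and using the stated form of $C(\sigma_W,\sigma_\varepsilon)$ --- whose three terms inside the maximum are precisely the variance proxies appearing in the exponents of Lemmas \ref{lem:1} and \ref{lem:2} --- the hypothesis $\lambda_l \ge C(\sigma_W,\sigma_\varepsilon)\sqrt{\log p/n}$ in \eqref{eq:lambda_ell} forces each exponent to be of order $-c\log p$. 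A union bound then places us, with probability at least $1 - C\exp(-c\log p)$, on the event $\{\|\mathbf{g}\|_\infty \le \lambda_l/2\}$.

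On this event the remainder is the textbook argument of \citet{wainwright2019high}. Applying H\"older, $\langle\mathbf{g},\widehat{\bdelta}\rangle \le \|\mathbf{g}\|_\infty\|\widehat{\bdelta}\|_1 \le \tfrac{\lambda_l}{2}\|\widehat{\bdelta}\|_1$, together with the decomposability estimate $\|\bbeta_l^*\|_1 - \|\widehat{\bbeta}_l\|_1 \le \|(\widehat{\bdelta})_{S_l}\|_1 - \|(\widehat{\bdelta})_{S_l^c}\|_1$ and the nonnegativity of the quadratic form, yields the cone membership $\|(\widehat{\bdelta})_{S_l^c}\|_1 \le 3\|(\widehat{\bdelta})_{S_l}\|_1$, i.e.\ $\widehat{\bdelta}\in\mathbb{C}(S_l)$. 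This is exactly the cone on which Assumption \ref{assu:1} supplies curvature, so $\widehat{\bdelta}^{\top}\bS_{xx}\widehat{\bdelta} \ge \kappa_l\|\widehat{\bdelta}\|_2^2$. Feeding this back into the basic inequality and bounding the right-hand side by $\tfrac{3\lambda_l}{2}\|(\widehat{\bdelta})_{S_l}\|_1 \le \tfrac{3\lambda_l}{2}\sqrt{s_l}\|\widehat{\bdelta}\|_2$ gives, after cancelling one factor of $\|\widehat{\bdelta}\|_2$, the bound $\|\widehat{\bdelta}\|_2 \le 3\sqrt{s_l}\lambda_l/\kappa_l$. The $\ell_1$ bound follows at once: the cone condition gives $\|\widehat{\bdelta}\|_1 \le 4\|(\widehat{\bdelta})_{S_l}\|_1 \le 4\sqrt{s_l}\|\widehat{\bdelta}\|_2$, and substituting the $\ell_2$ bound produces $12s_l\lambda_l/\kappa_l$.

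Since the conceptual content is absorbed into the two lemmas, the main obstacle is bookkeeping rather than new ideas: one must verify that splitting $\mathbf{g}$ and applying Lemmas \ref{lem:1}--\ref{lem:2} at level $\lambda_l/4$ reproduces the exact constant $C(\sigma_W,\sigma_\varepsilon)$ and that the two tail bounds union-bound into a single $1 - C\exp(-c\log p)$ statement without degrading the rate. The delicate point is that the surrogate injects the factor $\mu_{\min}^{-1}$ and the term $s_{\max}X_{\max}B_{\max}$ into the noise level --- features absent from the fully-observed Lasso --- so that the permissible $\lambda_l$, and hence the final error, inflates with the missingness (through $\mu_{\min}=\min_j(1-\rho_j)$) and with the signal magnitude; confirming that these dependencies align with the stated constant is the one step requiring care.
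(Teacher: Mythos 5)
Your proposal follows essentially the same route as the paper's proof: the identical basic inequality from optimality of $\widehat{\bbeta}_{l}$, the same triangle-inequality split of the gradient noise $\mathbf{g}=(\widehat{\bS}_{xy})_{\bullet l}-\bS_{xx}\bbeta_{l}^{*}$ into the surrogate-deviation and design-noise terms controlled by Lemmas \ref{lem:1} and \ref{lem:2} at level $\lambda_{l}/4$, the same good event $\{\|\mathbf{g}\|_{\infty}\leq\lambda_{l}/2\}$ via a union bound, and the same cone-membership plus restricted-eigenvalue argument delivering both the $\ell_{2}$ and $\ell_{1}$ bounds. The one detail the paper makes explicit that you leave as a flagged caveat is the verification that the choice $\lambda_{l}\geq C(\sigma_{W},\sigma_{\varepsilon})\sqrt{\log p/n}$ is compatible with the restriction $t\leq t_{0}^{(1)}$ in Lemma \ref{lem:1}, which requires $n$ sufficiently large; this is precisely the bookkeeping step you identified as requiring care.
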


\subsection{Estimation error for ${\widehat{\mathbf{\Theta}}}_{\varepsilon\varepsilon}$}

\label{stage2}

This section establishes the estimation bound for $\widehat{\bTheta}_{\varepsilon\varepsilon}$ in Proposition \ref{prop:2}. We would require the following lemmas and the assumption (irrepresentability condition) to prove the proposition.

\begin{lem}
\label{lem3} Assume that each row of random error matrix $\boldsymbol{\varepsilon}\in\mathbb{R}^{n\times q}$
and each row of multiplicative error matrix $\mathbf{W}\in\mathbb{R}^{n\times q}$
follow the sub-Gaussian distribution with parameters $\sigma_{\varepsilon}^{2}$
and $\sigma_{W}^{2}$ respectively, then the elementwise max norm
of the deviation between $\widehat{\bS}_{yy}$ and $\mathbf{S}_{yy}$
satisfies the probability bound for any $t\leq t_{0}^{(2)}$ with
\[
t_{0}^{(2)}\coloneqq\min(X_{\max}^{2}B_{\max}^{2}s_{\max}^{2}\sigma_{W}^{2}/m_{\min},X_{\max}B_{\max}s_{\max}\sigma_{W}^{2}\sigma_{\varepsilon}^{2}/m_{\min}n^{1/3},\sigma_{W}^{2}\sigma_{\varepsilon}^{2}/m_{\min}n^{1/3})
\]
\[
\mathrm{Pr}(\VERT\widehat{\bS}_{yy}-\mathbf{S}_{yy}\VERT_{\max}\geq t)\leq4q^{2}C\Bigg\{\exp\left(-\frac{cnt^{2}m_{\min}^{2}}{\sigma_{W}^{4}\max\left\{ X_{\max}^{4}B_{\max}^{4}s_{\max}^{4},X_{\max}^{2}B_{\max}^{2}s_{\max}^{2}\sigma_{\varepsilon}^{4},\sigma_{\varepsilon}^{4}\right\} }\right)\Bigg\}.
\]
where $m_{\min}=\min_{j,k}\mathbb{E}(W_{j}W_{k}^{\top})>0$, $X_{\max}=\max_{i,k}|X_{ik}|<\infty$,
$B_{\max}=\max_{k,j}|\beta_{kj}^{*}|$, $s_{\max}=\max_{j}s_{j}$,
for $j\in\{1,\ldots,q\}$. 
\end{lem}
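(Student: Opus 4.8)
The plan is to reduce the elementwise-max deviation to a single-entry tail bound and then control that tail with a Bernstein-type inequality after exposing the product structure of each summand. Fix a pair $(j,k)$ and write $m_{jk}\coloneqq\E[W_jW_k]$, so that $(\widehat{\bS}_{yy})_{jk}=\frac{1}{nm_{jk}}\sum_{i=1}^n y_{ij}w_{ij}y_{ik}w_{ik}$ while $(\bS_{yy})_{jk}=\frac{1}{n}\sum_{i=1}^n y_{ij}y_{ik}$. Setting $T_i^{(jk)}\coloneqq y_{ij}y_{ik}(w_{ij}w_{ik}-m_{jk})/m_{jk}$, subtraction gives
\[
(\widehat{\bS}_{yy})_{jk}-(\bS_{yy})_{jk}=\frac{1}{n}\sum_{i=1}^n T_i^{(jk)}.
\]
Since $\bW$ is independent of $\bepsilon$ (hence of $\bY$) and $\E[w_{ij}w_{ik}]=m_{jk}$, each $T_i^{(jk)}$ is mean zero even conditionally on $\bY$, so the display is an average of $n$ i.i.d.\ centred variables; write $D_{jk}$ for this single-entry deviation. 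A union bound over the $q^2$ entries supplies the $q^2$ prefactor.

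Next I would expose the product structure. Writing $\mu_{ij}\coloneqq\bx_i^\top\bbeta_j^*$ so that $y_{ij}=\mu_{ij}+\varepsilon_{ij}$, decompose
\[
y_{ij}y_{ik}=\mu_{ij}\mu_{ik}+\mu_{ij}\varepsilon_{ik}+\varepsilon_{ij}\mu_{ik}+\varepsilon_{ij}\varepsilon_{ik},
\]
which partitions $T_i^{(jk)}$ into four pieces and, through $\Pr(|D_{jk}|\ge t)\le\sum\Pr(|\cdot|\ge t/4)$, accounts for the remaining factor of $4$. The three distinct scales inside the maximum arise from the three qualitatively different products created by this split: the deterministic--deterministic term $\mu_{ij}\mu_{ik}$, bounded in modulus by $s_{\max}^2X_{\max}^2B_{\max}^2$ via $|\mu_{ij}|\le s_{\max}X_{\max}B_{\max}$; the deterministic--stochastic terms $\mu\varepsilon$; and the stochastic--stochastic term $\varepsilon_{ij}\varepsilon_{ik}$, a product of two sub-Gaussians. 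Computing the corresponding Bernstein variance proxies yields the three entries $X_{\max}^4B_{\max}^4s_{\max}^4$, $X_{\max}^2B_{\max}^2s_{\max}^2\sigma_\varepsilon^4$, and $\sigma_\varepsilon^4$, while the centred multiplicative factor $(w_{ij}w_{ik}-m_{jk})/m_{jk}$, controlled through $\sigma_W$ and bounded below by $m_{\min}$, contributes the common factor $\sigma_W^4/m_{\min}^2$ in the exponent.

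Each of the four pieces is a sum of independent, centred products of sub-Gaussian factors and is therefore sub-exponential---indeed, for the higher-order products, sub-Weibull. I would apply a Bernstein/sub-Weibull tail bound \citep{wainwright2019high} to each piece: such a bound has a quadratic (sub-Gaussian) regime governed by the variance proxy $v^2$ identified above and a heavier regime governed by the product scale $b$. Restricting to $t\le t_0^{(2)}$ keeps every piece inside its quadratic regime, where $t^2/v^2$ dominates, and this is what produces the stated single exponential; the $n^{1/3}$ factors appearing in $t_0^{(2)}$ are precisely the crossover thresholds $t\asymp v^2/b$ for the three- and four-fold products (the $w$-factor entering through $w_{ij}w_{ik}$ with scale $\sigma_W^2$). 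Combining the four piecewise bounds by a union bound, and then unioning over the $q^2$ entries, yields the prefactor $4q^2C$ and the maximum in the denominator, completing the argument.

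The main obstacle is precisely this product structure: $T_i^{(jk)}$ is a product of up to four sub-Gaussian factors, so it is only sub-Weibull, and a sub-Gaussian (Hoeffding) tail cannot be used directly; one must instead track a variance proxy and a product scale separately for each of the three summand types and locate the Gaussian-regime boundary. Getting the threshold $t_0^{(2)}$---in particular the $n^{1/3}$ powers and the exact powers of $m_{\min}$, $\sigma_W$, $\sigma_\varepsilon$, $X_{\max}$, $B_{\max}$, and $s_{\max}$ in each of the three regimes---is the delicate bookkeeping on which the stated constants hinge, and verifying that the centring of the $w$-factor and its $m_{\min}$ lower bound enter with the correct powers is where most of the care is required.
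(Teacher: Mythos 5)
Your proposal is correct and follows essentially the same route as the paper's proof: the same entry-wise reduction with a union bound over the $q^{2}$ entries, the same four-way decomposition obtained by substituting $y_{ij}=\bx_i^{\top}\bbeta_j^{*}+\varepsilon_{ij}$ (yielding the deterministic, mixed, and purely stochastic pieces and the prefactor $4$), the same treatment of the centred factor $w_{ij}w_{ik}-\E[W_jW_k]$ as a sub-exponential variable scaled by $m_{\min}$, and the same use of heavier-tailed (sub-Weibull/$\psi_{1/2}$) concentration for the three- and four-fold products, with $t_0^{(2)}$ chosen exactly so that every piece stays in its quadratic regime — which is where the $n^{1/3}$ crossovers come from, just as you identified. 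The only cosmetic difference is the reference used for the $\psi_{1/2}$ tail bound (the paper invokes Götze et al.'s Corollary 1.4 rather than a Wainwright-style Bernstein bound), which does not change the argument.
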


\begin{lem}
\label{lem4} Assume that each row of the error matrix $\boldsymbol{\varepsilon}\in\mathbb{R}^{n\times q}$
and each row of multiplicative error matrix $\mathbf{W}\in\mathbb{R}^{n\times q}$
are two sub-Gaussian random vectors where each elements of the vectors
follow the sub-Gaussian distribution with parameters $\sigma_{\varepsilon}^{2}$
and $\sigma_{W}^{2}$ respectively. Under this assumption, the elementwise
max norm of the deviation between $\widehat{\bS}_{\varepsilon\varepsilon}$
and $\bSigma_{\varepsilon\varepsilon}^{*}$ satisfies the following
condition
\[
\mathrm{Pr}(\VERT\widehat{\bS}_{\varepsilon\varepsilon}-\bSigma_{\varepsilon\varepsilon}^{*}\VERT_{\max}\leq\Delta)\geq1-C\exp(-c\log q^{2})-C\exp(-c\log(pq)),
\]
where $\Delta$ is set to be defined as follows:
\begin{align}
\Delta & =C_1(\sigma_W,\sigma_\varepsilon) \sqrt{\frac{\log(q^{2})}{n}}+X_{\max}^{2}\left(\max_{l\in[q]}12s_{l}\lambda_{l}/\kappa_{l}\right)^{2}+X_{\max}^{2}s_{\max}B_{\max}\max_{l\in[q]}12s_{l}\lambda_{l}/\kappa_{l}\label{eq:Delta-1}\\
 & \qquad +C_2(\sigma_\varepsilon)\sqrt{\frac{\log(q^{2})}{n}}+\sigma_{\varepsilon}^{2}\sqrt{\frac{\log(q^{2})}{n}},\nonumber 
\end{align}
where $C_1(\sigma_W,\sigma_\varepsilon)=\sigma_{W}^{2}\max\left\{ X_{\max}^{2}B_{\max}^{2}s_{\max}^{2}/m_{\min},X_{\max}B_{\max}s_{\max}\sigma_{\varepsilon}^{2}/m_{\min},\sigma_{\varepsilon}^{2}/m_{\min}\right\}$, $C_2(\sigma_\varepsilon)=\sigma_{\varepsilon}X_{\max}s_{\max}B_{\max}$ and $C_3(\sigma_\varepsilon) = \sigma_{\varepsilon}^{2}.$

\end{lem}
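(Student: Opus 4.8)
The plan is to use the population identity $\bSigma_{\varepsilon\varepsilon}^{*}=\bSigma_{yy}^{*}-\bB^{*\top}\bSigma_{xx}^{*}\bB^{*}$ (valid because $\E[\bepsilon\mid\bX]=\mathbf{0}$), and to split the error into an ``oracle'' piece that uses the true $\bB^{*}$ and a ``plug-in'' piece that accounts for replacing $\bB^{*}$ by $\widehat{\bB}^{(1)}$. Writing $\widehat{\bS}_{\varepsilon\varepsilon}=\widehat{\bS}_{yy}-\widehat{\bB}^{(1)\top}\bS_{xx}\widehat{\bB}^{(1)}$, I would decompose
\[
\widehat{\bS}_{\varepsilon\varepsilon}-\bSigma_{\varepsilon\varepsilon}^{*}=\underbrace{\big(\widehat{\bS}_{yy}-\bB^{*\top}\bS_{xx}\bB^{*}-\bSigma_{\varepsilon\varepsilon}^{*}\big)}_{(\mathrm{A})}+\underbrace{\big(\bB^{*\top}\bS_{xx}\bB^{*}-\widehat{\bB}^{(1)\top}\bS_{xx}\widehat{\bB}^{(1)}\big)}_{(\mathrm{B})}.
\]
The crucial feature is that both (A) and (B) are formed with the \emph{same} empirical $\bS_{xx}$, so the quadratic forms in $\bB^{*}$ cancel exactly; consequently I never need a separate concentration bound for $\bS_{xx}-\bSigma_{xx}^{*}$. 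Applying the triangle inequality in $\VERT\cdot\VERT_{\max}$ then reduces the proof to bounding (A) and (B) separately, and the five terms of $\Delta$ arise as the contributions of the individual summands.

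For the plug-in term (B), set $\bDelta=\widehat{\bB}^{(1)}-\bB^{*}$ with columns $\bdelta_{l}$ and expand $\widehat{\bB}^{(1)\top}\bS_{xx}\widehat{\bB}^{(1)}=\bB^{*\top}\bS_{xx}\bB^{*}+\bB^{*\top}\bS_{xx}\bDelta+\bDelta^{\top}\bS_{xx}\bB^{*}+\bDelta^{\top}\bS_{xx}\bDelta$, so (B) is one quadratic and two (transpose-equivalent) cross terms in $\bDelta$. Using $\VERT\bS_{xx}\VERT_{\max}\le X_{\max}^{2}$ and the duality bound $|\mathbf{u}^{\top}\bS_{xx}\mathbf{v}|\le\VERT\bS_{xx}\VERT_{\max}\|\mathbf{u}\|_{1}\|\mathbf{v}\|_{1}$, together with the columnwise $\ell_{1}$ bound $\|\bdelta_{l}\|_{1}\le 12s_{l}\lambda_{l}/\kappa_{l}$ from Proposition \ref{prop:1} and $\|\bbeta_{l}^{*}\|_{1}\le s_{\max}B_{\max}$, yields the two polynomial summands $X_{\max}^{2}(\max_{l}12s_{l}\lambda_{l}/\kappa_{l})^{2}$ and $X_{\max}^{2}s_{\max}B_{\max}\max_{l}12s_{l}\lambda_{l}/\kappa_{l}$. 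These hold deterministically on the event where Proposition \ref{prop:1} is valid, which has probability at least $1-C\exp(-c\log p)$.

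For the oracle term (A), first peel off $\widehat{\bS}_{yy}-\bS_{yy}$, which Lemma \ref{lem3} controls by the summand $C_{1}(\sigma_{W},\sigma_{\varepsilon})\sqrt{\log(q^{2})/n}$. Substituting the model $\bY=\bX\bB^{*}+\bepsilon$ into $\bS_{yy}=\tfrac1n\bY^{\top}\bY$ gives $\bS_{yy}-\bB^{*\top}\bS_{xx}\bB^{*}=\bB^{*\top}\bS_{x\varepsilon}+\bS_{x\varepsilon}^{\top}\bB^{*}+\bS_{\varepsilon\varepsilon}$, where $\bS_{x\varepsilon}\coloneqq\tfrac1n\bX^{\top}\bepsilon$ and $\bS_{\varepsilon\varepsilon}\coloneqq\tfrac1n\bepsilon^{\top}\bepsilon$; hence (A) reduces to the cross term $\bB^{*\top}\bS_{x\varepsilon}$ (plus transpose) and the centered error covariance $\bS_{\varepsilon\varepsilon}-\bSigma_{\varepsilon\varepsilon}^{*}$. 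Both are sums of products of (conditionally) sub-Gaussian variables, hence sub-exponential, and a Bernstein-type tail bound gives $\VERT\bB^{*\top}\bS_{x\varepsilon}\VERT_{\max}\lesssim\sigma_{\varepsilon}X_{\max}s_{\max}B_{\max}\sqrt{\log(pq)/n}$ (using $\|\bbeta_{l}^{*}\|_{1}\le s_{\max}B_{\max}$ and the $pq$ entries of $\bS_{x\varepsilon}$) and $\VERT\bS_{\varepsilon\varepsilon}-\bSigma_{\varepsilon\varepsilon}^{*}\VERT_{\max}\lesssim\sigma_{\varepsilon}^{2}\sqrt{\log(q^{2})/n}$, matching $C_{2}(\sigma_{\varepsilon})\sqrt{\log(q^{2})/n}$ and $\sigma_{\varepsilon}^{2}\sqrt{\log(q^{2})/n}$. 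A union bound over the $q^{2}$ entries (for the $\widehat{\bS}_{yy}-\bS_{yy}$ and $\bS_{\varepsilon\varepsilon}$ events) and the $pq$ entries (for the cross term), combined with the Proposition \ref{prop:1} event, delivers the stated probability $1-C\exp(-c\log q^{2})-C\exp(-c\log(pq))$.

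The step I expect to be the main obstacle is the sub-exponential concentration of the quadratic-in-noise quantities $\bB^{*\top}\bS_{x\varepsilon}$ and $\bS_{\varepsilon\varepsilon}-\bSigma_{\varepsilon\varepsilon}^{*}$: a plain sub-Gaussian Hoeffding bound is insufficient, so one must invoke a Bernstein inequality with the correct sub-exponential parameters and verify the validity range $t\le t_{0}$ under which the single $\sqrt{\log/n}$ rate (rather than a slower $\log/n$ tail) holds. A secondary bookkeeping difficulty is reconciling the $\log(pq)$ rate natural for $\bS_{x\varepsilon}$ with the $\log(q^{2})$ factor reported inside $\Delta$; this is resolved by retaining the $\exp(-c\log(pq))$ term in the probability statement and absorbing the Proposition \ref{prop:1} failure probability into it.
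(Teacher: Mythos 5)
Your decomposition is essentially the paper's: after peeling off $\widehat{\bS}_{yy}-\bS_{yy}$ (Lemma \ref{lem3}) and expanding $\bS_{yy}$ via the model, you arrive at exactly the four pieces the paper labels $T_{1}$ through $T_{4}$ — the surrogate deviation, the quadratic and cross plug-in terms controlled by the $\ell_{1}$ bounds of Proposition \ref{prop:1}, the noise--signal cross term, and the centered error covariance — assembled by Boole's inequality. Your treatment of the plug-in terms and of $\bepsilon^{\top}\bepsilon/n-\bSigma_{\varepsilon\varepsilon}^{*}$ (Bernstein, with the validity range $t\le\sigma_{\varepsilon}^{2}$) matches the paper's proof.

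The one step where your plan as written does not deliver the stated statement is the cross term. You propose to bound $\VERT\bB^{*\top}\bS_{x\varepsilon}\VERT_{\max}\le s_{\max}B_{\max}\VERT\bS_{x\varepsilon}\VERT_{\max}$ and union-bound over the $pq$ entries of $\bS_{x\varepsilon}$; that forces a threshold of order $\sigma_{\varepsilon}X_{\max}s_{\max}B_{\max}\sqrt{\log(pq)/n}$, which is strictly larger than the term $C_{2}(\sigma_{\varepsilon})\sqrt{\log(q^{2})/n}$ appearing in $\Delta$ whenever $p>q$. Your suggested fix — retaining the $\exp(-c\log(pq))$ term in the probability statement — does not repair this: it changes nothing about the deviation level, so you would prove the lemma only with an inflated $\Delta$. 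The paper avoids the issue by never passing through $\VERT\bS_{x\varepsilon}\VERT_{\max}$: it works directly with the $q\times q$ matrix $(2/n)\bepsilon^{\top}\bX\bB^{*}$, whose $(j,k)$ entry is a single weighted sum $\frac{2}{n}\sum_{i}\big(\sum_{k'\in S_{k}}X_{ik'}\beta_{k'k}^{*}\big)\varepsilon_{ij}$ with deterministic weights bounded by $X_{\max}s_{\max}B_{\max}$; Hoeffding applies per entry and the union bound runs over only $q^{2}$ entries, giving the $\sqrt{\log(q^{2})/n}$ rate. Relatedly, your expectation that this term needs sub-exponential (Bernstein) machinery is unfounded in the fixed-design setting: $\bX$ enters only through bounded deterministic weights, so the cross term is a sub-Gaussian sum and plain Hoeffding suffices; only $T_{4}$ genuinely requires Bernstein. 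Finally, note that the $\exp(-c\log(pq))$ term in the stated probability comes from the Proposition \ref{prop:1} events (one per column, $q$ of them, each failing with probability $C\exp(-c\log p)$), not from the cross term.
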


We can bound the deviation between the projected estimate $\widetilde{\bS}_{\varepsilon\varepsilon}$
and the true $\mathbf{\bSigma_{\varepsilon\varepsilon}^{*}}$ using the deviation between the surrogate estimate $\widehat{\bS}_{\varepsilon\varepsilon}$ and  $\mathbf{\bSigma_{\varepsilon\varepsilon}^{*}}$
\begin{equation}
\VERT\widetilde{\bS}_{\varepsilon\varepsilon}-\bSigma_{\varepsilon\varepsilon}^{*}\VERT_{\max}\leq\VERT\widetilde{\bS}_{\varepsilon\varepsilon}-\widehat{\bS}_{\varepsilon\varepsilon}\VERT_{\max}+\VERT\widehat{\bS}_{\varepsilon\varepsilon}-\bSigma_{\varepsilon\varepsilon}^{*}\VERT_{\max}\leq2\VERT\widehat{\bS}_{\varepsilon\varepsilon}-\bSigma_{\varepsilon\varepsilon}^{*}\VERT_{\max}.\label{eq:Proj_conv_rate}
\end{equation}
This is because, by definition of $\widetilde{\bS}_{\varepsilon\varepsilon}$ in \eqref{eq:Projected_Syy},
we have $\VERT\widetilde{\bS}_{\varepsilon\varepsilon}-\widehat{\bS}_{\varepsilon\varepsilon}\VERT_{\max}\leq\VERT\widehat{\bS}_{\varepsilon\varepsilon}-\bSigma_{\varepsilon\varepsilon}^{*}\VERT_{\max}$  
(as $\text{\ensuremath{\bSigma_{\varepsilon\varepsilon}^{*}}}$
is also positive semi-definite). Combining it with the triangle
inequality leads to the result.

Next, to establish the following irrepresentability condition assumption and Proposition \ref{prop:2} let us define some terminologies. Following \citet{ravikumar2011high}, we define the maximum degree
or row cardinality of $\bTheta_{\varepsilon\varepsilon}^{*}$ as 
\[
d_{q}=\max_{l\in[q]}\mathrm{card}\left\{ l^{\prime}\in[q]\backslash\left\{ l\right\} :(\bTheta_{\varepsilon\varepsilon})_{ll^{\prime}}\neq0\right\} ,
\]
and let $\kappa_{\bSigma_{\varepsilon\varepsilon}^{*}}=\VERT\bSigma_{\varepsilon\varepsilon}^{*}\VERT_{\infty}$.
We further let $S=\left\{ (l,l^{\prime})\in[q]\times[q]:(\bTheta_{\varepsilon\varepsilon})_{ll^{\prime}}\neq0\right\} $,
$S^{c}=[q]\times[q]\backslash S$, and $\bGamma=\bSigma_{\varepsilon\varepsilon}^{*}\otimes\bSigma_{\varepsilon\varepsilon}^{*}\in\mathbb{R}^{q^{2}}\times\mathbb{R}^{q^{2}}$.
For any two subsets $T$ and $T^{\prime}$ of $[q^{2}]$, let $(\bGamma)_{TT^{\prime}}$
denote the $\card(T)\times\card(T^{\prime})$ matrix with rows and
columns of $\bGamma$ indexed by $T$ and $T^{\prime}$, respectively.
Then we set $\kappa_{\bGamma}=\VERT(\bGamma)_{SS}^{-1}\VERT_{\infty}$.
To
derive the bound for the estimator $\widehat{\bTheta}_{\varepsilon\varepsilon}$,
let us introduce the irrepresentability condition introduced in Assumption
1 in \citet{ravikumar2011high} for graphical Lasso without any corrupted
data.

\begin{assumption}
[Irrepresentability condition]\label{assu:2} There exists $\alpha\in(0,1]$
such that 
\[
\max_{e\in S^{c}}\VERT(\bGamma)_{\left\{ e\right\} S}(\bGamma)_{SS}^{-1}\VERT_{1,1}\leq1-\alpha
\]
where $\bGamma=\bSigma_{\varepsilon\varepsilon}^{*}\otimes\bSigma_{\varepsilon\varepsilon}^{*}\in\mathbb{R}^{q^{2}}\times\mathbb{R}^{q^{2}}$.
\end{assumption}
\begin{prop}
\label{prop:2} Suppose that, for all $l\in[q],$ $\kappa_{l}>0$,
$\lambda_{l}$ in \eqref{eq:Beta_obj} satisfies \eqref{eq:lambda_ell}
and $n$ is sufficiently large to ensure that Proposition \ref{prop:1}
applies. Further, assume that Assumption \ref{assu:2} is satisfied
and that $n$ is sufficiently large to ensure that 
\begin{equation}
{6\left(1+\frac{8}{\alpha}\right){}^{2}\max\left(\kappa_{\bSigma_{\varepsilon\varepsilon}^{*}},\kappa_{\bGamma},\kappa_{\bSigma_{\varepsilon\varepsilon}^{*}}^{3},\kappa_{\bGamma}^{2}\right)d_{q}\times\Delta\leq1.}\label{eq:ss_restriction}
\end{equation}
Finally, suppose that the tuning parameter $\lambda_{\bTheta}$ in
\eqref{eq:glasso} satisfies 
\begin{equation}
\lambda_{\bTheta}=\frac{8\Delta}{\alpha}.\label{eq:tuning_theta}
\end{equation}
Then with probability at least 
\begin{align}
 & 1-C\exp(-c\log q^{2})-C\exp(-c\log(pq))\label{eq:Prob_P2}
\end{align}
the estimator $\widehat{\bTheta}_{\varepsilon\varepsilon}$ satisfies
\begin{equation}
\VERT\widehat{\bTheta}_{\varepsilon\varepsilon}-\bTheta_{\varepsilon\varepsilon}^{*}\VERT_{\max}\leq\left\{ 2\kappa_{\bGamma}(1+\frac{8}{\alpha})\right\} \Delta\coloneqq\Delta_{\infty}(\bTheta_{\varepsilon\varepsilon}^{*})\label{eq:Theta_max_bound}
\end{equation}
and 
\begin{equation}
\VERT\widehat{\bTheta}_{\varepsilon\varepsilon}-\bTheta_{\varepsilon\varepsilon}^{*}\VERT_{\mathrm{2}}\leq d_{q}\Delta_{\infty}(\bTheta_{\varepsilon\varepsilon}^{*})\coloneqq\Delta_{1}(\bTheta_{\varepsilon\varepsilon}^{*}).\label{eq:Theta_op_bound}
\end{equation}
\end{prop}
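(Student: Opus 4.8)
The plan is to reduce Proposition \ref{prop:2} to the deterministic primal-dual analysis of the graphical Lasso developed by \citet{ravikumar2011high}. The key observation is that the estimator $\widehat{\bTheta}_{\varepsilon\varepsilon}$ in \eqref{theta2} solves an ordinary graphical Lasso program whose only data-dependent input is the matrix $\widetilde{\bS}_{\varepsilon\varepsilon}$. Their machinery is agnostic to how this input is formed: it requires only (i) the irrepresentability condition on $\bGamma=\bSigma_{\varepsilon\varepsilon}^{*}\otimes\bSigma_{\varepsilon\varepsilon}^{*}$, which is exactly Assumption \ref{assu:2}, and (ii) a uniform elementwise-max control on the sampling error $\VERT\widetilde{\bS}_{\varepsilon\varepsilon}-\bSigma_{\varepsilon\varepsilon}^{*}\VERT_{\max}$. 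Thus the proof amounts to supplying the deviation bound coming from our surrogate-and-projection construction and then invoking their deterministic conclusion with the appropriate substitutions.

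First I would establish the deviation control. Lemma \ref{lem4} bounds $\VERT\widehat{\bS}_{\varepsilon\varepsilon}-\bSigma_{\varepsilon\varepsilon}^{*}\VERT_{\max}\leq\Delta$ on an event of probability at least the quantity in \eqref{eq:Prob_P2}, and the projection inequality \eqref{eq:Proj_conv_rate} transfers this to the projected surrogate, giving $\VERT\widetilde{\bS}_{\varepsilon\varepsilon}-\bSigma_{\varepsilon\varepsilon}^{*}\VERT_{\max}\leq 2\Delta$ on the same event. Crucially, the projection step introduces no extra randomness, so the probability statement \eqref{eq:Prob_P2} is inherited verbatim from Lemma \ref{lem4}. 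This quantity $\delta_f\coloneqq\VERT\widetilde{\bS}_{\varepsilon\varepsilon}-\bSigma_{\varepsilon\varepsilon}^{*}\VERT_{\max}$ plays the role of the sampling-error rate in the Ravikumar et al. theorem, and the tuning choice \eqref{eq:tuning_theta} is precisely $\lambda_{\bTheta}\propto\delta_f/\alpha$ in their notation.

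Next I would run the primal-dual witness construction. One solves the support-restricted problem (optimizing over $\bTheta$ supported on $S$ together with the diagonal), sets the off-support entries of the candidate $\widehat{\bTheta}_{\varepsilon\varepsilon}$ to zero, and builds a dual matrix satisfying the stationarity relation $\widetilde{\bS}_{\varepsilon\varepsilon}-\widehat{\bTheta}_{\varepsilon\varepsilon}^{-1}+\lambda_{\bTheta}Z=0$. A Taylor expansion of the log-det gradient about $\bTheta_{\varepsilon\varepsilon}^{*}$ produces a nonlinear remainder controlled through $\kappa_{\bGamma}$, $\kappa_{\bSigma_{\varepsilon\varepsilon}^{*}}$ and the degree $d_q$; the sample-size condition \eqref{eq:ss_restriction} is exactly what forces this remainder below the margin needed for strict dual feasibility $\VERT Z_{S^{c}}\VERT_{\max}<1$. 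This certifies that the restricted solution is the actual graphical Lasso solution and recovers the correct zero pattern, and the residual analysis then yields $\VERT\widehat{\bTheta}_{\varepsilon\varepsilon}-\bTheta_{\varepsilon\varepsilon}^{*}\VERT_{\max}\leq 2\kappa_{\bGamma}(\delta_f+\lambda_{\bTheta})=2\kappa_{\bGamma}(1+8/\alpha)\delta_f$, giving \eqref{eq:Theta_max_bound} after substituting the bound on $\delta_f$ and absorbing the constant.

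Finally I would upgrade the elementwise bound to the spectral bound \eqref{eq:Theta_op_bound}. Strict dual feasibility forces $\widehat{\bTheta}_{\varepsilon\varepsilon}$ to share the support of $\bTheta_{\varepsilon\varepsilon}^{*}$, so each row of the symmetric difference $\widehat{\bTheta}_{\varepsilon\varepsilon}-\bTheta_{\varepsilon\varepsilon}^{*}$ has at most $d_q+1$ nonzero entries; hence its maximum absolute row sum satisfies $\VERT\widehat{\bTheta}_{\varepsilon\varepsilon}-\bTheta_{\varepsilon\varepsilon}^{*}\VERT_{\infty}\leq(d_q+1)\,\VERT\widehat{\bTheta}_{\varepsilon\varepsilon}-\bTheta_{\varepsilon\varepsilon}^{*}\VERT_{\max}$, and since $\VERT\cdot\VERT_{2}\leq\VERT\cdot\VERT_{\infty}$ for symmetric matrices this delivers \eqref{eq:Theta_op_bound}. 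The main obstacle, as always in this framework, is the verification of strict dual feasibility, i.e.\ controlling the log-det remainder and checking that it stays below the dual margin $\alpha/8$ under \eqref{eq:ss_restriction}. All of that heavy lifting is already done in \citet{ravikumar2011high}; the only genuinely new ingredient is that the input covariance is the projected unbiased surrogate $\widetilde{\bS}_{\varepsilon\varepsilon}$ rather than an ordinary sample covariance, and it enters solely through the deviation bound of the second step.
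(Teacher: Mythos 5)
Your proposal matches the paper's proof essentially step for step: both transfer Lemma \ref{lem4}'s deviation bound to $\widetilde{\bS}_{\varepsilon\varepsilon}$ via the projection inequality \eqref{eq:Proj_conv_rate} (inheriting the probability \eqref{eq:Prob_P2} unchanged, since projection adds no randomness), invoke Theorem 1 of \citet{ravikumar2011high} as a black box with $\bar{\delta}_{f}=2\Delta$ — whose primal-dual witness internals you correctly identify as needing only Assumption \ref{assu:2} and the elementwise control, with \eqref{eq:ss_restriction} guaranteeing strict dual feasibility — and then upgrade to the operator norm using the support recovery that dual feasibility certifies. The only differences are constant bookkeeping: the paper bounds $\VERT\widehat{\bTheta}_{\varepsilon\varepsilon}-\bTheta_{\varepsilon\varepsilon}^{*}\VERT_{2}\leq\sqrt{d_{q}d_{q}}\,\VERT\widehat{\bTheta}_{\varepsilon\varepsilon}-\bTheta_{\varepsilon\varepsilon}^{*}\VERT_{\max}$ where you use $\VERT\cdot\VERT_{2}\leq\VERT\cdot\VERT_{\infty}\leq(d_{q}+1)\VERT\cdot\VERT_{\max}$, and your final substitution $2\kappa_{\bGamma}(\delta_{f}+\lambda_{\bTheta})=2\kappa_{\bGamma}(1+8/\alpha)\delta_{f}$ carries the same factor-of-two looseness (from $\delta_{f}\leq2\Delta$ versus $\lambda_{\bTheta}=8\Delta/\alpha$) that is already present in the paper's own proof.
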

The proofs of Lemma \ref{lem3}, Lemma \ref{lem4} and Proposition \ref{prop:2} are provided in \ref{Pr_Lem3}, \ref{Pr_Lem4} and \ref{Pr_Prop2}, respectively. 

\subsection{Estimation error for $\widehat{\protect\bB}^{(2)}$}

\label{stage3}

Let $\mathcal{L}(\cdot;\bS_{xx},\bS_{xy},\bTheta):\mathbb{R}^{p\times q}\rightarrow\mathbb{R}$
be the loss functions to estimate \textbf{$\bB^{*}$} which depend
on matrices $\bS_{xx}\in\mathbb{R}^{p\times p}$, $\bS_{xy}\in\mathbb{R}^{p\times q}$
and $\bTheta\in\mathbb{R}^{q\times q}$. Let $\bB\in\mathbb{\mathbb{R}}^{p\times q}$
be any arbitrary matrix and set 
\begin{align}
\mathcal{L}(\bB;\bS_{xx},\bS_{xy},\bTheta) & =\tr[(\bB^{\top}\bS_{xx}\bB/2-\bS_{xy}^{\top}\mathbf{B})\bTheta]\nonumber \\
 & =\vec(\bB)^{\top}(\bTheta\otimes\bS_{xx})\vec(\bB)/2-\tr(\bTheta\bS_{xy}^{\top}\bB).\label{eq:2ndBeta_loss}
\end{align}
The quantity $\bS_{xx}$ in \eqref{eq:2ndBeta_loss} will be replaced
by the estimate of $\bSigma_{xx}^{*}$ from the data since there is
no missingness in $\bX$. The quantities $\bS_{xy}$ and $\bTheta$
will be replaced by the surrogate and the estimates of $\bSigma_{xy}^{*}$
and $\bTheta_{\varepsilon\varepsilon}^{*}$, respectively. For brevity,
we write, $\mathcal{L}(\cdot;\bS_{xx},\bS_{xy},\bTheta)$ as $\mathcal{L}$
in the following derivations.

We consider the elementwise sparsity of $\bB^{*}$. A matrix $\bB^{*}$
is called elementwise sparse if its support set $S\subset\mathbb{R}^{p\times q}$
is such that $s=\card(S)\ll pq$. In order to obtain an elementwise
sparse estimator of \textbf{$\bB^{*}$}, it is natural to regularize
the least squares program with the $\VERT\cdot\VERT_{1,1}$ penalty
of $\bB$, 
\begin{equation}
\widehat{\bB}^{(2)}=\argmin_{\bB\in\mathbb{R}^{p\times q}}\left\{ \mathcal{L}(\bB;\bS_{xx},\widehat{\bS}_{xy},\widehat{\bTheta}_{\varepsilon\varepsilon})+\lambda_{\bB}\VERT\bB\VERT_{1,1}\right\} ,\label{eq:Beta_obj_mat}
\end{equation}
where $\lambda_{\bB}>0$ is a tuning parameter.

Next, we establish the restricted eigenvalue (RE) condition for the
loss function. Following \citet{negahban2012unified}, let us define,
for all $\bDelta=\bB-\bB^{*}\in\mathbb{Q}$, 
\begin{align*}
\mathcal{E}\mathcal{L}(\bDelta,\bB^{*}) & =\mathcal{E}\mathcal{L}(\bDelta,\bB^{*};\bS_{xx},\bS_{xy},\bTheta_{\varepsilon\varepsilon}^{*})\\
 & =\mathcal{L}(\bB^{*}+\bDelta;\bS_{xx},\bS_{xy},\bTheta_{\varepsilon\varepsilon}^{*})-\mathcal{L}(\bB^{*};\bS_{xx},\bS_{xy},\bTheta_{\varepsilon\varepsilon}^{*})-\langle\nabla_{\bB}\mathcal{L}(\bB^{*};\bS_{xx},\bS_{xy},\bTheta_{\varepsilon\varepsilon}^{*}),\bDelta\rangle\\
 & =\langle\bDelta^{\top}\bS_{xx}\bDelta,\bTheta_{\varepsilon\varepsilon}^{*}\rangle/2\\
 & =\vec(\bDelta)^{\top}(\bTheta_{\varepsilon\varepsilon}^{*}\otimes\bS_{xx})\vec(\bDelta)/2,
\end{align*}
where 
\[
\nabla_{\bB}\mathcal{L}(\bB;\bS_{xx},\bS_{xy},\bTheta_{\varepsilon\varepsilon}^{*})=\bS_{xx}\bB\bTheta_{\varepsilon\varepsilon}^{*}-\bS_{xy}\bTheta_{\varepsilon\varepsilon}^{*}
\]
To derive the bound for the estimator $\widehat{\bB}^{(2)}$
we need the following assumption on the RE condition for the loss
function. 
\begin{assumption}
\label{assu:3}The loss function $\mathcal{L}(\cdot;\bS_{xx},\bS_{xy},\bTheta_{\varepsilon\varepsilon}^{*})$
satisfies the RE condition 
\begin{equation}
\mathcal{E}\mathcal{L}(\bDelta,\bB^{*})\geq\kappa\VERT\bDelta\VERT_{F}^{2}\qquad\forall\bDelta\in\mathbb{C}(S).\label{eq:RE_refined_B}
\end{equation}
with constant $\kappa>0$ over the cone set $\mathbb{C}(S)=\left\{ \bM\in\mathbb{R}^{p\times q}:\VERT(\bM)_{S^{c}}\VERT_{1,1}\leq3\VERT(\bM)_{S}\VERT_{1,1}\right\} $. 
\end{assumption}
\begin{thm}
\label{thm:1}Suppose that Assumption \ref{assu:3} and the assumptions
of Proposition \ref{prop:2} hold. Further suppose that for $s=\card(S)$,
where $S\subset\mathbb{R}^{p\times q}$ is the support set 
\end{thm}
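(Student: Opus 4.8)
The plan is to treat the Stage~III program \eqref{eq:Beta_obj_mat} as a regularized $M$-estimator and follow the unified analysis of \citet{negahban2012unified}, exploiting that the penalty $\VERT\cdot\VERT_{1,1}$ is decomposable with respect to the support $S$ and has dual norm $\VERT\cdot\VERT_{\max}$. Writing $\widehat{\bDelta}=\widehat{\bB}^{(2)}-\bB^{*}$, I would begin from the basic inequality implied by the optimality of $\widehat{\bB}^{(2)}$: since $\widehat{\bB}^{(2)}$ minimizes the penalized objective and $\mathcal{L}$ is convex,
\[
\mathcal{E}\mathcal{L}(\widehat{\bDelta},\bB^{*};\bS_{xx},\widehat{\bS}_{xy},\widehat{\bTheta}_{\varepsilon\varepsilon})\le-\langle\nabla_{\bB}\mathcal{L}(\bB^{*};\bS_{xx},\widehat{\bS}_{xy},\widehat{\bTheta}_{\varepsilon\varepsilon}),\widehat{\bDelta}\rangle+\lambda_{\bB}\big(\VERT\bB^{*}\VERT_{1,1}-\VERT\widehat{\bB}^{(2)}\VERT_{1,1}\big).
\]
By the matrix Hölder inequality the linear term is at most $\VERT\nabla_{\bB}\mathcal{L}(\bB^{*})\VERT_{\max}\VERT\widehat{\bDelta}\VERT_{1,1}$, so the two objects I must control are the dual norm of the gradient at the truth and the lower curvature $\mathcal{E}\mathcal{L}$; the goal is the Frobenius bound $\VERT\widehat{\bDelta}\VERT_{F}\le C\sqrt{s}\,\lambda_{\bB}/\kappa$ and the elementwise bound $\VERT\widehat{\bDelta}\VERT_{1,1}\le C's\,\lambda_{\bB}/\kappa$.

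The main technical step is bounding $\VERT\nabla_{\bB}\mathcal{L}(\bB^{*})\VERT_{\max}$. Using $\nabla_{\bB}\mathcal{L}(\bB^{*};\bS_{xx},\widehat{\bS}_{xy},\widehat{\bTheta}_{\varepsilon\varepsilon})=(\bS_{xx}\bB^{*}-\widehat{\bS}_{xy})\widehat{\bTheta}_{\varepsilon\varepsilon}$ and the submultiplicative bound $\VERT\bM\bN\VERT_{\max}\le\VERT\bM\VERT_{\max}\VERT\bN\VERT_{1}$, I would split
\[
\VERT\bS_{xx}\bB^{*}-\widehat{\bS}_{xy}\VERT_{\max}\le\VERT\bS_{xx}\bB^{*}-\bS_{xy}\VERT_{\max}+\VERT\bS_{xy}-\widehat{\bS}_{xy}\VERT_{\max},
\]
noting that the population gradient vanishes since $\bSigma_{xy}^{*}=\bSigma_{xx}^{*}\bB^{*}$. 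The two right-hand terms are precisely what Lemmas \ref{lem:2} and \ref{lem:1} control, each of order $\sqrt{\log(pq)/n}$ with high probability, while $\VERT\widehat{\bTheta}_{\varepsilon\varepsilon}\VERT_{1}\le\VERT\bTheta_{\varepsilon\varepsilon}^{*}\VERT_{1}+\VERT\widehat{\bTheta}_{\varepsilon\varepsilon}-\bTheta_{\varepsilon\varepsilon}^{*}\VERT_{1}$ is bounded through the operator-norm guarantee \eqref{eq:Theta_op_bound} of Proposition \ref{prop:2}. Combining these yields a high-probability bound on the dual norm and justifies the choice $\lambda_{\bB}\ge2\VERT\nabla_{\bB}\mathcal{L}(\bB^{*})\VERT_{\max}$; with this choice, decomposability of $\VERT\cdot\VERT_{1,1}$ forces $\widehat{\bDelta}\in\mathbb{C}(S)$, making Assumption \ref{assu:3} applicable.

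A second obstacle is that the curvature of the \emph{empirical} loss uses $\widehat{\bTheta}_{\varepsilon\varepsilon}$, whereas \eqref{eq:RE_refined_B} in Assumption \ref{assu:3} is stated for $\bTheta_{\varepsilon\varepsilon}^{*}$. I would bridge this by writing $\mathcal{E}\mathcal{L}(\bDelta,\bB^{*};\bS_{xx},\cdot,\widehat{\bTheta}_{\varepsilon\varepsilon})=\tfrac12\vec(\bDelta)^{\top}(\bTheta_{\varepsilon\varepsilon}^{*}\otimes\bS_{xx})\vec(\bDelta)+\tfrac12\tr\big(\bDelta^{\top}\bS_{xx}\bDelta(\widehat{\bTheta}_{\varepsilon\varepsilon}-\bTheta_{\varepsilon\varepsilon}^{*})\big)$, lower bounding the first term by $\kappa\VERT\bDelta\VERT_{F}^{2}$ and controlling the perturbation by $\tfrac12\VERT\widehat{\bTheta}_{\varepsilon\varepsilon}-\bTheta_{\varepsilon\varepsilon}^{*}\VERT_{2}\,\tr(\bDelta^{\top}\bS_{xx}\bDelta)$, since $\bDelta^{\top}\bS_{xx}\bDelta\succeq0$. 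On the cone, $\tr(\bDelta^{\top}\bS_{xx}\bDelta)\le\VERT\bS_{xx}\VERT_{\max}\VERT\bDelta\VERT_{1,1}^{2}\le16\,s\VERT\bDelta\VERT_{F}^{2}$, using the column normalization $\VERT\bS_{xx}\VERT_{\max}\le1$ and $\VERT\bDelta\VERT_{1,1}\le4\sqrt{s}\VERT\bDelta\VERT_{F}$, so the perturbation is at most $8\,s\,\Delta_{1}(\bTheta_{\varepsilon\varepsilon}^{*})\VERT\bDelta\VERT_{F}^{2}$ and is absorbed into $\kappa$ once $n$ is large enough (as required by Proposition \ref{prop:2}) that $\Delta_{1}(\bTheta_{\varepsilon\varepsilon}^{*})$ is small, giving an effective strong-convexity constant $\kappa'=\kappa-8s\Delta_{1}(\bTheta_{\varepsilon\varepsilon}^{*})>0$.

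Finally, I would combine the pieces: substituting the dual-norm bound with $\lambda_{\bB}\ge2\VERT\nabla_{\bB}\mathcal{L}(\bB^{*})\VERT_{\max}$ into the basic inequality, using $\VERT\widehat{\bDelta}\VERT_{1,1}\le4\VERT(\widehat{\bDelta})_{S}\VERT_{1,1}\le4\sqrt{s}\VERT\widehat{\bDelta}\VERT_{F}$ on the cone, and invoking the effective curvature bound $\mathcal{E}\mathcal{L}\ge\kappa'\VERT\widehat{\bDelta}\VERT_{F}^{2}$. One power of $\VERT\widehat{\bDelta}\VERT_{F}$ then cancels, yielding $\VERT\widehat{\bB}^{(2)}-\bB^{*}\VERT_{F}\le C\sqrt{s}\,\lambda_{\bB}/\kappa'$ and, via the cone inequality, $\VERT\widehat{\bB}^{(2)}-\bB^{*}\VERT_{1,1}\le C's\,\lambda_{\bB}/\kappa'$; the stated probability follows from a union bound over the events of Lemmas \ref{lem:1}, \ref{lem:2} and Proposition \ref{prop:2}. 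I expect the genuinely delicate part to be the gradient dual-norm bound, because it must propagate the surrogate error in $\widehat{\bS}_{xy}$ and the plug-in error in $\widehat{\bTheta}_{\varepsilon\varepsilon}$ simultaneously, together with the curvature transfer from $\bTheta_{\varepsilon\varepsilon}^{*}$ to $\widehat{\bTheta}_{\varepsilon\varepsilon}$, all of which hinge on $n$ being large enough for Propositions \ref{prop:1} and \ref{prop:2} to apply.
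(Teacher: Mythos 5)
Your proposal follows essentially the same route as the paper's proof: both treat \eqref{eq:Beta_obj_mat} within the decomposable-regularizer framework of \citet{negahban2012unified}, bound the dual norm $\VERT\nabla_{\bB}\mathcal{L}(\bB^{*};\bS_{xx},\widehat{\bS}_{xy},\widehat{\bTheta}_{\varepsilon\varepsilon})\VERT_{\max}$ by the product of $\VERT\widehat{\bS}_{xy}-\bS_{xx}\bB^{*}\VERT_{\max}$ (controlled through Lemmas \ref{lem:1} and \ref{lem:2}, exactly as in the good event of Proposition \ref{prop:1}) with $\VERT\widehat{\bTheta}_{\varepsilon\varepsilon}-\bTheta_{\varepsilon\varepsilon}^{*}\VERT_{1}+\VERT\bTheta_{\varepsilon\varepsilon}^{*}\VERT_{1}$ (via Proposition \ref{prop:2}), and transfer the RE condition of Assumption \ref{assu:3} from $\bTheta_{\varepsilon\varepsilon}^{*}$ to $\widehat{\bTheta}_{\varepsilon\varepsilon}$ by treating $\langle\bDelta^{\top}\bS_{xx}\bDelta,\widehat{\bTheta}_{\varepsilon\varepsilon}-\bTheta_{\varepsilon\varepsilon}^{*}\rangle/2$ as a perturbation. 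The only structural difference is that you re-derive the $M$-estimation machinery from the basic inequality, whereas the paper packages those steps by citing Theorem 7.13 of \citet{wainwright2019high}; that is presentation, not substance.

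There is, however, one concrete discrepancy in your curvature-transfer step. You bound the perturbation by $\tfrac{1}{2}\VERT\widehat{\bTheta}_{\varepsilon\varepsilon}-\bTheta_{\varepsilon\varepsilon}^{*}\VERT_{2}\,\tr(\bDelta^{\top}\bS_{xx}\bDelta)$ and then estimate $\tr(\bDelta^{\top}\bS_{xx}\bDelta)\leq\VERT\bS_{xx}\VERT_{\max}\VERT\bDelta\VERT_{1,1}^{2}\leq16s\VERT\bDelta\VERT_{F}^{2}$ on the cone, which yields the effective curvature $\kappa-8s\,\Delta_{1}(\bTheta_{\varepsilon\varepsilon}^{*})$. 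The theorem, however, is stated with $\kappa^{\prime}=\kappa-\VERT\bS_{xx}\VERT_{2}\Delta_{1}(\bTheta_{\varepsilon\varepsilon}^{*})/2$ as in \eqref{eq:kappa_prime}, which the paper obtains from the spectral estimate
\[
\tr(\bDelta^{\top}\bS_{xx}\bDelta)=\sum_{l=1}^{q}(\bDelta)_{\bullet l}^{\top}\bS_{xx}(\bDelta)_{\bullet l}\leq\VERT\bS_{xx}\VERT_{2}\VERT\bDelta\VERT_{F}^{2},
\]
equivalently from $\VERT\bA\bB\VERT_{F}\leq\VERT\bA\VERT_{2}\VERT\bB\VERT_{F}$. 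Your detour through $\VERT\bDelta\VERT_{1,1}$ introduces a spurious factor of $s$: the resulting sample-size requirement ($n$ large enough that $8s\Delta_{1}(\bTheta_{\varepsilon\varepsilon}^{*})<\kappa$) is strictly stronger than what \eqref{eq:kappa_prime} demands, and the constants in \eqref{eq:B_final_est} would be stated relative to the wrong $\kappa^{\prime}$. The fix is one line --- replace the elementwise bound by the spectral bound above --- after which your argument delivers the theorem exactly as stated.
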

\begin{enumerate}
\item $n$ is sufficiently large to ensure that $\kappa^{\prime}\geq\kappa>0$,
where $\kappa^{\prime}$ is defined as the empirical counterpart to
$\kappa$ in Assumption \ref{assu:3} as 
\begin{equation}
\kappa^{\prime}=\kappa-\VERT\bS_{xx}\VERT_{2}\Delta_{1}(\bTheta_{\varepsilon\varepsilon}^{*})/2\label{eq:kappa_prime}
\end{equation}
\item The tuning parameter $\lambda_{\bB}$ in \eqref{eq:Beta_obj_mat}
satisfies 
\begin{equation}
\lambda_{\bB}/2\geq(\lambda_{0}/2)(\VERT\widehat{\bTheta}_{\varepsilon\varepsilon}-\bTheta_{\varepsilon\varepsilon}^{*}\VERT_{1}+\VERT\bTheta_{\varepsilon\varepsilon}^{*}\VERT_{1}),\label{eq:lambda_overall}
\end{equation}
with 
\[
\lambda_{0}\coloneqq  C(\sigma_{W},\sigma_{\varepsilon}) \sqrt{\frac{\log p}{n}},
\]
where $C(\sigma_{W},\sigma_{\varepsilon})=2X_{\max}\max\left[\sigma_{W}s_{\max}X_{\max}B_{\max}/\mu_{\min},\sigma_{W}\sigma_{\varepsilon}/\mu_{\min},\sigma_{\varepsilon}\right],$ then the estimator $\widehat{\bB}^{(2)}$
satisfies 
\begin{equation}
\VERT\widehat{\bB}^{(2)}-\bB^{*}\VERT_{F}\leq3\sqrt{s}\lambda_{\bB}/\kappa^{\prime},\qquad\VERT\widehat{\bB}^{(2)}-\bB^{*}\VERT_{1,1}\leq12s\lambda_{\bB}/\kappa^{\prime}\label{eq:B_final_est}
\end{equation}
with probability $1-qC\exp(-c\log p)$.
\end{enumerate}
\ref{Pr_Thm1} provides the proof of Theorem \ref{thm:1}. 
    
\section{Simulation}

\subsection{Simulations to verify theoretical results\label{sim_theory_verify}}
In this section, we present simulation results to study the scaling behavior of estimation errors across the three stages of estimation, as discussed by our theoretical analysis.

We generate an $n\times p$ matrix $\bX$ with rows
drawn independently from $\mathcal{N}_{p}(0,\bSigma^{*}_{xx})$ where $[\bSigma^{*}_{xx}]_{i,j}=0.7^{|i-j|}$ follows an AR(1) covariance. Response matrix  $\bY \in \mathbb{R}^{n \times q}$ is generated from a multivariate Gaussian distribution
with $\mathcal{N}_{q}(\bX\bB^{*},\bSigma^{*}_{\varepsilon\varepsilon})$. We consider the AR(1) error covariance $[\bSigma^{*}_{\varepsilon\varepsilon}]_{i,j}=\rho_{\varepsilon}^{|i-j|}$, which leads to a tri-diagonal, sparse precision matrix $\bTheta^{*}_{\varepsilon\varepsilon}=(\bSigma^{*}_{\varepsilon\varepsilon})^{-1}$. We generate $\bW$ from a Bernoulli distribution with varying
probability of missingness for each column $\rho_{W}$. To obtain an elementwise sparse model on $\bB^{*}$,
generated a $p\times q$ matrix $\bB^{*}$ so that in each
of its columns, only 5 entries of $p$ (chosen at random within
each column) are drawn from the uniform distribution on $[-1, 1]$ and the remaining entries are equal to zero. Finally, we derive the contaminated responses as $\bZ=\bY\odot\bW$.

We aim to examine how the estimation error \( \|\widehat{\bbeta}_{l}-\bbeta_{l}^{*}\|_{2} \), presented in Proposition \ref{prop:1}, with its upper bound dependent on the tuning parameter $\lambda_l$ defined in \eqref{eq:lambda_ell}, behaves under an AR(1) covariance structure for the error. We vary two parameters: the error correlation \( \rho_{\varepsilon} \), where $[\bSigma^{*}_{\varepsilon\varepsilon}]_{i,j}=\rho_{\varepsilon}^{|i-j|}$, and the minimum signal strength \( \mu_{\min} = \min_{j}\mu_{j} = \min_{j}(1 - \rho_{W,j}) \), which is related to the probability of missingness in the $j$th column of the outcome, \( \rho_{W,j} \). For simplicity, we assume the same probability of missingness across columns. We analyze the impact of varying \( \rho_{\varepsilon} \) and \( \rho_{W,j} \) separately to evaluate the performance of the estimation error. In this stage, we inspect two scenarios: (S1A) varying \( n \) with fixed \( p \), and (S1B) varying \( p \) with fixed \( n \). The plots present the average estimation error, with shaded regions indicating standard errors, based on 100 replicated samples.

\textbf{Scenario S1A:} Let $n = (200, 400, 800, 1600, 3200, 6400, 12800)$, $p = 100$, $q = (10, 20)$, and $s_{\max} = 5$. We evaluate the following two scenarios by plotting $\|\widehat{\bbeta}_{l} - \bbeta_{l}^{*}\|_{2}$ against $\sqrt{n}$, as shown in Figure \ref{fig:ERS1P1}:
(a)   $\rho_{\varepsilon} = (0, 0.3, 0.7, 0.9)$ with $\rho_{W,j} = 0.05$; and (b)   $\rho_{W,j} = (0.005, 0.1, 0.2, 0.3)$ with $\rho_{\varepsilon} = 0.7$.

\textbf{Scenario S1B:} We set $p = (50, 100, 200, 400, 800)$, $n = 400$, $q = (10, 20)$, and $s_{\max} = 5$. We test the following two scenarios by plotting $\|\widehat{\bbeta_{l}} - \bbeta_{l}^{*}\|_{2}$ versus $\sqrt{\log(p)}$, as shown in Figure \ref{fig:ERS1P2}:
(a)  $\rho_{\varepsilon} = (0, 0.3, 0.7, 0.9)$ with $\rho_{W,j} = 0.05$;
    and (b)  $\rho_{W,j} = (0.005, 0.1, 0.2, 0.3)$ with $\rho_{\varepsilon} = 0.7$.
    
     \begin{figure}[!h]
	\centering
	\includegraphics[scale=0.26]{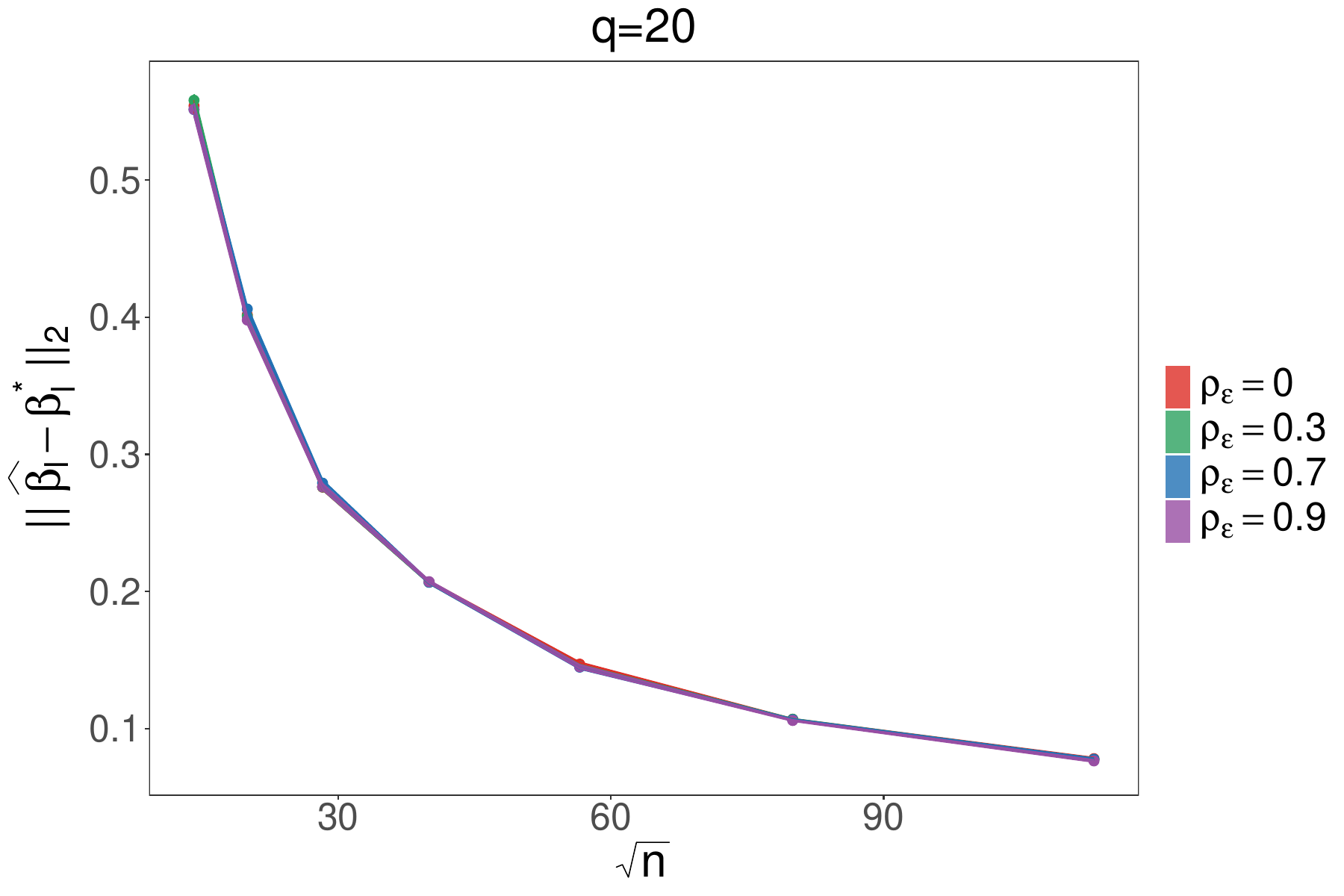}
        \includegraphics[scale=0.26]{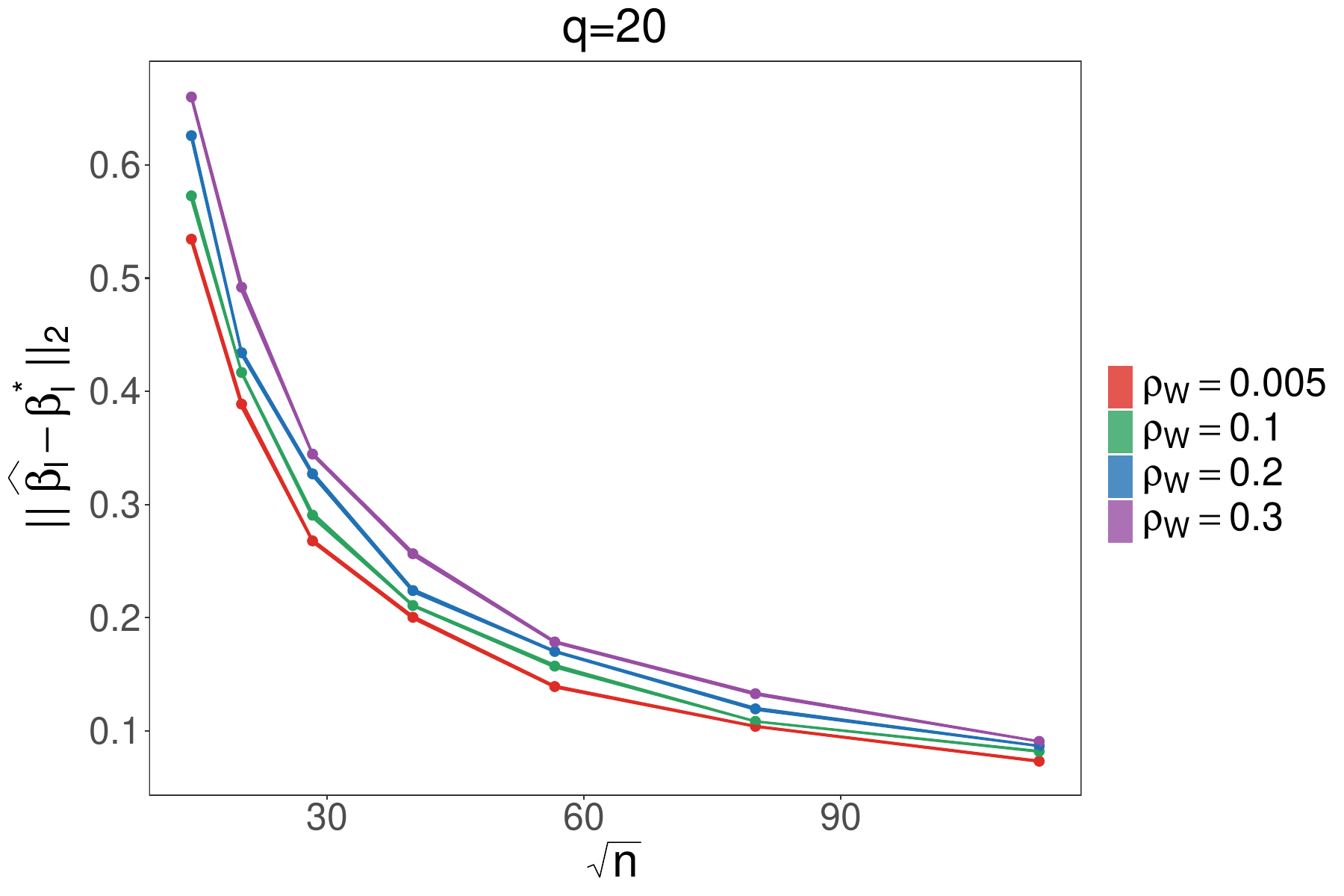}\\
	\caption{Scenario S1A. Plots of the average estimation error, $\|\widehat{{\bbeta}_{l}}-{\bbeta}_{l}^{*}\|_{2}$  against $\sqrt{n}$, $n=200,400,800,1600,3200,6400,12800$, for the outcome dimension,  $q=20$, with varying correlation between the errors, $\rho_{\varepsilon}$ (left panel) and the probability of being missing in the $j$th column for the
outcome $\rho_{W,j}$ (right panel).  Each point represents an average of 100 trials and the shaded regions indicate standard error for each method.} \label{fig:ERS1P1}
\end{figure}     
\begin{figure}[!h]
	\centering
	\includegraphics[scale=0.26]{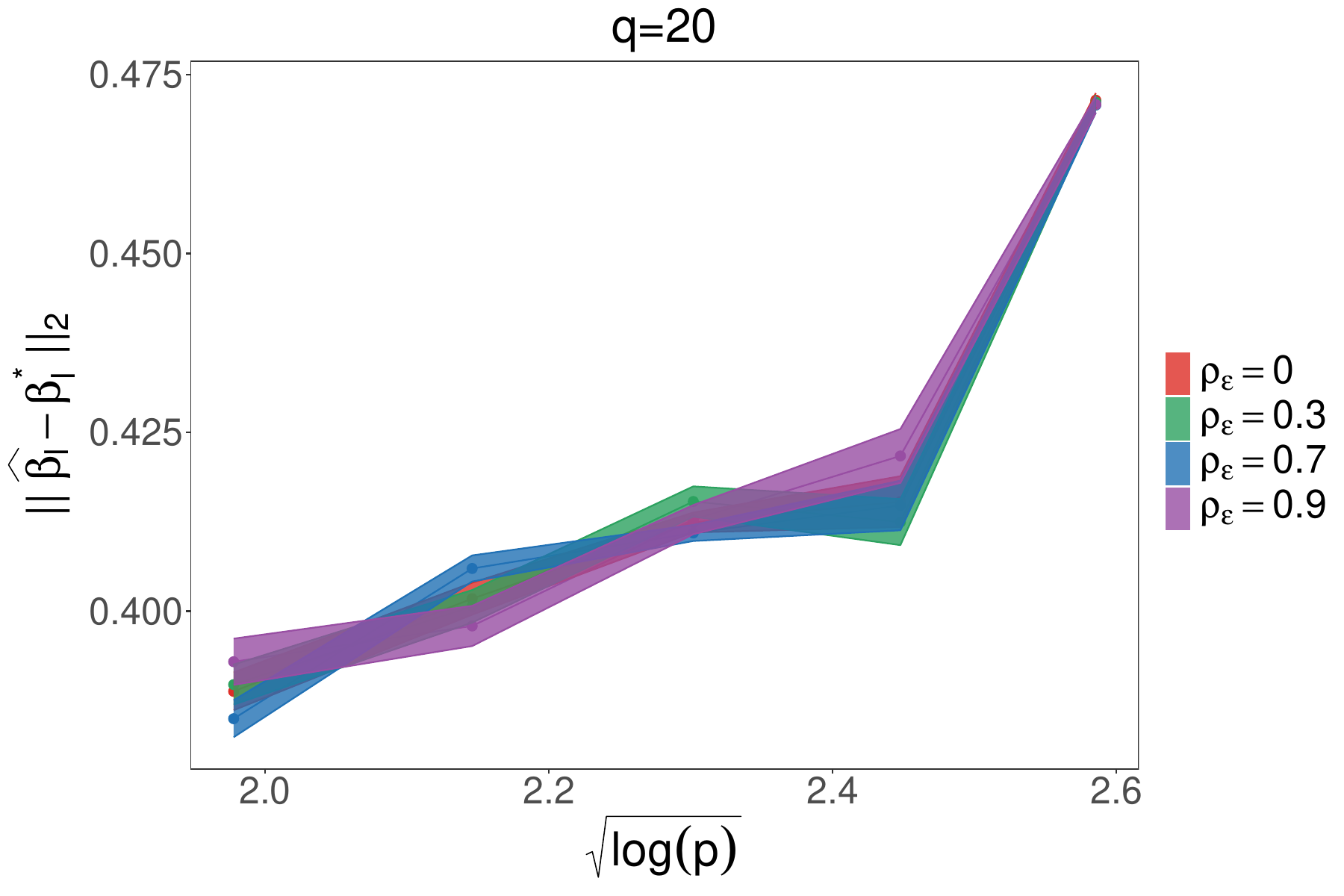}
        \includegraphics[scale=0.26]{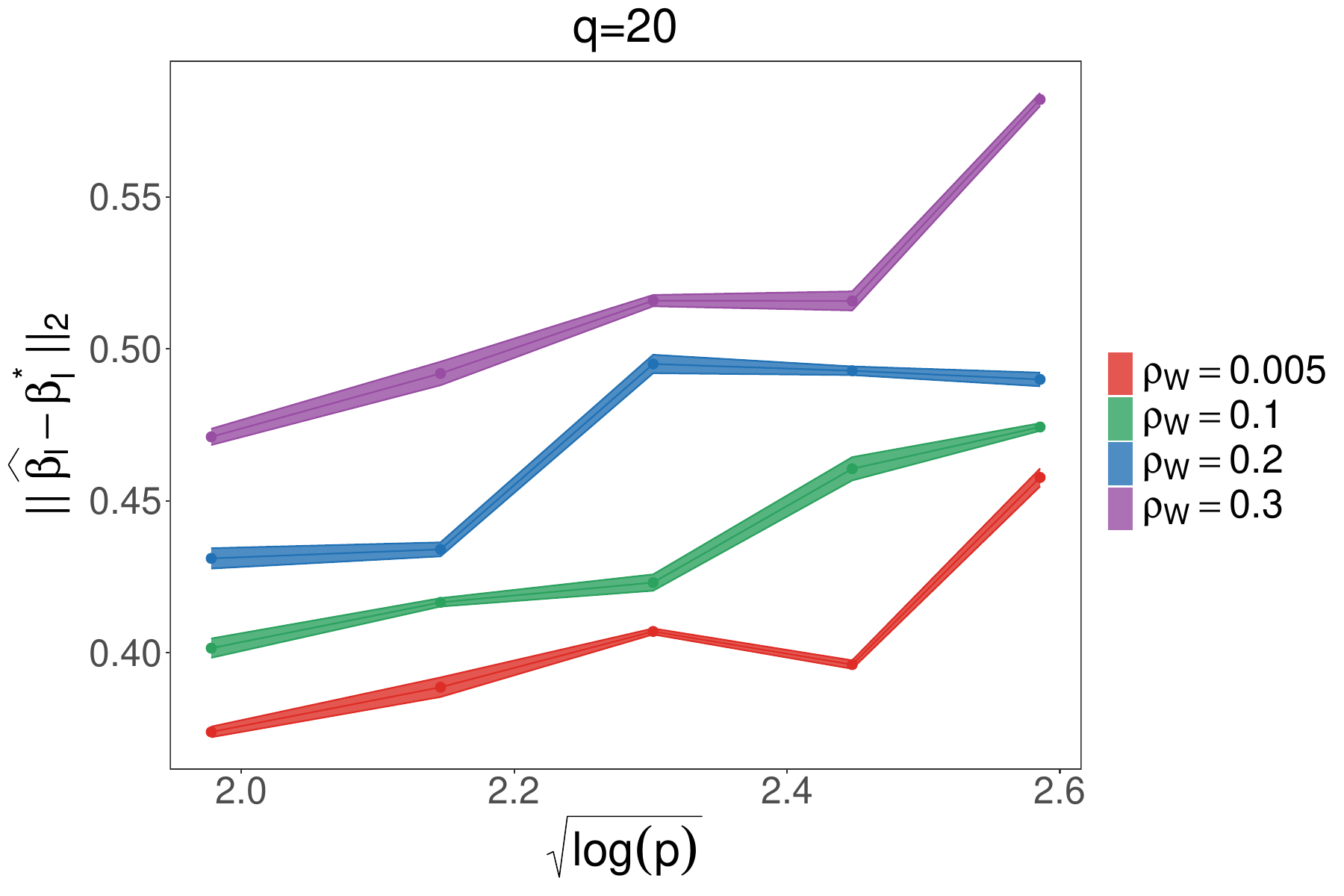}\\
	\caption{Scenario S1B. Plots of the average estimation error, $\|\widehat{{\bbeta}_{l}}-{\bbeta}_{l}^{*}\|_{2}$ against $\sqrt{\log(p)}$, $p=50,100,200,400$ and $800$, for the outcome dimension, $q=20$, with varying correlation between the errors, $\rho_{\varepsilon}$ (left panel) and the probability of being missing in each column of the outcomes, $\rho_{W,j}$ (right panel). Each point represents an average of 100 trials and the shaded regions indicate standard error for each method.} \label{fig:ERS1P2}
\end{figure}    

Figure \ref{fig:ERS1P1} illustrates Scenario S1A and shows several key trends for the upper bound of $\|\widehat{\bbeta}_{l}-\bbeta_{l}^{*}\|_{2}$. The $\ell_2$-norm of the estimation error for $\bbeta_{l}^{*}$ converges to zero at a rate of $1/\sqrt{n}$ as the sample size $n$ increases, aligning with the theoretical bound in both the setups with varying error correlation and degree of missingness in the outcomes.  Variations in the degree of correlation between the errors (left panel) $\rho_{\varepsilon}$, do not seem to impact the estimation errors in this stage. Additionally, assuming uniform missingness across all columns of the outcome, the right panel of Figure \ref{fig:ERS1P1} shows that the estimation error decreases when the proportion of missingness decreases. Figure \ref{fig:ERS1P1b} in \ref{q10Sim} provides $q=10$ scenario.

Figure \ref{fig:ERS1P2} illustrates Scenario S1B and shows several general trends along with a few anomalies. Theoretically, the estimation errors increase at a rate of $\sqrt{\log(p)}$ as the number of covariates $p$ increases, keeping other parameters constant. Consistent with the observations in Figure \ref{fig:ERS1P1}, varying $\rho_{\varepsilon}$ does not result in significant differences in the estimation errors (left panel). Additionally, when varying the probability of having missing data in the outcome (right panel), $\rho_{W}$, it is observed that as the proportion of missingness decreases, the estimation error decreases, as expected. Figure \ref{fig:ERS1P2b} in \ref{q10Sim} provides $q=10$ scenario.

\noindent \textbf{Stage 2:} We aim to verify the elementwise maximum norm deviation bound presented in \eqref{eq:Theta_max_bound} of Proposition \ref{prop:2}, where the upper bound depends on $\Delta$, as defined in \eqref{eq:Delta-1} of Lemma \ref{lem4}. We  vary $[\bSigma^{*}_{\varepsilon\varepsilon}]_{i,j}=\rho_{\varepsilon}^{|i-j|}$ and $m_{\min}=\min_{j,k}|\mathbb{E}(W_{j}W_{k})|=(1-\rho_{W,i})(1-\rho_{W,j})\text{ if }i\neq j,\text{ and }(1-\rho_{W,i})\text{ if }i=j$ separately to check the performance of $\VERT\widehat{\bTheta}_{\varepsilon\varepsilon}-\bTheta_{\varepsilon\varepsilon}^{*}\VERT_{\max}$. Since $[\bSigma^{*}_{\varepsilon\varepsilon}]_{i,j}$ and $m_{\min}$ are functions of $\rho_{\varepsilon}$ and $\rho_{W,j}$, respectively, we analyze the impact of varying \( \rho_{\varepsilon} \) and \( \rho_{W,j} \) separately on the estimation error. In this stage, we inspect two types of plots, varying $n$ while $q$ is fixed (Scenario S2A), and then varying $q$ while $n$ is fixed (Scenario S2B). The plots show the average estimation and standard errors (in shaded regions) over 100 replicated samples.  

\textbf{Scenario S2A: } We set $n=(200, 400, 800, 1600, 3200, 6400$, $12800)$, $p=100$, $q=10$ and $s_{\max}=5$. We test the following two scenarios by plotting $\VERT\widehat{\bTheta}_{\varepsilon\varepsilon}-\bTheta_{\varepsilon\varepsilon}^{*}\VERT_{\max}$ versus $\sqrt{n}$ shown in Figure \ref{fig:ERS2P3}:
        (a)   $\rho_{\varepsilon}=(0, 0.3, 0.7, 0.9)$ with $\rho_{W,j}=0.05$;
        and (b)   $\rho_{W,j}=(0.005,0.1,0.2,0.3)$ with $\rho_{\varepsilon}=0.7$. 

\textbf{Scenario S2B: } We set $q=(10, 20, 30, 40, 50)$, $n=400$, $p=100$ and $s_{\max}=5$. We test the following two scenarios by plotting $\VERT\widehat{\bTheta}_{\varepsilon\varepsilon}-\bTheta_{\varepsilon\varepsilon}^{*}\VERT_{\max}$
versus $\sqrt{\log(q^{2})}$ shown in Figure \ref{fig:ERS2P4}:
        (a)   $\rho_{\varepsilon}=(0, 0.3, 0.7, 0.9)$ with $\rho_{W,j}=0.05$;
        and (b)   $\rho_{W,j}=(0.005,0.1,0.2, 0.3)$ with $\rho_\varepsilon=0.7$.

     \begin{figure}[!h]
	\centering
        \includegraphics[scale=0.26]{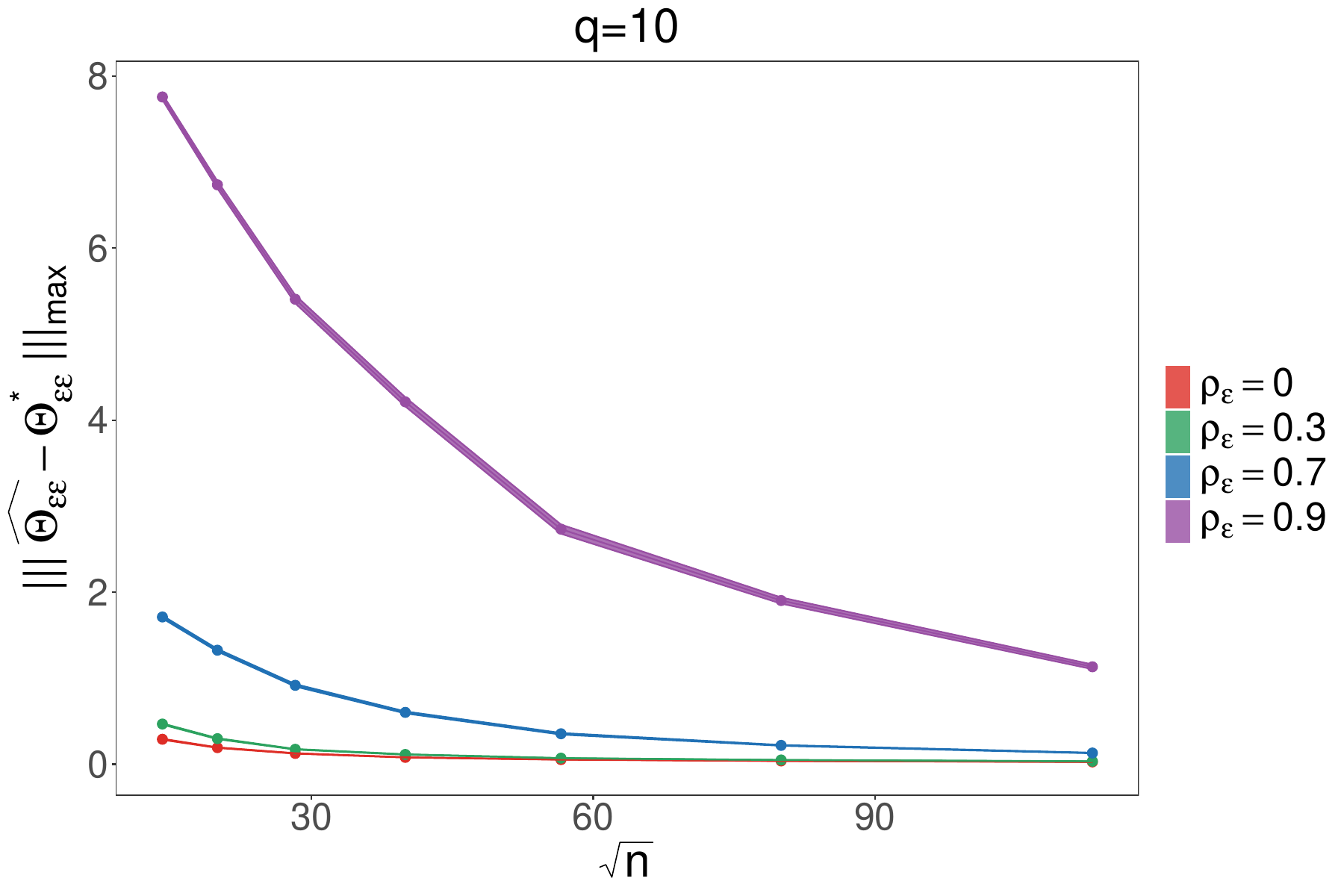}
        \includegraphics[scale=0.26]{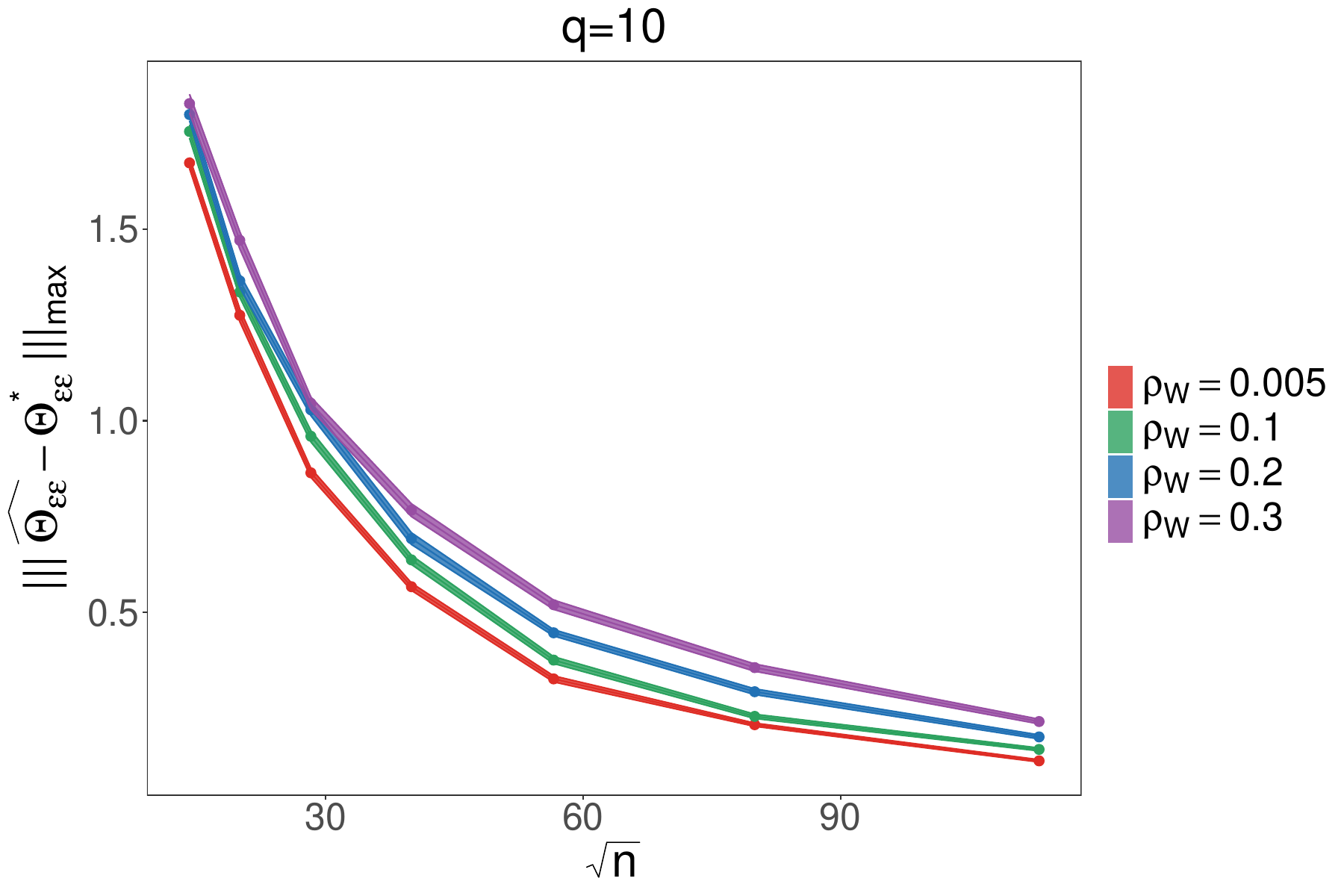}
	\caption{Scenario S2A. Plots of the average estimation error for $\bTheta_{\varepsilon\varepsilon}^{*}$, $\VERT\widehat{{\bTheta}}_{\varepsilon\varepsilon}-{\bTheta}^{*}_{\varepsilon\varepsilon}\VERT_{\mathrm{max}}$ against $\sqrt{n}$, $n=200, 400, 800, 1600, 3200, 6400$, and $12800$, for the outcome dimension, $q=10$, with for varying correlation between the errors, $\rho_{\varepsilon}$ (left panel) and the probability of being missing in each column of the outcomes, $\rho_{W,j}$ (right panel). Each point represents an average of 100 trials and the shaded regions indicate standard error for each method.} \label{fig:ERS2P3}
\end{figure}

     \begin{figure}[!h]
	\centering
        \includegraphics[scale=0.26]{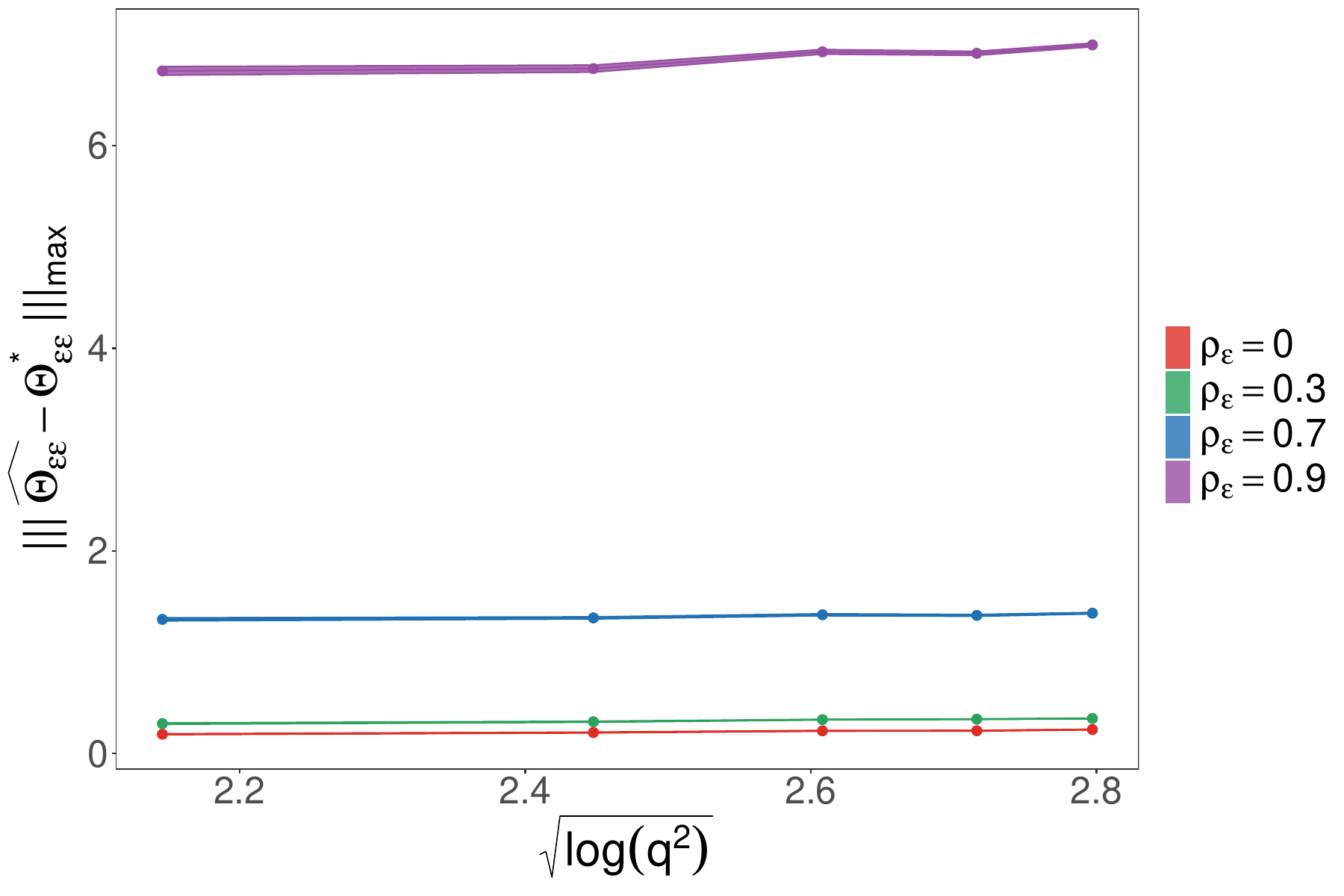}
        \includegraphics[scale=0.26]{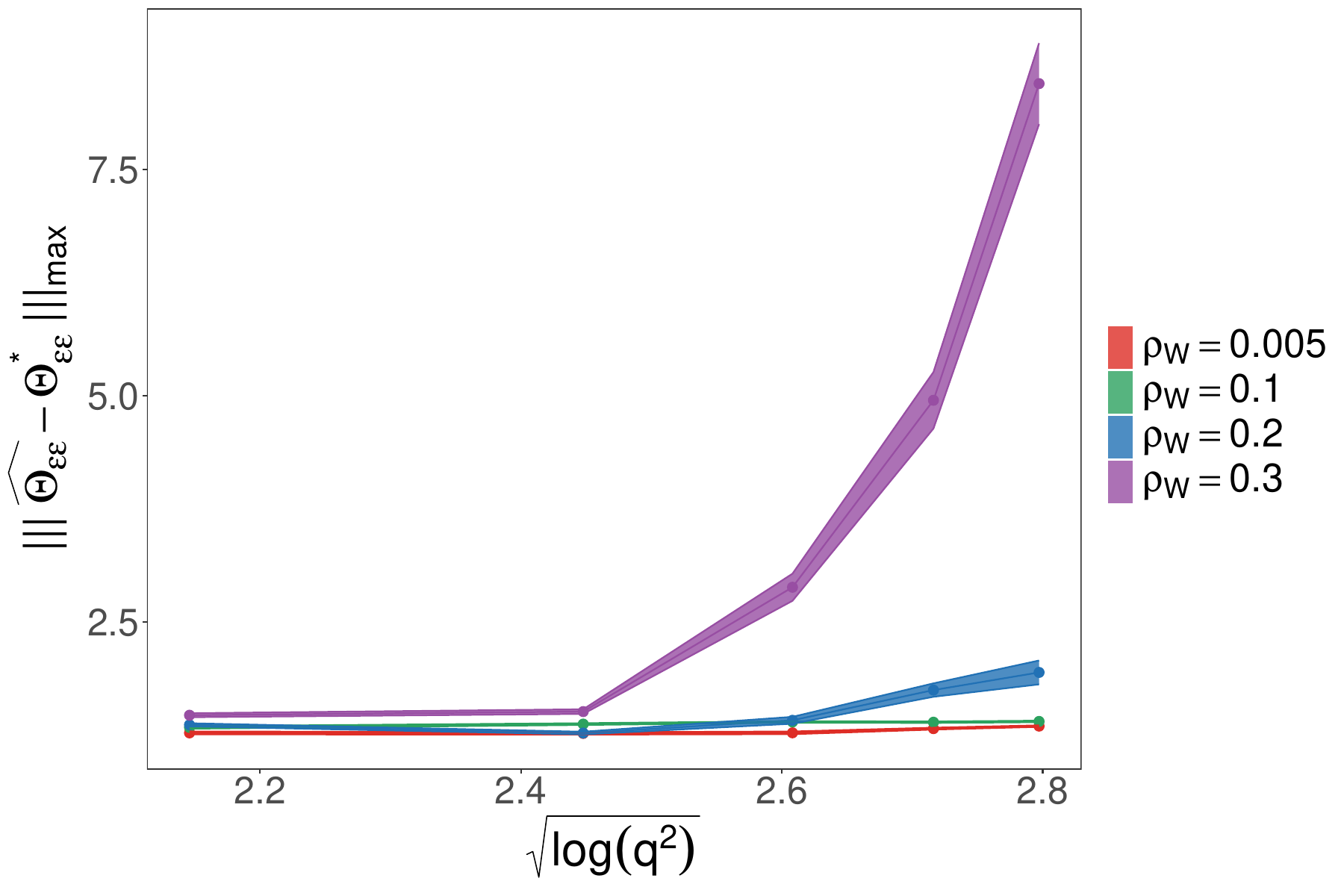}
	\caption{Scenario S2B. Plots of the average estimation error for $\bTheta_{\varepsilon\varepsilon}^{*}$, $\VERT\widehat{{\bTheta}}_{\varepsilon\varepsilon}-{\bTheta}^{*}_{\varepsilon\varepsilon}\VERT_{\mathrm{max}}$ against $\sqrt{log(q^2)}$, $q=10, 20, 30, 40$ and $50$, for sample size $n=400$, with varying correlation between the errors, $\rho_{\varepsilon}$ (left panel) and the probability of being missing in each column of the outcomes, $\rho_{W,j}$ (right panel). Each point represents an average of 100 trials and the shaded regions indicate standard error for each method.} \label{fig:ERS2P4}
\end{figure}

In Figure \ref{fig:ERS2P3}, we plot the estimation errors in $\bTheta^{*}_{\varepsilon\varepsilon}$ in terms of the elementwise max norm versus $\sqrt{n}$ depicting Scenario S2A. In Figure \ref{fig:ERS2P3}, there is a general decreasing trend in the curves of the estimation errors at the rate of $1/\sqrt{n}$ as $n$ increases, following the theoretical results by varying $\rho_{\varepsilon}$ (left panel) and $\rho_{W}$ (right panel), respectively. The correlation structure in the error, ranging from no correlation to strong correlation, has a noticeable impact on the estimation error. The estimation errors approach zero more quickly as $n$ becomes larger, especially when the correlation structure between the errors is weaker. In Figure \ref{fig:ERS2P3}, varying the probability of missingness in the outcomes shows that the estimation errors generally decrease as $n$ increases. As the degree of missingness increases, the estimation errors also increase, as expected.

Figure \ref{fig:ERS2P4} demonstrates Scenario S2B, the elementwise max norm estimation error for $\bTheta^{*}_{\varepsilon\varepsilon}$ versus $\sqrt{\log(q^2)}$, where $q$ is the dimension of the outcome. The estimation error curves exhibit a flat pattern as $q$ varies, showing no significant trend. However, it is harder to accurately estimate $\bTheta^{*}_{\varepsilon\varepsilon}$ when error correlations are stronger, characterized by varying $\rho_{\varepsilon}$ (left panel). In the plot depicting varying probabilities of missingness in the outcome (right panel), it is evident that as both the dimension of the outcome and the rate of missing data increase simultaneously, the estimation errors show a notable rise. Specifically, the estimation error accelerates notably when $q$ exceeds 20 and there is 30\% missing data in each outcome column.

\noindent  \textbf{Stage 3: } We aim to verify the Frobenius norm estimation error for the regression coefficients given in \eqref{eq:B_final_est} in Theorem \ref{thm:1}, where the upper bound depends on the tuning parameter $\lambda_{\bB}$ defined in \eqref{eq:lambda_overall}. We vary $[\bSigma^{*}_{\varepsilon\varepsilon}]_{i,j}=\rho_{\varepsilon}^{|i-j|}$, and \( \mu_{\min} = \min_{j}\mu_{j} = \min_{j}(1 - \rho_{W,j}) \) separately, and analyze the performance of the estimation error $\VERT\widehat{\bB}^{(2)}-\bB^{*}\VERT_{F}$ in terms of varying \( \rho_{\varepsilon} \) and \( \rho_{W,j} \). In this stage, we inspect two scenarios: (S3A) varying \( n \) with fixed \( p \), and (S3B) varying \( p \) with fixed \( n \). The plots present the average estimation error, with shaded regions indicating standard errors, based on 100 replicated samples.

\textbf{Scenario S3A:} We set $n=(200,400,800,1600,3200,6400,12800)$, $p=100$, $q=(10,20)$ and $s_{\max}=5$. We test the following two scenarios by plotting $\VERT\widehat{\bB}^{(2)}-\bB^{*}\VERT_{F}$ 
versus $\sqrt{n}$ shown in Figure \ref{fig:ERS1P5}:
        (a) we vary
$\rho_{\varepsilon}=(0,0.3,0.7,0.9)$ with $\rho_{W,j}=0.05$; 
       and (b) we vary $\rho_{W,j}=(0.005,0.1,0.2,0.3)$ with $\rho_{\varepsilon}=0.7$.

\textbf{Scenario S3B:} We set $p=(50,100,200,400,800)$, $n=400$, $q=(10,20)$ and $s_{\max}=5$. We test the following two scenarios by plotting $\VERT\widehat{\bB}^{(2)}-\bB^{*}\VERT_{F}$  versus $\sqrt{\log(p)}$ shown in Figure \ref{fig:ERS1P6}:
        (a) we vary $\rho_{\varepsilon}=(0, 0.3, 0.7, 0.9)$ with $\rho_{W,j}=0.05$;
        and (b) we vary $\rho_{W,j}=(0.005,0.1,0.2, 0.3)$ with $\rho_{\varepsilon}=0.7$.

     \begin{figure}[!h]
	\centering
	\includegraphics[scale=0.26]{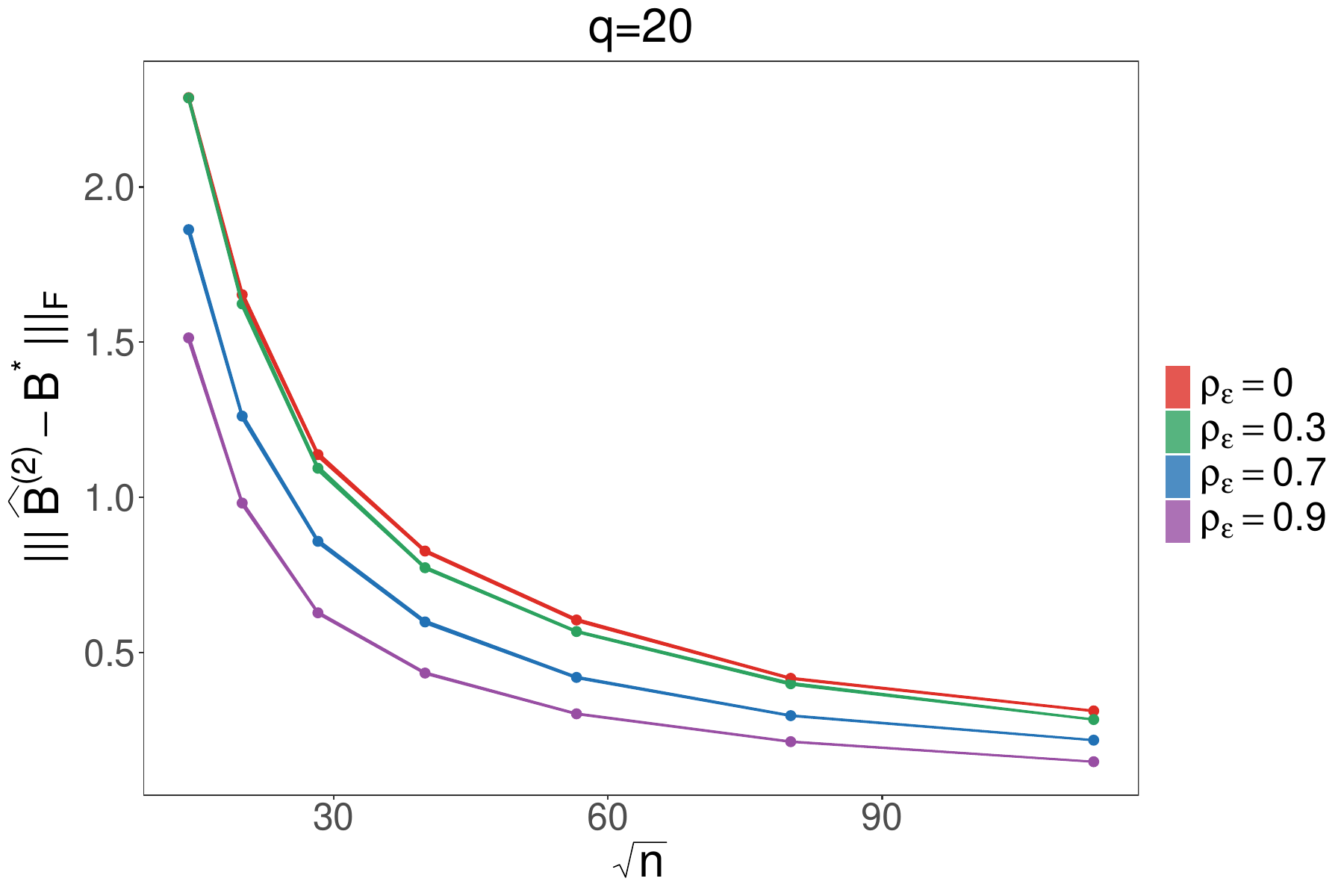}
        \includegraphics[scale=0.26]{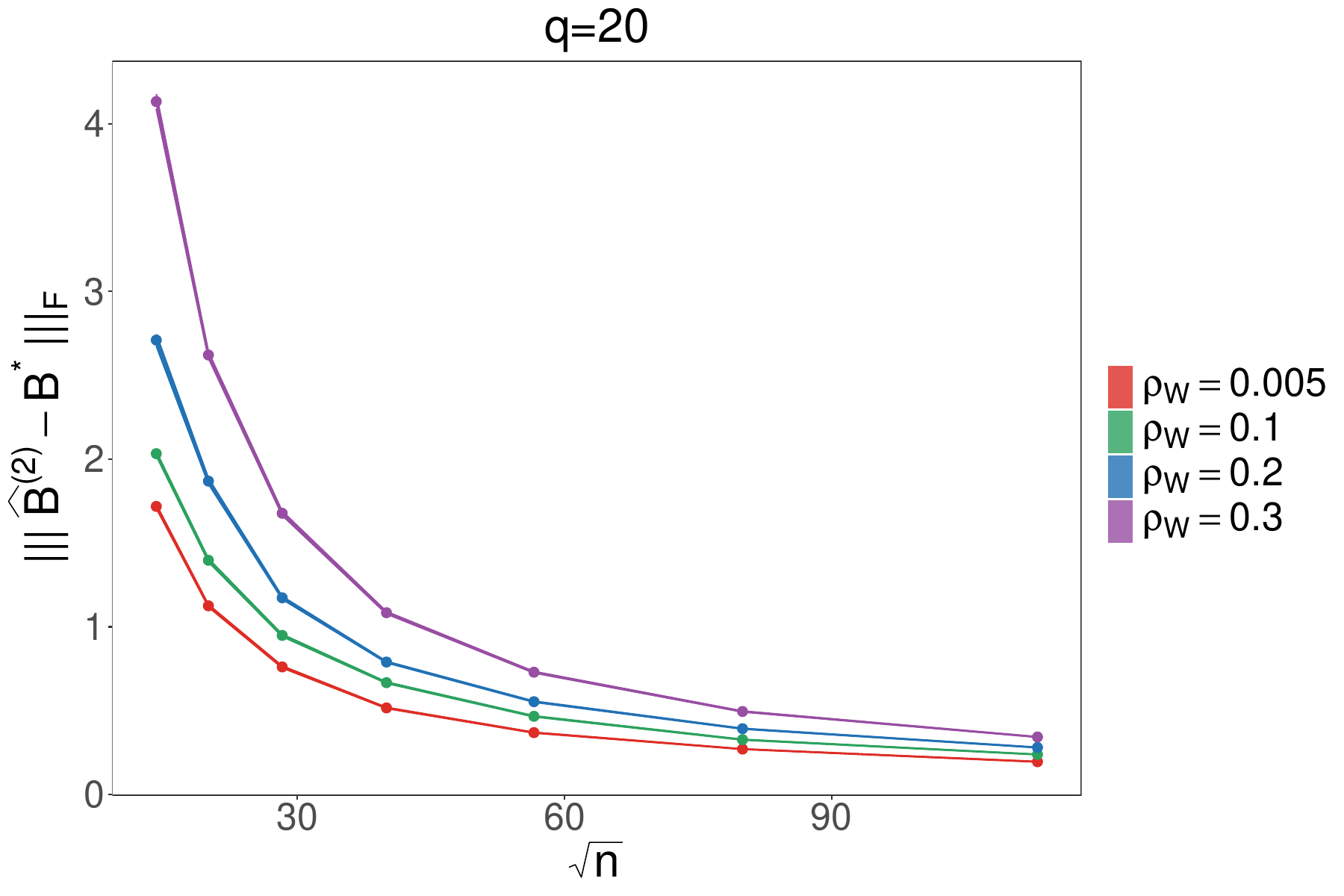}
	\caption{Scenario S3A. Plots of the average estimation error for $\bB^{*}$, $\VERT\widehat{\bB}^{(2)}-\bB^{*}\VERT_{F}$  against $\sqrt{n}$, $n=200,400,800,1600,3200,6400,12800$, for outcome dimension, $q=20$, with varying correlation between the errors, $\rho_{\varepsilon}$ (left panel) and the probability of being missing in the $j$th column for the
outcome $\rho_{W,j}$ (right panel).  Each point represents an average of 100 trials and the shaded regions indicate standard error for each method.} \label{fig:ERS1P5}
\end{figure} 
\begin{figure}[!h]
	\centering
	\includegraphics[scale=0.26]{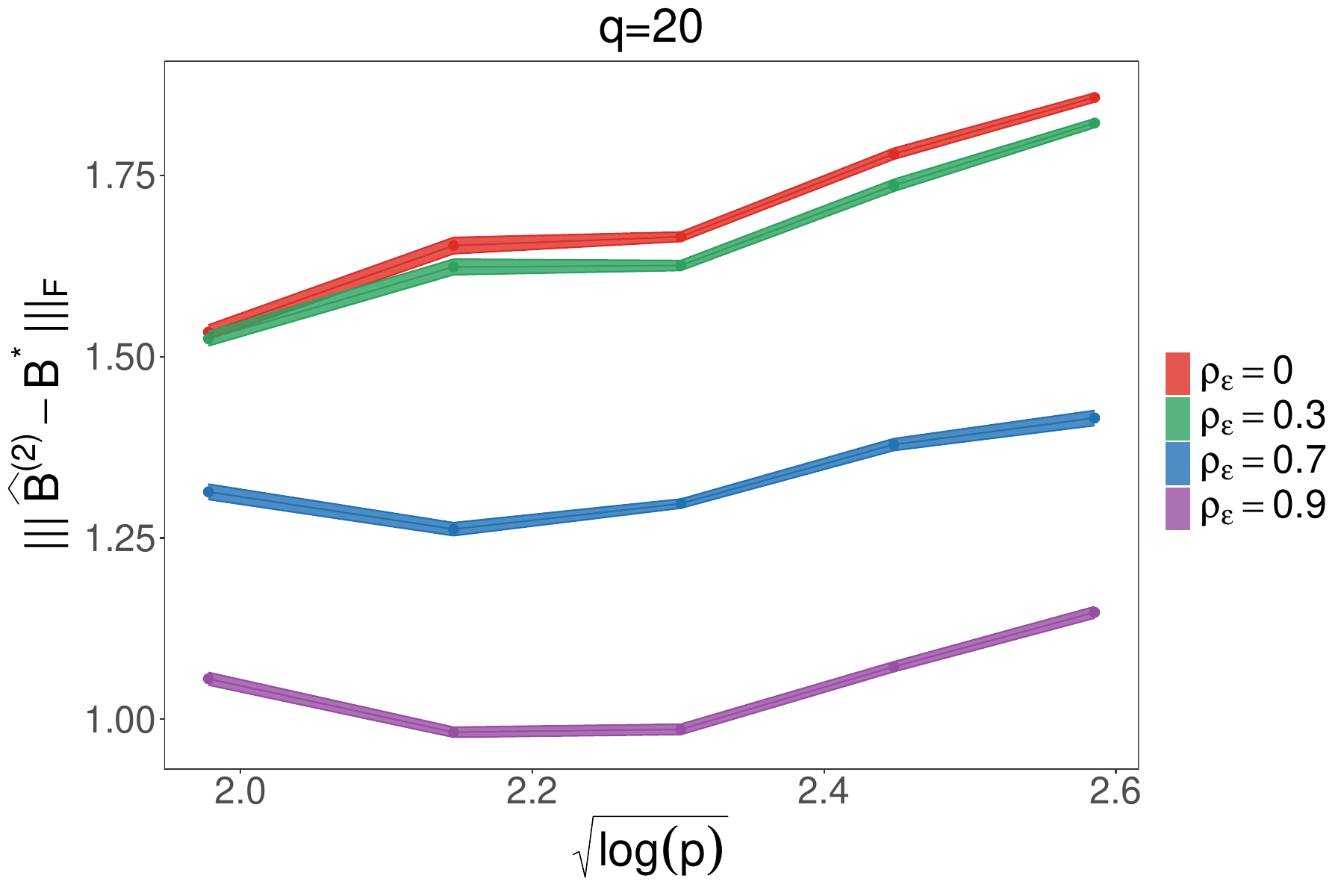}
        \includegraphics[scale=0.26]{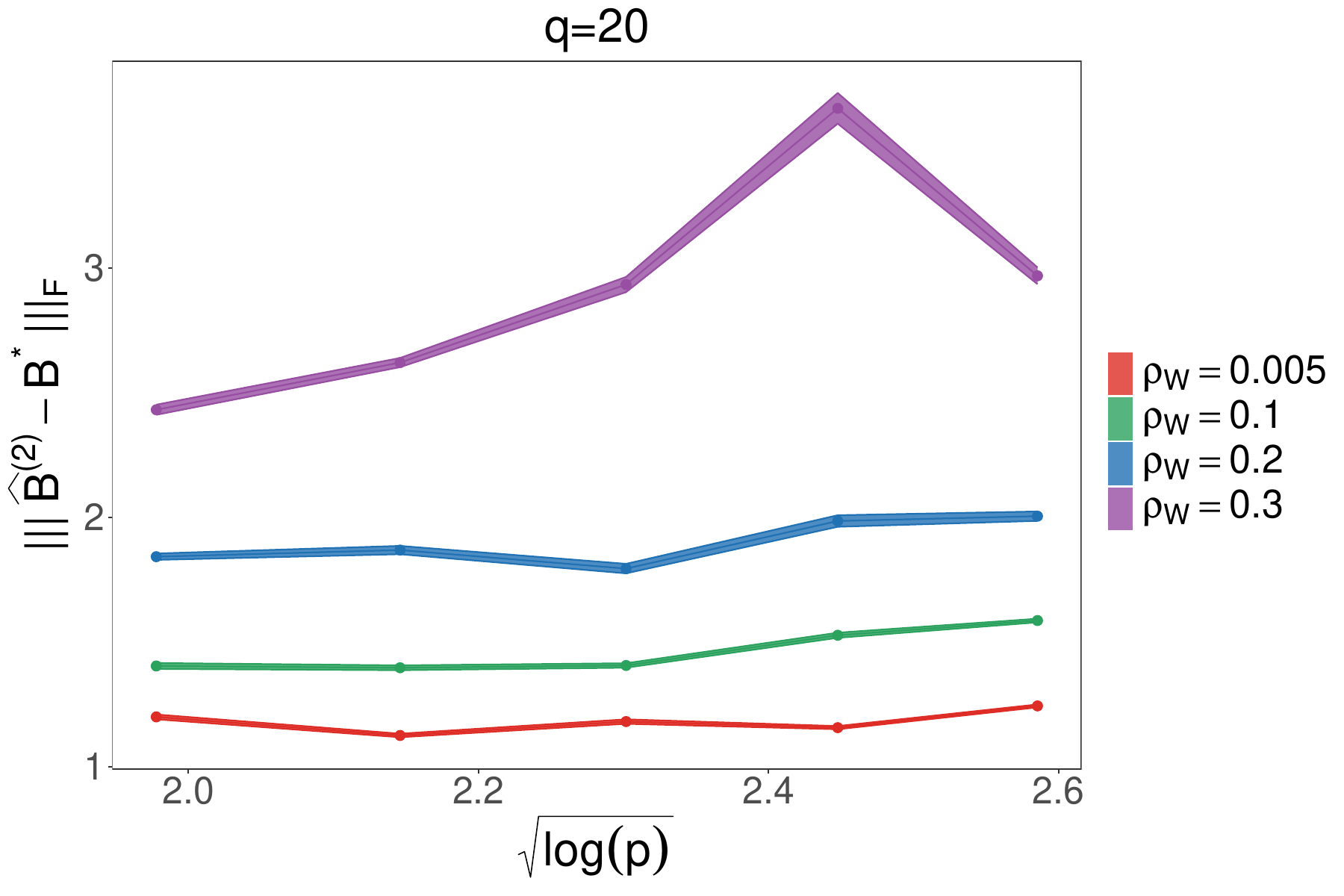}\\
	\caption{Scenario S3B. Plots of the average estimation error for $\bB^{*}$, $\VERT\widehat{\bB}^{(2)}-\bB^{*}\VERT_{F}$ against $\sqrt{\log(p)}$, $p=50,100,200,400$ and $800$, for outcome dimension, $q = 20$, with varying correlation between the errors, $\rho_{\varepsilon}$ (left panel) and the probability of being missing in each column of the outcomes, $\rho_{W,j}$ (right panel). Each point represents an average of 100 trials and the shaded regions indicate standard error for each method.} \label{fig:ERS1P6}
\end{figure}    

Figure \ref{fig:ERS1P5} illustrates Scenario S3A, the final stage Frobenius norm estimation error of $\bB^{*}$ against $\sqrt{n}$. The estimation error bound aligns with the theoretical result, which is bounded by $1/\sqrt{n}$, keeping other parameters constant. The estimation error $\VERT\widehat{\bB}^{(2)}-\bB^{*}\VERT_{F}$ approaches zero as $n$ becomes larger for varying degrees of error correlations. Unlike the Stage 1 estimation plots shown in Figure \ref{fig:ERS1P1} (left panel), we observe a visible separation in the curves for different degrees of error correlation. As the correlation strengthens, the estimation errors tend to decrease with increasing $n$. We also observe that the estimation error curves decrease at the rate of $1/\sqrt{n}$ as we vary the probability of missingness in the outcome (right panel), resulting in lower estimation error for low degrees of missingness, as expected. Figure \ref{fig:ERS1P5b} in \ref{q10Sim} provides $q=10$ scenario.

The final stage Frobenius norm estimation error of $\bB^{*}$ is plotted against $\sqrt{\log(p)}$ in Figure \ref{fig:ERS1P6} depicting Scenario S3B. Theoretically, the Frobenius norm error is upper bounded by $\sqrt{\log(p)}$, keeping other parameters constant. In the left plot, we observe a clear distinction in estimation error with varying error correlation, unlike in Figure \ref{fig:ERS1P2}. As the correlation becomes stronger, the estimation errors tend to decrease. In the right plot, we vary the probability of missingness in the outcomes and observe that with higher probability of missingness, the estimation error spikes significantly when the number of covariates is larger. Figure \ref{fig:ERS1P6b} in \ref{q10Sim} provides $q=10$ scenario.

\begin{table}[ht]
    \centering
    \begin{minipage}{0.48\linewidth}
        \centering
        \renewcommand{\arraystretch}{1.0}
        \caption{Frobenius norm error for setups with varying $n$ and $\rho_{\varepsilon}$ with $p=100$, $q=20$, $\rho_{W,j} = 0.05$, and $s_{\max}=5$. Averages and standard errors in parentheses are based on 100 replications.}
        \begin{tabular}{cccc}
            \hline
            $n$ & $\rho_{\varepsilon}$ & $\VERT \widehat{\bB}^{(1)} - \bB^{*} \VERT_{F}$ & $\VERT \widehat{\bB}^{(2)} - \bB^{*} \VERT_{F}$ \\ \hline
            200   & 0   & 2.558 (0.018) & 2.288 (0.011) \\
                  & 0.3 & 2.577 (0.020) & 2.287 (0.012) \\
                  & 0.7 & 2.546 (0.018) & 1.863 (0.014) \\
                  & 0.9 & 2.546 (0.021) & 1.514 (0.013) \\ \hline
            400   & 0   & 1.858 (0.011) & 1.653 (0.011) \\
                  & 0.3 & 1.853 (0.011) & 1.624 (0.011) \\
                  & 0.7 & 1.877 (0.009) & 1.262 (0.009) \\
                  & 0.9 & 1.841 (0.013) & 0.982 (0.007) \\ \hline
            800   & 0   & 1.274 (0.009) & 1.138 (0.007) \\
                  & 0.3 & 1.275 (0.009) & 1.094 (0.008) \\
                  & 0.7 & 1.288 (0.010) & 0.859 (0.006) \\
                  & 0.9 & 1.277 (0.010) & 0.629 (0.004) \\ \hline
            1600  & 0   & 0.952 (0.004) & 0.828 (0.004) \\
                  & 0.3 & 0.952 (0.005) & 0.774 (0.004) \\
                  & 0.7 & 0.952 (0.004) & 0.599 (0.005) \\
                  & 0.9 & 0.955 (0.005) & 0.435 (0.003) \\ \hline
            3200  & 0   & 0.675 (0.006) & 0.606 (0.004) \\
                  & 0.3 & 0.668 (0.006) & 0.569 (0.003) \\
                  & 0.7 & 0.665 (0.006) & 0.421 (0.003) \\
                  & 0.9 & 0.667 (0.006) & 0.304 (0.002) \\ \hline
            6400  & 0   & 0.491 (0.002) & 0.418 (0.002) \\
                  & 0.3 & 0.491 (0.002) & 0.400 (0.004) \\
                  & 0.7 & 0.489 (0.002) & 0.298 (0.002) \\
                  & 0.9 & 0.487 (0.002) & 0.214 (0.002) \\ \hline
            12800 & 0   & 0.359 (0.003) & 0.313 (0.002) \\
                  & 0.3 & 0.356 (0.003) & 0.285 (0.001) \\
                  & 0.7 & 0.357 (0.003) & 0.219 (0.001) \\
                  & 0.9 & 0.351 (0.004) & 0.149 (0.001) \\ \hline
        \end{tabular}
        \label{tab:q20_rho_eps}
    \end{minipage}%
    \hfill
    \begin{minipage}{0.48\linewidth}
        \centering
        \renewcommand{\arraystretch}{1.0}
        \caption{Frobenius norm error for setups with varying $n$ and $\rho_{W,j}$ with $p=100$, $q=20$, $\rho_{\varepsilon} = 0.7$, and $s_{\max}=5$. Averages and standard errors in parentheses are based on 100 replications.}
        \begin{tabular}{cccc}
            \hline
            $n$ & $\rho_{W,j}$ & $\VERT \widehat{\bB}^{(1)} - \bB^{*}\VERT_{F}$ & $\VERT \widehat{\bB}^{(2)} - \bB^{*}\VERT_{F}$ \\ \hline
            200   & 0.005 & 2.465 (0.015) & 1.718 (0.011) \\
                  & 0.1   & 2.644 (0.020) & 2.032 (0.013) \\
                  & 0.2   & 2.886 (0.017) & 2.710 (0.035) \\
                  & 0.3   & 3.048 (0.011) & 4.132 (0.044) \\ \hline
            400   & 0.005 & 1.794 (0.015) & 1.125 (0.007) \\
                  & 0.1   & 1.928 (0.007) & 1.397 (0.010) \\
                  & 0.2   & 2.012 (0.011) & 1.869 (0.017) \\
                  & 0.3   & 2.280 (0.018) & 2.621 (0.018) \\ \hline
            800   & 0.005 & 1.236 (0.007) & 0.761 (0.006) \\
                  & 0.1   & 1.343 (0.011) & 0.949 (0.007) \\
                  & 0.2   & 1.516 (0.009) & 1.173 (0.008) \\
                  & 0.3   & 1.599 (0.006) & 1.677 (0.012) \\ \hline
            1600  & 0.005 & 0.922 (0.006) & 0.516 (0.003) \\
                  & 0.1   & 0.973 (0.004) & 0.667 (0.004) \\
                  & 0.2   & 1.034 (0.007) & 0.790 (0.004) \\
                  & 0.3   & 1.188 (0.008) & 1.084 (0.007) \\ \hline
            3200  & 0.005 & 0.639 (0.005) & 0.369 (0.002) \\
                  & 0.1   & 0.724 (0.006) & 0.466 (0.003) \\
                  & 0.2   & 0.784 (0.003) & 0.553 (0.003) \\
                  & 0.3   & 0.825 (0.004) & 0.729 (0.004) \\ \hline
            6400  & 0.005 & 0.478 (0.003) & 0.271 (0.001) \\
                  & 0.1   & 0.499 (0.002) & 0.327 (0.002) \\
                  & 0.2   & 0.551 (0.005) & 0.392 (0.002) \\
                  & 0.3   & 0.613 (0.005) & 0.495 (0.003) \\ \hline
            12800 & 0.005 & 0.336 (0.003) & 0.195 (0.002) \\
                  & 0.1   & 0.377 (0.003) & 0.238 (0.001) \\
                  & 0.2   & 0.399 (0.002) & 0.280 (0.001) \\
                  & 0.3   & 0.419 (0.002) & 0.343 (0.002) \\ \hline
        \end{tabular}
        \label{tab:q20_rhoWj}
    \end{minipage}
\end{table}

\begin{table}[ht]
    \centering
    \begin{minipage}{0.48\linewidth}
        \centering
        \renewcommand{\arraystretch}{1.0}
        \caption{Frobenius norm error for setups with varying $p$ and $\rho_{\varepsilon}$ with $n=400$, $q=20,$ $\rho_{W,j} = 0.05$ and $s_{\max}=5$. Averages and standard errors in parentheses are based on 100 replications.}
        \begin{tabular}{cccc}
            \hline
            $p$ & $\rho_{\varepsilon}$ & $\VERT \widehat{\bB}^{(1)} - \bB^{*}\VERT_{F}$ & $\VERT \widehat{\bB}^{(2)} - \bB^{*}\VERT_{F}$ \\ \hline
        50  & 0   & 1.810 (0.012) & 1.534 (0.010) \\
            & 0.3 & 1.815 (0.012) & 1.525 (0.010) \\
            & 0.7 & 1.795 (0.012) & 1.314 (0.010) \\
            & 0.9 & 1.830 (0.015) & 1.056 (0.009) \\ \hline
        100 & 0   & 1.858 (0.011) & 1.653 (0.011) \\
            & 0.3 & 1.853 (0.011) & 1.624 (0.011) \\
            & 0.7 & 1.877 (0.009) & 1.262 (0.009) \\
            & 0.9 & 1.841 (0.013) & 0.982 (0.007) \\ \hline
        200 & 0   & 1.879 (0.006) & 1.665 (0.007) \\
            & 0.3 & 1.892 (0.010) & 1.626 (0.007) \\
            & 0.7 & 1.872 (0.005) & 1.297 (0.006) \\
            & 0.9 & 1.881 (0.010) & 0.986 (0.007) \\ \hline
        400 & 0   & 1.887 (0.016) & 1.780 (0.008) \\
            & 0.3 & 1.874 (0.014) & 1.736 (0.008) \\
            & 0.7 & 1.884 (0.016) & 1.379 (0.008) \\
            & 0.9 & 1.915 (0.017) & 1.072 (0.007) \\ \hline
        800 & 0   & 2.148 (0.004) & 1.857 (0.006) \\
            & 0.3 & 2.147 (0.004) & 1.822 (0.006) \\
            & 0.7 & 2.144 (0.005) & 1.415 (0.010) \\
            & 0.9 & 2.144 (0.005) & 1.147 (0.008) \\ \hline
        \end{tabular}
        \label{tab:q20_rho_eps_1}
    \end{minipage}%
    \hfill
    \begin{minipage}{0.48\linewidth}
        \centering
        \renewcommand{\arraystretch}{1.0}
        \caption{Frobenius norm error for setups with varying $p$ and $\rho_{W,j}$ with $n=400$, $q=20$, $\rho_{\varepsilon} = 0.7$ and $s_{\max}=5$. Averages and standard errors in parentheses are based on 100 replications.}
        \begin{tabular}{cccc}
            \hline
            $p$ & $\rho_{W,j}$ & $\VERT \widehat{\bB}^{(1)} - \bB^{*} \VERT_{F}$ & $\VERT \widehat{\bB}^{(2)} - \bB^{*} \VERT_{F}$ \\ \hline
        50  & 0.005 & 1.745 (0.008) & 1.199 (0.010) \\
            & 0.1   & 1.871 (0.015) & 1.404 (0.012) \\
            & 0.2   & 2.010 (0.015) & 1.842 (0.013) \\
            & 0.3   & 2.193 (0.012) & 2.432 (0.020) \\ \hline
        100 & 0.005 & 1.794 (0.015) & 1.125 (0.007) \\
            & 0.1   & 1.928 (0.007) & 1.397 (0.010) \\
            & 0.2   & 2.012 (0.011) & 1.869 (0.017) \\
            & 0.3   & 2.280 (0.018) & 2.621 (0.018) \\ \hline
        200 & 0.005 & 1.853 (0.005) & 1.181 (0.009) \\
            & 0.1   & 1.932 (0.012) & 1.406 (0.009) \\
            & 0.2   & 2.261 (0.014) & 1.794 (0.019) \\
            & 0.3   & 2.363 (0.009) & 2.934 (0.030) \\ \hline
        400 & 0.005 & 1.798 (0.006) & 1.156 (0.006) \\
            & 0.1   & 2.093 (0.017) & 1.527 (0.010) \\
            & 0.2   & 2.244 (0.007) & 1.985 (0.023) \\
            & 0.3   & 2.355 (0.014) & 3.641 (0.061) \\ \hline
        800 & 0.005 & 2.083 (0.013) & 1.244 (0.006) \\
            & 0.1   & 2.162 (0.005) & 1.586 (0.007) \\
            & 0.2   & 2.241 (0.010) & 2.005 (0.019) \\
            & 0.3   & 2.664 (0.010) & 2.969 (0.033) \\ \hline
        \end{tabular}
        \label{tab:q20_rhoWj_1}
    \end{minipage}
\end{table}

Finally, in Tables \ref{tab:q20_rho_eps}, \ref{tab:q20_rhoWj}, \ref{tab:q20_rho_eps_1}, and \ref{tab:q20_rhoWj_1}, we present the average Frobenius norm estimation errors for $\bB^{*}$ in Stage 1, ${\VERT\widehat{\bB}^{(1)}-\bB^{*}\VERT_{F}}$, and Stage 3, ${\VERT\widehat{\bB}^{(2)}-\bB^{*}\VERT_{F}}$ over 100 replications, including their standard errors for scenarios S3A and S3B for $q=20$. Results for $ q=10$ are provided in Table \ref{tab:q10_rho_eps}, \ref{tab:q10_rhoWj}, \ref{tab:q10_rho_eps_1}, and \ref{tab:q10_rhoWj_1} in \ref{q10Sim}. In Table \ref{tab:q20_rho_eps}, where $n$ and $\rho_\varepsilon$ are varied, there is a consistent improvement in the estimation error in the final stage across all values of $n$ and $\rho_\varepsilon$ compared to the initial stage.  The standard errors, shown in parentheses, are also reduced in the final estimation of $\bB^{*}$. In Table \ref{tab:q20_rhoWj}, with variations in $n$ and $\rho_{W,j}$, a similar trend is observed when 20\% of the data is missing. For smaller sample sizes (below $n=800$) and 30\% missing data in each column of the outcomes, the performance of $\widehat{\bB}^{(2)}$ is worse than that of $\widehat{\bB}^{(1)}$, but it improves as $n$ increases. Table \ref{tab:q20_rho_eps_1} shows the effects of varying $p$ and $\rho_{\varepsilon}$, where substantial gaps between the first stage and final stage estimation errors are evident in all cases. Similar trends are observed for $q=10$ scenarios as well. In Table \ref{tab:q20_rhoWj_1}, with variations in $p$ and $\rho_{W,j}$, it is observed that as the number of $p$ and $q$ increases along with a higher degree of missingness (above 30\%), the performance of $\widehat{\bB}^{(2)}$ deteriorates compared to $\widehat{\bB}^{(1)}$.

\subsection{Simulations to compare \texttt{missoNet} with other methods \label{missVother}}

We compared our method to several established approaches, including Multivariate Regression with Covariance Estimation (\textit{MRCE}) \citep{rothman2010sparse}, Conditional Graphical Lasso with Missing Values (\textit{cglasso}) \citep{augugliaro2023cglasso}, and separate \textit{Lasso} fits by CpG.

Following \citet{rothman2010sparse}, we generated the true sparse coefficient matrix $\mathbf{B}^{*}$ as follows:
\begin{align}
\mathbf{B}^{*} = \mathbf{B} \odot \mathbf{K} \odot \mathbf{R}, \nonumber
\end{align}
where $\odot$ denotes elementwise multiplication. The matrix $\mathbf{B} \in \mathbb{R}^{p\times q}$ is dense, with elements uniformly distributed on [0.3, 0.7] and element signs assigned based on a Bernoulli trial with probability 0.5. The sparsity in $\mathbf{B}^{*}$ is induced by the matrices $\mathbf{K}$ and $\mathbf{R}$. Specifically, $\mathbf{K}$ is an elementwise binary mask, with each value drawn independently from a Bernoulli distribution with nonzero probability $s_1$. Row-wise sparsity is introduced through $\mathbf{R}$, with each row being entirely zeros or ones based on a Bernoulli trial with nonzero probability $s_2$. This combination yields both elementwise and row-wise sparsity, with an expected $s_1 s_2 pq$ nonzero entries in $\mathbf{B}^{*}$.

The predictor matrix, $\bX$, of size $n \times p$ is generated with rows drawn independently from a multivariate Gaussian distribution $\mathcal{N}_{p}(\mathbf{0},\bSigma^{*}_{xx})$, where $[\bSigma^{*}_{xx}]_{i,j}=0.7^{|i-j|}$ represents an AR(1) covariance structure. The response matrix, $\bY \in \mathbb{R}^{n \times q}$, is sampled from $\mathcal{N}_{q}(\bX\bB^{*},\bSigma^{*}_{\varepsilon\varepsilon})$.

We considered two inverse covariance structures for errors:
\begin{enumerate}
    \item \textbf{Type 1}: The inverse of an AR(1) error covariance $[\bSigma^{*}_{\varepsilon\varepsilon}]_{i,j}=\rho_{\varepsilon}^{|i-j|}$, corresponding to a chain structure with sparse tri-diagonal precision matrix $\bTheta^{*}_{\varepsilon\varepsilon}=(\bSigma^{*}_{\varepsilon\varepsilon})^{-1}$.
    
    \item \textbf{Type 2}: A more complex structure combining an inverse AR(1) matrix with a block-diagonal structure, where each block follows one of the three graph types: (i) an independent vertex set, (ii) a weak complete graph (with edge weights drawn from the uniform distribution on $[0.1, 0.4]$, or (iii) a strong complete graph (with edge weights drawn from the uniform distribution on $[0.5, 1]$.
\end{enumerate}
The details of the Type 1 and Type 2 structures are provided in \ref{precison_struc}. To introduce missing data in $\bY$, we generated a Bernoulli matrix $\bW$ with missingness probability $\rho_{W}$ for each column, and the observed responses were computed as $\bZ = \bY \odot \bW$. We considered the following models by varying the sample size, the number of variables, the conditional dependency strength, the network structure across responses, and the missing data probability.
\begin{enumerate}
    \item \textbf{Model 1}: $n=400$, $(p,q) = (30,30)$, $(s_1,s_2) = (0.2, 0.2)$, Type 1 structure for $\bTheta^{*}_{\varepsilon\varepsilon}$ with $\rho_{\varepsilon}=0.7$, and $\rho_{W} = \{0.01, 0.10, 0.20\}$.
    
    \item \textbf{Model 2}:  $n=200$; $(p,q) = (60,60)$; $(s_1,s_2) = (0.1, 0.1)$, Type 1 structure for $\bTheta^{*}_{\varepsilon\varepsilon}$ with $\rho_{\varepsilon}=0.7$, and $\rho_{W} = \{0.01, 0.10, 0.20\}$. This model evaluates robustness under increased dimensionality.
        
    \item \textbf{Model 3}: $n=200$; $(p,q) = (30,30)$; $(s_1,s_2) = (0.2, 0.2)$, Type 1 structure for $\bTheta^{*}_{\varepsilon\varepsilon}$ with $\rho_{\varepsilon}=0.4$ (weaker dependencies), and $\rho_{W} = \{0.01, 0.10, 0.20\}$.
    
    \item \textbf{Model 4}: $n=200$; $(p,q) = (30,30)$; $(s_1,s_2) = (0.2, 0.2)$. Type 2 structure for $\bTheta^{*}_{\varepsilon\varepsilon}$, and $\rho_{W} = \{0.01, 0.10, 0.20\}$.
\end{enumerate}

Model 1 serves as a baseline with a large sample size and strong dependencies within a simple network structure. Models 2-4 reduce the sample size to amplify the effect of missing data. Model 2 increases the dimensionality, while Model 3 reduces dependency strength to assess the benefits of multivariate methods under weaker conditional dependencies. Model 4 evaluates the performance of the methods under complex interdependencies.

Each simulation was repeated 100 times for stability and reliability.

\subsubsection{Results \label{Comp_res}}

The detailed results of the simulations for Models 1-4, each with varying degrees of missing data ($\rho_{W}$), are presented in \ref{Res_comp}. Tables \ref{ch1:tab_model1}–\ref{ch1:tab_model4} present the means and standard errors for the key evaluation metrics, including overall parameter-estimation accuracy (Prediction Error, PE; Kullback–Leibler Loss, KLL) and the ability to recover sparsity structures (True Positive Rate, TPR; True Negative Rate, TNR; Matthews Correlation Coefficient, MCC), all defined in Section \ref{ME_missVother}.

Further details on the tuning parameter selection strategies can be found in Section~\ref{Tuning_missVother}. Across all tested strategies
({cv.min}, {cv.1se}, {BIC}) and levels of missingness (1\%, 10\%, 20\%), \texttt{missoNet} consistently exhibits strong performance: 

\textbf{Prediction accuracy.} For parameter estimation accuracy, \texttt{missoNet} [cv.min] consistently achieved the lowest Prediction Error (PE) for $\bB^{*}$, outperforming all competitors.

\textbf{Sparsity recovery.} Regarding sparsity recovery, although \texttt{missoNet} [cv.min] shows a slightly higher false positive rate, it still surpasses \textit{Lasso} [cv.min] and \textit{MRCE} [cv.min] in terms of TNR and MCC for \(\bB^{*}\). \texttt{missoNet} configured with cv.1se or BIC excelled, delivering the highest or near-highest TPR, TNR, and MCC for  \(\mathbf{B}^{*}\).  While \textit{Lasso} [cv.1se] occasionally yields a slightly
higher TNR, it is generally overly conservative, leading to substantially lower TPR and up to tenfold higher PE (e.g., in Model~1) compared to \texttt{missoNet}. This suggests a considerable risk of omitting important features and reducing predictive accuracy when employing \textit{Lasso} in complex settings.

\textbf{Network estimation.} In estimating the precision matrix  $\bTheta_{\varepsilon\varepsilon}^*$, \texttt{missoNet} [BIC]
achieves the lowest KLL in most settings, outperforming all methods except
itself with cross-validation\footnote{The performance difference between
\texttt{missoNet} [cv.min] and \texttt{missoNet} [cv.1se] when estimating
$\bTheta_{\varepsilon\varepsilon}^*$ is minimal because the one-standard-error
rule applies only to $\lambda_{\bB}$ rather than $\lambda_{\bTheta}$, consistent
with the definition of \textit{Lasso} [cv.1se].} and \textit{cglasso} in certain
high-missingness cases (e.g., Models~1 and~4). Nevertheless,
\texttt{missoNet} [BIC] consistently outperforms \textit{MRCE}. BIC tuning also
appears more sensitive to $\lambda_{\bTheta}$ and is more affected by missing
data than cross-validation. As model complexity and missingness increase,
cross-validation-tuned models often surpass those tuned via BIC, a trend that
later real-data analyses further support. Although \textit{cglasso} [BIC], which
uses an EM algorithm, maintains relatively stable estimation for
$\bTheta_{\varepsilon\varepsilon}^*$ in high-missingness scenarios, it tends to
select overly complex network structures that compromise interpretability.

\textbf{Network sparsity recovery.} In terms of sparsity recovery for $\bTheta_{\varepsilon\varepsilon}^*$, \texttt{missoNet} performs robustly under both cross-validation and BIC tuning. Generally, the BIC setting provides stronger results, while cross-validation remains competitive or superior across all missing-data levels. Although \texttt{missoNet} [BIC] is somewhat sensitive to higher missingness when measuring KLL, it remains reliable in edge identification for $\bTheta_{\varepsilon\varepsilon}^*$. \textit{MRCE} shows moderate performance for selecting variables in $\bTheta_{\varepsilon\varepsilon}^*$, occasionally surpassing \texttt{missoNet} under low missingness (e.g., 1\% in Model~1), but incurs markedly higher KLL (approximately fourfold that of \texttt{missoNet}). Meanwhile, \textit{cglasso} [BIC] often chooses a $\lambda_{\bTheta}$ that yields an overly complex network, lowering TNR and MCC relative to other methods and indicating overfitting. This issue also appears in subsequent real-data applications, emphasizing the importance of carefully tuning regularization parameters for \textit{cglasso} in practice.

\textbf{Computation efficiency.} \texttt{missoNet} [BIC] is significantly
faster than other multivariate methods. Although \texttt{missoNet} with
cross-validation requires multiple sub-model evaluations and is therefore
slower, it remains competitive. Meanwhile, \textit{cglasso} is generally slower
than \texttt{missoNet}, but can be more efficient in higher-dimensional
scenarios (e.g., Model~2) thanks to its Fortran-based core, which suits
large-scale matrix operations. By contrast, \textit{MRCE} is noticeably slower
in all settings, especially with many variables, due to its iterative estimation
procedure and less-optimized implementation.

In summary, \textit{Lasso} offers fast computation, making it a practical choice
for preliminary dimensionality reduction in ultrahigh-dimensional datasets.
However, its conservative selection strategy, limited predictive performance,
and inability to estimate network structures constrain its suitability for
comprehensive multivariate modeling across multiple related tasks. Meanwhile,
\textit{MRCE} lacks automated missing-data handling and lags behind in both
estimation accuracy and computational efficiency, making it less appropriate for
modern analytical demands. Although \textit{cglasso} demonstrates commendable
performance, it remains slightly inferior to \texttt{missoNet}. In certain
scenarios, \textit{cglasso} can marginally outperform \texttt{missoNet} in PE
under the same BIC tuning, yet it generally underperforms in KLL for
$\bTheta_{\varepsilon\varepsilon}^{*}$ and in variable-selection accuracy for
both $\bB^{*}$ and $\bTheta_{\varepsilon\varepsilon}^{*}$. 

The capacity of \textit{cglasso} to maintain stable performance despite
increasing missingness is noteworthy, due in part to its EM-based iterative
updating of missing data. However, its reliance on
information-criterion-driven model selection frequently leads to overly complex
network structures for $\bTheta_{\varepsilon\varepsilon}^*$, thereby
undermining both interpretability and predictive accuracy. By contrast,
\texttt{missoNet} achieves robust performance through multiple tuning
approaches: cv.min focuses on predictive accuracy, cv.1se
produces sparser solutions with high MCC, and BIC excels at recovering
$\bTheta_{\varepsilon\varepsilon}^*$ in demanding scenarios (e.g., Models~2
and~3).

Given \texttt{missoNet}'s computational efficiency under information-criteria
methods and the broader generalizability enabled by cross-validation, we suggest
employing both strategies in practical applications. Parameter tuning should
align with validation outcomes and specific research goals, and in certain
contexts, ensemble approaches may further enhance performance.

\section{A real data illustration}

\subsection{Introduction}

We previously reported that DNA-methylation at many cytosines is linked to blood levels of anti-citrullinated-protein antibody (ACPA), a key marker of rheumatoid arthritis (RA) \citep{shao2019rheumatoid, zeng2021thousands}. 
Using binomial regression models, we identified 19,472 ACPA-associated differentially methylated cytosines (DMCs) across the genome in serum samples from two \href{www.cartagene.qc.ca}{CARTaGENE} subsamples that were profiled by targeted bisulfite sequencing \citep{zeng2021thousands}. To test whether nearby genetic variants help shape these methylation signals, we re-analysed these data with \texttt{missoNet}, grouping CpG sites and single-nucleotide polymorphisms (SNPs) by chromosomal region. This multivariate strategy uncovered regional mQTLs that influence methylation while also modelling the epigenetic network within each segment.

\subsection{Data preprocessing}

The dataset analyzed here includes 202 adults selected from the CARTaGENE, with 115 
from a cohort processed using an updated bisulfite sequencing protocol that increased read depth and lowered the proportion of missing methylation calls \citep{zeng2021thousands, shao2019rheumatoid}, improvements that have been reported to raise statistical power in similar studies \citep{seiler2021characterizing}. 
Genome-wide genotypes were retrieved from the same biobank and filtered to retain variants that met Hardy–Weinberg equilibrium ($p > 5 \times 10^{-5}$), had a minor allele frequency (MAF) above 5\%, and showed high imputation quality (INFO score > 0.8), standards that are common in large mQTL surveys \citep{oliva2023dna}. 
Because methylation proportions follow a binomial distribution, we applied an arcsine transformation, which stabilises their variance and helps normalize the data to satisfy Gaussian model assumptions \citep{park2016differential, zheng2017estimating}.

DNA methylation at these DMCs is known to differ in people who carry ACPA, and it varies systematically among blood-cell types \citep{wu2011global}, with additional shifts seen across age and between the sexes \citep{ryan2020systematic}. 
Since some comparative methods lack flexibility for variable-specific adjustments (e.g., setting $\lambda_{\mathbf{B}} = 0$ for age), we accounted for these anticipated sources of variation by regressing the transformed methylation values at every CpG on five covariates: study cohort (original versus updated protocol), ACPA status, age, sex, and the first principal component of estimated cell-type proportions, a strategy recommended to remove batch and cellular-composition effects in whole-blood studies \citep{salas2022enhanced, kaushal2017comparison}. The residuals from this regression were then carried forward to the mQTL scan.

To simplify downstream steps, ACPA-linked DMCs were merged into regions whenever adjacent sites lay within 250 bp of one another, reflecting the local correlation of cytosine methylation. 
We next restricted attention to SNPs located within ±250 kb of each DMC region, which is a  definition of the \textit{cis} window used in many mQTL studies \citep{hannon2018leveraging, shao2019rheumatoid}. 
Each CpG–SNP pair was preliminarily tested with an univariate linear model with additive genetic effects; only variants achieving a false discovery rate below 0.10 entered the multivariate mQTL analysis using \texttt{missoNet} and other methods.

\subsection{Experimental design}

In this section we describe how we examined whether common variants that flank ACPA-associated DMCs can account for the coordinated methylation shifts across neighbouring CpGs.

\paragraph{Statistical framework}
For every multi-CpG region we created an $N\times p$ response matrix $\mathbf{Y}$ whose $p$ columns hold former regressed residuals of individual CpG sites. The corresponding predictor matrix $\mathbf{X}$ contained SNP genotypes drawn from a ±250 kb \textit{cis}-window around the region. \texttt{missoNet} and its competitors estimate the regression coefficient matrix $\mathbf{B}$ while jointly learning a sparse precision matrix $\mathbf{\Theta}$, so that correlations between CpGs can inform one another.

\paragraph{Data partition, regional aggregation, and missing calls}
Among 202 CARTaGENE participants, the 115 sequenced with the newer, higher-depth protocol formed the training set, whereas the remaining 87 sequenced earlier constituted an independent validation set. Of 19,472 previously reported ACPA-associated DMCs, 991 lay near at least one candidate SNP; merging any three or more consecutive DMCs (of these 991) within 250 bp yielded 62 CpG regions (mean = 8.2 CpGs). For illustration purposes and tractable computations, we focused on 16 moderately-sized regions that each held 5–50 CpGs and were linked to 10–100 SNPs in earlier single-marker scans (full coordinates are provided in \textbf{Supplementary Table S1}, Region~1--16, as detailed in Section~\ref{sup_tab_S1}). The average missing methylation call rates were 12\% in training and 28\% in validation; for CpGs that were complete, we removed an additional 5\% of observed values at random to mimic data that is missing completely at random.

\paragraph{Competing models and tuning procedures}
We compared \texttt{missoNet} against \textit{Lasso}, \textit{MRCE}, and \textit{cglasso}. 
For \textit{Lasso} and \textit{MRCE}, missing values were imputed by column means before fitting; \texttt{missoNet} and \textit{cglasso} modelled missingness directly. 
All methods were given identical grids for the penalties on $\mathbf{B}$ and $\mathbf{\Theta}$ (if applicable), and convergence tolerances were matched. Five-fold cross-validation selected tuning parameters for \textit{Lasso}, \textit{MRCE} and one flavour of \texttt{missoNet}, whereas \textit{cglasso} and a second \texttt{missoNet} flavour relied on the BIC (as \texttt{missoNet} supports tuning through both criteria).

\paragraph{Performance metrics and safeguards against false discoveries}
Predictive accuracy and goodness of fit on the held-out 87 samples were assessed by mean squared prediction error (MSE) and out-of-sample log-likelihood (Log.Lik). To gauge robustness against false positives, we spiked each region with 10 synthetic "null" SNPs whose genotype counts (0, 1, 2) were randomly drawn to match the allele-frequency spectrum of nearby real variants; such negative controls were used for GWAS as a practical check on false-discovery rate control \citep{sesia2021false}.

\subsection{Results of mQTL data analyses}

Table \ref{tab_mQTL} summarizes the comparative analysis of different methods for modeling DNA methylation variability influenced by SNPs in \textit{cis}-regions. Prediction accuracy on unseen data is assessed using mean squared error (MSE), while log-likelihood (Log.Lik) evaluates the fit of the assumed multivariate normal model to the joint distribution of CpG values. The proportions of nonzero elements for synthetic (P$(\widehat{\mathbf{B}}^{\text{null}})$) and real (P$(\widehat{\mathbf{B}}^{\text{true}})$) SNPs provide insights into each method's ability to distinguish relevant from redundant variables; ratios above one indicate a method retains more real than dummy variables.

\renewcommand{\arraystretch}{0.7}
\begin{xltabular}{\textwidth}{@{}>{\arraybackslash}p{3.0cm} >{\centering\arraybackslash}p{1.5cm} >{\centering\arraybackslash}p{1.5cm} >{\centering\arraybackslash}p{2.6cm} >{\centering\arraybackslash}p{1.5cm} >{\centering\arraybackslash}p{1.5cm} >{\centering\arraybackslash}p{1.5cm}@{}}
  \caption{Means and standard errors (in parentheses) of evaluation metrics across 16 tested regions for all comparative methods. Mean squared error (\textnormal{MSE}) and log-likelihood (\textnormal{Log.Lik}) were calculated on the independent validation cohort. The best-performing method for each of the first three metrics is highlighted in red, while results within $\pm 5$\% of the best are highlighted in teal. \textnormal{P}$(\widehat{\mathbf{B}}^{\textnormal{null}})$ and \textnormal{P}$(\widehat{\mathbf{B}}^{\textnormal{true}})$ represent the proportions of nonzero associations for synthetic and true predictors, respectively. \textnormal{P}$(\widehat{\mathbf{\Theta}}_{\varepsilon\varepsilon})$ indicates the proportion of nonzero off-diagonal elements in the estimated inverse covariance matrix, $\widehat{\mathbf{\Theta}}_{\varepsilon\varepsilon}$.\\
  \textnormal{\textbf{Abbreviations}:}\ \textnormal{cv.min}: cross-validation with the minimum error; \textnormal{cv.1se}: cross-validation using the one-standard-error rule; \textnormal{BIC}: Bayesian Information Criterion.}\\
  \toprule
    Method & MSE & Log.Lik & \text{P}$(\widehat{\mathbf{B}}^{\text{true}})$\big/ \text{P}$(\widehat{\mathbf{B}}^{\text{null}})$ & P$(\widehat{\mathbf{B}}^{\text{null}})$ & P$(\widehat{\mathbf{B}}^{\text{true}})$ & P$(\widehat{\mathbf{\Theta}}_{\varepsilon\varepsilon})$\\
  \midrule
  \endfirsthead

  \multicolumn{7}{c}%
  {{Table \thetable\ Continued from Previous Page}} \\
  \toprule
  Method & MSE & Log.Lik & \text{P}$(\widehat{\mathbf{B}}^{\text{true}})$\big/ \text{P}$(\widehat{\mathbf{B}}^{\text{null}})$ & P$(\widehat{\mathbf{B}}^{\text{null}})$ & P$(\widehat{\mathbf{B}}^{\text{true}})$ & P$(\widehat{\mathbf{\Theta}}_{\varepsilon\varepsilon})$\\
  \midrule
  \endhead

  \midrule \multicolumn{7}{r}{{Continued on next page}} \\
  \endfoot

  \endlastfoot

  \textit{Lasso} [cv.min] & 0.160 & -$1.769^{1}$ & 0.636 & 0.202 & 0.129 \\
  & (0.022) & (0.413) & & (0.032) & (0.020) \\
  \cline{2-7}
  \textit{Lasso} [cv.1se] & 0.156 & -1.689 & 3.864 & 0.013 & 0.049 \\
  & (0.019) & (0.317) & & (0.008) & (0.011) \\
  \midrule
  \texttt{missoNet} [cv.min] & \color{Teal}0.155 & 5.281 & 1.222 & 0.247 & 0.302 & 0.148 \\
  & (0.024) & (2.593) & & (0.055) & (0.056) & (0.072) \\
  \cline{2-7}
  \texttt{missoNet} [cv.1se] & \color{Red}\textbf{0.149} & 8.196 & \color{Red}\textbf{12.536} & 0.011 & 0.137 & 0.268 \\
  & (0.019) & (1.997) & & (0.007) & (0.033) & (0.096) \\
  \cline{2-7}
  \texttt{missoNet} [BIC] & 0.179 & 15.081 & 0.596 & 0.002 & 0.001 & 0.757 \\
  & (0.024) & (3.292) & & (0.001) & (0.001) & (0.033) \\
  \midrule
  \textit{MRCE} [cv.min] & \color{Teal}0.152 & 6.264 & 0.556 & 0.256 & 0.142 & 0.049 \\
  & (0.017) & (2.309) & & (0.058) & (0.024) & (0.019) \\
  \midrule
  \textit{cglasso} [BIC] & 0.176 & \color{Red}\textbf{17.092} & 1.023 & 0.010 & 0.010 & 0.784 \\
  & (0.023) & (3.218) & & (0.004) & (0.007) & (0.039) \\
  \bottomrule
  \multicolumn{7}{l}{\footnotesize ${}^{1}$For \textit{Lasso}, an identity matrix was employed for $\widehat{\mathbf{\Theta}}_{\varepsilon\varepsilon}$ in the computation of the log-likelihood.}
    
\label{tab_mQTL}
\end{xltabular}

\textit{Lasso}, a widely used tool in QTL mapping due to its scalability and ease of implementation \citep{usai2009lasso}, yielded a moderate MSE (0.160) when using the minimum-error criterion (cv.min). However, it selected 20\% of dummy SNP associations, resulting in a true-to-dummy ratio below one, indicative of overfitting. Employing the one-standard-error rule (cv.1se) substantially reduced the false-positive rate, increased the ratio to 3.9, and slightly improved MSE, suggesting enhanced generalizability and feature selection capability. This improvement came at the cost of retaining fewer true predictors, implying mild underfitting and the potential exclusion of important variables.

In contrast, \texttt{missoNet}, which incorporates network relationships among CpG sites, outperformed \textit{Lasso} and other methods across most metrics when appropriately tuned via cross-validation.  Specifically, \texttt{missoNet} [cv.1se] achieved the best MSE and the highest true-to-dummy ratio (12.54) while retaining a manageable 13.7\% of real SNPs and misidentifying only 1.1\% of null associations. The cv.min variant retained more edges in both the coefficient and precision matrices but at the expense of significantly worse performance in prediction accuracy, model fit, and feature selection. When \texttt{missoNet} was tuned using the BIC, it selected an overly dense network (75\% of precision-matrix edges) and exhibited the poorest MSE among the three \texttt{missoNet} variants, illustrating overfitting by likelihood-based criteria when the sample size is small.

\textit{MRCE} and \textit{cglasso}, in comparison, produced less favorable results. \textit{MRCE} delivered a competitive MSE (0.152) but showed the weakest discrimination, retaining 26\% of dummy SNP associations and the lowest true-to-dummy ratio. The \textit{cglasso} method, optimized using BIC, achieved the highest log-likelihood (17.1) yet the largest MSE (0.176) and a near-unity true-to-dummy ratio, indicating that likelihood alone does not guarantee predictive utility and feature selection capacity.

In summary, incorporating CpG–CpG correlation, as \texttt{missoNet} does, can substantially improve detection of \textit{cis}-mQTLs, aligning with large-scale catalogs that report extensive regional sharing of methylation signals \cite{gaunt2016systematic}.  When tuned with the one-standard-error cross-validation rule, \texttt{missoNet} provides the most accurate and parsimonious model of \textit{cis}-mQTL effects. Across 16 CpG regions, it achieved the lowest MSE, a favorable out-of-sample log-likelihood, and the strongest separation between genuine and synthetic predictors. This indicates that its network-aware shrinkage and unbiased surrogate estimators under missing values effectively recover true genetic drivers of regional methylation while suppressing noise. Although the cv.min version of \texttt{missoNet} showed greater model complexity and signs of overfitting, it still outperformed other cross-validation-based methods such as \textit{Lasso} and \textit{MRCE} in either prediction accuracy or support set recovery. In addition, this application study suggests that while likelihood-based criteria like BIC may lead to good data fit, they can inflate network complexity when sample size is modest, ultimately compromising predictive accuracy and identification ability by overfitting to intricate network structures in $\widehat{\mathbf{\Theta}}_{\varepsilon\varepsilon}$.

\textbf{Supplementary Table S1}, Region~1--16 (described in Section~\ref{sup_tab_S1}) details the performance of the best-performing method -- \texttt{missoNet} [cv.1se] -- across the 16 regions concerning the performance metrics in Table \ref{tab_mQTL}. They also list all tested CpGs and SNPs for each region analyzed, as well as the identified associations. For CpG-SNP associations, these supplemental tables report whether the association was previously documented in the life-course mQTL map of \citet{gaunt2016systematic} (catalog available at \url{http://www.mqtldb.org}). In nearly all regions, \texttt{missoNet} [cv.1se] identified known mQTLs or discovered new CpG-SNP associations where the same SNPs alter methylation levels of CpGs within 100 bp of a known mQTL association. Specifically, we tested a total of 9,238 CpG-SNP associations across 16 regions; \texttt{missoNet} [cv.1se] identified 957 as nonzero, with 841 (\textbf{87.88\%}) confirmed as either replicates of known mQTL associations (30/841) or new links where the SNPs alter methylation of CpGs within 100 bp of a documented site. The majority of these close mappings result from the targeted bisulfite panel not perfectly overlapping the 450K array used by \citet{gaunt2016systematic}. Two regions merit note: in Region 14 (chr20:32255222–32255294, GRCh37), no associations were detected, and none were reported by \citet{gaunt2016systematic}, suggesting this window lacks strong cis control in whole blood. In Region 1 (chr1:8958863–8959033), \texttt{missoNet} identified 25 new CpG–SNP pairs not previously cataloged, representing candidates for follow-up validation. Collectively, these findings support \texttt{missoNet} as a robust tool for \textit{cis}-mQTL discovery in modestly sized sequencing studies, especially when missing data and correlated methylation patterns are present.

\section{Discussion and conclusion}

In this work, we proposed a three-stage convex estimation technique to jointly estimate the regression coefficients and the precision matrix in a multivariate regression setting with missing response data. This method is specifically designed to address challenges arising from missingness, employing a structured approach to ensure both accurate and interpretable estimation. Theoretical bounds were established for each stage of the estimation process, quantifying the statistical error in terms of key factors such as the degree of missingness, error variance, dimensions of the response and covariate matrices, and sample size. These bounds provide a rigorous theoretical foundation for the proposed approach.

To validate our theoretical findings, we conducted a comprehensive simulation study. The results demonstrated strong alignment between the theoretical error bounds and the observed performance, reinforcing the robustness of the proposed method. Additionally, we developed a computationally efficient proximal gradient descent algorithm to implement the estimation procedure. This algorithm is publicly available as an \texttt{R} package, enhancing accessibility and reproducibility for practitioners.

Extensive simulations were conducted to benchmark our method against other established techniques. The comparisons revealed that our approach consistently outperforms competing methods across a wide range of scenarios, particularly in settings characterized by high missingness or complex dependency structures. To further illustrate the practical utility of our method, we applied it to a real world analysis for identifying cis-SNPs that alter regional DNA methylation levels, showcasing its applicability and relevance to real data. This analysis both confirmed previously-reported SNP-CpG associations, and identified many new ones, in spite of the missingness at each region.  

Notably, our algorithm serves as a powerful tool for addressing multivariate regression problems with missing responses. Furthermore, since missing data can be viewed as a special case of multiplicative noise, the method can be generalized to accommodate scenarios where responses are subject to other kinds of multiplicative noise. Assuming the rows \( w_{i\bullet} \) of the noise matrix \( \mathbf{W} \in \mathbb{R}^{n \times q} \) are drawn independently and identically from a multivariate distribution with first and second-order expectations \( \boldsymbol{\mu}_{W} = \mathbb{E}[\mathbf{W}] \in \mathbb{R}^{q} \) and \( \mathbb{E}[\mathbf{W}\mathbf{W}^{\top}] \in \mathbb{R}^{q \times q} \), respectively, the theoretical derivations remain valid for this more general case.

In conclusion, the proposed method’s theoretical guarantees, computational efficiency, and strong empirical performance make it a valuable addition to the statistical toolkit. Future research directions may include extending the framework to more complex missing data mechanisms and developing theoretical results for settings where missing data occurs in both responses and covariates.

\section*{Acknowledgments}
The data that support the DNA methylation analysis are available from CARTaGENE. Restrictions apply to the availability of these data, which were used under license for this study. Data are available from \href{https://cartagene.qc.ca/en/researchers.html}{https://cartagene.qc.ca/en/researchers.html} with the permission of CARTaGENE.

\newpage
\makeatletter
\renewcommand\maketitle{
  \null
  \thispagestyle{plain}
  \vskip 2em
  \begin{center}%
    \let \footnote \thanks
    {\LARGE \@title \par}%
    \vskip 1.5em
    {\large
      \lineskip .5em
      \begin{tabular}[t]{c}%
        \@author
      \end{tabular}\par}%
    \vskip 1em
    {\large \@date \par}%
    \vskip 1em
  \end{center}%
  \par
  \vskip 1.5em}
\makeatother

\title{Appendices for "Multivariate regression with missing response data for modelling regional DNA methylation QTLs"}
\author{Shomoita Alam, Yixiao Zeng, Sasha Bernatsky, Marie Hudson, Inés Colmegna, \\David A. Stephens, Celia M. T. Greenwood, Archer Y. Yang}

\maketitle

\setcounter{page}{1}
\normalsize
\phantomsection
\addcontentsline{toc}{section}{APPENDICES}
\setcounter{table}{0} 
\renewcommand{\thetable}{A\arabic{table}} 
\setcounter{figure}{0} 
\renewcommand{\thefigure}{A\arabic{figure}}

\renewcommand{\thesubsection}{Appendix \Alph{subsection}}
\setcounter{subsection}{0}


\subsection{Technical proofs \label{P2_Supp}}

\subsubsection{Derivation of $\widehat{\bS}_{\varepsilon\varepsilon}\coloneqq\widehat{\bS}_{yy}-\widehat{\bB}^{(1)\top}\bS_{xx}\widehat{\bB}^{(1)}$: \label{deriv_See}}

We can define the population variance of the error, $\varepsilon$, 
\begin{align*}
    \bSigma_{\varepsilon\varepsilon}^{*} &= \mathrm{Var}(Y-X\bB^{*})\\
    &=\mathrm{Var}(Y) - 2 \mathrm{Cov}(Y,X)\bB^{*} + \bB^{*\top} \mathrm{Var}(X) \bB^{*}\\
    &= \bSigma_{yy}^{*} - 2 \bSigma_{yx}^{*} \bB^{*}+\bB^{*\top} \bSigma_{xx}^{*} \bB^{*}
\end{align*}

We can also simplify the covariance between $Y$ and $X$ as, 
\begin{align*}
\bSigma_{yx}^{*} &= \mathrm{Cov}(X, X\bB^{*}+\varepsilon)\\
& = \mathrm{Cov}(X, X\bB^{*})+\mathrm{Cov}(X, \varepsilon)\\
& = \bB^{*\top} \bSigma_{xx}^{*},
\end{align*}
since $\mathrm{Cov}(X, \varepsilon)=0$. Therefore, $\bSigma_{\varepsilon\varepsilon}^{*}=\bSigma_{yy}^{*} -\bB^{*\top} \bSigma_{xx}^{*} \bB^{*}$. A plug-in estimate of $\bSigma_{\varepsilon\varepsilon}^{*}$ is given by $\widehat{\bS}_{\varepsilon\varepsilon}\coloneqq\widehat{\bS}_{yy}-\widehat{\bB}^{(1)\top}\bS_{xx}\widehat{\bB}^{(1)}$.

\subsubsection{Proof of Lemma \ref{lem:1} \label{Pr_Lem1}}
\begin{proof}
Recall that, for a multiplicative measurement error model, we assume
the observed matrix is $\mathbf{Z}=\mathbf{Y}\odot\mathbf{W}$ where
$\mathbf{W}=(\mathbf{w}_{1}\ddd\mathbf{w}_{n})^{\top}$ is a matrix
of multiplicative error. Given $\mathbf{S}_{xy}=\frac{1}{n}\bX^{\top}\mathbf{Y}$
as the matrix that represents the covariance between $\bX$ and uncontaminated
$\bY$, we have 
\begin{align}
\widehat{\bS}_{xy}-\mathbf{S}_{xy} & =\frac{1}{n}\mathbf{X}^{\top}\mathbf{Z}\oslash[\mathbb{E}[W],...,\mathbb{E}[W]]^{\top}-\frac{1}{n}\bX^{\top}\mathbf{Y}\nonumber \\
 & =\frac{1}{n}\mathbf{X}^{\top}\mathbf{(\mathbf{Y}\odot\mathbf{W})}\oslash[\mathbf{1}_{q}-\boldsymbol{\rho},\ldots,\mathbf{1}_{q}-\boldsymbol{\rho}]^{\top}-\frac{1}{n}\bX^{\top}\mathbf{Y}.\label{eq:sigma_mult_matrix_format}
\end{align}
Let $Y_{ij}$ be the $i$th row and $j$th column of $\bY$ for $j=1,\ldots,q$
and $k=1,\ldots,p$, also $W_{ij}$ and $X_{ik}$ are defined similarly.
Followed by \eqref{eq:sigma_mult_matrix_format}, we have, 
\begin{align*}
(\widehat{\bS}_{xy})_{kj}-\left(\mathbf{S}_{xy}\right)_{kj} & =\frac{1}{n(1-\rho_{j})}\sum_{i=1}^{n}Y_{ij}W_{ij}X_{ik}-\frac{1}{n}\sum_{i=1}^{n}Y_{ij}X_{ik}\\
 & =\frac{1}{n(1-\rho_{j})}\sum_{i=1}^{n}Y_{ij}W_{ij}X_{ik}-\frac{1}{n(1-\rho_{j})}\sum_{i=1}^{n}Y_{ij}X_{ik}\mathbb{E}W_{ij}\\
 & =\frac{1}{n(1-\rho_{j})}\sum_{i=1}^{n}Y_{ij}X_{ik}(W_{ij}-\mathbb{E}W_{ij}),
\end{align*}
with $\mathbb{E}W_{ij}=1-\rho_{j}$. Now we plug in the true model
$Y_{ij}=\sum_{k'\in S_{j}}X_{ik'}\beta_{k'j}^{*}+\varepsilon_{ij}$
and get 
\begin{align*}
\left|(\widehat{\bS}_{xy})_{kj}-\left(\mathbf{S}_{xy}\right)_{kj}\right| & =\left|\frac{1}{n(1-\rho_{j})}\sum_{i=1}^{n}(\sum_{k'=1}^{s_{j}}X_{ik'}\beta_{k'j}^{*}+\varepsilon_{ij})X_{ik}(W_{ij}-\mathbb{E}W_{ij})\right|\\
 & =\frac{1}{(1-\rho_{j})}\left[\left|\frac{1}{n}\sum_{i=1}^{n}\sum_{k'=1}^{s_{j}}\beta_{k'j}^{*}X_{ik'}X_{ik}(W_{ij}-\mathbb{E}W_{ij})\right|+\left|\frac{1}{n}\sum_{i=1}^{n}\varepsilon_{ij}X_{ik}(W_{ij}-\mathbb{E}W_{ij})\right|\right]\\
 & \leq\frac{1}{\mu_{\min}}\left[\left|\frac{1}{n}\sum_{i=1}^{n}\sum_{k'=1}^{s_{j}}\beta_{k'j}^{*}X_{ik'}X_{ik}(W_{ij}-\mathbb{E}W_{ij})\right|+\left|\frac{1}{n}\sum_{i=1}^{n}\varepsilon_{ij}X_{ik}(W_{ij}-\mathbb{E}W_{ij})\right|\right]\\
 & \leq\frac{1}{\mu_{\min}}\left[s_{\max}B_{\max}X_{\max}^{2}\left|\underbrace{\frac{1}{n}\sum_{i=1}^{n}(W_{ij}-\mathbb{E}W_{ij})}_{T_{1}}\right|+X_{\max}\left|\underbrace{\frac{1}{n}\sum_{i=1}^{n}\varepsilon_{ij}(W_{ij}-\mathbb{E}W_{ij})}_{T_{2}}\right|\right]
\end{align*}
where $X_{\max}=\max_{i,k}|X_{ik}|<\infty$, $B_{\max}=\max_{k,j}|\beta_{kj}^{*}|$,
$\mu_{\min}=\min_{j}\mu_{j}=\min_{j}1-\rho_{j}$. We find bounds for
$T_{1}$ and $T_{2}$ separately. For $T_{1}$, applying Theorem 2.6.3
general Hoeffding inequality on page 27 of \citet{vershynin2018high},
we get
\begin{align*}
 & \qquad\Pr(|T_{1}|\geq t)\\
 & =\Pr\left[\left|\sum_{i=1}^{n}\frac{1}{n}(W_{ij}-\mathbb{E}W_{ij})\right|\geq t\right]\\
 & \leq2\exp\left\{ -\frac{ct^{2}}{K_{W}^{2}\sum_{i=1}^{n}(1/n)^{2}}\right\} \\
 & \overset{\text{(i)}}{\leq}2\exp\left\{ -\frac{cm_{1}t^{2}}{\sigma_{W}^{2}\sum_{i=1}^{n}(1/n)^{2}}\right\} \\
 & \overset{\text{(ii)}}{\leq}2\exp\left\{ -\frac{cnt^{2}}{\sigma_{W}^{2}}\right\} 
\end{align*}
where $K_{W}=\max_{i,j}\|W_{ij}-\mathbb{E}W_{ij}\|_{\psi_{2}}$. Inequality
(i) is due to the implication of Lemma 5.5 of \citet{vershynin2010introduction}
that for sub-Gaussian random variable $W_{ij}-\mathbb{E}W_{ij}$ there
exist universal constants $m_{1}$ and $M_{1}$ such that $m_{1}\|W_{ij}-\mathbb{E}W_{ij}\|_{\psi_{2}}^{2}\leq\sigma_{W}^{2}\leq M_{1}\|W_{ij}-\mathbb{E}W_{ij}\|_{\psi_{2}}^{2}$
hold. It can be simplified to inequality (ii) since $m_{1}$ is a
constant and consequently gets absorbed into the universal constant
$c$.

Now we will find the bound for $T_{2}$. Follow by Lemma 2.7.7 of
\citet{vershynin2018high}, we notice that $\varepsilon_{ij}(W_{ij}-\mathbb{E}W_{ij})$
as the product of two independent, centered sub-Gaussian random variables
follows sub-exponential distribution since 
\[
\|\varepsilon_{ij}(W_{ij}-\mathbb{E}W_{ij})\|_{\psi_{1}}\leq\|\varepsilon_{ij}\|_{\psi_{2}}\|W_{ij}-\mathbb{E}W_{ij}\|_{\psi_{2}}.
\]
Also we can see that $\varepsilon_{ij}(W_{ij}-\mathbb{E}W_{ij})$
is centered since
\[
\E[\varepsilon_{ij}(W_{ij}-\mathbb{E}W_{ij})]=\E\varepsilon_{ij}\E(W_{ij}-\mathbb{E}W_{ij})=0
\]
with Now apply Theorem 2.8.2 of \citet{vershynin2018high}, we get
\begin{align*}
 & \qquad\Pr(|T_{2}|\geq t)\\
 & =\Pr\left[\left|\sum_{i=1}^{n}\frac{1}{n}\varepsilon_{ij}(W_{ij}-\mathbb{E}W_{ij})\right|>t\right]\\
 & \leq2\exp\left\{ -c\min\left(\frac{t^{2}}{\max_{i}\|\varepsilon_{ij}(W_{ij}-\mathbb{E}W_{ij})\|_{\psi_{1}}^{2}\sum_{i=1}^{n}(1/n)^{2}},\frac{t}{\max_{i}\|\varepsilon_{ij}(W_{ij}-\mathbb{E}W_{ij})\|_{\psi_{1}}\max_{i}(1/n)}\right)\right\} \\
 & \overset{(\text{i})}{\leq}2\exp\left\{ -c\min\left(\frac{nt^{2}}{\max_{i}\|\varepsilon_{ij}\|_{\psi_{2}}^{2}\|W_{ij}-\mathbb{E}W_{ij}\|_{\psi_{2}}^{2}},\frac{nt}{\max_{i}\|\varepsilon_{ij}\|_{\psi_{2}}\|W_{ij}-\mathbb{E}W_{ij}\|_{\psi_{2}}}\right)\right\} \\
 & \leq2\exp\left\{ -cn\min\left(\frac{t^{2}}{\sigma_{\varepsilon}^{2}\sigma_{W}^{2}},\frac{t}{\sigma_{\varepsilon}\sigma_{W}}\right)\right\} 
\end{align*}
where inequality (i) is again due to Lemma 5.5 of \citet{vershynin2010introduction}
that there exist universal constants $m_{1}$, $M_{1}$, $m_{2}$
and $M_{2}$ such that $m_{1}\|W_{ij}-\mathbb{E}W_{ij}\|_{\psi_{2}}^{2}\leq\sigma_{W}^{2}\leq M_{1}\|W_{ij}-\mathbb{E}W_{ij}\|_{\psi_{2}}^{2}$
and $m_{2}\|\varepsilon_{ij}\|_{\psi_{2}}^{2}\leq\sigma_{\varepsilon}^{2}\leq M_{2}\|\varepsilon_{ij}\|_{\psi_{2}}^{2}$
hold. 

We can combine the pieces together as follows: 
\begin{align*}
\mathrm{Pr}\left[\Big|(\widehat{\bS}_{xy})_{kj}-(\mathbf{S}_{xy})_{kj}\Big|\geq t\right] & \leq\mathrm{Pr}\left[\frac{s_{\max}B_{\max}X_{\max}^{2}}{\mu_{\min}}|T_{1}|+\frac{X_{\max}}{\mu_{\min}}|T_{2}|\geq t\right]\\
 & =\mathrm{Pr}\left[s_{\max}B_{\max}X_{\max}^{2}|T_{1}|+X_{\max}|T_{2}|\geq t\mu_{\min}\right]\\
 & \overset{\text{(i)}}{\leq}\mathrm{Pr}\left[2\max\left\{ s_{\max}B_{\max}X_{\max}^{2}|T_{1}|,X_{\max}|T_{2}|\right\} \geq t\mu_{\min}\right]\\
 & =\mathrm{Pr}\left[\max\left\{ s_{\max}B_{\max}X_{\max}^{2}|T_{1}|,X_{\max}|T_{2}|\right\} \geq t\mu_{\min}/2\right]\\
 & =\mathrm{Pr}\left[\left(s_{\max}B_{\max}X_{\max}^{2}|T_{1}|\geq t\mu_{\min}/2\right)\cup\left(X_{\max}|T_{2}|\geq t\mu_{\min}/2\right)\right]\\
 & =\mathrm{Pr}\left[\left(|T_{1}|\geq t\mu_{\min}/(2s_{\max}B_{\max}X_{\max}^{2})\right)\cup\left(|T_{2}|\geq t\mu_{\min}/(2X_{\max})\right)\right]\\
 & \leq\mathrm{Pr}\left[|T_{1}|\geq t\mu_{\min}/(2s_{\max}B_{\max}X_{\max}^{2})\right]+\mathrm{Pr}\left[|T_{2}|\geq t\mu_{\min}/(2X_{\max})\right]\\
 & =C\exp\left\{ -cn\frac{\mu_{\min}^{2}t^{2}}{s_{\max}^{2}\sigma_{W}^{2}X_{\max}^{4}B_{\max}^{2}}\right\} \\
 & \qquad\qquad+C\exp\left\{ -cn\min\left(\frac{\mu_{\min}^{2}t^{2}}{\sigma_{\varepsilon}^{2}\sigma_{W}^{2}X_{\max}^{2}},\frac{\mu_{\min}t}{\sigma_{\varepsilon}\sigma_{W}X_{\max}}\right)\right\} \\
 & \overset{\text{(ii)}}{=}C\exp\left\{ -cn\frac{\mu_{\min}^{2}t^{2}}{s_{\max}^{2}\sigma_{W}^{2}X_{\max}^{4}B_{\max}^{2}}\right\} +C\exp\left\{ -cn\frac{\mu_{\min}^{2}t^{2}}{\sigma_{\varepsilon}^{2}\sigma_{W}^{2}X_{\max}^{2}}\right\} \\
 & \leq2C\max\left\{ \exp\left(-cn\frac{\mu_{\min}^{2}t^{2}}{s_{\max}^{2}\sigma_{W}^{2}X_{\max}^{4}B_{\max}^{2}}\right),\exp\left(-cn\frac{\mu_{\min}^{2}t^{2}}{\sigma_{\varepsilon}^{2}\sigma_{W}^{2}X_{\max}^{2}}\right)\right\} \\
 & \leq C\exp\left(-cn\frac{\mu_{\min}^{2}t^{2}}{\sigma_{W}^{2}X_{\max}^{2}\max(s_{\max}^{2}X_{\max}^{2}B_{\max}^{2},\sigma_{\varepsilon}^{2})}\right)
\end{align*}
Inequality (i) is due to the relationship $A+B+|A-B|=2\max(A,B)$,
which implies $A+B\leq2\max(A,B)$. Equality (ii) follows by assuming
that $t\leq t_{0}^{(1)}$ with 
\begin{equation}
t_{0}^{(1)}=\sigma_{\varepsilon}\sigma_{W}X_{\max}/\mu_{\min}.\label{eq:t_restriction_1}
\end{equation}
Applying the union bound to find an upper bound for the elementwise
max norm, we get, 
\begin{align*}
\mathrm{Pr}[\VERT\widehat{\bS}_{xy}-\mathbf{S}_{xy}\VERT_{\max}\geq t] & \leq\mathrm{Pr}\left[\max_{j,k}\Big|(\widehat{\bS}_{xy})_{kj}-(\mathbf{S}_{xy})_{kj}\Big|\geq t\right]\\
 & \leq\sum_{j,k}\mathrm{Pr}\left[\Big|(\widehat{\bS}_{xy})_{kj}-(\mathbf{S}_{xy})_{kj}\Big|\geq t\right]\\
 & \leq pqC\exp\left(-cn\frac{\mu_{\min}^{2}t^{2}}{\sigma_{W}^{2}X_{\max}^{2}\max(s_{\max}^{2}X_{\max}^{2}B_{\max}^{2},\sigma_{\varepsilon}^{2})}\right).
\end{align*}
\end{proof}

\subsubsection{Proof of Lemma \ref{lem:2} \label{Pr_Lem2}}
\begin{proof}
Given the definition of $\mathbf{S}_{xy}$ 
\begin{align}
\bS_{xy}-\bS_{xx}\bB^{*} & =\frac{1}{n}\bX^{\top}\mathbf{Y}-\frac{1}{n}\bX^{\top}\mathbf{X}\mathbf{B}^{*}=\frac{1}{n}\bX^{\top}(\bX\bB^{*}+\boldsymbol{\varepsilon})-\frac{1}{n}\bX^{\top}\mathbf{X}\mathbf{B}^{*}=\frac{1}{n}\bX^{\top}\boldsymbol{\varepsilon}\label{eq:Matrix_Sxy2Sigma}
\end{align}
    Hence if we consider the $k$th row and $j$th column of this difference
for $j=1,\ldots,q$ and $k=1,\ldots,p$, it is
\begin{align*}
\left(\bS_{xy}\right)_{kj}-\left(\bS_{xx}\bB^{*}\right)_{kj} & =\frac{1}{n}\sum_{i=1}^{n}\varepsilon_{ij}X_{ik}
\end{align*}
We can bound it by applying Theorem 2.6.3 general Hoeffding inequality
on page 27 of \citet{vershynin2018high}
\begin{align*}
\mathrm{Pr}\left[\Big|\sum_{i=1}^{n}(X_{ik}/n)\varepsilon_{ij}\Big|\geq t\right] & <2\exp\left\{ -\frac{ct^{2}}{(\max_{i}\|\varepsilon_{ij}\|_{\psi_{2}})^{2}\sum_{i=1}^{n}(X_{ik}/n)^{2}}\right\} \\
 & <2\exp\left\{ -\frac{cnt^{2}}{\sigma_{\varepsilon}^{2}X_{\max}^{2}}\right\} .
\end{align*}
Applying the union bound, we get
\[
\mathrm{Pr}(\VERT\bS_{xy}-\bS_{xx}\bB^{*}\VERT_{\max}\geq t)\leq pqC\exp\left[-\frac{cnt^{2}}{\sigma_{\varepsilon}^{2}X_{\max}^{2}}\right]
\]
and 
\[
\mathrm{Pr}(\|(\bS_{xy})_{\bullet l}-\bS_{xx}\bbeta_{l}^{*}\|_{\infty}\geq t)\leq pC\exp\left[-\frac{cnt^{2}}{\sigma_{\varepsilon}^{2}X_{\max}^{2}}\right].
\]
\end{proof}

\subsubsection{Proof of Proposition \ref{prop:1} \label{Pr_Prop1}}
\begin{proof}
From the definition of \eqref{eq:Beta_obj}, we have 
\begin{align*}
\mathcal{L}(\widehat{\bbeta}_{l})-\mathcal{L}(\bbeta_{l}^{*}) & =\frac{\widehat{\bbeta}_{l}^{\top}\bS_{xx}\widehat{\bbeta}_{l}}{2}-\frac{\bbeta_{l}^{*}{}^{\top}\bS_{xx}\bbeta_{l}^{*}}{2}-\left\{ (\widehat{\bS}_{xy})_{\bullet l}\right\} ^{\top}(\widehat{\bbeta}_{l}-\bbeta_{l}^{*})-(\widehat{\bbeta}_{l}-\bbeta_{l}^{*})^{\top}\bS_{xx}\bbeta_{l}^{*}+(\widehat{\bbeta}_{l}-\bbeta_{l}^{*})^{\top}\bS_{xx}\bbeta_{l}^{*}\\
 & =\frac{\widehat{\bbeta}_{l}^{\top}\bS_{xx}\widehat{\bbeta}_{l}}{2}+\frac{\bbeta_{l}^{*}{}^{\top}\bS_{xx}\bbeta_{l}^{*}}{2}-\widehat{\bbeta}_{l}^{\top}\bS_{xx}\bbeta_{l}^{*}-\left\{ (\widehat{\bS}_{xy})_{\bullet l}\right\} ^{\top}(\widehat{\bbeta}_{l}-\bbeta_{l}^{*})+(\widehat{\bbeta}_{l}-\bbeta_{l}^{*})^{\top}\bS_{xx}\bbeta_{l}^{*}\\
 & =\frac{1}{2}(\widehat{\bbeta}_{l}^{\top}\bS_{xx}\widehat{\bbeta}_{l}+\bbeta_{l}^{*}{}^{\top}\bS_{xx}\bbeta_{l}^{*}-2\widehat{\bbeta}_{l}^{\top}\bS_{xx}\bbeta_{l}^{*})-((\widehat{\bS}_{xy})_{\bullet l}-\bS_{xx}\bbeta_{l}^{*})^{\top}(\widehat{\bbeta}_{l}-\bbeta_{l}^{*})\\
 & \overset{\text{(i)}}{=}\frac{1}{2}[(\widehat{\bbeta}_{l}-\bbeta_{l}^{*})^{\top}\bS_{xx}(\widehat{\bbeta}_{l}-\bbeta_{l}^{*})]-((\widehat{\bS}_{xy})_{\bullet l}-\bS_{xx}\bbeta_{l}^{*})^{\top}(\widehat{\bbeta}_{l}-\bbeta_{l}^{*})\\
 & =\frac{1}{2}[\bdelta^{\top}\bS_{xx}\bdelta]-((\widehat{\bS}_{xy})_{\bullet l}-\bS_{xx}\bbeta_{l}^{*})^{\top}\bdelta
\end{align*}
where $\bdelta=\widehat{\bbeta}_{l}-\bbeta_{l}^{*}$ and equality
(i) follows by completing the square. Since $\widehat{\bbeta}_{l}$
is the solution of \eqref{eq:Beta_obj}
\[
\mathcal{L}(\widehat{\bbeta}_{l})+\lambda_{l}\|\widehat{\bbeta}_{l}\|_{1}\leq\mathcal{L}(\bbeta_{l}^{*})+\lambda_{l}\|\bbeta_{l}^{*}\|_{1}.
\]
Plugging in $\mathcal{L}(\widehat{\bbeta}_{l})-\mathcal{L}(\bbeta_{l}^{*})$
yields 
\begin{align}
\frac{1}{2}[\bdelta^{\top}\bS_{xx}\bdelta]+\lambda_{l}\|\widehat{\bbeta}_{l}\|_{1} & \leq((\widehat{\bS}_{xy})_{\bullet l}-\bS_{xx}\bbeta_{l}^{*})^{\top}\bdelta+\lambda_{l}\|\bbeta_{l}^{*}\|_{1}\nonumber \\
 & \leq\|\bdelta\|_{1}\|(\widehat{\bS}_{xy})_{\bullet l}-\bS_{xx}\bbeta_{l}^{*}\|_{\infty}+\lambda_{l}\|\bbeta_{l}^{*}\|_{1}\label{eq:obj_diff}
\end{align}
The second inequality follows from Hölder's inequality. In order to
obtain an upper bound for the left-hand side, we first bound the quantity
$\|(\widehat{\bS}_{xy})_{\bullet l}-\bS_{xx}\bbeta_{l}^{*}\|_{\infty}$.
Using the triangular inequality, we get
\[
\|(\widehat{\bS}_{xy})_{\bullet l}-\bS_{xx}\bbeta_{l}^{*}\|_{\infty}\leq\|(\widehat{\bS}_{xy})_{\bullet l}-(\bS_{xy})_{\bullet l}\|_{\infty}+\|(\bS_{xy})_{\bullet l}-\bS_{xx}\bbeta_{l}^{*}\|_{\infty}.
\]
The first term can be bounded by applying Lemma \ref{lem:1} by setting
$t=\lambda_{l}/4$. We see that for $\lambda_{l}\leq4t_{0}^{(1)}$,
we have 
\[
\mathrm{Pr}[\|(\widehat{\bS}_{xy})_{\bullet l}-(\mathbf{S}_{xy})_{\bullet l}\|_{\infty}\geq\lambda_{l}/4]\leq pC\exp\left(-\frac{cn\mu_{\min}^{2}\lambda_{l}^{2}}{\sigma_{W}^{2}X_{\max}^{2}\max(s_{\max}^{2}X_{\max}^{2}B_{\max}^{2},\sigma_{\varepsilon}^{2})}\right).
\]
The second term can be bounded by applying Lemma \ref{lem:2} by setting
$t=\lambda_{l}/4$
\[
\mathrm{Pr}(\|(\bS_{xy})_{\bullet l}-\bS_{xx}\bbeta_{l}^{*}\|_{\infty}\geq\lambda_{l}/4)\leq pC\exp\left[-\frac{cn\lambda_{l}^{2}}{\sigma_{\varepsilon}^{2}X_{\max}^{2}}\right].
\]
Define the event $\mathcal{E}_{1}=\{\|(\widehat{\bS}_{xy})_{\bullet l}-(\mathbf{S}_{xy})_{\bullet l}\|_{\infty}\geq\lambda_{l}/4\}$
and $\mathcal{E}_{2}=\{\|(\bS_{xy})_{\bullet l}-\bS_{xx}\bbeta_{l}^{*}\|_{\infty}\geq\lambda_{l}/4\}$,
and the ``good'' event $\mathcal{G}(\lambda_{l})=\{\lambda_{l}/2\geq\|(\widehat{\bS}_{xy})_{\bullet l}-\bS_{xx}\bbeta_{l}^{*}\|_{\infty}\}$.
Then by Boole's inequality
\begin{align}
\mathrm{Pr}(\mathcal{G}(\lambda_{l})) & =\mathrm{Pr}(\|(\widehat{\bS}_{xy})_{\bullet l}-\bS_{xx}\bbeta_{l}^{*}\|_{\infty}<\lambda_{l}/2)\nonumber \\
 & \geq\mathrm{Pr}(\mathcal{E}_{1}^{c}\cap\mathcal{E}_{2}^{c})=\mathrm{Pr}\left[(\mathcal{E}_{1}\cup\mathcal{E}_{2})^{c}\right]=1-\mathrm{Pr}\left[(\mathcal{E}_{1}\cup\mathcal{E}_{2})\right]\nonumber \\
 & \geq1-\mathrm{Pr}(\mathcal{E}_{1})-\mathrm{Pr}(\mathcal{E}_{2})\nonumber \\
 & \geq1-pC\exp\left(-\frac{cn\mu_{\min}^{2}\lambda_{l}^{2}}{\sigma_{W}^{2}X_{\max}^{2}\max(s_{\max}^{2}X_{\max}^{2}B_{\max}^{2},\sigma_{\varepsilon}^{2})}\right)-pC\exp\left[-\frac{cn\lambda_{l}^{2}}{\sigma_{\varepsilon}^{2}X_{\max}^{2}}\right]\nonumber \\
 & \geq1-2pC\max\left\{ \exp\left(-\frac{cn\lambda_{l}^{2}}{\sigma_{W}^{2}X_{\max}^{2}\max(s_{\max}^{2}X_{\max}^{2}B_{\max}^{2},\sigma_{\varepsilon}^{2})/\mu_{\min}^{2}}\right),\exp\left[-\frac{cn\lambda_{l}^{2}}{\sigma_{\varepsilon}^{2}X_{\max}^{2}}\right]\right\} \nonumber \\
 & =1-2pC\exp\left(-\frac{cn\lambda_{l}^{2}}{\max\left[\sigma_{W}^{2}X_{\max}^{2}\max(s_{\max}^{2}X_{\max}^{2}B_{\max}^{2},\sigma_{\varepsilon}^{2})/\mu_{\min}^{2},\sigma_{\varepsilon}^{2}X_{\max}^{2}\right]}\right)\nonumber \\
 & =1-pC\exp\left(-\frac{cn\lambda_{l}^{2}}{X_{\max}^{2}\max\left[\sigma_{W}^{2}s_{\max}^{2}X_{\max}^{2}B_{\max}^{2}/\mu_{\min}^{2},\sigma_{W}^{2}\sigma_{\varepsilon}^{2}/\mu_{\min}^{2},\sigma_{\varepsilon}^{2}\right]}\right)\label{eq:prob_ge}
\end{align}

Returning to \eqref{eq:obj_diff}, when the ``good'' event $\mathcal{G}(\lambda_{l})$
holds, we have 
\begin{align}
\frac{1}{2}[\bdelta^{\top}\bS_{xx}\bdelta] & \leq\|\bdelta\|_{1}\|(\widehat{\bS}_{xy})_{\bullet l}-\bS_{xx}\bbeta_{l}^{*}\|_{\infty}+\lambda_{l}(\|\bbeta_{l}^{*}\|_{1}-\|\widehat{\bbeta}_{l}\|_{1})\nonumber \\
 & \leq\frac{\lambda_{l}}{2}\|\bdelta\|_{1}+\lambda_{l}(\|\bbeta_{l}^{*}\|_{1}-\|\bbeta_{l}^{*}+\bdelta\|_{1}).\label{eq:obj_diff_bound}
\end{align}
Now since $(\bbeta_{l}^{*})_{S_{l}^{c}}=\mathbf{0}$, we have $\|\bbeta_{l}^{*}\|_{1}=\|(\bbeta_{l}^{*})_{S_{l}}\|_{1}$,
and using the reverse triangle inequality
\[
\|\bbeta_{l}^{*}+\bdelta\|_{1}=\|(\bbeta_{l}^{*})_{S_{l}}+\bdelta_{S_{l}}\|_{1}+\|\bdelta_{S_{l}^{c}}\|_{1}\geq\|(\bbeta_{l}^{*})_{S_{l}}\|_{1}-\|\bdelta_{S_{l}}\|_{1}+\|\bdelta_{S_{l}^{c}}\|_{1}.
\]
Substituting these relations into inequality \eqref{eq:obj_diff_bound}
yields
\begin{align}
\frac{1}{2}[\bdelta^{\top}\bS_{xx}\bdelta] & \leq\frac{\lambda_{l}}{2}\|\bdelta\|_{1}+\lambda_{l}(\|\bbeta_{l}^{*}\|_{1}-\|\bbeta_{l}^{*}+\bdelta\|_{1}).\nonumber \\
 & \leq\frac{\lambda_{l}}{2}\|\bdelta\|_{1}+\lambda_{l}(\|\bbeta_{l}^{*}\|_{1}-\|(\bbeta_{l}^{*})_{S_{l}}\|_{1}+\|\bdelta_{S_{l}}\|_{1}-\|\bdelta_{S_{l}^{c}}\|_{1})\nonumber \\
 & =\frac{\lambda_{l}}{2}(\|\bdelta_{S_{l}}\|_{1}+\|\bdelta_{S_{l}^{c}}\|_{1})+\lambda_{l}(\|\bdelta_{S_{l}}\|_{1}-\|\bdelta_{S_{l}^{c}}\|_{1})\nonumber \\
 & \overset{\text{(i)}}{=}\frac{3\lambda_{l}}{2}\|\bdelta_{S_{l}}\|_{1}-\frac{\lambda_{l}}{2}\|\bdelta_{S_{l}^{c}}\|_{1}\label{eq:cone_eq}\\
 & \leq\frac{3\lambda_{l}}{2}\|\bdelta_{S_{l}}\|_{1}\leq\frac{3}{2}\sqrt{s_{l}}\lambda_{l}\|\bdelta_{S_{l}}\|_{2}\nonumber 
\end{align}
This allows us to apply the restricted eigenvalue condition to $\bdelta$,
which ensures that $\kappa_{l}\|\bdelta\|_{2}^{2}\leq\bdelta^{\top}\bS_{xx}\bdelta.$
Combining this lower bound with our earlier inequality yields
\begin{align*}
\frac{\kappa_{l}}{2}\|\bdelta\|_{2}^{2} & \leq\frac{1}{2}[\bdelta^{\top}\bS_{xx}\bdelta]\leq\frac{3}{2}\sqrt{s_{l}}\lambda_{l}\|\bdelta_{S_{l}}\|_{2}\leq\frac{3}{2}\sqrt{s_{l}}\lambda_{l}\|\bdelta\|_{2}
\end{align*}
and rearranging yields the bound
\[
\|\widehat{\bbeta}_{l}-\bbeta_{l}^{*}\|_{2}\leq3\sqrt{s_{l}}\lambda_{l}/\kappa_{l}.
\]
Returning to equality (i) in \eqref{eq:cone_eq}, we can now show
that the condition for $\mathbb{C}(S_{l})$ holds
\[
0\leq\frac{1}{2}[\bdelta^{\top}\bS_{xx}\bdelta]\leq\frac{3\lambda_{l}}{2}\|\bdelta_{S_{l}}\|_{1}-\frac{\lambda_{l}}{2}\|\bdelta_{S_{l}^{c}}\|_{1}.
\]
Hence $\|\bdelta_{S_{l}^{c}}\|_{1}\leq3\|\bdelta_{S_{l}}\|_{1}$.
Now, we can also derive the $\ell_{1}$-norm bound for the estimation
error. 
\begin{align*}
\|\widehat{\bbeta}_{l}-\bbeta_{l}^{*}\|_{1} & =\|\bdelta\|_{1}\\
 & =\|\bdelta_{S_{l}}\|_{1}+\|\bdelta_{S_{l}^{c}}\|_{1}\\
 & \leq\|\bdelta_{S_{l}}\|_{1}+3\|\bdelta_{S_{l}}\|_{1}\qquad\forall\bdelta\in\mathbb{C}(S_{l})\\
 & \leq4\|\bdelta_{S_{l}}\|_{1}\\
 & \leq4\sqrt{s_{l}}\|\bdelta_{S_{l}}\|_{2}\\
 & \leq12s_{l}\lambda_{l}/\kappa_{l}.
\end{align*}
When we set 
\begin{equation}
\lambda_{l}/2\geq\lambda_{0}/2\coloneqq X_{\max}\max\left[\sigma_{W}s_{\max}X_{\max}B_{\max}/\mu_{\min},\sigma_{W}\sigma_{\varepsilon}/\mu_{\min},\sigma_{\varepsilon}\right]\sqrt{\frac{\log p}{n}}\label{eq:good_event_bound}
\end{equation}
in inequality \eqref{eq:prob_ge}, then
\begin{align}
\Pr(\mathcal{G}_{l}) & =\Pr(\|(\widehat{\bS}_{xy})_{\bullet l}-\bS_{xx}\bbeta_{l}^{*}\|_{\infty}\leq\lambda_{l}/2)\nonumber \\
 & \geq\Pr(\|(\widehat{\bS}_{xy})_{\bullet l}-\bS_{xx}\bbeta_{l}^{*}\|_{\infty}\leq\lambda_{0}/2)\nonumber \\
 & \geq1-pC\exp\left(-\frac{cn(X_{\max}\max\left[\sigma_{W}s_{\max}X_{\max}B_{\max}/\mu_{\min},\sigma_{W}\sigma_{\varepsilon}/\mu_{\min},\sigma_{\varepsilon}\right])^{2}(\log p/n)}{X_{\max}^{2}\max\left[\sigma_{W}^{2}s_{\max}^{2}B_{\max}^{2}/\mu_{\min}^{2},\sigma_{W}^{2}\sigma_{\varepsilon}^{2}/\mu_{\min}^{2},\sigma_{\varepsilon}^{2}\right]}\right)\nonumber \\
 & =1-pC\exp\left(-c\log p\right)\nonumber \\
 & =1-C\exp\left(\log p-c\log p\right)\nonumber \\
 & =1-C\exp((1-c)\log p)\nonumber \\
 & =1-C\exp(-c\log p)\label{eq:good_event_sxy}
\end{align}
With slight abuse of the notation, we define a new constant $-c$
that is equal to $1-c$. Notice that the lower bound $\lambda_{0}$
of $\lambda_{l}$ in \eqref{eq:good_event_bound} must satisfy the
condition assumed in \eqref{eq:t_restriction_1}
\[
\lambda_{0}/2\coloneqq X_{\max}\max\left[\sigma_{W}s_{\max}X_{\max}B_{\max}/\mu_{\min},\sigma_{W}\sigma_{\varepsilon}/\mu_{\min},\sigma_{\varepsilon}\right]\sqrt{\frac{\log p}{n}}\leq t_{0}^{(1)}\coloneqq X_{\max}\sigma_{W}\sigma_{\varepsilon}/\mu_{\min}
\]
which implies that sample size $n$ must be sufficiently large such
that 
\begin{align*}
\sqrt{\frac{\log p}{n}} & \leq\frac{X_{\max}\sigma_{W}\sigma_{\varepsilon}/\mu_{\min}}{X_{\max}\max\left[\sigma_{W}s_{\max}X_{\max}B_{\max}/\mu_{\min},\sigma_{W}\sigma_{\varepsilon}/\mu_{\min},\sigma_{\varepsilon}\right]}\\
 & =\frac{1}{\max(s_{\max}B_{\max}X_{\max}/\sigma_{\varepsilon},1,\mu_{\min}/\sigma_{W})}\\
 & =\min(\sigma_{\varepsilon}/(s_{\max}B_{\max}X_{\max}),\sigma_{W}/\mu_{\min},1)
\end{align*}
Therefore, Lemma \ref{lem:1} can be applied as in \eqref{eq:prob_ge}
of this proof. 
\end{proof}

\subsubsection{Proof of Lemma \ref{lem3} \label{Pr_Lem3}}
\begin{proof}
Recall that, for a multiplicative measurement error model, we assume
the observed matrix is $\mathbf{Z}=\mathbf{Y}\odot\mathbf{\mathbf{W}}$
where $\mathbf{W}=(\mathbf{w}_{1}\ddd\mathbf{w}_{n})^{\top}$ is a
matrix of multiplicative error. Let $\boldsymbol{\Sigma}_{W}$ be
the known population covariance matrix of the measurement errors $\mathbf{W}$
for the multiplicative model. Given $\mathbf{S}_{yy}$ as the sample
covariance matrix for the data without any corruption, we have 
\begin{align}
\widehat{\bS}_{yy}-\mathbf{S}_{yy} & =\frac{1}{n}\mathbf{Z}^{\top}\mathbf{Z}\oslash(\bSigma_{W}+\mu_{W}\mu_{W}^{\top})-\frac{1}{n}\mathbf{Y}^{\top}\mathbf{Y}\nonumber \\
 & =\frac{1}{n}\mathbf{(\mathbf{Y}\odot\mathbf{W})}^{\top}(\mathbf{Y}\odot\mathbf{W})\oslash(\bSigma_{W}+\mu_{W}\mu_{W}^{\top})-\frac{1}{n}\mathbf{Y}^{\top}\mathbf{Y}\nonumber \\
 & =\frac{1}{n}\mathbf{(\mathbf{Y}\odot\mathbf{W})}^{\top}(\mathbf{Y}\odot\mathbf{W})\oslash\mathbb{E}[W{W}^{\top}]-\frac{1}{n}\mathbf{Y}^{\top}\mathbf{Y},\label{eq:Sigma_Y_mult}
\end{align}
where the last line is due to the fact that $\mathbb{E}[W{W}^{\top}]=\mathrm{Cov}(W)+\mathbb{E}[W]\mathbb{E}[W^{\top}]=\bSigma_{W}+\mu_{W}\mu_{W}^{\top}$.

Let $Y_{ij}$ and $Y_{ik}$ be the $i$th row and $j$th and $k$th
column of $\bY$ for $i=1\ddd n$ and $j,k=1,\ldots,q$, and $W_{ij}$
and $W_{ik}$ can be defined similarly. Let $X_{ik'}$ and $X_{ik''}$
be the $i$th row and $k'$th and $k''$th column of $\bX$ for $k',k''=1,\ldots,p$.
Then we can express the $(k,j)$ element of $(\widehat{\bS}_{yy}-\mathbf{S}_{yy})$
as
\begin{align*}
(\widehat{\bS}_{yy})_{kj}-\left(\mathbf{S}_{yy}\right)_{kj} & =\frac{1}{n}\sum_{i=1}^{n}\frac{Y_{ij}W_{ij}Y_{ik}W_{ik}}{\mathbb{E}(W_{j}W_{k})}-\frac{1}{n}\sum_{i=1}^{n}Y_{ij}Y_{ik}
\end{align*}
where $\mathbb{E}(W_{j}W_{k})=(\mathbb{E}[W{W}^{\top}])_{jk}$. Now
by plugging in the true model 
\begin{align*}
Y_{ij} & =\sum_{k'\in S_{j}}X_{ik'}\beta_{k'j}^{*}+\varepsilon_{ij},\\
Y_{ik} & =\sum_{k''\in S_{k}}X_{ik''}\beta_{k''k}^{*}+\varepsilon_{ik}
\end{align*}
we get
\begin{align*}
 & (\widehat{\bS}_{yy})_{kj}-(\mathbf{S}_{yy})_{kj}\\
= & \frac{1}{\mathbb{E}(W_{j}W_{k})}\Bigg[\frac{1}{n}\sum_{i=1}^{n}(\sum_{k'\in S_{j}}X_{ik'}\beta_{k'j}^{*}+\varepsilon_{ij})(\sum_{k''\in S_{k}}X_{ik''}\beta_{k''k}^{*}+\varepsilon_{ik})W_{ij}W_{ik}\\
 & -\frac{1}{n}\sum_{i=1}^{n}(\sum_{k'\in S_{j}}X_{ik'}\beta_{k'j}^{*}+\varepsilon_{ij})(\sum_{k''\in S_{k}}X_{ik''}\beta_{k''k}^{*}+\varepsilon_{ik})\mathbb{E}(W_{j}W_{k})\Bigg]\\
= & \frac{1}{\mathbb{E}(W_{j}W_{k})}\Bigg[\frac{1}{n}\sum_{i=1}^{n}(\sum_{k'\in S_{j}}X_{ik'}\beta_{k'j}^{*}+\varepsilon_{ij})(\sum_{k''\in S_{k}}X_{ik''}\beta_{k''k}^{*}+\varepsilon_{ik})(W_{ij}W_{ik}-\mathbb{E}(W_{j}W_{k}))\Bigg].
\end{align*}
Since both $W_{ij}$ and $W_{ik}$ are sub-Gaussian with parameter
$\sigma_{W}^{2}$, their product $W_{ij}W_{ik}$ is sub-exponential
with 
\[
\|W_{ij}W_{ik}\|_{\psi_{1}}\leq\|W_{ij}\|_{\psi_{2}}\|W_{ik}\|_{\psi_{2}}.
\]
Therefore, 
\[
U_{ikj}\coloneqq W_{ij}W_{ik}-\mathbb{E}(W_{j}W_{k})
\]
 is a centered sub-exponential random variable, which has
\[
\|U_{ikj}\|_{\psi_{1}}=\|W_{ij}W_{ik}-\mathbb{E}(W_{j}W_{k})\|_{\psi_{1}}=C\|W_{ij}W_{ik}\|_{\psi_{1}}\leq C\|W_{ij}\|_{\psi_{2}}\|W_{ik}\|_{\psi_{2}}.
\]
 Hence
\begin{align*}
 & \left|(\widehat{\bS}_{yy})_{kj}-(\mathbf{S}_{yy})_{kj}\right|\\
 & \leq\frac{1}{\mathbb{E}(W_{j}W_{k})}\Bigg|\frac{1}{n}\sum_{i=1}^{n}\Bigg\{ U_{ikj}\sum_{k'\in S_{j}}\sum_{k''\in S_{k}}X_{ik'}\beta_{k'j}^{*}X_{ik''}\beta_{k''k}^{*}\\
 & \qquad+U_{ikj}\sum_{k'\in S_{j}}X_{ik'}\beta_{k'j}^{*}\varepsilon_{ik}+U_{ikj}\sum_{k''\in S_{k}}X_{ik''}\beta_{k''k}^{*}\varepsilon_{ij}+U_{ikj}\varepsilon_{ij}\varepsilon_{ik})\Bigg\}\Bigg|\\
 & \leq\frac{1}{m_{\min}}\Bigg\{ X_{\max}^{2}B_{\max}^{2}s_{\max}^{2}\Bigg|\underbrace{\sum_{i=1}^{n}(1/n)U_{ikj}}_{T_{1}}\Bigg|+X_{\max}B_{\max}s_{\max}\Bigg|\underbrace{\sum_{i=1}^{n}(1/n)U_{ikj}\varepsilon_{ik}}_{T_{2}}\Bigg|\\
 & \qquad+X_{\max}B_{\max}s_{\max}\Bigg|\underbrace{\sum_{i=1}^{n}(1/n)U_{ikj}\varepsilon_{ij}}_{T_{3}}\Bigg|+\Bigg|\underbrace{\sum_{i=1}^{n}(1/n)U_{ikj}\varepsilon_{ij}\varepsilon_{ik}}_{T_{4}}\Bigg|\Bigg\}
\end{align*}
where $m_{\min}=\min_{j,k}|\mathbb{E}(W_{j}W_{k})|>0$, $s_{\max}=\max_{j}s_{j}$,
$X_{\max}=\max_{i,k}|X_{ik}|<\infty$, and $B_{\max}=\max_{k',j}|\beta_{k'j}^{*}|$.
Notice that in the above formula, within each term, we have multiple
products of sub-Gaussian random variables. Now we bound terms $T_{1}$,
$T_{2}$, $T_{3}$ and $T_{4}$, separately. 

Term $T_{1}$ is the average of $n$ independent, mean zero, sub-exponential
random variables. Therefore
\begin{align*}
\mathrm{Pr}\left(\left|T_{1}\right|\geq t\right) & =\mathrm{Pr}\left(\left|\sum_{i=1}^{n}(1/n)U_{ikj}\right|\geq t\right)\\
 & \leq2\exp\left[-cn\min\left(\frac{t^{2}}{\max_{i}\|U_{ikj}\|_{\psi_{1}}^{2}},\frac{t}{\max_{i}\|U_{ikj}\|_{\psi_{1}}}\right)\right]\\
 & \leq C\exp\left[-cn\min\left(\frac{t^{2}}{\max_{i}\|W_{ij}\|_{\psi_{2}}^{2}\|W_{ik}\|_{\psi_{2}}^{2}},\frac{t}{\max_{i}\|W_{ij}\|_{\psi_{2}}\|W_{ik}\|_{\psi_{2}}}\right)\right]\\
 & \leq C\exp\left[-cn\min\left(\frac{t^{2}}{\sigma_{W}^{4}},\frac{t}{\sigma_{W}^{2}}\right)\right]
\end{align*}

Now we look at term $T_{2}$ and $T_{3}$. We need to bound the product
$U_{ikj}\varepsilon_{ik}$. We know that sub-Gaussian variable $\varepsilon_{ik}$
is sub-exponential, since 
\[
\mathrm{Pr}\left(\left|\varepsilon_{ik}\right|\geq t\right)\leq2\exp(-t^{2}/\sigma_{\varepsilon}^{2})\leq2\exp(-t/\sigma_{\varepsilon}^{2}).
\]
We can use Lemma A.1 of \citet{gotze2021concentration}, which states
that the product of $D$ sub-exponential random variables has a $\alpha/D$-sub-exponential
tail with the Orlicz norm
\begin{equation}
\Big\|\prod_{i=1}^{D}X_{i}\Big\|_{\psi_{\frac{\alpha}{D}}}\leq\prod_{i=1}^{D}\|X_{i}\|_{\psi_{\alpha}.}\label{eq:alpha_sub_exp-1}
\end{equation}
Specifically, in our case $D=2$ and $\alpha=1$ in \eqref{eq:alpha_sub_exp-1},
we obtain that the product of two sub-exponential is a $1/2$-sub-exponential
with 
\[
\|U_{ikj}\varepsilon_{ik}\|_{\psi_{1/2}}\leq\|U_{ikj}\|_{\psi_{1}}\|\varepsilon_{ik}\|_{\psi_{1}}=\|W_{ij}\|_{\psi_{2}}\|W_{ij}\|_{\psi_{2}}\|\varepsilon_{ik}\|_{\psi_{2}}^{2}.
\]
Meanwhile, we know $U_{ikj}\varepsilon_{ik}$ is centered due to $\E(U_{ikj}\varepsilon_{ik})=\E(U_{ikj})\E(\varepsilon_{ik})=0$
and independence of $U_{ikj}$ and $\varepsilon_{ik}$. We apply Corollary
1.4 of \citet{gotze2021concentration}
\begin{align*}
\mathrm{Pr}\left(\left|T_{2}\right|\geq t\right) & =\mathrm{Pr}\left(\left|\sum_{i=1}^{n}(1/n)U_{ikj}\varepsilon_{ik}\right|\geq t\right)\\
 & \leq2\exp\left(-c\min\left(\frac{nt^{2}}{\|U_{ikj}\varepsilon_{ik}\|_{\psi_{1/2}}^{2}},\frac{n^{1/2}t^{1/2}}{\|U_{ikj}\varepsilon_{ik}\|_{\psi_{1/2}}^{1/2}}\right)\right)\\
 & \leq2\exp\left(-c\min\left(\frac{nt^{2}}{\|W_{ij}\|_{\psi_{2}}^{2}\|W_{ij}\|_{\psi_{2}}^{2}\|\varepsilon_{ik}\|_{\psi_{2}}^{4}},\frac{n^{1/2}t^{1/2}}{\|W_{ij}\|_{\psi_{2}}^{1/2}\|W_{ij}\|_{\psi_{2}}^{1/2}\|\varepsilon_{ik}\|_{\psi_{2}}}\right)\right)\\
 & \leq2\exp\left(-c\min\left(\frac{nt^{2}}{\sigma_{W}^{4}\sigma_{\varepsilon}^{4}},\frac{n^{1/2}t^{1/2}}{\sigma_{W}\sigma_{\varepsilon}}\right)\right)
\end{align*}
The bound for term $T_{3}$ is the same
\begin{align*}
\mathrm{Pr}\left(\left|T_{3}\right|\geq t\right) & =2\exp\left(-c\min\left(\frac{nt^{2}}{\sigma_{W}^{4}\sigma_{\varepsilon}^{4}},\frac{n^{1/2}t^{1/2}}{\sigma_{W}\sigma_{\varepsilon}}\right)\right).
\end{align*}

Now we look at term $T_{4}$, in which $\varepsilon_{ij}\varepsilon_{ik}$
as the product of two sub-Gaussian is sub-exponential with 
\[
\|\varepsilon_{ij}\varepsilon_{ik}\|_{\psi_{1}}\leq\|\varepsilon_{ij}\|_{\psi_{2}}\|\varepsilon_{ik}\|_{\psi_{2}}
\]
Therefore, we can view $U_{ikj}\varepsilon_{ij}\varepsilon_{ik}$
as the product of two sub-exponential random variables with
\[
\|U_{ikj}\varepsilon_{ij}\varepsilon_{ik}\|_{\psi_{1/2}}\leq\|U_{ikj}\|_{\psi_{1}}\|\varepsilon_{ij}\varepsilon_{ik}\|_{\psi_{1}}\leq\|W_{ij}\|_{\psi_{2}}\|W_{ik}\|_{\psi_{2}}\|\varepsilon_{ij}\|_{\psi_{2}}\|\varepsilon_{ik}\|_{\psi_{2}}.
\]

\begin{align*}
\mathrm{Pr}\left(\left|T_{4}\right|\geq t\right) & =\mathrm{Pr}\left(\left|\sum_{i=1}^{n}(1/n)U_{ikj}\varepsilon_{ij}\varepsilon_{ik}\right|\geq t\right)\\
 & \leq2\exp\left(-c\min\left(\frac{nt^{2}}{\|U_{ikj}\varepsilon_{ij}\varepsilon_{ik}\|_{\psi_{1/2}}^{2}},\frac{n^{1/2}t^{1/2}}{\|U_{ikj}\varepsilon_{ij}\varepsilon_{ik}\|_{\psi_{1/2}}^{1/2}}\right)\right)\\
 & \leq2\exp\left(-c\min\left(\frac{nt^{2}}{\|W_{ij}\|_{\psi_{2}}^{2}\|W_{ik}\|_{\psi_{2}}^{2}\|\varepsilon_{ij}\|_{\psi_{2}}^{2}\|\varepsilon_{ik}\|_{\psi_{2}}^{2}},\frac{n^{1/2}t^{1/2}}{\|W_{ij}\|_{\psi_{2}}^{1/2}\|W_{ik}\|_{\psi_{2}}^{1/2}\|\varepsilon_{ij}\|_{\psi_{2}}^{1/2}\|\varepsilon_{ik}\|_{\psi_{2}}^{1/2}}\right)\right)\\
 & \leq2\exp\left(-c\min\left(\frac{nt^{2}}{\sigma_{W}^{4}\sigma_{\varepsilon}^{4}},\frac{n^{1/2}t^{1/2}}{\sigma_{W}\sigma_{\varepsilon}}\right)\right).
\end{align*}
Define the event 
\begin{align*}
\mathcal{A}_{1} & =\left\{ \frac{X_{\max}^{2}B_{\max}^{2}s_{\max}^{2}}{m_{\min}}\left|T_{1}\right|\geq t/4\right\} \\
\mathcal{A}_{2} & =\left\{ \frac{X_{\max}B_{\max}s_{\max}}{m_{\min}}\left|T_{2}\right|\geq t/4\right\} \\
\mathcal{A}_{3} & =\left\{ \frac{X_{\max}B_{\max}s_{\max}}{m_{\min}}\left|T_{3}\right|\geq t/4\right\} \\
\mathcal{A}_{4} & =\left\{ \frac{1}{m_{\min}}\left|T_{4}\right|\geq t/4\right\} 
\end{align*}
and the event $\mathcal{G}=\{|(\widehat{\bS}_{yy})_{kj}-(\mathbf{S}_{yy})_{kj}|<t\}$.
Then by Boole's inequality
\begin{align*}
\mathrm{Pr}(\mathcal{G}) & =\mathrm{Pr}(|(\widehat{\bS}_{yy})_{kj}-(\mathbf{S}_{yy})_{kj}|<t)\\
 & \geq\Pr(\mathcal{A}_{1}^{c}\cap\mathcal{A}_{2}^{c}\cap\mathcal{A}_{3}^{c}\cap\mathcal{A}_{4}^{c})\\
 & =\Pr[(\mathcal{A}_{1}\cup\mathcal{A}_{2}\cup\mathcal{A}_{3}\cup\mathcal{A}_{4})^{c}]\\
 & \geq1-\Pr[(\mathcal{A}_{1}\cup\mathcal{A}_{2}\cup\mathcal{A}_{3}\cup\mathcal{A}_{4})]\\
 & \geq1-\Pr(\mathcal{A}_{1})-\Pr(\mathcal{A}_{2})-\Pr(\mathcal{A}_{3})-\Pr(\mathcal{A}_{4})\\
 & \geq1-\Pr\left\{ \left|T_{1}\right|\geq\frac{tm_{\min}}{4X_{\max}^{2}B_{\max}^{2}s_{\max}^{2}}\right\} -\Pr\left\{ \left|T_{2}\right|\geq\frac{tm_{\min}}{4X_{\max}B_{\max}s_{\max}}\right\} \\
 & \qquad\qquad-\Pr\left\{ \left|T_{3}\right|\geq\frac{tm_{\min}}{4X_{\max}B_{\max}s_{\max}}\right\} -\Pr\left\{ \left|T_{4}\right|\geq tm_{\min}/4\right\} \\
 & \geq1-C\exp\left(-c\min\left(\frac{nt^{2}m_{\min}^{2}}{X_{\max}^{4}B_{\max}^{4}s_{\max}^{4}\sigma_{W}^{4}},\frac{ntm_{\min}}{X_{\max}^{2}B_{\max}^{2}s_{\max}^{2}\sigma_{W}^{2}}\right)\right)\\
 & \qquad\qquad-C\exp\left(-c\min\left(\frac{nt^{2}m_{\min}^{2}}{X_{\max}^{2}B_{\max}^{2}s_{\max}^{2}\sigma_{W}^{4}\sigma_{\varepsilon}^{4}},\frac{n^{1/2}t^{1/2}m_{\min}^{1/2}}{X_{\max}^{1/2}B_{\max}^{1/2}s_{\max}^{1/2}\sigma_{W}\sigma_{\varepsilon}}\right)\right)\\
 & \qquad\qquad-C\exp\left(-c\min\left(\frac{nt^{2}m_{\min}^{2}}{X_{\max}^{2}B_{\max}^{2}s_{\max}^{2}\sigma_{W}^{4}\sigma_{\varepsilon}^{4}},\frac{n^{1/2}t^{1/2}m_{\min}^{1/2}}{X_{\max}^{1/2}B_{\max}^{1/2}s_{\max}^{1/2}\sigma_{W}\sigma_{\varepsilon}}\right)\right)\\
 & \qquad\qquad-C\exp\left(-c\min\left(\frac{nt^{2}m_{\min}^{2}}{\sigma_{W}^{4}\sigma_{\varepsilon}^{4}},\frac{n^{1/2}t^{1/2}m_{\min}^{1/2}}{\sigma_{W}\sigma_{\varepsilon}}\right)\right)
\end{align*}
Now, we can further simplify the terms involved inside the exponentiation
of the right hand side of the above inequality if we assume $t$ satisfies
some additional conditions. Specifically, if $t\leq t_{0}^{(a)}\coloneqq X_{\max}^{2}B_{\max}^{2}s_{\max}^{2}\sigma_{W}^{2}/m_{\min}$,
then we have 
\begin{align*}
\frac{nt^{2}m_{\min}^{2}}{X_{\max}^{4}B_{\max}^{4}s_{\max}^{4}\sigma_{W}^{4}} & \leq\frac{ntm_{\min}}{X_{\max}^{2}B_{\max}^{2}s_{\max}^{2}\sigma_{W}^{2}}.
\end{align*}
If $t\leq t_{0}^{(b)}\coloneqq X_{\max}B_{\max}s_{\max}\sigma_{W}^{2}\sigma_{\varepsilon}^{2}/(n^{1/3}m_{\min})$,
then 
\begin{align*}
\frac{nt^{2}m_{\min}^{2}}{X_{\max}^{2}B_{\max}^{2}s_{\max}^{2}\sigma_{W}^{4}\sigma_{\varepsilon}^{4}} & \leq\frac{n^{1/2}t^{1/2}m_{\min}^{1/2}}{X_{\max}^{1/2}B_{\max}^{1/2}s_{\max}^{1/2}\sigma_{W}\sigma_{\varepsilon}}
\end{align*}
and if $t\leq t_{0}^{(c)}\coloneqq\sigma_{W}^{2}\sigma_{\varepsilon}^{2}/(n^{1/3}m_{\min})$,
then 
\begin{align*}
\frac{nt^{2}m_{\min}^{2}}{\sigma_{W}^{4}\sigma_{\varepsilon}^{4}} & \leq\frac{n^{1/2}t^{1/2}m_{\min}^{1/2}}{\sigma_{W}\sigma_{\varepsilon}}
\end{align*}
Therefore, if we assume that 
\begin{align*}
t\leq & t_{0}^{(2)}\coloneqq\min(t_{0}^{(a)},t_{0}^{(b)},t_{0}^{(c)})\\
 & =\min(X_{\max}^{2}B_{\max}^{2}s_{\max}^{2}\sigma_{W}^{2}/m_{\min},X_{\max}B_{\max}s_{\max}\sigma_{W}^{2}\sigma_{\varepsilon}^{2}/m_{\min}n^{1/3},\sigma_{W}^{2}\sigma_{\varepsilon}^{2}/m_{\min}n^{1/3})
\end{align*}
then the lower bound for $\Pr(\mathcal{G})$ can be simplified as
\begin{align*}
\mathrm{Pr}\left[\mathcal{G}\right] & \geq1-C\exp\left(-\frac{cnt^{2}m_{\min}^{2}}{X_{\max}^{4}B_{\max}^{4}s_{\max}^{4}\sigma_{W}^{4}}\right)-C\exp\left(-\frac{cnt^{2}m_{\min}^{2}}{X_{\max}^{2}B_{\max}^{2}s_{\max}^{2}\sigma_{W}^{4}\sigma_{\varepsilon}^{4}}\right)\\
 & \qquad\qquad-C\exp\left(-\frac{cnt^{2}m_{\min}^{2}}{X_{\max}^{2}B_{\max}^{2}s_{\max}^{2}\sigma_{W}^{4}\sigma_{\varepsilon}^{4}}\right)-C\exp\left(-\frac{cnt^{2}m_{\min}^{2}}{\sigma_{W}^{4}\sigma_{\varepsilon}^{4}}\right)\\
 & \geq1-4C\max\Bigg\{\exp\left(-\frac{cnt^{2}m_{\min}^{2}}{X_{\max}^{4}B_{\max}^{4}s_{\max}^{4}\sigma_{W}^{4}}\right),\exp\left(-\frac{cnt^{2}m_{\min}^{2}}{X_{\max}^{2}B_{\max}^{2}s_{\max}^{2}\sigma_{W}^{4}\sigma_{\varepsilon}^{4}}\right),\exp\left(-\frac{cnt^{2}m_{\min}^{2}}{\sigma_{W}^{4}\sigma_{\varepsilon}^{4}}\right)\Bigg\}\\
 & \geq1-4C\Bigg\{\exp\left(-\frac{cnt^{2}m_{\min}^{2}}{\max\left\{ X_{\max}^{4}B_{\max}^{4}s_{\max}^{4}\sigma_{W}^{4},X_{\max}^{2}B_{\max}^{2}s_{\max}^{2}\sigma_{W}^{4}\sigma_{\varepsilon}^{4},\sigma_{W}^{4}\sigma_{\varepsilon}^{4}\right\} }\right)\Bigg\}\\
 & \geq1-4C\Bigg\{\exp\left(-\frac{cnt^{2}m_{\min}^{2}}{\sigma_{W}^{4}\max\left\{ X_{\max}^{4}B_{\max}^{4}s_{\max}^{4},X_{\max}^{2}B_{\max}^{2}s_{\max}^{2}\sigma_{\varepsilon}^{4},\sigma_{\varepsilon}^{4}\right\} }\right)\Bigg\}.
\end{align*}
Applying the union bound, we get
\[
\mathrm{Pr}(\VERT\widehat{\bS}_{yy}-\mathbf{S}_{yy}\VERT_{\max}\geq t)\leq4q^{2}C\Bigg\{\exp\left(-\frac{cnt^{2}m_{\min}^{2}}{\sigma_{W}^{4}\max\left\{ X_{\max}^{4}B_{\max}^{4}s_{\max}^{4},X_{\max}^{2}B_{\max}^{2}s_{\max}^{2}\sigma_{\varepsilon}^{4},\sigma_{\varepsilon}^{4}\right\} }\right)\Bigg\}.
\]
\end{proof}

\subsubsection{Proof of Lemma \ref{lem4} \label{Pr_Lem4} }
\begin{proof}
The deviation $\VERT\widehat{\bS}_{\varepsilon\varepsilon}-\bSigma_{\varepsilon\varepsilon}^{*}\VERT_{\max}$
can be decomposed and upper bounded as 
\begin{align*}
\VERT\widehat{\bS}_{\varepsilon\varepsilon}-\bSigma_{\varepsilon\varepsilon}^{*}\VERT_{\max} & =\VERT\widehat{\bS}_{yy}-\widehat{\bB}^{(1)\top}\bS_{xx}\widehat{\bB}^{(1)}-\bSigma_{\varepsilon\varepsilon}^{*}\VERT_{\max}\\
 & =\VERT\widehat{\bS}_{yy}-\bS_{yy}+\bS_{yy}-\widehat{\bB}^{(1)\top}\bS_{xx}\widehat{\bB}^{(1)}-\bSigma_{\varepsilon\varepsilon}^{*}\VERT_{\max}\\
 & =\VERT\widehat{\bS}_{yy}-\bS_{yy}+(1/n)\bY^{\top}\bY-\widehat{\bB}^{(1)\top}\bS_{xx}\widehat{\bB}^{(1)}-\bSigma_{\varepsilon\varepsilon}^{*}\VERT_{\max}\\
 & \leq\VERT\widehat{\bS}_{yy}-\bS_{yy}\VERT_{\max}+\VERT(1/n)(\bX\bB^{*}+\bepsilon)^{\top}(\bX\bB^{*}+\bepsilon)-\widehat{\bB}^{(1)\top}\bS_{xx}\widehat{\bB}^{(1)}-\bSigma_{\varepsilon\varepsilon}^{*}\VERT_{\max}\\
 & \leq\underbrace{\VERT(\widehat{\bS}_{yy}-\bS_{yy})\VERT_{\max}}_{T_{1}}+\underbrace{\VERT\widehat{\bB}^{(1)\top}\bS_{xx}\widehat{\bB}^{(1)}-\bB^{*\top}\text{\ensuremath{\bS}}_{xx}\bB^{*}\VERT_{\max}}_{T_{2}}\\
 & \qquad+\underbrace{\VERT(2/n)\bepsilon^{\top}\bX\bB^{*}\VERT_{\max}}_{T_{3}}+\underbrace{\VERT(1/n)\bepsilon^{\top}\bepsilon-\bSigma_{\varepsilon\varepsilon}^{*}\VERT_{\max}}_{T_{4}},
\end{align*}
by applying a series of triangular inequalities. In the following,
we bound each term on the right hand side of the above inequality
separately. The first term $T_{1}$ can be bound by applying Lemma
\ref{lem3} by setting 
\begin{align*}
{t_{1}} & {\coloneqq\sigma_{W}^{2}\max\left\{ X_{\max}^{2}B_{\max}^{2}s_{\max}^{2}/m_{\min},X_{\max}B_{\max}s_{\max}\sigma_{\varepsilon}^{2}/m_{\min},\sigma_{\varepsilon}^{2}/m_{\min}\right\} \sqrt{\frac{\log(q^{2})}{n}}}\\
{\leq t_{0}^{(2)}} & {\coloneqq\min(t_{0}^{(a)},t_{0}^{(b)},t_{0}^{(c)})}
\end{align*}
we obtain the following tail bound
\begin{align*}
 & \mathrm{Pr}(T_{1}\leq t_{1})\\
\geq & \mathrm{Pr}(\VERT\widehat{\bS}_{yy}-\mathbf{S}_{yy}\VERT_{\max}\leq t_{1})\\
\geq & 1-C\exp(-c\log q^{2})
\end{align*}

The second term $T_{2}$ can be simplified as follows:
\begin{align*}
\VERT(\widehat{\bB}^{(1)\top}\bS_{xx}\widehat{\bB}^{(1)}-\bB^{*\top}\bS_{xx}\bB^{*})\VERT_{\max} & =\VERT(\widehat{\bB}^{(1)}-\bB^{*})^{\top}\bS_{xx}(\widehat{\bB}^{(1)}-\bB^{*})+2[(\widehat{\bB}^{(1)}-\bB^{*})^{\top}\bS_{xx}\bB^{*}]\VERT_{\max}\\
 & \leq\VERT(\widehat{\bB}^{(1)}-\bB^{*})^{\top}\bS_{xx}(\widehat{\bB}^{(1)}-\bB^{*})\VERT_{\max}\\
 & \qquad+2\VERT(\widehat{\bB}^{(1)}-\bB^{*})^{\top}\bS_{xx}\bB^{*}\VERT_{\max}.
\end{align*}
We treat each term on the right hand side in sequence. Starting with
the first term, we have 
\begin{align*}
\VERT(\widehat{\bB}^{(1)}-\bB^{*})^{\top}\bS_{xx}(\widehat{\bB}^{(1)}-\bB^{*})\VERT_{\max} & \overset{(\text{i})}{\leq}\VERT(\widehat{\bB}^{(1)}-\bB^{*})^{\top}\VERT_{\infty}\VERT\bS_{xx}(\widehat{\bB}^{(1)}-\bB^{*})\VERT_{\max}\\
 & \leq\VERT(\widehat{\bB}^{(1)}-\bB^{*})^{\top}\VERT_{\infty}\VERT(\widehat{\bB}^{(1)}-\bB^{*})^{\top}\bS_{xx}^{\top}\VERT_{\max}\\
 & \leq\VERT(\widehat{\bB}^{(1)}-\bB^{*})^{\top}\VERT_{\infty}\VERT(\widehat{\bB}^{(1)}-\bB^{*})^{\top}\VERT_{\infty}\VERT\bS_{xx}\VERT_{\max}\\
 & \overset{(\text{ii})}{=}\VERT\widehat{\bB}^{(1)}-\bB^{*}\VERT_{1}\VERT\widehat{\bB}^{(1)}-\bB^{*}\VERT_{1}\VERT\bS_{xx}\VERT_{\max}\\
 & =\VERT\bS_{xx}\VERT_{\max}\VERT\widehat{\bB}^{(1)}-\bB^{*}\VERT_{1}^{2}\\
 & \leq X_{\max}^{2}\VERT\widehat{\bB}^{(1)}-\bB^{*}\VERT_{1}^{2}\\
 & =X_{\max}^{2}\left(\max_{l\in[q]}\|\widehat{\bbeta}_{l}-\bbeta_{l}^{*}\|_{1}\right)^{2}\\
 & \leq X_{\max}^{2}\left(\max_{l\in[q]}12s_{l}\lambda_{l}/\kappa_{l}\right)^{2}
\end{align*}
Inquality (i) follows from that for two matrices $\mathbf{A}$ and
$\mathbf{B}$
\begin{align*}
\VERT\bA\bB\VERT_{\max} & =\max_{i,k}|\sum_{j}A_{ij}B_{jk}|\\
 & \leq\max_{i,k}|\sum_{j}A_{ij}(\max_{j}|B_{jk}|)|\\
 & =\max_{i}|\sum_{j}A_{ij}|\max_{j,k}|B_{jk}|\\
 & =\max_{k}[\max_{j}|B_{jk}|]\max_{i}|\sum_{j}A_{ij}|\\
 & =\VERT\bA\VERT_{\infty}\VERT\bB\VERT_{\max},
\end{align*}
and equality (ii) follows from the relationship $\VERT\mathbf{A}\VERT_{1}=\VERT\mathbf{A}^{\top}\VERT_{\infty}$.
Since by Proposition \ref{prop:1}
\[
\Pr(\|\widehat{\bbeta}_{l}-\bbeta_{l}^{*}\|_{1}\le12s_{l}\lambda_{l}/\kappa_{l})\geq1-C\exp(-c\log p),
\]
Hence
\begin{align*}
 & \qquad\Pr\left[\VERT(\widehat{\bB}^{(1)}-\bB^{*})^{\top}\bS_{xx}(\widehat{\bB}^{(1)}-\bB^{*})\VERT_{\max}\leq X_{\max}^{2}\left(\max_{l\in[q]}12s_{l}\lambda_{l}/\kappa_{l}\right)^{2}\right]\\
 & \geq\Pr\left[X_{\max}^{2}\left(\max_{l\in[q]}\|\widehat{\bbeta}_{l}-\bbeta_{l}^{*}\|_{1}\right)^{2}\leq X_{\max}^{2}\left(\max_{l\in[q]}12s_{l}\lambda_{l}/\kappa_{l}\right)^{2}\right]\\
 & =\Pr\left[\max_{l\in[q]}\|\widehat{\bbeta}_{l}-\bbeta_{l}^{*}\|_{1}\le\max_{l\in[q]}12s_{l}\lambda_{l}/\kappa_{l}\right]\\
 & =\Pr\left[\cap_{l=1}^{q}\{\|\widehat{\bbeta}_{l}-\bbeta_{l}^{*}\|_{1}\le\max_{l\in[q]}12s_{l}\lambda_{l}/\kappa_{l}\}\right]\\
 & =\Pr\left[\left[\cup_{l=1}^{q}\{\|\widehat{\bbeta}_{l}-\bbeta_{l}^{*}\|_{1}\geq\max_{l\in[q]}12s_{l}\lambda_{l}/\kappa_{l}\}\right]^{c}\right]\\
 & =1-\Pr\left[\cup_{l=1}^{q}\{\|\widehat{\bbeta}_{l}-\bbeta_{l}^{*}\|_{1}\geq\max_{l\in[q]}12s_{l}\lambda_{l}/\kappa_{l}\}\right]\\
 & \geq1-\sum_{l=1}^{q}\Pr\left[\|\widehat{\bbeta}_{l}-\bbeta_{l}^{*}\|_{1}\geq\max_{l\in[q]}12s_{l}\lambda_{l}/\kappa_{l}\right]\\
 & \geq1-\sum_{l=1}^{q}\Pr\left[\|\widehat{\bbeta}_{l}-\bbeta_{l}^{*}\|_{1}\geq12s_{l}\lambda_{l}/\kappa_{l}\right]\\
 & \geq1-qC\exp(-c\log p)\\
 & =1-C\exp(-c\log(pq)).
\end{align*}

Next, applying Proposition \ref{prop:1} again, we get
\begin{align*}
\VERT(\widehat{\bB}^{(1)}-\bB^{*})^{\top}\bS_{xx}\bB^{*}\VERT_{\max} & \leq\VERT(\widehat{\bB}^{(1)}-\bB^{*})^{\top}\bS_{xx}\VERT_{\max}\VERT\bB^{*}\VERT_{1}\\
 & \leq\VERT(\widehat{\bB}^{(1)}-\bB^{*})^{\top}\VERT_{\infty}\VERT\bS_{xx}\VERT_{\max}\VERT\bB^{*}\VERT_{1}\\
 & \leq\VERT\widehat{\bB}^{(1)}-\bB^{*}\VERT_{1}\VERT\bS_{xx}\VERT_{\max}\VERT\bB^{*}\VERT_{1}\\
 & \leq X_{\max}^{2}s_{\max}B_{\max}\VERT\widehat{\bB}^{(1)}-\bB^{*}\VERT_{1}\\
 & =X_{\max}^{2}s_{\max}B_{\max}\left(\max_{l\in[q]}\|\widehat{\bbeta}_{l}-\bbeta_{l}^{*}\|_{1}\right)\\
 & \leq X_{\max}^{2}s_{\max}B_{\max}\max_{l\in[q]}12s_{l}\lambda_{l}/\kappa_{l}
\end{align*}
with probability
\begin{align*}
 & \qquad\Pr\left[\VERT(\widehat{\bB}^{(1)}-\bB^{*})^{\top}\bS_{xx}\bB^{*}\VERT_{\max}\leq X_{\max}^{2}s_{\max}B_{\max}\max_{l\in[q]}12s_{l}\lambda_{l}/\kappa_{l}\right]\\
 & \geq\Pr\left[X_{\max}^{2}s_{\max}B_{\max}\max_{l\in[q]}\|\widehat{\bbeta}_{l}-\bbeta_{l}^{*}\|_{1}\leq X_{\max}^{2}s_{\max}B_{\max}\max_{l\in[q]}12s_{l}\lambda_{l}/\kappa_{l}\right]\\
 & \geq\Pr\left[\max_{l\in[q]}\|\widehat{\bbeta}_{l}-\bbeta_{l}^{*}\|_{1}\leq\max_{l\in[q]}12s_{l}\lambda_{l}/\kappa_{l}\right]\\
 & =\Pr\left[\cap_{l=1}^{q}\{\|\widehat{\bbeta}_{l}-\bbeta_{l}^{*}\|_{1}\le\max_{l\in[q]}12s_{l}\lambda_{l}/\kappa_{l}\}\right]\\
 & =1-\Pr\left[\cup_{l=1}^{q}\{\|\widehat{\bbeta}_{l}-\bbeta_{l}^{*}\|_{1}\geq\max_{l\in[q]}12s_{l}\lambda_{l}/\kappa_{l}\}\right]\\
 & \geq1-\sum_{l=1}^{q}\Pr\left[\|\widehat{\bbeta}_{l}-\bbeta_{l}^{*}\|_{1}\geq\max_{l\in[q]}12s_{l}\lambda_{l}/\kappa_{l}\right]\\
 & \geq1-\sum_{l=1}^{q}\Pr\left[\|\widehat{\bbeta}_{l}-\bbeta_{l}^{*}\|_{1}\geq12s_{l}\lambda_{l}/\kappa_{l}\right]\\
 & \leq1-C\exp(-c\log(pq)).
\end{align*}
Now, we bound the third term $T_{3}$. Let us consider the $j$th
row and $k$th column for $j,k=1,\ldots,q$, 
\[
\left|\left(\frac{2}{n}\bepsilon^{\top}\bX\bB^{*}\right)_{jk}\right|=\left|\frac{2}{n}\sum_{i=1}^{n}\sum_{k'\in S_{k}}X_{ik'}\beta_{k'k}^{*}\varepsilon_{ij}\right|\leq\left|2X_{\max}s_{\max}B_{\max}\sum_{i=1}^{n}(1/n)\varepsilon_{ij}\right|.
\]
Since $\varepsilon_{ij}$ is sub-Gaussian with parameter $\sigma_{\varepsilon}^{2}$,
we can bound it by applying Theorem 2.6.3 general Hoeffding inequality
on page 27 of \citet{vershynin2018high}
\begin{align*}
\mathrm{Pr}\left[\Big|\sum_{i=1}^{n}(1/n)\varepsilon_{ij}\Big|\geq t\right] & <2\exp\left\{ -\frac{ct^{2}}{(\max_{i}\|\varepsilon_{ij}\|_{\psi_{2}})^{2}\sum_{i=1}^{n}(1/n)^{2}}\right\} \\
 & <2\exp\left\{ -\frac{cnt^{2}}{\sigma_{\varepsilon}^{2}}\right\} .
\end{align*}
Hence, 
\begin{align*}
\mathrm{Pr}\left[\Big|\sum_{i=1}^{n}(2X_{\max}s_{\max}B_{\max}/n)\varepsilon_{ij}\Big|\geq t\right] & <2\exp\left\{ -\frac{cnt^{2}}{\sigma_{\varepsilon}^{2}X_{\max}^{2}s_{\max}^{2}B_{\max}^{2}}\right\} .
\end{align*}
Applying the union bound, we get
\[
\mathrm{Pr}(\VERT(2/n)\bepsilon^{\top}\bX\bB^{*}\VERT_{\max}\geq t)\leq q^{2}C\exp\left\{ -\frac{cnt^{2}}{\sigma_{\varepsilon}^{2}X_{\max}^{2}s_{\max}^{2}B_{\max}^{2}}\right\} .
\]
By setting 
\[
t_{2}\coloneqq\sigma_{\varepsilon}X_{\max}s_{\max}B_{\max}\sqrt{\frac{\log(q^{2})}{n}}
\]
we get the tail bound for $T_{3}$
\[
\mathrm{Pr}(\VERT(2/n)\bepsilon^{\top}\bX\bB^{*}\VERT_{\max}\leq t_{2})\geq1-C\exp(-c\log(q^{2})).
\]
Now, we bound $T_{4}$. For the $j$th row and $k$th column of $(1/n)\bepsilon^{\top}\bepsilon-\bSigma_{\varepsilon\varepsilon}^{*}$,
we have 
\[
((1/n)\bepsilon^{\top}\bepsilon-\bSigma_{\varepsilon\varepsilon}^{*})_{jk}=\sum_{i=1}^{n}(1/n)[\varepsilon_{ij}\varepsilon_{ik}-\mathbb{E}(\varepsilon_{j}\varepsilon_{k})]
\]
where $\varepsilon_{ij}$ and $\varepsilon_{ik}$ are sub-Gaussian
with parameter $\sigma_{\varepsilon}^{2}$. Their product $\varepsilon_{ij}\varepsilon_{ik}$
is sub-exponential with 
\[
\|\varepsilon_{ij}\varepsilon_{ik}\|_{\psi_{1}}\leq\|\varepsilon_{ij}\|_{\psi_{2}}\|\varepsilon_{ik}\|_{\psi_{2}}.
\]
Therefore, define
\[
V_{ikj}\coloneqq\varepsilon_{ij}\varepsilon_{ik}-\mathbb{E}(\varepsilon_{j}\varepsilon_{k})
\]
as a centered sub-exponential random variable, which has
\[
\|V_{ikj}\|_{\psi_{1}}=\|\varepsilon_{ij}\varepsilon_{ik}-\mathbb{E}(\varepsilon_{j}\varepsilon_{k})\|_{\psi_{1}}=C\|\varepsilon_{ij}\varepsilon_{ik}\|_{\psi_{1}}\leq C\|\varepsilon_{ij}\|_{\psi_{2}}\|\varepsilon_{ik}\|_{\psi_{2}}.
\]
Therefore
\begin{align*}
\mathrm{Pr}\left(\left|\sum_{i=1}^{n}((1/n)\bepsilon^{\top}\bepsilon-\bSigma_{\varepsilon\varepsilon}^{*})_{jk}\right|\geq t\right) & =\mathrm{Pr}\left(\left|\sum_{i=1}^{n}(1/n)V_{ikj}\right|\geq t\right)\\
 & \leq2\exp\left[-cn\min\left(\frac{t^{2}}{\max_{i}\|V_{ikj}\|_{\psi_{1}}^{2}},\frac{t}{\max_{i}\|V_{ikj}\|_{\psi_{1}}}\right)\right]\\
 & \leq C\exp\left[-cn\min\left(\frac{t^{2}}{\max_{i}\|\varepsilon_{ij}\|_{\psi_{2}}^{2}\|\varepsilon_{ik}\|_{\psi_{2}}^{2}},\frac{t}{\max_{i}\|\varepsilon_{ij}\|_{\psi_{2}}\|\varepsilon_{ik}\|_{\psi_{2}}}\right)\right]\\
 & \leq C\exp\left[-cn\min\left(\frac{t^{2}}{\sigma_{\varepsilon}^{4}},\frac{t}{\sigma_{\varepsilon}^{2}}\right)\right]\\
 & \leq C\exp\left[-cn\frac{t^{2}}{\sigma_{\varepsilon}^{4}}\right]
\end{align*}
The last inequality holds if we assume that $t$ is chosen satisfying
$t\leq t_{0}^{(3)}\coloneqq\sigma_{\varepsilon}^{2}$. Therefore,
applying union bound, we get
\[
\Pr(\VERT(1/n)\bepsilon^{\top}\bepsilon-\bSigma_{\varepsilon\varepsilon}^{*}\VERT_{\max}\geq t)\leq q^{2}C\exp\left[-cn\left(\frac{t^{2}}{\sigma_{\varepsilon}^{4}}\right)\right].
\]
Define, 
\[
t_{3}\coloneqq\sigma_{\varepsilon}^{2}\sqrt{\frac{\log(q^{2})}{n}}
\]
the requirement $t_{3}\leq t_{0}^{(3)}$ implies that ${\sqrt{\frac{\log(q^{2})}{n}}\leq1}$
must be satisfied. Hence, we get the tail bound for $T_{4}$
\[
\Pr(\VERT(1/n)\bepsilon^{\top}\bepsilon-\bSigma_{\varepsilon\varepsilon}^{*}\VERT_{\max}\leq t_{3})\geq1-C\exp(-c\log(q^{2})).
\]
Define the event 
\begin{align*}
\mathcal{B}_{1} & =\left\{ \VERT\widehat{\bS}_{yy}-\bS_{yy}\VERT_{\max}\leq t_{1}\right\} \\
\mathcal{B}_{2} & =\left\{ \VERT(\widehat{\bB}^{(1)}-\bB^{*})^{\top}\bS_{xx}(\widehat{\bB}^{(1)}-\bB^{*})\VERT_{\max}\leq X_{\max}^{2}\left(\max_{l\in[q]}12s_{l}\lambda_{l}/\kappa_{l}\right)^{2}\right\} \\
\mathcal{B}_{3} & =\left\{ \VERT(\widehat{\bB}^{(1)}-\bB^{*})^{\top}\bS_{xx}\bB^{*}\VERT_{\max}\leq X_{\max}^{2}s_{\max}B_{\max}\max_{l\in[q]}12s_{l}\lambda_{l}/\kappa_{l}\right\} \\
\mathcal{B}_{4} & =\left\{ \VERT(2/n)\bepsilon^{\top}\bX\bB^{*}\VERT_{\max}\leq t_{2}\right\} \\
\mathcal{B}_{5} & =\left\{ \VERT(1/n)\bepsilon^{\top}\bepsilon-\bSigma_{\varepsilon\varepsilon}^{*}\VERT_{\max}\leq t_{3}\right\} ,
\end{align*}
and
\begin{align}
\Delta & =t_{1}+X_{\max}^{2}\left(\max_{l\in[q]}12s_{l}\lambda_{l}/\kappa_{l}\right)^{2}+X_{\max}^{2}s_{\max}B_{\max}\max_{l\in[q]}12s_{l}\lambda_{l}/\kappa_{l}\label{eq:Delta}\\
 & \qquad t_{2}+t_{3}\nonumber \\
 & =\sigma_{W}^{2}\max\left\{ X_{\max}^{2}B_{\max}^{2}s_{\max}^{2}/m_{\min},X_{\max}B_{\max}s_{\max}\sigma_{\varepsilon}^{2}/m_{\min},\sigma_{\varepsilon}^{2}/m_{\min}\right\} \sqrt{\frac{\log(q^{2})}{n}}\nonumber \\
 & \qquad+X_{\max}^{2}\left(\max_{l\in[q]}12s_{l}\lambda_{l}/\kappa_{l}\right)^{2}+X_{\max}^{2}s_{\max}B_{\max}\max_{l\in[q]}12s_{l}\lambda_{l}/\kappa_{l}+\sigma_{\varepsilon}X_{\max}s_{\max}B_{\max}\sqrt{\frac{\log(q^{2})}{n}}\nonumber \\
 & \qquad+\sigma_{\varepsilon}^{2}\sqrt{\frac{\log(q^{2})}{n}}\nonumber 
\end{align}
and the event $\mathcal{H}=\{\VERT\widehat{\bS}_{\varepsilon\varepsilon}-\bSigma_{\varepsilon\varepsilon}^{*}\VERT_{\max}<\Delta\}$.
Then by Boole's inequality
\begin{align*}
\mathrm{Pr}(\mathcal{H}) & =\mathrm{Pr}(\VERT\widehat{\bS}_{\varepsilon\varepsilon}-\bSigma_{\varepsilon\varepsilon}^{*}\VERT_{\max}\leq\Delta)\\
 & \geq\mathrm{Pr}(\VERT\widehat{\bS}_{yy}-\bS_{yy}\VERT_{\max}+\VERT(\widehat{\bB}^{(1)}-\bB^{*})^{\top}\bS_{xx}(\widehat{\bB}^{(1)}-\bB^{*})\VERT_{\max}\\
 & \qquad+\VERT(\widehat{\bB}^{(1)}-\bB^{*})^{\top}\bS_{xx}\bB^{*}\VERT_{\max}+\VERT(2/n)\bepsilon^{\top}\bX\bB^{*}\VERT_{\max}+\VERT(1/n)\bepsilon^{\top}\bepsilon-\bSigma_{\varepsilon\varepsilon}^{*}\VERT_{\max}\leq\Delta)\\
 & \geq\Pr(\mathcal{B}_{1}\cap\mathcal{B}_{2}\cap\mathcal{B}_{3}\cap\mathcal{B}_{4}\cap\mathcal{B}_{5})\\
 & =\Pr[(\mathcal{B}_{1}^{c}\cup\mathcal{B}_{2}^{c}\cup\mathcal{B}_{3}^{c}\cup\mathcal{B}_{4}^{c}\cup\mathcal{B}_{5}^{c})^{c}]\\
 & \geq1-\Pr[(\mathcal{B}_{1}^{c}\cup\mathcal{B}_{2}^{c}\cup\mathcal{B}_{3}^{c}\cup\mathcal{B}_{4}^{c}\cup\mathcal{B}_{5}^{c})]\\
 & \geq1-\Pr(\mathcal{B}_{1}^{c})-\Pr(\mathcal{B}_{2}^{c})-\Pr(\mathcal{B}_{3}^{c})-\Pr(\mathcal{B}_{4}^{c})-\Pr(\mathcal{B}_{5}^{c})\\
 & \geq1-C\exp(-c\log q^{2})-C\exp(-c\log(pq))-C\exp(-c\log(pq))\\
 & \qquad-C\exp(-c\log(q^{2}))-C\exp(-c\log(q^{2}))\\
 & =1-C\exp(-c\log q^{2})-C\exp(-c\log(pq))
\end{align*}
\end{proof}

\subsubsection{Proof of Proposition \ref{prop:2} \label{Pr_Prop2}}
\begin{proof}
Since $\widehat{\bS}_{\varepsilon\varepsilon}$ is not necessarily
positive semi-definite, we can produce an update $\widetilde{\bS}_{\varepsilon\varepsilon}$
as described in \citet{datta2017} by projecting it onto the nearest
semi-definite cone and substituting $\widehat{\bS}_{\varepsilon\varepsilon}$
by $\widetilde{\bS}_{\varepsilon\varepsilon}$. Then, $\widehat{\bS}_{\varepsilon\varepsilon}$
would satisfy an inequality analogous to \eqref{eq:Proj_conv_rate}.
Specifically, 
\[
\VERT\widetilde{\bS}_{\varepsilon\varepsilon}-\bSigma_{\varepsilon\varepsilon}^{*}\VERT_{\max}\leq2\VERT\widehat{\bS}_{\varepsilon\varepsilon}-\bSigma_{\varepsilon\varepsilon}^{*}\VERT_{\max}\leq2\Delta.
\]

Applying Lemma \ref{lem4}, we obtain $\VERT\widetilde{\bS}_{\varepsilon\varepsilon}-\bSigma_{\varepsilon\varepsilon}^{*}\VERT_{\max}\leq2\Delta$
occurs with probability $1-C\exp(-c\log q^{2})-C\exp(-c\log(pq))$.
The conclusion of this Proposition follows from a slight variation
of Theorem 1 from \citet{ravikumar2011high} where the observed sample
was uncontaminated and one could calculate the sample covariance matrix
$\widehat{\bSigma}_{\varepsilon\varepsilon}$ with $\VERT\widehat{\bSigma}_{\varepsilon\varepsilon}-\bSigma_{\varepsilon\varepsilon}^{*}\VERT_{\max}\leq\bar{\delta}_{f}(n,p^{\tau})$.
In our case, $\widehat{\bSigma}_{\varepsilon\varepsilon}$ is replaced
by $\widetilde{\bS}_{\varepsilon\varepsilon}$ and $\bar{\delta}_{f}(n,p^{\tau})=2\Delta$.
Suppose $\widetilde{\bS}_{\varepsilon\varepsilon}$ satisfies the
error bound $\VERT\widetilde{\bS}_{\varepsilon\varepsilon}-\bSigma_{\varepsilon\varepsilon}^{*}\VERT_{\max}\leq2\Delta$
on the intersection of events, that is, $\mathcal{B}_{1}\cap\mathcal{B}_{2}\cap\mathcal{B}_{3}\cap\mathcal{B}_{4}\cap\mathcal{B}_{5}$
defined in Lemma \ref{lem4} and if the tuning parameter $\lambda_{\bTheta}$
satisfies \eqref{eq:tuning_theta}, then by Theorem 1 from \citet{ravikumar2011high},
we have
\[
\VERT\widehat{\bTheta}_{\varepsilon\varepsilon}-\bTheta_{\varepsilon\varepsilon}^{*}\VERT_{\max}\leq\left\{ 2\kappa_{\bGamma}(1+\frac{8}{\alpha})\right\} \Delta\coloneqq\Delta_{\infty}(\bTheta_{\varepsilon\varepsilon}^{*}).
\]
 We can also show that \eqref{eq:Theta_op_bound} holds as follows
\[
\VERT\widehat{\bTheta}_{\varepsilon\varepsilon}-\bTheta_{\varepsilon\varepsilon}^{*}\VERT_{\mathrm{2}}\leq\sqrt{d_{q}d_{q}}\VERT\widehat{\bTheta}_{\varepsilon\varepsilon}-\bTheta_{\varepsilon\varepsilon}^{*}\VERT_{\max}=d_{q}\VERT\widehat{\bTheta}_{\varepsilon\varepsilon}-\bTheta_{\varepsilon\varepsilon}^{*}\VERT_{\max}\leq d_{q}\Delta_{\infty}(\bTheta_{\varepsilon\varepsilon}^{*})\coloneqq\Delta_{2}(\bTheta_{\varepsilon\varepsilon}^{*})
\]
and 
\[
\VERT\widehat{\bTheta}_{\varepsilon\varepsilon}-\bTheta_{\varepsilon\varepsilon}^{*}\VERT_{\mathrm{1}}\leq\sqrt{d_{q}}\VERT\widehat{\bTheta}_{\varepsilon\varepsilon}-\bTheta_{\varepsilon\varepsilon}^{*}\VERT_{2}=\sqrt{d_{q}}\Delta_{2}(\bTheta_{\varepsilon\varepsilon}^{*})\coloneqq\Delta_{1}(\bTheta_{\varepsilon\varepsilon}^{*}).
\]
\end{proof}

\subsubsection{Proof of Theorem 1 \label{Pr_Thm1}}

\begin{proof}
The proof relies partly on Proposition \ref{prop:RSC_final} stated
below. The proposition verifies that the empirical loss function at
this stage satisfies the RE condition.
\begin{prop}
\label{prop:RSC_final} Suppose that Assumption \ref{assu:3} and
the assumptions of Proposition \ref{prop:2} hold. Then the empirical
loss $\mathcal{L}(\cdot;\bS_{xx},\widehat{\bS}_{xy},\widehat{\bTheta}_{\varepsilon\varepsilon})$
satisfies RE condition with curvature $\kappa^{\prime}$ introduced
in \eqref{eq:kappa_prime} and tolerance function equal to zero over
the cone set $\mathbb{C}(S)$. 
\end{prop}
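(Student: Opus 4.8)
The plan is to verify the restricted eigenvalue inequality \eqref{eq:RE_refined_B} for the empirical curvature by treating $\widehat{\bTheta}_{\varepsilon\varepsilon}$ as a perturbation of the population precision matrix $\bTheta_{\varepsilon\varepsilon}^{*}$. First I would record that the symmetrized second-order remainder of the quadratic loss $\mathcal{L}(\cdot;\bS_{xx},\widehat{\bS}_{xy},\widehat{\bTheta}_{\varepsilon\varepsilon})$ depends only on its quadratic part: since the term $-\tr(\widehat{\bTheta}_{\varepsilon\varepsilon}\widehat{\bS}_{xy}^{\top}\bB)$ is linear in $\bB$, it cancels in the Bregman difference, so the empirical counterpart of $\mathcal{E}\mathcal{L}$ reduces to
\[
\mathcal{E}\mathcal{L}(\bDelta,\bB^{*};\bS_{xx},\widehat{\bS}_{xy},\widehat{\bTheta}_{\varepsilon\varepsilon})=\vec(\bDelta)^{\top}(\widehat{\bTheta}_{\varepsilon\varepsilon}\otimes\bS_{xx})\vec(\bDelta)/2.
\]
In particular $\widehat{\bS}_{xy}$ plays no role, and the whole argument concerns only $\bS_{xx}$ and $\widehat{\bTheta}_{\varepsilon\varepsilon}$.

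Second, I would split $\widehat{\bTheta}_{\varepsilon\varepsilon}=\bTheta_{\varepsilon\varepsilon}^{*}+(\widehat{\bTheta}_{\varepsilon\varepsilon}-\bTheta_{\varepsilon\varepsilon}^{*})$ to obtain
\[
\vec(\bDelta)^{\top}(\widehat{\bTheta}_{\varepsilon\varepsilon}\otimes\bS_{xx})\vec(\bDelta)/2=\vec(\bDelta)^{\top}(\bTheta_{\varepsilon\varepsilon}^{*}\otimes\bS_{xx})\vec(\bDelta)/2+\vec(\bDelta)^{\top}((\widehat{\bTheta}_{\varepsilon\varepsilon}-\bTheta_{\varepsilon\varepsilon}^{*})\otimes\bS_{xx})\vec(\bDelta)/2.
\]
For $\bDelta\in\mathbb{C}(S)$ the first (population) term is at least $\kappa\VERT\bDelta\VERT_{F}^{2}$ directly by Assumption \ref{assu:3}. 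For the second (perturbation) term I would use the elementary bound $|\vec(\bDelta)^{\top}\bM\vec(\bDelta)|\le\VERT\bM\VERT_{2}\VERT\vec(\bDelta)\VERT_{2}^{2}$, together with the Kronecker-product identity $\VERT(\widehat{\bTheta}_{\varepsilon\varepsilon}-\bTheta_{\varepsilon\varepsilon}^{*})\otimes\bS_{xx}\VERT_{2}=\VERT\widehat{\bTheta}_{\varepsilon\varepsilon}-\bTheta_{\varepsilon\varepsilon}^{*}\VERT_{2}\,\VERT\bS_{xx}\VERT_{2}$ and the fact that $\VERT\vec(\bDelta)\VERT_{2}=\VERT\bDelta\VERT_{F}$, so that the perturbation term is bounded below by $-\tfrac{1}{2}\VERT\bS_{xx}\VERT_{2}\VERT\widehat{\bTheta}_{\varepsilon\varepsilon}-\bTheta_{\varepsilon\varepsilon}^{*}\VERT_{2}\VERT\bDelta\VERT_{F}^{2}$, uniformly in $\bDelta$.

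Third, I would invoke the operator-norm control already established in Proposition \ref{prop:2}: on the event of probability \eqref{eq:Prob_P2}, bound \eqref{eq:Theta_op_bound} gives $\VERT\widehat{\bTheta}_{\varepsilon\varepsilon}-\bTheta_{\varepsilon\varepsilon}^{*}\VERT_{2}\le\Delta_{1}(\bTheta_{\varepsilon\varepsilon}^{*})$. Combining the two lower bounds yields
\[
\mathcal{E}\mathcal{L}(\bDelta,\bB^{*};\bS_{xx},\widehat{\bS}_{xy},\widehat{\bTheta}_{\varepsilon\varepsilon})\ge\Big(\kappa-\tfrac{1}{2}\VERT\bS_{xx}\VERT_{2}\Delta_{1}(\bTheta_{\varepsilon\varepsilon}^{*})\Big)\VERT\bDelta\VERT_{F}^{2}=\kappa^{\prime}\VERT\bDelta\VERT_{F}^{2},
\]
with $\kappa^{\prime}$ exactly as defined in \eqref{eq:kappa_prime}. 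The tolerance function is identically zero because the perturbation estimate is purely multiplicative in $\VERT\bDelta\VERT_{F}^{2}$ and contributes no additive slack.

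The main obstacle is ensuring that the resulting curvature $\kappa^{\prime}$ remains strictly positive, which is precisely condition (1) of Theorem \ref{thm:1}: it requires $n$ large enough that $\VERT\bS_{xx}\VERT_{2}\Delta_{1}(\bTheta_{\varepsilon\varepsilon}^{*})/2<\kappa$. Since $\Delta_{1}(\bTheta_{\varepsilon\varepsilon}^{*})=d_{q}\Delta_{\infty}(\bTheta_{\varepsilon\varepsilon}^{*})$ inherits the $\sqrt{\log(q^{2})/n}$-type decay of $\Delta$ through Proposition \ref{prop:2}, this holds once $n$ dominates the relevant dimension-and-missingness factors. The only genuine care needed is therefore to keep track of the high-probability event on which Proposition \ref{prop:2} holds, and to confirm the Kronecker-norm identity for the symmetric factors involved; the remaining steps are deterministic matrix-norm manipulations.
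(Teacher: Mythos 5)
Your proposal is correct and follows essentially the same route as the paper's proof: the same split of $\widehat{\bTheta}_{\varepsilon\varepsilon}$ into $\bTheta_{\varepsilon\varepsilon}^{*}$ plus a perturbation, the same lower bound $\kappa\VERT\bDelta\VERT_{F}^{2}$ on the population term via Assumption \ref{assu:3}, the same operator-norm control $\VERT\widehat{\bTheta}_{\varepsilon\varepsilon}-\bTheta_{\varepsilon\varepsilon}^{*}\VERT_{2}\leq\Delta_{1}(\bTheta_{\varepsilon\varepsilon}^{*})$ from Proposition \ref{prop:2} on its high-probability event, and the identical conclusion $\kappa^{\prime}=\kappa-\VERT\bS_{xx}\VERT_{2}\Delta_{1}(\bTheta_{\varepsilon\varepsilon}^{*})/2$ with zero tolerance. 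The only divergence is technical: where you bound the perturbation quadratic form via the Kronecker identity $\VERT(\widehat{\bTheta}_{\varepsilon\varepsilon}-\bTheta_{\varepsilon\varepsilon}^{*})\otimes\bS_{xx}\VERT_{2}=\VERT\widehat{\bTheta}_{\varepsilon\varepsilon}-\bTheta_{\varepsilon\varepsilon}^{*}\VERT_{2}\VERT\bS_{xx}\VERT_{2}$, the paper writes the same term as $\langle\bDelta^{\top}\bS_{xx}\bDelta,\widehat{\bTheta}_{\varepsilon\varepsilon}-\bTheta_{\varepsilon\varepsilon}^{*}\rangle/2$ and applies H\"older's inequality with Frobenius-norm submultiplicativity; your spectral-norm route is arguably the cleaner justification, since the paper's final step (passing to operator norms by asserting $\VERT\bA\VERT_{F}=\VERT\bA\VERT_{2}$) is stated incorrectly as a general identity, even though the resulting bound $\VERT\bS_{xx}\VERT_{2}\VERT\widehat{\bTheta}_{\varepsilon\varepsilon}-\bTheta_{\varepsilon\varepsilon}^{*}\VERT_{2}\VERT\bDelta\VERT_{F}^{2}$ is valid.
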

\begin{proof}
We fix arbitrary $\bDelta\in\mathbb{C}(S)$. we have 
\begin{align}
\mathcal{E}\mathcal{L}(\bDelta,\bB^{*};\bS_{xx},\widehat{\bS}_{xy},\widehat{\bTheta}_{\varepsilon\varepsilon}) & =\vec{(}\bDelta)^{\top}\bTheta_{\varepsilon\varepsilon}^{*}\otimes\bS_{xx}\vec{(}\bDelta)/2\nonumber \\
 & \qquad+\langle\bDelta^{\top}\bS_{xx}\bDelta,\widehat{\bTheta}_{\varepsilon\varepsilon}-\bTheta_{\varepsilon\varepsilon}^{*}\rangle/2\label{eq:Err1}
\end{align}
 For the first term, by Assumption \ref{assu:3}, 
\[
\vec{(}\bDelta)^{\top}\bTheta_{\varepsilon\varepsilon}^{*}\otimes\bS_{xx}\vec{(}\bDelta)/2\geq\kappa\VERT\bDelta\VERT_{F}^{2}
\]

For the second term, we have
\begin{align}
\Big|\langle\bDelta^{\top}\bS_{xx}\bDelta,\widehat{\bTheta}_{\varepsilon\varepsilon}-\bTheta_{\varepsilon\varepsilon}^{*}\rangle\Big| & =\Big|\langle\bS_{xx}\bDelta,\bDelta(\widehat{\bTheta}_{\varepsilon\varepsilon}-\bTheta_{\varepsilon\varepsilon}^{*})\rangle\Big|\nonumber \\
 & \overset{(\text{i})}{\leq}\VERT\bS_{xx}\bDelta\VERT_{F}\times\VERT\bDelta(\widehat{\bTheta}_{\varepsilon\varepsilon}-\bTheta_{\varepsilon\varepsilon}^{*})\VERT_{F}\nonumber \\
 & \overset{(\text{ii})}{\leq}\VERT\bS_{xx}\VERT_{F}\VERT\bDelta\VERT_{F}\times\VERT\bDelta\VERT_{F}\VERT\widehat{\bTheta}_{\varepsilon\varepsilon}-\bTheta_{\varepsilon\varepsilon}^{*}\VERT_{F}\nonumber \\
 & \overset{(\text{iii})}{\leq}{\VERT\bS_{xx}\VERT_{2}\VERT\widehat{\bTheta}_{\varepsilon\varepsilon}-\bTheta_{\varepsilon\varepsilon}^{*}\VERT_{2}\VERT\bDelta\VERT_{F}^{2}},\label{eq:Er1a}
\end{align}
where inequality (i) follows from Hölder's inequality, inequality
(ii) follows from the submultiplicative property of the Frobenius
norm and inequality (iii) follows from the fact $\VERT\bA\VERT_{F}=\VERT\bA\VERT_{2}$
since the trace of a matrix is equal to the sum of its eigenvalues.

Therefore, \eqref{eq:Err1} can be bounded as
\begin{align}
\mathcal{E}\mathcal{L}(\bDelta,\bB^{*};\bS_{xx},\widehat{\bS}_{xy},\widehat{\bTheta}_{\varepsilon\varepsilon}) & \geq\kappa\VERT\bDelta\VERT_{F}^{2}-(\VERT\bS_{xx}\VERT_{2}\VERT\widehat{\bTheta}_{\varepsilon\varepsilon}-\bTheta_{\varepsilon\varepsilon}^{*}\VERT_{2}\VERT\bDelta\VERT_{F}^{2})/2\label{eq:Simp_err}
\end{align}
For the regularized Lasso problem, since $\bDelta\in\mathbb{C}(S)$,
and therefore, we can write 
\begin{equation}
\VERT\bDelta\VERT_{1,1}=\VERT\bDelta_{S}\VERT_{1,1}+\VERT\bDelta_{S^{c}}\VERT_{1,1}\leq4\VERT\bDelta_{S}\VERT_{1,1}\leq4\sqrt{s}\VERT\bDelta\VERT_{F}.\label{eq:Delta_cone}
\end{equation}
Combining \eqref{eq:Simp_err} and \eqref{eq:Delta_cone}, we conclude
that 
\begin{align}
\mathcal{E}\mathcal{L}(\bDelta,\bB^{*};\bS_{xx},\widehat{\bS}_{xy},\widehat{\bTheta}_{\varepsilon\varepsilon}) & \geq\left\{ \kappa-(\VERT\bS_{xx}\VERT_{2}\VERT\widehat{\bTheta}_{\varepsilon\varepsilon}-\bTheta_{\varepsilon\varepsilon}^{*}\VERT_{2})/2\right\} \VERT\bDelta\VERT_{F}^{2}\nonumber \\
 & \geq(\kappa-\VERT\bS_{xx}\VERT_{2}\Delta_{1}(\bTheta_{\varepsilon\varepsilon}^{*})/2)\VERT\bDelta\VERT_{F}^{2}\label{eq:Simp_err2}
\end{align}
If the assumptions of Proposition \ref{prop:2} are satisfied, then
\eqref{eq:Theta_max_bound} and \eqref{eq:Theta_op_bound} also hold.
Then we can further bound the right-hand side of \eqref{eq:Simp_err2}
from below. This concludes the proof of Proposition \ref{prop:RSC_final}. 
\end{proof}

Next, we want to apply Theorem 7.13 from \citet{wainwright2019high}.
To do so, first, we check if the conditions mentioned in Proposition
\ref{prop:1} holds. We set, 
\[
\mathcal{M}\coloneqq\left\{ \bDelta\in\mathbb{R}^{p\times q}:(\bDelta)_{kl}=0\qquad\forall(k,l)\in S^{c}\right\} .
\]
Then \textbf{$\bB^{*}\in\mathcal{M}$}, and the penalty $\mathcal{R}$
is decomposable with respect to $(\mathcal{M},\mathcal{M}^{\perp})$.
Next, by Proposition\textbf{ }\ref{prop:RSC_final} and the assumption
on $\kappa^{\prime}$, over the cone set $\mathbb{C}(S)$, the loss
$\mathcal{L}(\cdot;\bS_{xx},\widehat{\bS}_{xy},\widehat{\bTheta}_{\varepsilon\varepsilon})$
satisfies RE condition with tolerance function equal to zero and curvature
$\kappa^{\prime}\geq\kappa>0$. Finally, the dual norm of $\mathcal{R}$
is $\mathcal{R}^{*}(\cdot)=\VERT\cdot\VERT_{\max}$, and 
\begin{align*}
\mathcal{R}^{*}\left\{ \nabla_{\bB}\mathcal{L}(\bB^{*};\bS_{xx},\widehat{\bS}_{xy},\widehat{\bTheta}_{\varepsilon\varepsilon})\right\}  & =\VERT\bS_{xx}\bB^{*}\widehat{\bTheta}_{\varepsilon\varepsilon}-\widehat{\bS}_{xy}\widehat{\bTheta}_{\varepsilon\varepsilon}\VERT_{\max}\\
 & =\VERT(\widehat{\bS}_{xy}-\bS_{xx}\bB^{*})(\widehat{\bTheta}_{\varepsilon\varepsilon}-\bTheta_{\varepsilon\varepsilon}^{*})+(\widehat{\bS}_{xy}-\bS_{xx}\bB^{*})\bTheta_{\varepsilon\varepsilon}^{*}\VERT_{\max}\\
 & \leq\VERT(\widehat{\bS}_{xy}-\bS_{xx}\bB^{*})(\widehat{\bTheta}_{\varepsilon\varepsilon}-\bTheta_{\varepsilon\varepsilon}^{*})\VERT_{\max}+\VERT(\widehat{\bS}_{xy}-\bS_{xx}\bB^{*})\bTheta_{\varepsilon\varepsilon}^{*}\VERT_{\max}\\
 & \overset{\text{(i)}}{\leq}\VERT\widehat{\bS}_{xy}-\bS_{xx}\bB^{*}\VERT_{\max}\VERT\widehat{\bTheta}_{\varepsilon\varepsilon}-\bTheta_{\varepsilon\varepsilon}^{*}\VERT_{\infty}+\VERT\widehat{\bS}_{xy}-\bS_{xx}\bB^{*}\VERT_{\max}\VERT\bTheta_{\varepsilon\varepsilon}^{*}\VERT_{\infty}\\
 & =\VERT\widehat{\bS}_{xy}-\bS_{xx}\bB^{*}\VERT_{\max}(\VERT\widehat{\bTheta}_{\varepsilon\varepsilon}-\bTheta_{\varepsilon\varepsilon}^{*}\VERT_{\infty}+\VERT\bTheta_{\varepsilon\varepsilon}^{*}\VERT_{\infty})\\
 & =\VERT\widehat{\bS}_{xy}-\bS_{xx}\bB^{*}\VERT_{\max}(\VERT\widehat{\bTheta}_{\varepsilon\varepsilon}-\bTheta_{\varepsilon\varepsilon}^{*}\VERT_{1}+\VERT\bTheta_{\varepsilon\varepsilon}^{*}\VERT_{1})\\
 & \leq(\lambda_{0}/2)(\VERT\widehat{\bTheta}_{\varepsilon\varepsilon}-\bTheta_{\varepsilon\varepsilon}^{*}\VERT_{1}+\VERT\bTheta_{\varepsilon\varepsilon}^{*}\VERT_{1})
\end{align*}
Inequality (i) follows from the fact $\VERT\bA\bB\VERT_{\max}\leq\VERT\bA\VERT_{\infty}\VERT\bB\VERT_{\max}$
and equality (ii) follows from the relationship $\VERT\mathbf{A}\VERT_{1}=\VERT\mathbf{A}^{\top}\VERT_{\infty}$.
Denote $C_{\Theta}=\VERT\widehat{\bTheta}_{\varepsilon\varepsilon}-\bTheta_{\varepsilon\varepsilon}^{*}\VERT_{1}+\VERT\bTheta_{\varepsilon\varepsilon}^{*}\VERT_{1}$,
and choose 
\[
\lambda_{0}/2\coloneqq X_{\max}\max\left[\sigma_{W}s_{\max}X_{\max}B_{\max}/\mu_{\min},\sigma_{W}\sigma_{\varepsilon}/\mu_{\min},\sigma_{\varepsilon}\right]\sqrt{\frac{\log p}{n}}
\]
then we have 
\begin{align*}
 & \qquad\Pr(\mathcal{R}^{*}\{\nabla_{\bB}\mathcal{L}(\bB^{*};\bS_{xx},\widehat{\bS}_{xy},\widehat{\bTheta}_{\varepsilon\varepsilon})\}\leq(\lambda_{0}/2)C_{\Theta})\\
 & \geq\Pr(\VERT\widehat{\bS}_{xy}-\bS_{xx}\bB^{*}\VERT_{\max}C_{\Theta}\leq(\lambda_{0}/2)C_{\Theta})\\
 & =\Pr(\VERT\widehat{\bS}_{xy}-\bS_{xx}\bB^{*}\VERT_{\max}\leq\lambda_{0}/2)\\
 & =\Pr(\max_{l}\|(\widehat{\bS}_{xy})_{\bullet l}-\bS_{xx}\bbeta_{l}^{*}\|_{\infty}\leq\lambda_{0}/2)\\
 & =\Pr(\cap_{l}\{\|(\widehat{\bS}_{xy})_{\bullet l}-\bS_{xx}\bbeta_{l}^{*}\|_{\infty}\leq\lambda_{0}/2\})\\
 & =1-\Pr(\cup_{l}\{\|(\widehat{\bS}_{xy})_{\bullet l}-\bS_{xx}\bbeta_{l}^{*}\|_{\infty}\geq\lambda_{0}/2\})\\
 & \overset{\text{(i)}}{\geq}1-\sum_{l=1}^{q}\Pr(\|(\widehat{\bS}_{xy})_{\bullet l}-\bS_{xx}\bbeta_{l}^{*}\|_{\infty}\geq\lambda_{0}/2)\\
 & =1-qC\exp(-c\log p)
\end{align*}
Inequality (i) follows from the tail probability bound in \eqref{eq:good_event_sxy}.
The conclusion from \eqref{eq:B_final_est} follows from Theorem 7.13
in \citet{wainwright2019high} with probability $1-qC\exp(-c\log p)$. 
\end{proof}

\subsection{Details of the algorithm \label{Comp_details}}

\begin{algorithm}[H]
\caption{The FISTA with Backtracking Line Search}
\label{alg_FISTA}
\DontPrintSemicolon
\SetKwInput{KwData}{Data}
\SetKwInput{KwResult}{Output}
\SetKwInput{KwIn}{Input}
\SetKwInput{KwInit}{Initialize}

\KwData{$\mathbf{S}_{xx}$, $\widehat{\mathbf{S}}_{xy}$ \tcp*[r]{Unbiased surrogates}}
\KwIn{$\lambda_{\mathbf{B}} \geq 0$, $0 < \eta < 1$, $t_{\text{init}} > 0$}
\KwInit{$\widehat{\mathbf{B}}_0 = \widehat{\mathbf{B}}_1 = \widehat{\mathbf{B}}^{(1)}$, $\mathbf{\Theta}_{\varepsilon\varepsilon} = \widehat{\mathbf{\Theta}}_{\varepsilon\varepsilon}$, $k = 2$}
\KwResult{$\widehat{\mathbf{B}}^{(2)}$}

\Begin{
  \Repeat{stop condition is met}{
    $t_k = t_{\text{init}}$ \tcp*[r]{Initialize the step size}
    \tcc{Backtracking line search}
    \Repeat{the Armijo condition (\ref{Armijo}) is met}{
      $t_k \leftarrow \eta t_k$ \tcp*[r]{Reduce step size} 
    }
    \tcc{FISTA}
    \textbf{update} $\widehat{\mathbf{B}}_k$ via Equation (\ref{FISTA})\;
    $k \leftarrow k + 1$\;  
  }
  \Return $\widehat{\mathbf{B}}^{(2)} = \widehat{\mathbf{B}}_k$\;
}
\end{algorithm}

\subsection{Computational details of the \texttt{missoNet} package \label{missonet_comp}}
This section outlines the main computational aspects of the R package \texttt{missoNet}, available at \url{https://CRAN.R-project.org/package=missoNet}. The package provides core functions for data simulation, model fitting and selection, result visualization, and making predictions on new data. It is designed with function arguments similar to the \texttt{glmnet} package, offering a familiar interface for existing \texttt{glmnet} users.

\subsubsection{Tuning parameter generation}

Our method uses two tuning parameters, $\lambda_{\mathbf{B}}$ and $\lambda_{\mathbf{\Theta}}$, to control the penalties on the coefficient matrix $\bB^{*}$ and the precision matrix $\bTheta_{\varepsilon\varepsilon}^{*}$. These parameters help balance model fit and complexity. Typically, values of $\lambda_{\mathbf{B}}$ and $\lambda_{\mathbf{\Theta}}$ are chosen by exploring a range that minimizes the error loss function.

A notable feature of \texttt{missoNet} is the automated generation of sequences for $\lambda_{\mathbf{B}}$ and $\lambda_{\mathbf{\Theta}}$. The sequence begins with maximum values, $\lambda_{\mathbf{B}}^{\text{max}}$ and $\lambda_{\mathbf{\Theta}}^{\text{max}}$, large enough to ensure that the initial model is essentially null (i.e., $\widehat{\mathbf{B}} = \mathbf{0}$ and $\widehat{\mathbf{\Theta}}_{\varepsilon\varepsilon} = \mathbf{I}$), providing a baseline for increasing model complexity. Following \cite{friedman2008sparse}, $\lambda_{\mathbf{\Theta}}^{\text{max}}$ is set to match the largest absolute value of the off-diagonal elements of the empirical covariance matrix $\widetilde{\mathbf{S}}_{\varepsilon\varepsilon}$, assuming $\widehat{\mathbf{B}} = \mathbf{0}$, is strong enough to cancel all empirical correlations:
\[
\lambda_{\mathbf{\Theta}}^{\text{max}} = \underset{k \neq k'}{\text{max}}|\tilde{s}_{kk'}|,
\]
where $\tilde{s}_{kk'}$ represents the $(k,k')$th element of $\widetilde{\mathbf{S}}_{\varepsilon\varepsilon}$. For $\lambda_{\mathbf{B}}$, the value required to yield $\widehat{\mathbf{B}} = \mathbf{0}$ is:
\[
\lambda_{\mathbf{B}}^{\text{max}} = \|2\widehat{\mathbf{S}}_{xy}\|_{\text{max}},
\]
assuming $\widehat{\mathbf{\Theta}}_{\varepsilon\varepsilon} = \mathbf{I}$, as described in \cite{friedman2010regularization} for \texttt{glmnet}. Both $\lambda_{\mathbf{\Theta}}^{\text{max}}$ and $\lambda_{\mathbf{B}}^{\text{max}}$ depend on surrogate estimators that account for missing data and are sensitive to standardization.

The sequence for each parameter ranges from $\lambda^{\text{max}}$ to a small fraction of $\lambda^{\text{max}}$, following a logarithmic scale. This setup allows thorough exploration from highly regularized models to more flexible ones. Users can adjust the sequence by specifying the number of values (\texttt{`nlambda`}), the minimum ratio (\texttt{`lambda.min.ratio`}), or provide a custom sequence if needed.

\subsubsection{Cross-validation}

To select the best tuning parameters, cross-validation divides the data into subsets for training and validation, assessing the model's predictive performance iteratively. Typically, $K$-fold cross-validation is employed, where the dataset is split into $K$ subsets. For each subset $k$, let ($\mathbf{X}_{k}$, $\mathbf{Y}_{k}$) represent the validation set, and ($\mathbf{X}_{-k}$, $\mathbf{Y}_{-k}$) represent the training set. The prediction error for fold $k$ is calculated as:
\[
\text{err}_{k}(\lambda_{\mathbf{B}}, \lambda_{\mathbf{\Theta}}) = \frac{1}{n_k}\VERT \mathbf{Y}_{k} - \mathbf{X}_{k} \widehat{\mathbf{B}}_{k}(\lambda_{\mathbf{B}}, \lambda_{\mathbf{\Theta}})\VERT^2_F,
\]
where $n_k$ is the number of observations in the $k$th fold, and $\widehat{\mathbf{B}}_{k}(\lambda_{\mathbf{B}}, \lambda_{\mathbf{\Theta}})$ is the coefficient estimate trained on ($\mathbf{X}_{-k}$, $\mathbf{Y}_{-k}$) with the tuning parameters $\lambda_{\mathbf{B}}$ and $\lambda_{\mathbf{\Theta}}$. The optimal combination of $\lambda_{\mathbf{B}}$ and $\lambda_{\mathbf{\Theta}}$ minimizes the average cross-validation error:
\[
(\widehat{\lambda}_{\mathbf{B}}, \widehat{\lambda}_{\mathbf{\Theta}}) = \argmin_{\lambda_{\mathbf{B}}, \lambda_{\mathbf{\Theta}} \geq 0}\ \frac{1}{K} \sum_{k=1}^{K} \text{err}_{k}(\lambda_{\mathbf{B}}, \lambda_{\mathbf{\Theta}}).
\]

For datasets with missing values, where elements of $\mathbf{Y}$ are partially observable, standard cross-validation can be biased if only complete cases are used. We adapt by using surrogate estimators to account for missing data:
\[
(\widehat{\lambda}_{\mathbf{B}}, \widehat{\lambda}_{\mathbf{\Theta}}) = \argmin_{\lambda_{\mathbf{B}}, \lambda_{\mathbf{\Theta}} \geq 0}\ \frac{1}{K} \sum_{k=1}^{K} \text{Tr}[\widehat{\mathbf{S}}^{k}_{yy} - 2{\widehat{\mathbf{S}}^{k}_{xy}}{}^{\top} \widehat{\mathbf{B}}_{k}(\lambda_{\mathbf{B}}, \lambda_{\mathbf{\Theta}}) + \widehat{\mathbf{B}}_{k}(\lambda_{\mathbf{B}}, \lambda_{\mathbf{\Theta}})^{\top} \mathbf{S}^{k}_{xx} {\widehat{\mathbf{B}}_{k}(\lambda_{\mathbf{B}}, \lambda_{\mathbf{\Theta}})}].
\]
This approach, called calibrated cross-validation, ensures accurate performance metrics despite incomplete data.

\subsubsection{Warm starts}

Similar to \texttt{glmnet}, \texttt{missoNet} uses warm starts to improve computational efficiency. The model starts with the most restrictive settings ($\lambda_{\mathbf{B}} = \lambda_{\mathbf{B}}^{\text{max}}$, $\lambda_{\mathbf{\Theta}} = \lambda_{\mathbf{\Theta}}^{\text{max}}$) and gradually reduces these values, using each previous solution as a starting point for the next iteration, as shown in Figure \ref{fig:BICexample} (a).

\begin{figure}[htbp]
    \centering
    \includegraphics[width=0.9\textwidth,origin=c]{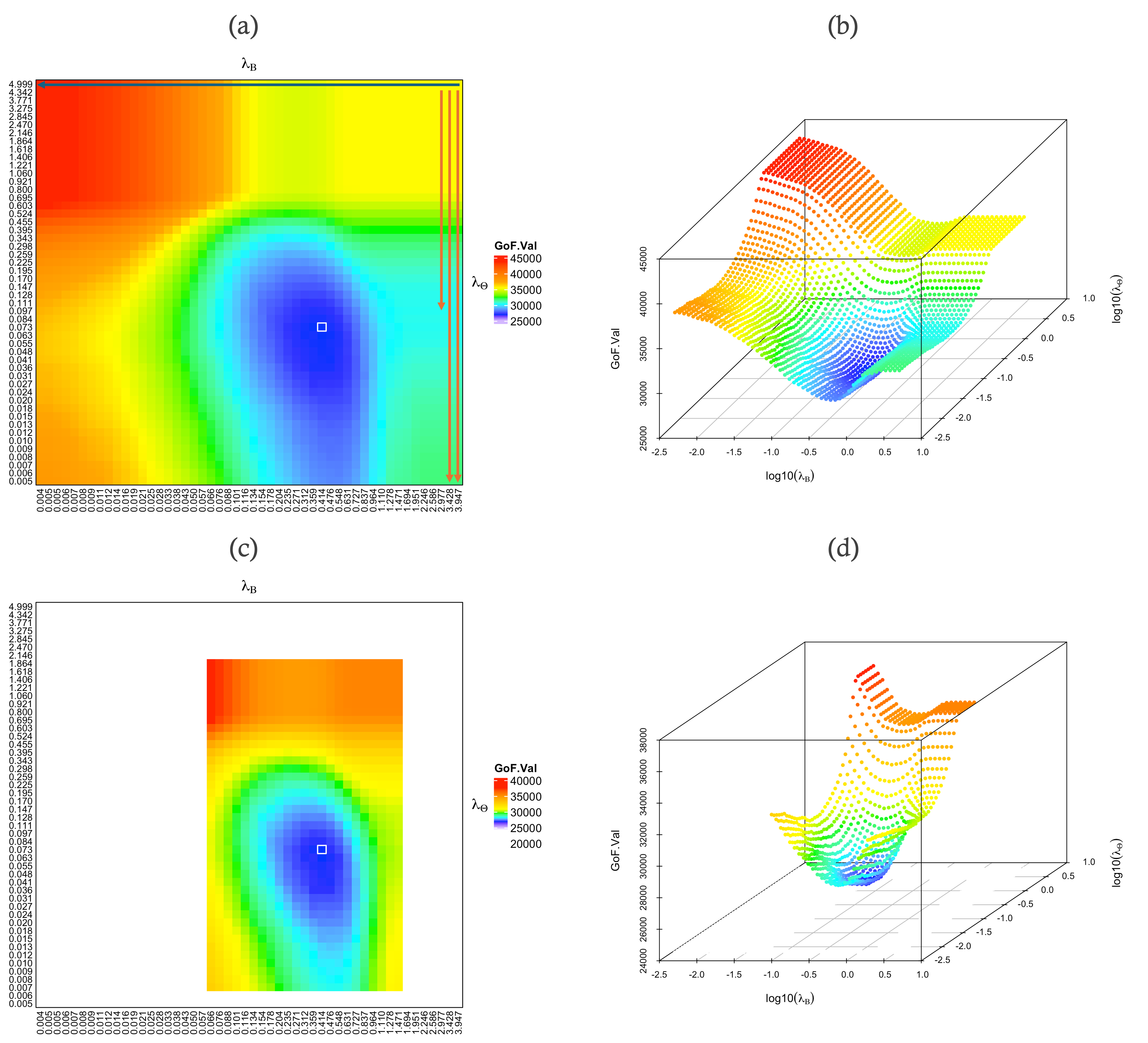}
    \caption{Visualization of a grid search over $\lambda_{\mathbf{\Theta}}$ and $\lambda_{\mathbf{B}}$, with model selection based on the BIC for goodness-of-fit. The plots are generated using the \texttt{missoNet} package's plotting function.
    (\textit{a}) Heatmap of BIC values for different combinations of $\lambda_{\mathbf{\Theta}}$ and $\lambda_{\mathbf{B}}$. The optimal solution, indicated by the lowest BIC, is marked by a white square. The dark teal arrow shows the algorithm's initial stage to estimate $\widehat{\mathbf{B}}^{(1)}$, starting with the estimates $\widehat{\mathbf{B}} = \mathbf{0}$ and $\widehat{\mathbf{\Theta}}_{\varepsilon\varepsilon} = \mathbf{I}$ from the most restrictive setting ($\lambda_{\mathbf{B}}^{\text{max}}, \lambda_{\mathbf{\Theta}}^{\text{max}}$). In this stage, $\lambda_{\mathbf{\Theta}}$ is fixed at $\lambda_{\mathbf{\Theta}}^{\text{max}}$, and $\lambda_{\mathbf{B}}$ is gradually reduced, using previous solutions as starting points. The orange lines represent multiple independent paths, each beginning with $\widehat{\mathbf{B}} = \widehat{\mathbf{B}}^{(1)}$ and $\widehat{\mathbf{\Theta}}_{\varepsilon\varepsilon} = \mathbf{I}$. These paths progressively reduce $\lambda_{\mathbf{\Theta}}$ while keeping $\lambda_{\mathbf{B}}$ fixed, utilizing warm starts and enabling parallel computation.
    (\textit{b}) 3D scatterplot of the BIC values.
    (\textit{c}) Heatmap of BIC values with local search enabled.
    (\textit{d}) 3D scatterplot of BIC values with local search enabled.}
    \label{fig:BICexample}
\end{figure}

\subsubsection{Local search}

To enhance computational efficiency, \texttt{missoNet} also includes a local search mechanism. After an initial broad search (in a coarse grid or randomly) to identify promising parameter values, the algorithm refines the search within that region for greater precision. This strategy strikes a balance between efficiency and accuracy, allowing for more targeted exploration. Users can enable local search by setting \texttt{`fast = TRUE`}. Figure \ref{fig:BICexample} (c) illustrates this targeted search approach.

\subsubsection{Parallelization}

\texttt{missoNet} supports intelligent parallelization to speed up computations, without requiring manual setup. When tuning parameters through cross-validation, the $K$ tasks for $K$ folds are executed simultaneously using available clusters or CPU cores. Alternatively, when tuning with information criteria such as BIC, each regularization path is distributed across multiple cores. Users can easily activate this feature by setting \texttt{`parallel = TRUE`}. Figure \ref{fig:BICexample} (a) highlights the parallelizable paths within the tuning process.

\subsection{Additional simulation results from Section \ref{sim_theory_verify} \label{q10Sim}}

We present some of the results of the simulation settings performed in Section \ref{sim_theory_verify} to verify the theoretical bounds in this part of the Appendix. 

     \begin{figure}[!h]
	\centering
        \includegraphics[scale=0.26]{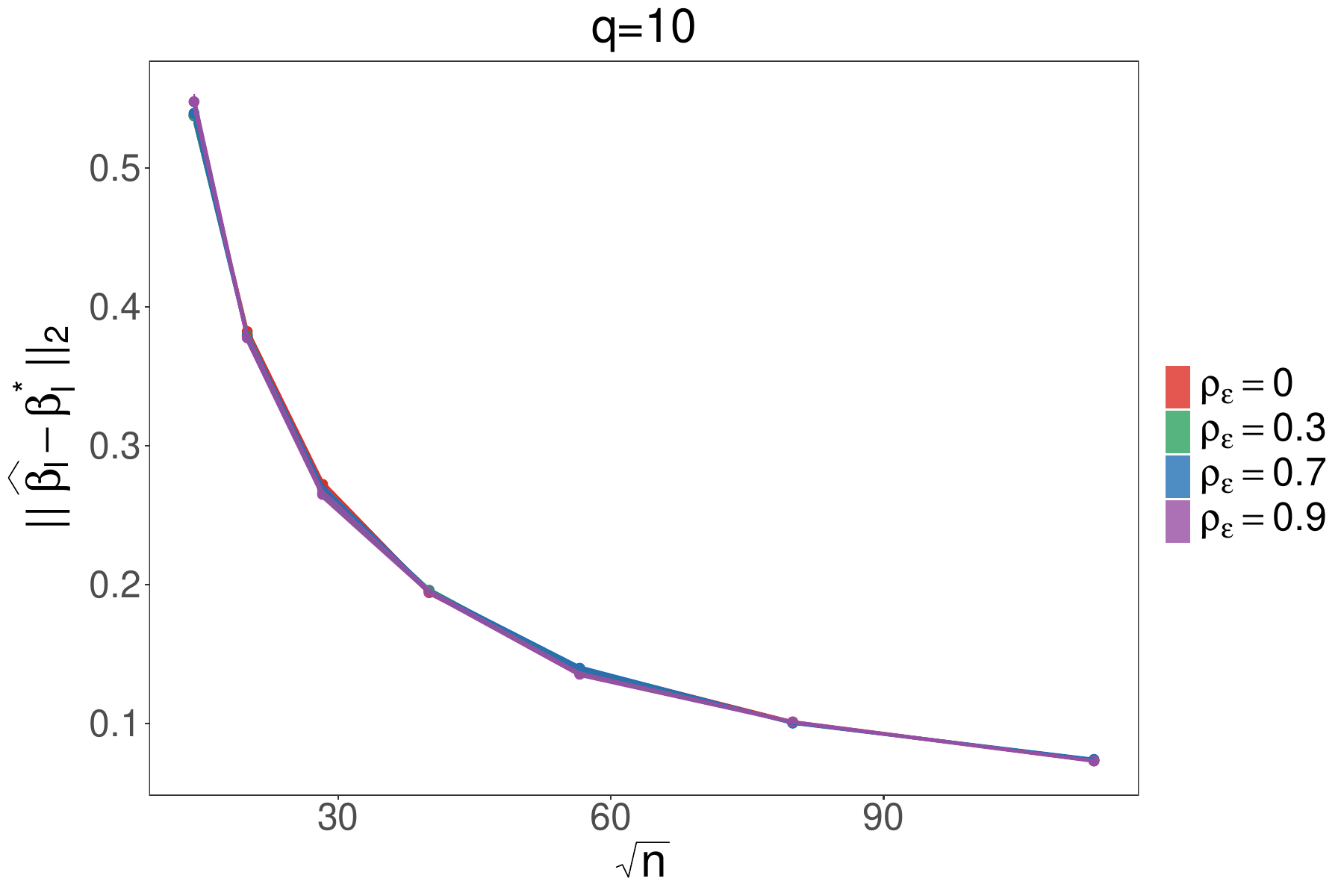}
        \includegraphics[scale=0.26]{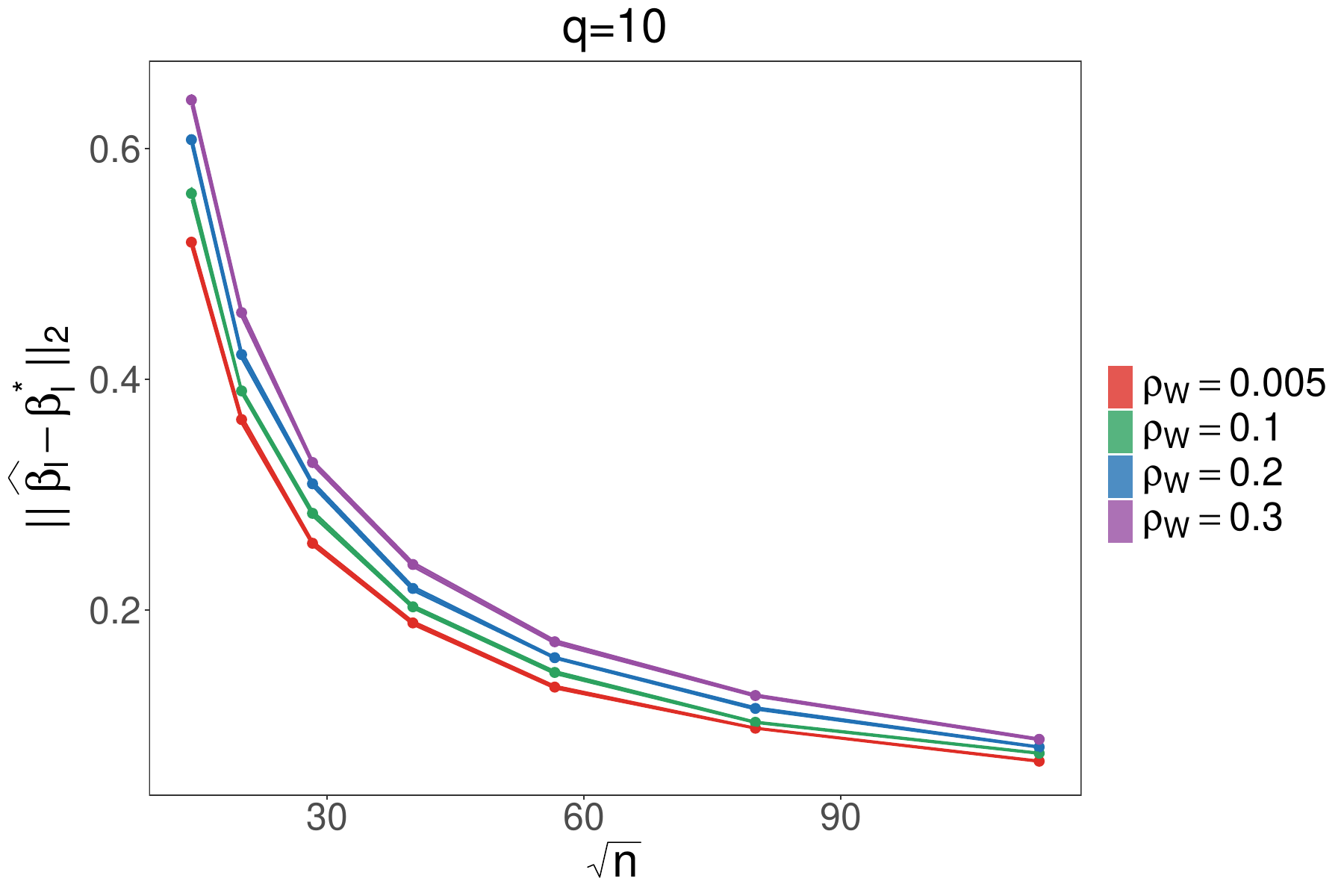}
	\caption{Scenario S1A. Plots of the average estimation error $\|\widehat{{\bbeta}_{l}}-{\bbeta}_{l}^{*}\|_{2}$  against $\sqrt{n}$, $n=200,400,800,1600,3200,6400,12800$ for the outcome dimension, $q=10$,  with the varying correlation between the errors, $\rho_{\varepsilon}$ (left panel) and the probability of being missing in the $j$th column for the
outcome $\rho_{W,j}$ (right panel).  Each point represents an average of 100 trials and the shaded regions indicate standard error for each method.} \label{fig:ERS1P1b}
\end{figure}    

\begin{figure}[!h]
	\centering
        \includegraphics[scale=0.26]{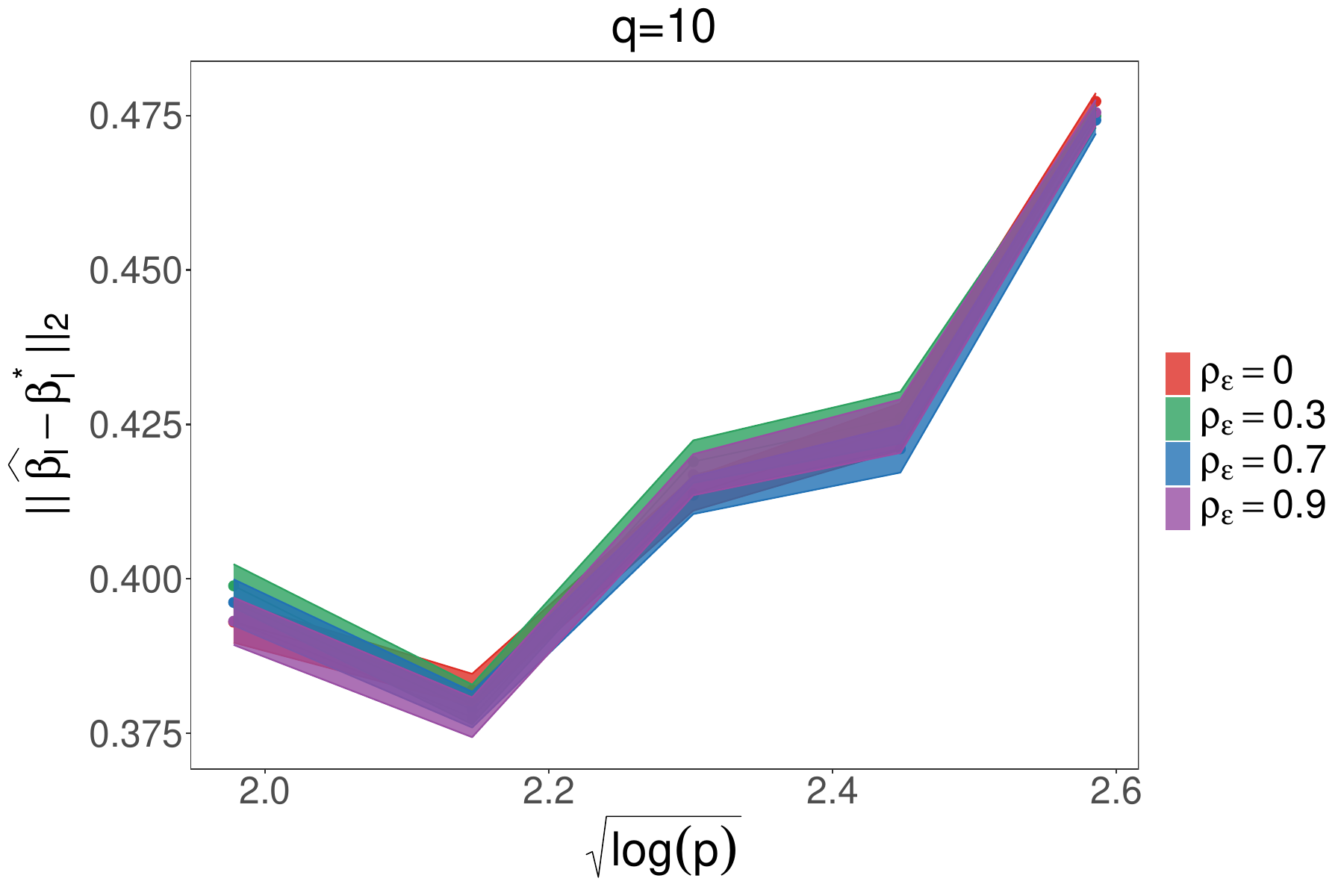}
        \includegraphics[scale=0.26]{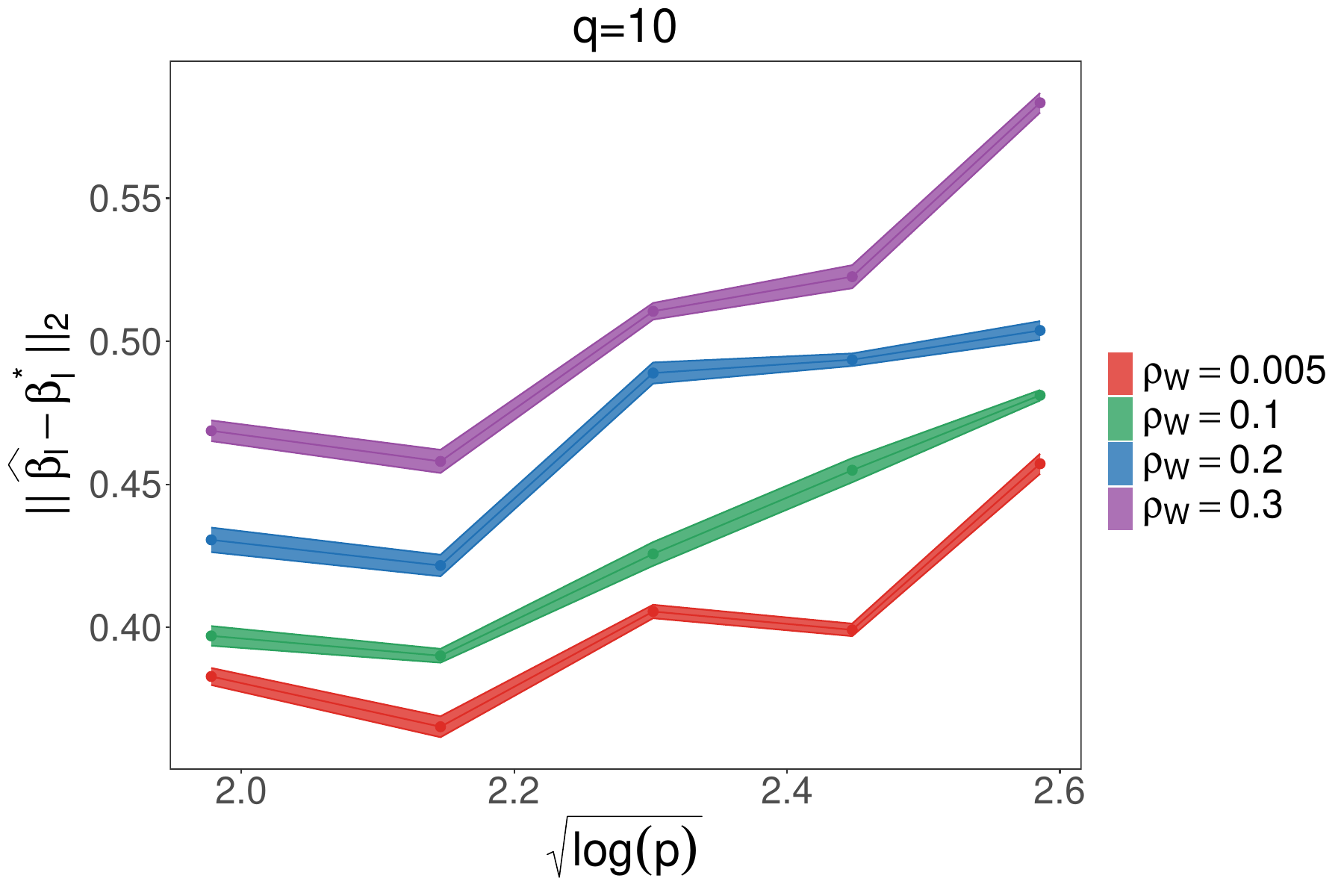}
	\caption{Scenario S1B. Plots of the average estimation error $\|\widehat{{\bbeta}_{l}}-{\bbeta}_{l}^{*}\|_{2}$ against $\sqrt{\log(p)}$, $p=50,100,200,400$ and $800$, for the outcome dimension, $q=10$,  with varying correlation between the errors, $\rho_{\varepsilon}$ (left panel) and the probability of being missing in each column of the outcomes, $\rho_{W,j}$ (right panel). Each point represents an average of 100 trials and the shaded regions indicate standard error for each method.} \label{fig:ERS1P2b}
\end{figure}   

     \begin{figure}[ht]
	\centering
        \includegraphics[scale=0.26]{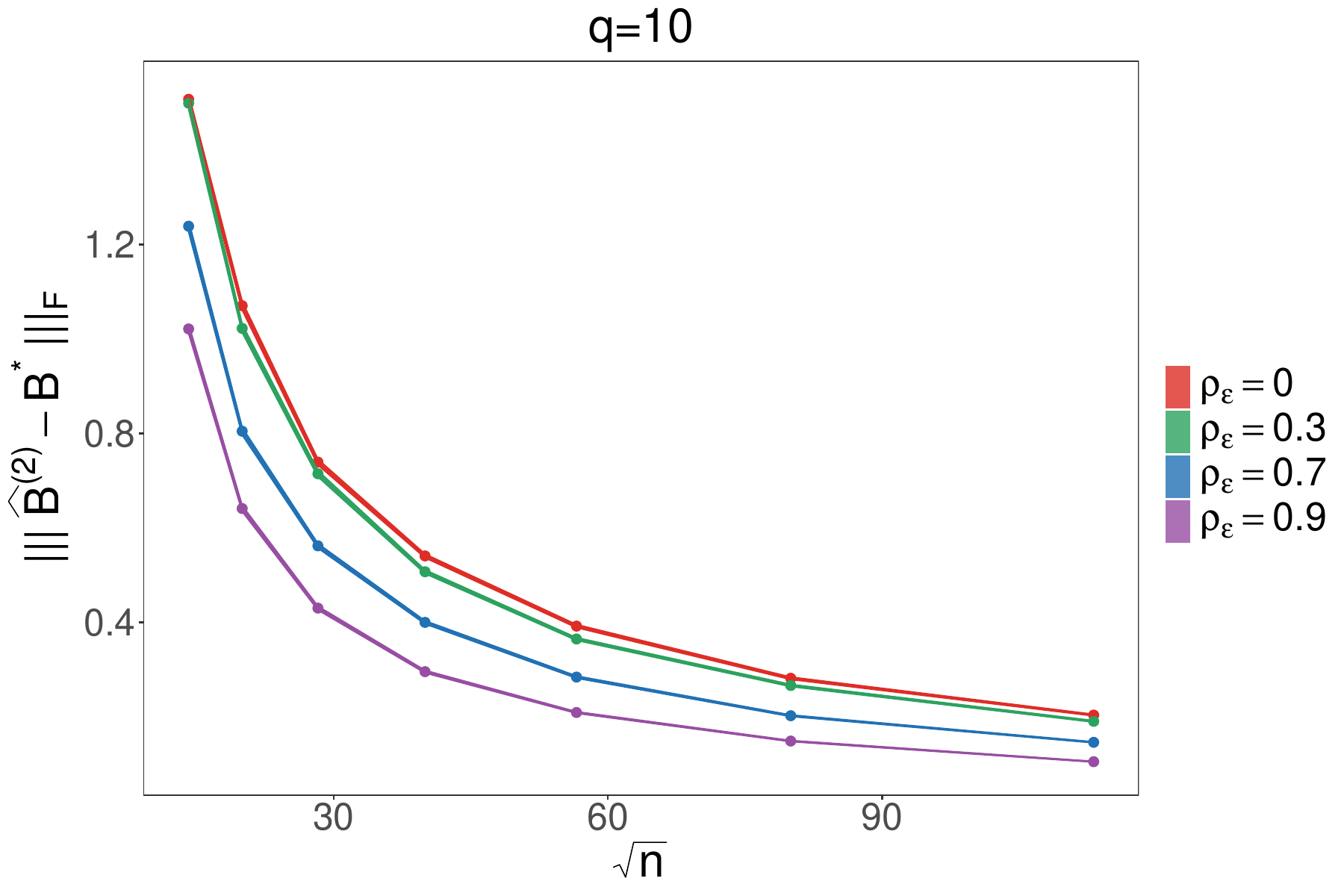}
        \includegraphics[scale=0.26]{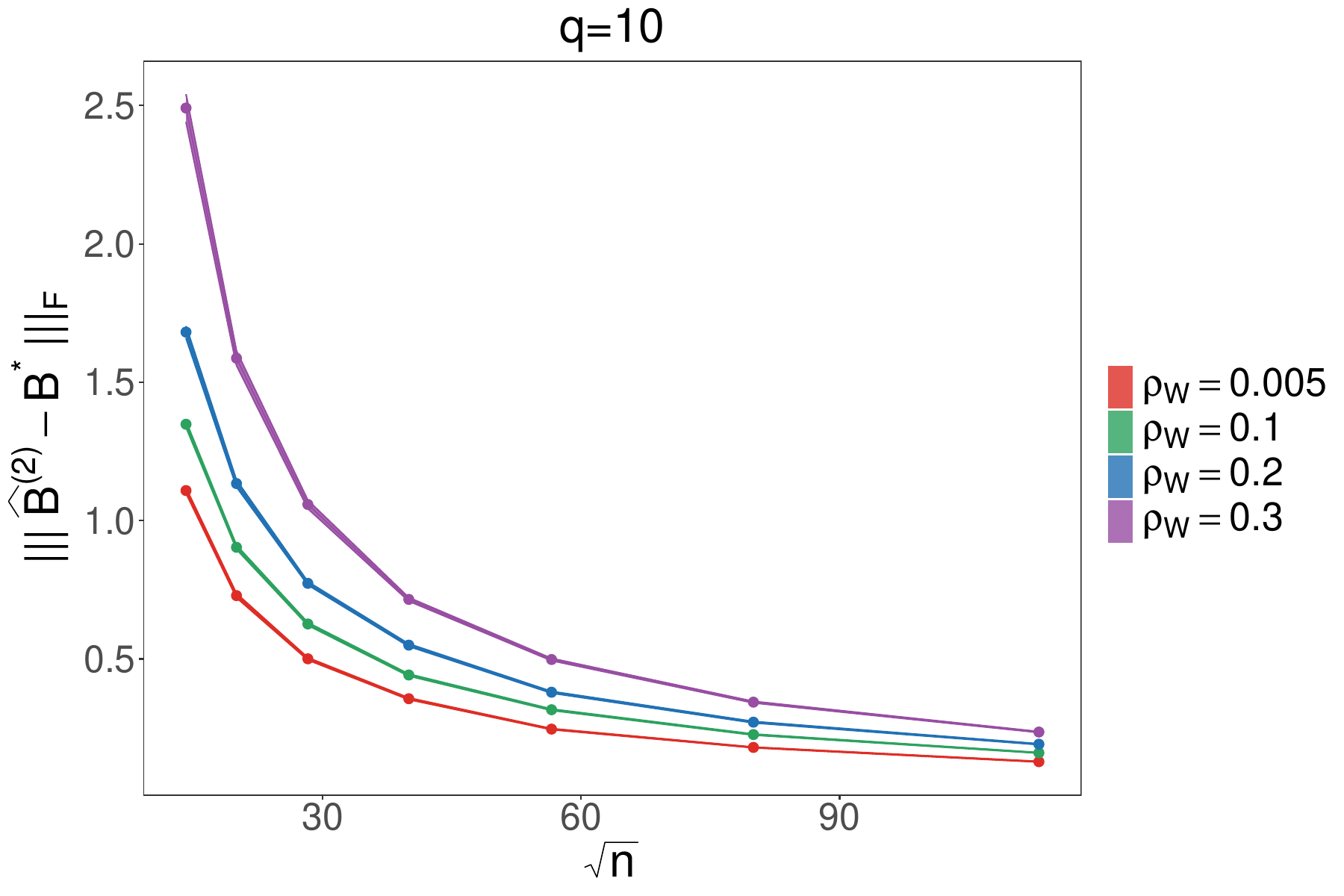}
	\caption{Scenario S3A. Plots of the average estimation error for $\bB^{*}$, $\VERT\widehat{\bB}^{(2)}-\bB^{*}\VERT_{F}$  against $\sqrt{n}$, $n=200,400,800,1600,3200,6400,12800$, for the outcome dimension, $q = 10$, with the varying correlation between the errors, $\rho_{\varepsilon}$ (left panel) and the probability of being missing in the $j$th column for the
outcome $\rho_{W,j}$ (right panel).  Each point represents an average of 100 trials and the shaded regions indicate standard error for each method.} \label{fig:ERS1P5b}
\end{figure} 

\begin{figure}[ht]
	\centering
        \includegraphics[scale=0.26]{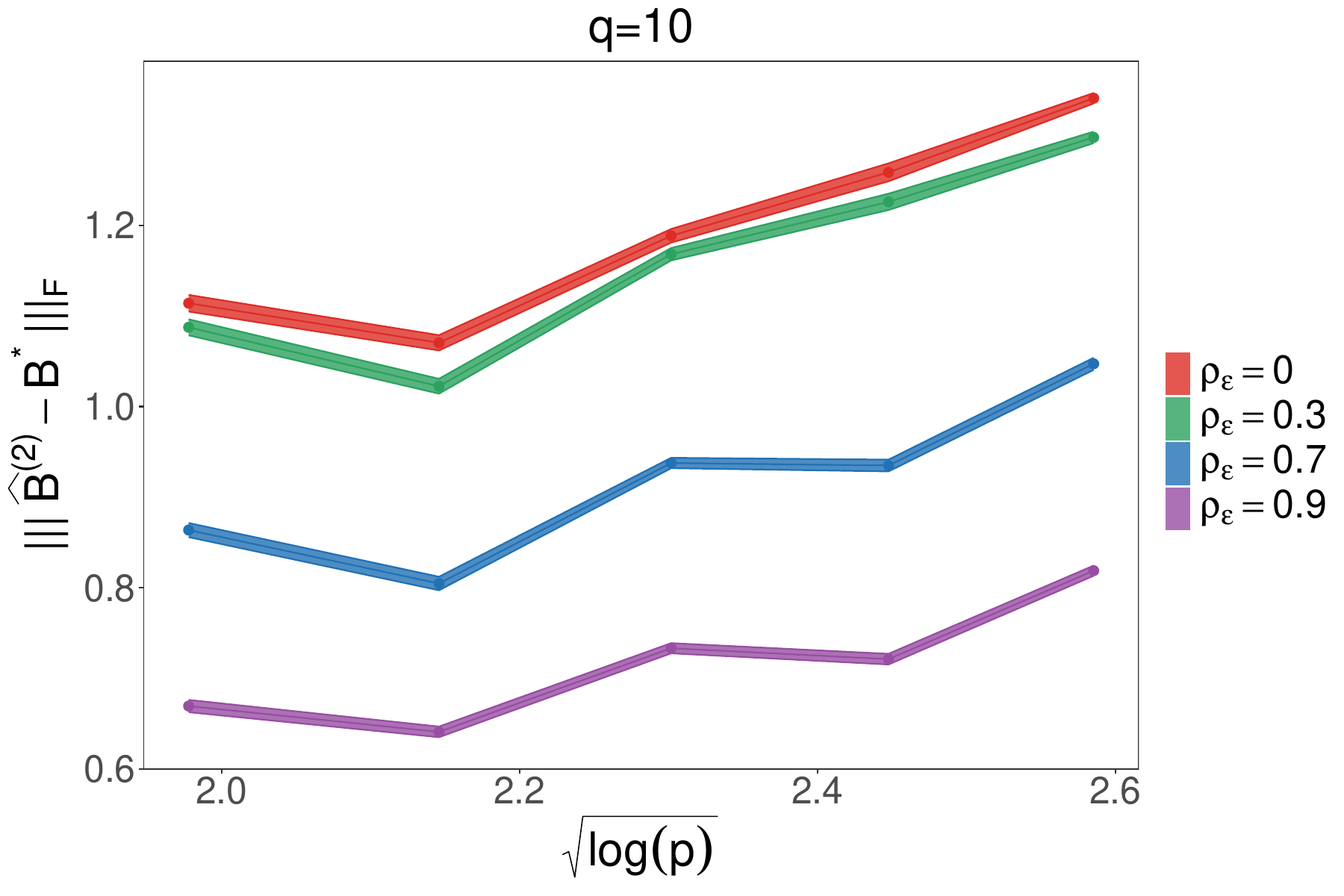}
        \includegraphics[scale=0.26]{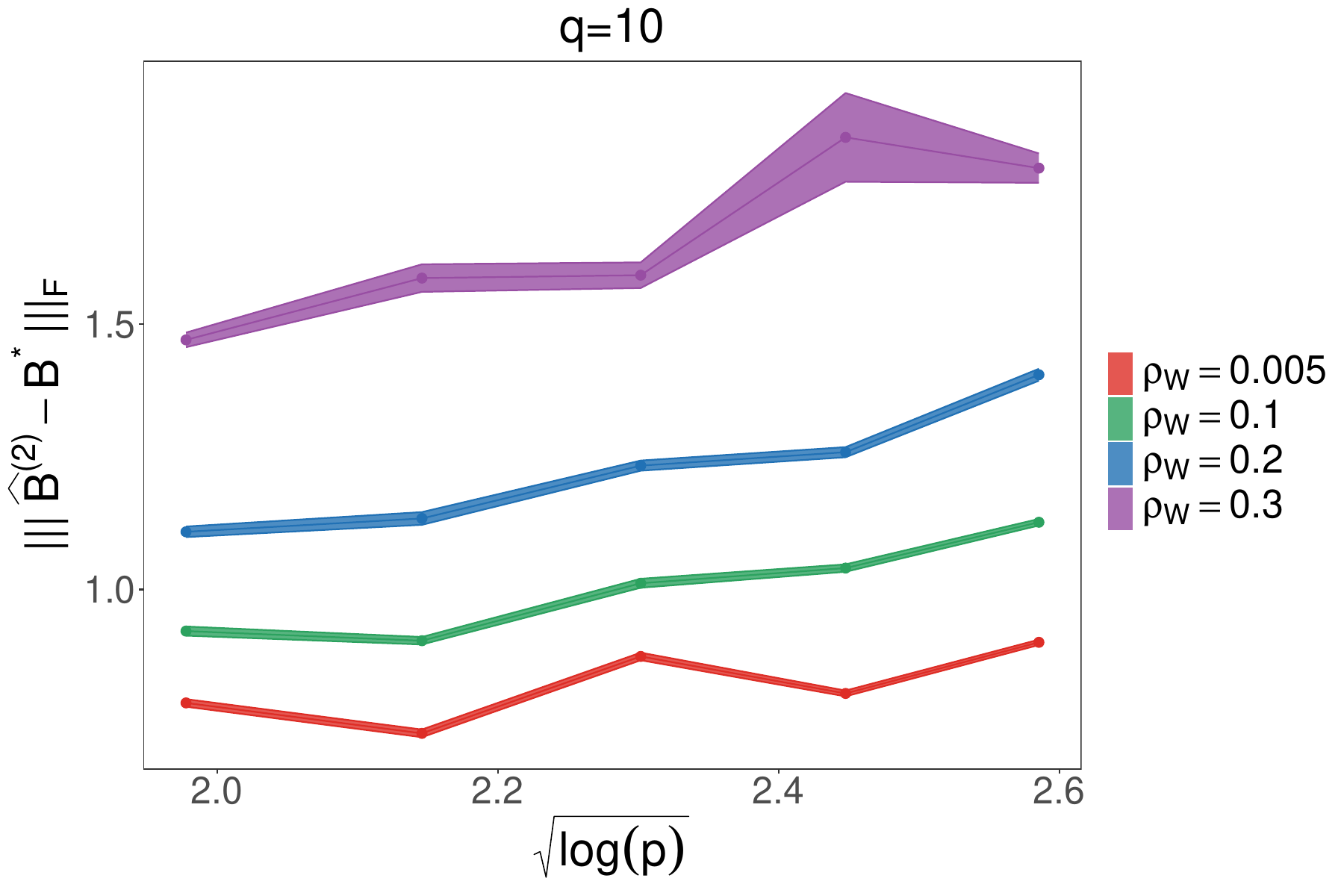}
	\caption{Scenario S3B. Plots of the average estimation error for $\bB^{*}$, $\VERT\widehat{\bB}^{(2)}-\bB^{*}\VERT_{F}$ against $\sqrt{\log(p)}$, $p=50,100,200,400$ and $800$, for the outcome dimension, $q = 10$, with varying correlation between the errors, $\rho_{\varepsilon}$ (left panel) and the probability of being missing in each column of the outcomes, $\rho_{W,j}$ (right panel). Each point represents an average of 100 trials and the shaded regions indicate standard error for each method.} \label{fig:ERS1P6b}
\end{figure}   
\begin{table}[ht]
    \centering
    \begin{minipage}{0.48\linewidth}
        \centering
        \renewcommand{\arraystretch}{1.0}
        \caption{Frobenius norm error for setups with varying $n$ and $\rho_{\varepsilon}$ with $p=100$, $q=10$, $\rho_{W,j} = 0.05$, and $s_{\max}=5$. Averages and standard errors in parentheses are based on 100 replications.}
        \begin{tabular}{cccc}
            \hline
            $n$ & $\rho_{\varepsilon}$ & $\VERT \widehat{\bB}^{(1)} - \bB^{*} \VERT_{F}$ & $\VERT \widehat{\bB}^{(2)} - \bB^{*} \VERT_{F}$ \\ \hline
            200   & 0   & 1.736 (0.015) & 1.508 (0.011) \\
                  & 0.3 & 1.736 (0.015) & 1.499 (0.012) \\
                  & 0.7 & 1.741 (0.016) & 1.239 (0.011) \\
                  & 0.9 & 1.767 (0.017) & 1.021 (0.009) \\ \hline
            400   & 0   & 1.230 (0.008) & 1.070 (0.008) \\
                  & 0.3 & 1.222 (0.010) & 1.022 (0.008) \\
                  & 0.7 & 1.219 (0.009) & 0.805 (0.007) \\
                  & 0.9 & 1.215 (0.010) & 0.641 (0.006) \\ \hline
            800   & 0   & 0.875 (0.007) & 0.739 (0.006) \\
                  & 0.3 & 0.854 (0.007) & 0.715 (0.006) \\
                  & 0.7 & 0.865 (0.008) & 0.562 (0.004) \\
                  & 0.9 & 0.852 (0.008) & 0.430 (0.004) \\ \hline
            1600  & 0   & 0.623 (0.004) & 0.541 (0.004) \\
                  & 0.3 & 0.629 (0.005) & 0.507 (0.004) \\
                  & 0.7 & 0.626 (0.004) & 0.400 (0.003) \\
                  & 0.9 & 0.625 (0.005) & 0.295 (0.003) \\ \hline
            3200  & 0   & 0.447 (0.004) & 0.392 (0.003) \\
                  & 0.3 & 0.441 (0.004) & 0.365 (0.003) \\
                  & 0.7 & 0.448 (0.005) & 0.284 (0.002) \\
                  & 0.9 & 0.434 (0.004) & 0.209 (0.001) \\ \hline
            6400  & 0   & 0.323 (0.002) & 0.281 (0.002) \\
                  & 0.3 & 0.322 (0.002) & 0.266 (0.002) \\
                  & 0.7 & 0.321 (0.002) & 0.202 (0.001) \\
                  & 0.9 & 0.323 (0.003) & 0.148 (0.001) \\ \hline
            12800 & 0   & 0.234 (0.002) & 0.203 (0.001) \\
                  & 0.3 & 0.235 (0.002) & 0.190 (0.001) \\
                  & 0.7 & 0.236 (0.002) & 0.146 (0.001) \\
                  & 0.9 & 0.233 (0.002) & 0.105 (0.001) \\ \hline
        \end{tabular}
        \label{tab:q10_rho_eps}
    \end{minipage}%
    \hfill
    \begin{minipage}{0.48\linewidth}
        \centering
        \renewcommand{\arraystretch}{1.0}
        \caption{Frobenius norm error for setups with varying $n$ and $\rho_{W,j}$ with $p=100$, $q=10$, $\rho_{\varepsilon} = 0.7$, and $s_{\max}=5$. Averages and standard errors in parentheses are based on 100 replications.}
        \begin{tabular}{cccc}
            \hline
            $n$ & $\rho_{W,j}$ & $\VERT \widehat{\bB}^{(1)} - \bB^{*}\VERT_{F}$ & $\VERT \widehat{\bB}^{(2)} - \bB^{*}\VERT_{F}$ \\ \hline
            200   & 0.005 & 1.673 (0.014) & 1.109 (0.009) \\
                  & 0.1   & 1.813 (0.018) & 1.348 (0.011) \\
                  & 0.2   & 1.967 (0.015) & 1.681 (0.022) \\
                  & 0.3   & 2.084 (0.016) & 2.491 (0.050) \\ \hline
            400   & 0.005 & 1.174 (0.011) & 0.729 (0.007) \\
                  & 0.1   & 1.258 (0.008) & 0.904 (0.007) \\
                  & 0.2   & 1.363 (0.012) & 1.134 (0.012) \\
                  & 0.3   & 1.484 (0.013) & 1.587 (0.026) \\ \hline
            800   & 0.005 & 0.829 (0.007) & 0.501 (0.004) \\
                  & 0.1   & 0.915 (0.008) & 0.627 (0.005) \\
                  & 0.2   & 0.999 (0.008) & 0.773 (0.006) \\
                  & 0.3   & 1.062 (0.007) & 1.058 (0.015) \\ \hline
            1600  & 0.005 & 0.606 (0.005) & 0.356 (0.003) \\
                  & 0.1   & 0.652 (0.005) & 0.442 (0.003) \\
                  & 0.2   & 0.706 (0.007) & 0.550 (0.004) \\
                  & 0.3   & 0.775 (0.007) & 0.715 (0.006) \\ \hline
            3200  & 0.005 & 0.427 (0.004) & 0.247 (0.002) \\
                  & 0.1   & 0.468 (0.005) & 0.317 (0.002) \\
                  & 0.2   & 0.510 (0.003) & 0.380 (0.003) \\
                  & 0.3   & 0.557 (0.005) & 0.498 (0.004) \\ \hline
            6400  & 0.005 & 0.312 (0.002) & 0.180 (0.001) \\
                  & 0.1   & 0.330 (0.002) & 0.227 (0.002) \\
                  & 0.2   & 0.369 (0.004) & 0.272 (0.002) \\
                  & 0.3   & 0.406 (0.003) & 0.344 (0.003) \\ \hline
            12800 & 0.005 & 0.220 (0.002) & 0.129 (0.001) \\
                  & 0.1   & 0.243 (0.002) & 0.161 (0.001) \\
                  & 0.2   & 0.261 (0.002) & 0.192 (0.002) \\
                  & 0.3   & 0.284 (0.003) & 0.236 (0.002) \\ \hline
        \end{tabular}
        \label{tab:q10_rhoWj}
    \end{minipage}
\end{table}

\begin{table}[ht]
    \centering
    \begin{minipage}{0.48\linewidth}
        \centering
        \renewcommand{\arraystretch}{1.0}
        \caption{Frobenius norm error for setups with varying $p$ and $\rho_{\varepsilon}$ with $n=400$, $q=10,$ $\rho_{W,j} = 0.05$ and $s_{\max}=5$. Averages and standard errors in parentheses are based on 100 replications.}
        \begin{tabular}{cccc}
            \hline
            $p$ & $\rho_{\varepsilon}$ & $\VERT \widehat{\bB}^{(1)} - \bB^{*}\VERT_{F}$ & $\VERT \widehat{\bB}^{(2)} - \bB^{*}\VERT_{F}$ \\ \hline
        50  & 0   & 1.295 (0.011) & 1.114 (0.009) \\
            & 0.3 & 1.315 (0.011) & 1.088 (0.009) \\
            & 0.7 & 1.307 (0.012) & 0.864 (0.008) \\
            & 0.9 & 1.298 (0.012) & 0.669 (0.007) \\ \hline
        100 & 0   & 1.230 (0.008) & 1.070 (0.008) \\
            & 0.3 & 1.222 (0.010) & 1.022 (0.008) \\
            & 0.7 & 1.219 (0.009) & 0.805 (0.007) \\
            & 0.9 & 1.215 (0.010) & 0.641 (0.006) \\ \hline
        200 & 0   & 1.333 (0.009) & 1.189 (0.007) \\
            & 0.3 & 1.350 (0.011) & 1.168 (0.007) \\
            & 0.7 & 1.333 (0.010) & 0.938 (0.006) \\
            & 0.9 & 1.344 (0.011) & 0.733 (0.006) \\ \hline
        400 & 0   & 1.362 (0.012) & 1.259 (0.010) \\
            & 0.3 & 1.367 (0.013) & 1.226 (0.009) \\
            & 0.7 & 1.351 (0.012) & 0.935 (0.006) \\
            & 0.9 & 1.362 (0.014) & 0.721 (0.006) \\ \hline
        800 & 0   & 1.530 (0.004) & 1.340 (0.006) \\
            & 0.3 & 1.523 (0.006) & 1.297 (0.006) \\
            & 0.7 & 1.521 (0.007) & 1.047 (0.007) \\
            & 0.9 & 1.526 (0.007) & 0.819 (0.006) \\ \hline
        \end{tabular}
        \label{tab:q10_rho_eps_1}
    \end{minipage}%
    \hfill
    \begin{minipage}{0.48\linewidth}
        \centering
        \renewcommand{\arraystretch}{1.0}
        \caption{Frobenius norm error for setups with varying $p$ and $\rho_{W,j}$ with $n=400$, $q=10$, $\rho_{\varepsilon} = 0.7$ and $s_{\max}=5$. Averages and standard errors in parentheses are based on 100 replications.}
        \begin{tabular}{cccc}
            \hline
            $p$ & $\rho_{W,j}$ & $\VERT \widehat{\bB}^{(1)} - \bB^{*} \VERT_{F}$ & $\VERT \widehat{\bB}^{(2)} - \bB^{*} \VERT_{F}$ \\ \hline
        50  & 0.005 & 1.263 (0.010) & 0.786 (0.007) \\
            & 0.1   & 1.310 (0.011) & 0.922 (0.009) \\
            & 0.2   & 1.418 (0.014) & 1.109 (0.010) \\
            & 0.3   & 1.543 (0.012) & 1.471 (0.014) \\ \hline
        100 & 0.005 & 1.174 (0.011) & 0.729 (0.007) \\
            & 0.1   & 1.258 (0.008) & 0.904 (0.007) \\
            & 0.2   & 1.363 (0.012) & 1.134 (0.012) \\
            & 0.3   & 1.484 (0.013) & 1.587 (0.026) \\ \hline
        200 & 0.005 & 1.306 (0.007) & 0.874 (0.007) \\
            & 0.1   & 1.373 (0.013) & 1.012 (0.008) \\
            & 0.2   & 1.580 (0.012) & 1.234 (0.010) \\
            & 0.3   & 1.650 (0.009) & 1.592 (0.024) \\ \hline
        400 & 0.005 & 1.279 (0.007) & 0.804 (0.005) \\
            & 0.1   & 1.459 (0.014) & 1.040 (0.007) \\
            & 0.2   & 1.586 (0.007) & 1.259 (0.010) \\
            & 0.3   & 1.683 (0.013) & 1.852 (0.084) \\ \hline
        800 & 0.005 & 1.466 (0.011) & 0.901 (0.005) \\
            & 0.1   & 1.545 (0.006) & 1.127 (0.006) \\
            & 0.2   & 1.621 (0.010) & 1.405 (0.011) \\
            & 0.3   & 1.877 (0.011) & 1.794 (0.028) \\ \hline
        \end{tabular}
        \label{tab:q10_rhoWj_1}
    \end{minipage}
\end{table}

\subsection{Details of the simulations conducted for method comparison in Section \ref{missVother}} \label{Sim:Comparison}

In this section, we present the structure of the precision matrices considered, the evaluation metrics and the tuning parameter selection process for the simulations conducted to compare missoNet with other existing methods described in Section \ref{missVother}. Additionally, we include tables summarizing the simulation results to illustrate the comparative performance.

\subsubsection{Structure of $\mathbf{\Theta}_{\varepsilon\varepsilon}^{*}$}
\label{precison_struc}

We considered two types of network structures for the true precision matrix $\mathbf{\Theta}_{\varepsilon\varepsilon}^{*}$.

\subsubsection*{Type 1: Inverse AR(1) matrix}
The first type treats $\mathbf{\Theta}_{\varepsilon\varepsilon}^{*}$ as the inverse of an autoregressive order one (AR(1)) covariance matrix, following the approach by \citet{rothman2010sparse} and \citet{yin2011sparse}. For $q$ responses, the AR(1) covariance matrix is given by:
\[
\bSigma_{\varepsilon\varepsilon}^{*} = \begin{bmatrix}
  1 & r & r^2 & \cdots & r^{q-1} \\
  r & 1 & r & \cdots & r^{q-2}\\
  r^2 & r & 1 & \cdots & r^{q-3}\\
  \vdots & \vdots & \vdots & \ddots & \vdots \\
  r^{q-1} & r^{q-2} & r^{q-3} & \cdots & 1 \\
\end{bmatrix},
\]
where $r$ is the autoregressive parameter between 0 and 1. The resulting precision matrix is sparse and tri-diagonal:
\[
(\bSigma_{\varepsilon\varepsilon}^{*})^{-1} = \frac{1}{1-r^2}\begin{bmatrix}
  1 & -r &  &  & \\
  -r & 1+r^2 & -r &  & \\
   & \ddots & \ddots & \ddots & \\
   &  & \ddots & 1+r^2 & -r \\
   &  &  & -r & 1 \\
\end{bmatrix},
\]
where unspecified elements are zero. This precision matrix represents a chain structure where each internal node is linked only to its two nearest neighbors. It is noteworthy that this structure supports well the regularization assumptions commonly used in multivariate approaches, allowing for both correlated errors and sparse conditional dependencies among the response variables.

\subsubsection*{Type 2: Composite structure}
The second type extends the structure of $\mathbf{\Theta}_{\varepsilon\varepsilon}^{*}$ by summing an inverse AR(1) matrix with a block-diagonal component, as discussed in \citet{fan2009network} and \citet{cai2011constrained}. The responses are divided into three groups of equal size ($q/3$ variables each). Each block represents one of three graph types:
\begin{itemize}
    \item \textbf{Independent}: A set of nodes without connecting edges, indicating complete independence among the variables.
    \item \textbf{Weak}: A complete graph with edge weights between 0.1 and 0.4, indicating weak interactions.
    \item \textbf{Strong}: A complete graph with edge weights between 0.5 and 1, indicating strong interactions.
\end{itemize}
To ensure positive definiteness, small positive values are added to the diagonal of $\mathbf{\Theta}_{\varepsilon\varepsilon}^{*}$. Compared to the linear scaling in the chain structure (Type 1), the composite structure (Type 2) has more nonzero parameters, scaling quadratically with $q$. Figure \ref{fig:ThetaStar_Struc} illustrates both structures.

\begin{figure}[htbp]
    \centering
    \includegraphics[width=0.93\textwidth,origin=c]{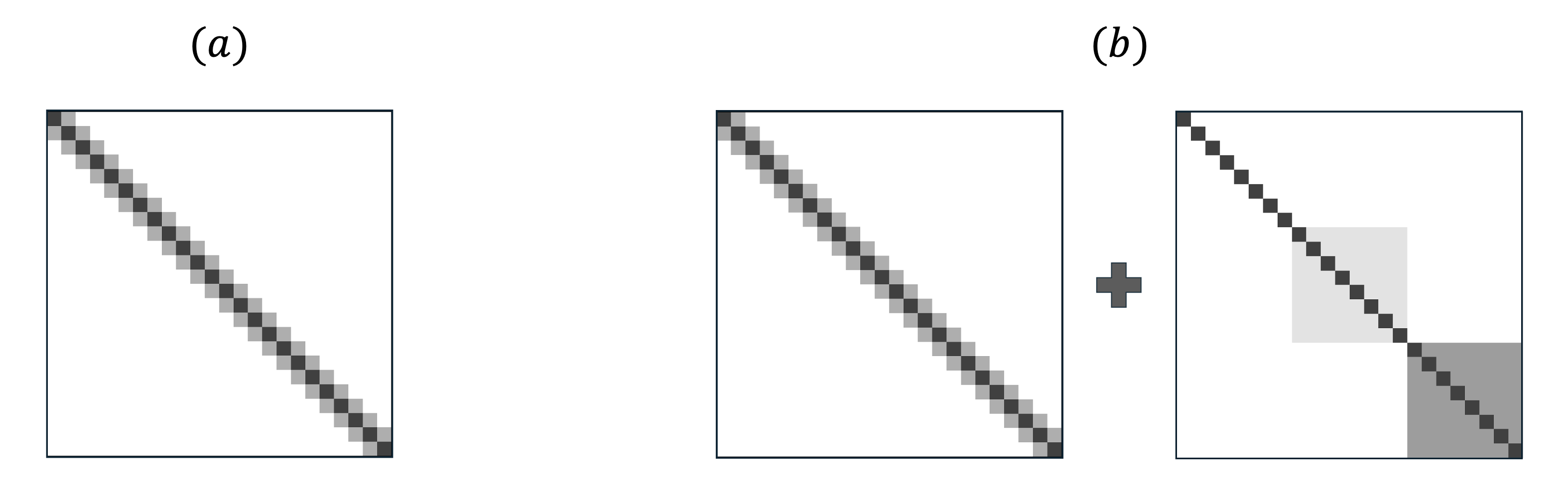}
    \caption{Structures of $\mathbf{\Theta}_{\varepsilon\varepsilon}^{*}$. Nonzero elements are shaded by relative strength. (a) Type 1: Inverse AR(1) structure. (b) Type 2: Composite structure.}
    \label{fig:ThetaStar_Struc}
\end{figure}

\subsubsection{Results from Section \ref{Comp_res} \label{Res_comp}}

\renewcommand{\arraystretch}{0.85}
\begin{xltabular}{\textwidth}{@{}>{\centering\arraybackslash}p{1.5cm} >{\arraybackslash}p{1.50cm} >{\centering\arraybackslash}p{1cm} >{\centering\arraybackslash}p{1cm} >{\centering\arraybackslash}p{1cm} >{\centering\arraybackslash}p{1cm} >{\centering\arraybackslash}p{1cm} >{\centering\arraybackslash}p{1cm} >{\centering\arraybackslash}p{1cm} >{\centering\arraybackslash}p{1cm} >{\centering\arraybackslash}p{0.9cm}@{}}
  \caption{Means and standard errors (in parentheses) of evaluation metrics for different methods under three levels of missing data, based on 100 simulation replications using Model 1 for data generation.}\\
  \toprule
  Model 1 & & \multicolumn{4}{c}{Estimation of $\mathbf{B}^{*}$} & \multicolumn{4}{c}{Estimation of $\mathbf{\Theta}_{\varepsilon\varepsilon}^{*}$} & \\
  \cmidrule(lr){3-6} \cmidrule(lr){7-10}
  Missing & Method & PE & TPR & TNR & MCC & KLL & TPR & TNR & MCC & Time \\
  \midrule
  \endfirsthead

  \multicolumn{11}{c}%
  {{Table \thetable\ Continued from Previous Page}} \\
  \toprule
  Model 1 & & \multicolumn{4}{c}{Estimation of $\mathbf{B}^*$} & \multicolumn{4}{c}{Estimation of $\mathbf{\Theta}_{\varepsilon\varepsilon}^*$} & \\
  \cmidrule(lr){3-6} \cmidrule(lr){7-10}
  Missing & Method & PE & TPR & TNR & MCC & KLL & TPR & TNR & MCC & Time \\
  \midrule
  \endhead

  \midrule \multicolumn{11}{r}{{Continued on next page}} \\
  \endfoot

  \endlastfoot

  1\% & \textit{Lasso} & 0.471 & 1 & 0.872 & 0.462 & & & & & 0.37 \\
  & [cv.min] & (0.007) & (0) & (0.003) & (0.004) & & & & & (0.00) \\
  \cline{2-6}
  & \textit{Lasso} & 1.693 & \color{Teal}0.995 & \color{Red}\textbf{0.990} & \color{Red}\textbf{0.900}\\
  & [cv.1se] & (0.020) & (0.001) & (0.000) & (0.004) \\
  \cline{2-11}
  & \textit{missoNet} & \color{Red}\textbf{0.187}\footnote{For each level of missingness, the best-performing method and tuning strategy for each metric (excluding the value 1 for simplicity) is highlighted in red. Results within $\pm$5\% of the best performance are shown in teal.} & 1 & 0.885 & 0.486 & 0.665 & 1 & \color{Teal}0.775 & \color{Teal}0.443 & 6.96 \\
  & [cv.min] & (0.004) & (0) & (0.003) & (0.006) & (0.038) & (0) & (0.014) & (0.014) & (0.04) \\
  \cline{2-10}
  & \textit{missoNet} & 0.680 & 1 & \color{Teal}0.988 & \color{Teal}0.870 & 0.732 & 1 & 0.758 & 0.416 &  \\
  & [cv.1se] & (0.020) & (0) & (0.001) & (0.005) & (0.036) & (0) & (0.016) & (0.015) \\
  \cline{2-11}
  & \textit{missoNet} & 0.351 & 1 & \color{Teal}0.971 & 0.757 & \color{Red}\textbf{0.533} & 1 & \color{Teal}0.804 & \color{Teal}0.465 & \color{Red}\textbf{1.42}\footnote{Computation time is reported in CPU seconds, with all calculations performed on a 3.5GHz Apple M2 Pro chip. Parallelization (5 cores) and local search are enabled by default for \texttt{missoNet}.} \\
  & [BIC] & (0.009) & (0) & (0.001) & (0.005) & (0.009) & (0) & (0.002) & (0.003) & (0.01) \\
  \cline{2-11}
  & \textit{MRCE} & 0.232 & 1 & 0.870 & 0.458 & 2.155 & 1 & \color{Red}\textbf{0.808} & \color{Red}\textbf{0.468} & 89.62 \\
  & [cv.min] & (0.004) & (0) & (0.003) & (0.005) & (0.179) & (0) & (0.017) & (0.016) & (0.27) \\
  \cline{2-11}
  & \textit{cglasso} & 0.321 & 1 & \color{Teal}0.955 & 0.677 & 0.884 & 1 & 0.711 & 0.376 & 18.80 \\
  & [BIC] & (0.008) & (0) & (0.001) & (0.005) & (0.014) & (0) & (0.004) & (0.003) & (0.05) \\
  \midrule

  10\% & \textit{Lasso} & 0.668 & 1 & 0.874 & 0.466 & & & & & 0.37 \\
  & [cv.min] & (0.008) & (0) & (0.003) & (0.004) & & & & & (0.00) \\
  \cline{2-6}
  & \textit{Lasso} & 2.215 & \color{Teal}0.989 & \color{Red}\textbf{0.991} & \color{Red}\textbf{0.900} & \\
  & [cv.1se] & (0.023) & (0.002) & (0.000) & (0.004) \\
  \cline{2-11}
  & \textit{missoNet} & \color{Red}\textbf{0.245} & 1 & 0.888 & 0.490 & 0.643 & 1 & \color{Red}\textbf{0.833} & \color{Red}\textbf{0.499} & 7.73 \\
  & [cv.min] & (0.004) & (0) & (0.003) & (0.005) & (0.032) & (0) & (0.009) & (0.013) & (0.03) \\
  \cline{2-10}
  & \textit{missoNet} & 0.806 & 1 & \color{Teal}0.987 & \color{Teal}0.870 & \color{Red}\textbf{0.750} & 1 & \color{Teal}0.826 & \color{Teal}0.489 &  \\
  & [cv.1se] & (0.023) & (0) & (0.001) & (0.005) & (0.035) & (0) & (0.009) & (0.013) \\
  \cline{2-11}
  & \textit{missoNet} & 0.406 & 1 & \color{Teal}0.962 & 0.708 & \color{Teal}0.773 & 1 & \color{Teal}0.798 & 0.457 & \color{Red}\textbf{1.55} \\
  & [BIC] & (0.010) & (0) & (0.001) & (0.005) & (0.011) & (0) & (0.003) & (0.003) & (0.01) \\
  \cline{2-11}
  & \textit{MRCE} & 0.452 & 1 & 0.873 & 0.466 & 2.964 & 1 & \color{Teal}0.799 & 0.457 & 80.16 \\
  & [cv.min] & (0.005) & (0) & (0.003) & (0.005) & (0.169) & (0) & (0.015) & (0.014) & (0.22) \\
  \cline{2-11}
  & \textit{cglasso} & 0.338 & 1 & \color{Teal}0.953 & 0.672 & 0.877 & 1 & 0.667 & 0.344 & 24.65 \\
  & [BIC] & (0.008) & (0) & (0.001) & (0.005) & (0.010) & (0) & (0.004) & (0.003) & (0.06) \\
  \midrule

  20\% & \textit{Lasso} & 0.984 & 1 & 0.872 & 0.462 & & & & & 0.37 \\
  & [cv.min] & (0.010) & (0.000) & (0.003) & (0.004) & & & & & (0.00) \\
  \cline{2-6}
  & \textit{Lasso} & 2.918 & \color{Teal}0.978 & \color{Red}\textbf{0.992} & \color{Red}\textbf{0.896} \\
  & [cv.1se] & (0.028) & (0.003) & (0.000) & (0.004) \\
  \cline{2-11}
  & \textit{missoNet} & \color{Red}\textbf{0.318} & 1 & 0.888 & 0.490 & \color{Red}\textbf{0.752} & 1 & \color{Red}\textbf{0.878} & \color{Red}\textbf{0.572} & 8.64 \\
  & [cv.min] & (0.006) & (0) & (0.003) & (0.005) & (0.027) & (0) & (0.005) & (0.008) & (0.03) \\
  \cline{2-10}
  & \textit{missoNet} & 0.888 & 1 & \color{Teal}0.984 & \color{Teal}0.852 & 0.854 & 1 & \color{Teal}0.871 & \color{Teal}0.559 &  \\
  & [cv.1se] & (0.025) & (0) & (0.001) & (0.006) & (0.033) & (0) & (0.005) & (0.009) \\
  \cline{2-11}
  & \textit{missoNet} & 0.522 & 1 & \color{Teal}0.958 & 0.687 & 1.204 & 1 & 0.790 & 0.453 & \color{Red}\textbf{1.70} \\
  & [BIC] & (0.015) & (0) & (0.001) & (0.005) & (0.021) & (0) & (0.003) & (0.003) & (0.01) \\
  \cline{2-11}
  & \textit{MRCE} & 0.812 & 1 & 0.874 & 0.466 & 4.194 & 1 & 0.812 & 0.471 & 73.30 \\
  & [cv.min] & (0.009) & (0) & (0.003) & (0.005) & (0.167) & (0) & (0.012) & (0.012) & (0.18) \\
  \cline{2-11}
  & \textit{cglasso} & 0.359 & 1 & \color{Teal}0.949 & 0.654 & 0.997 & 1 & 0.616 & 0.311 & 30.44 \\
  & [BIC] & (0.008) & (0) & (0.001) & (0.005) & (0.011) & (0) & (0.004) & (0.003) & (0.07) \\
  \bottomrule
\label{ch1:tab_model1}
\end{xltabular}

\renewcommand{\arraystretch}{0.85}
\begin{xltabular}{\textwidth}{@{}>{\centering\arraybackslash}p{1.5cm} >{\arraybackslash}p{1.50cm} >{\centering\arraybackslash}p{1cm} >{\centering\arraybackslash}p{1cm} >{\centering\arraybackslash}p{1cm} >{\centering\arraybackslash}p{1cm} >{\centering\arraybackslash}p{1cm} >{\centering\arraybackslash}p{1cm} >{\centering\arraybackslash}p{1cm} >{\centering\arraybackslash}p{1cm} >{\centering\arraybackslash}p{0.9cm}@{}}
  \caption{Means and standard errors (in parentheses) of evaluation metrics for different methods under three levels of missing data, based on 100 simulation replications using Model 2 for data generation.}\\
  \toprule
  Model 2 & & \multicolumn{4}{c}{Estimation of $\mathbf{B}^{*}$} & \multicolumn{4}{c}{Estimation of $\mathbf{\Theta}_{\varepsilon\varepsilon}^{*}$} & \\
  \cmidrule(lr){3-6} \cmidrule(lr){7-10}
  Missing & Method & PE & TPR & TNR & MCC & KLL & TPR & TNR & MCC & Time \\
  \midrule
  \endfirsthead

  \multicolumn{11}{c}%
  {{Table \thetable\ Continued from Previous Page}} \\
  \toprule
  Model 2 & & \multicolumn{4}{c}{Estimation of $\mathbf{B}^*$} & \multicolumn{4}{c}{Estimation of $\mathbf{\Theta}_{\varepsilon\varepsilon}^*$} & \\
  \cmidrule(lr){3-6} \cmidrule(lr){7-10}
  Missing & Method & PE & TPR & TNR & MCC & KLL & TPR & TNR & MCC & Time \\
  \midrule
  \endhead

  \midrule \multicolumn{11}{r}{{Continued on next page}} \\
  \endfoot

  \endlastfoot

  1\% & \textit{Lasso} & 1.377 & \color{Teal}0.998 & 0.948 & 0.391 & & & & & 0.87 \\
  & [cv.min] & (0.024) & (0.001) & (0.001) & (0.004) & & & & & (0.00) \\
  \cline{2-6}
  & \textit{Lasso} & 3.436 & 0.910 & \color{Teal}0.997 & 0.827\\
  & [cv.1se] & (0.040) & (0.006) & (0.000) & (0.007) \\
  \cline{2-11}
  & \textit{missoNet} & \color{Red}\textbf{0.463} & 1 & \color{Teal}0.965 & 0.468 & 4.572 & 1 & 0.822 & 0.365 & 76.71 \\
  & [cv.min] & (0.007) & (0) & (0.001) & (0.006) & (0.208) & (0) & (0.009) & (0.010) & (0.75) \\
  \cline{2-10}
  & \textit{missoNet} & 1.409 & \color{Teal}0.988 & \color{Red}\textbf{0.998} & \color{Red}\textbf{0.925} & 4.835 & 1 & 0.807 & 0.350 &  \\
  & [cv.1se] & (0.038) & (0.002) & (0.000) & (0.004) & (0.199) & (0) & (0.010) & (0.010) \\
  \cline{2-11}
  & \textit{missoNet} & 0.832 & 1 & \color{Teal}0.995 & 0.822 & \color{Red}\textbf{2.248} & 1 & \color{Red}\textbf{0.920} & \color{Red}\textbf{0.526} & \color{Red}\textbf{10.34} \\
  & [BIC] & (0.015) & (0) & (0.000) & (0.005) & (0.035) & (0) & (0.001) & (0.003) & (0.06) \\
  \cline{2-11}
  & \textit{MRCE} & 0.571 & \color{Teal}0.999 & \color{Teal}0.961 & 0.443 & 4.886 & 1 & \color{Teal}0.897 & 0.473 & 1616.33 \\
  & [cv.min] & (0.009) & (0.001) & (0.001) & (0.005) & (0.206) & (0) & (0.006) & (0.011) & (5.65) \\
  \cline{2-11}
  & \textit{cglasso} & 0.751 & \color{Teal}0.999 & \color{Teal}0.990 & 0.699 & 5.221 & 1 & 0.825 & 0.368 & 59.84 \\
  & [BIC] & (0.016) & (0.001) & (0.000) & (0.006) & (0.056) & (0) & (0.002) & (0.002) & (0.18) \\
  \midrule

  10\% & \textit{Lasso} & 1.678 & \color{Teal}0.995 & 0.947 & 0.388 & & & & & 0.86 \\
  & [cv.min] & (0.025) & (0.001) & (0.001) & (0.004) & & & & & (0.00) \\
  \cline{2-6}
  & \textit{Lasso} & 4.143 & 0.874 & \color{Teal}0.997 & 0.805 \\
  & [cv.1se] & (0.045) & (0.007) & (0.000) & (0.007) \\
  \cline{2-11}
  & \textit{missoNet} & \color{Red}\textbf{0.615} & 1 & \color{Teal}0.965 & 0.466 & 4.823 & 1 & \color{Teal}0.869 & 0.426 & 173.00 \\
  & [cv.min] & (0.009) & (0) & (0.001) & (0.005) & (0.210) & (0) & (0.006) & (0.009) & (0.88) \\
  \cline{2-10}
  & \textit{missoNet} & 1.643 & \color{Teal}0.990 & \color{Red}\textbf{0.998} & \color{Red}\textbf{0.904} & 5.237 & 1 & 0.858 & 0.409 &  \\
  & [cv.1se] & (0.045) & (0.002) & (0.000) & (0.005) & (0.220) & (0) & (0.007) & (0.009) \\
  \cline{2-11}
  & \textit{missoNet} & 1.052 & 1 & \color{Teal}0.994 & 0.792 & \color{Red}\textbf{3.205} & 1 & \color{Red}\textbf{0.917} & \color{Red}\textbf{0.519} & \color{Red}\textbf{21.65} \\
  & [BIC] & (0.022) & (0) & (0.000) & (0.006) & (0.049) & (0) & (0.001) & (0.003) & (0.51) \\
  \cline{2-11}
  & \textit{MRCE} & 0.984 & \color{Teal}0.998 & \color{Teal}0.960 & 0.442 & 6.983 & 1 & \color{Teal}0.899 & 0.478 & 1460.76 \\
  & [cv.min] & (0.014) & (0.001) & (0.001) & (0.004) & (0.183) & (0) & (0.006) & (0.009) & (5.14) \\
  \cline{2-11}
  & \textit{cglasso} & 0.820 & \color{Teal}0.998 & \color{Teal}0.989 & 0.689 & 5.328 & 1 & 0.805 & 0.348 & 66.26 \\
  & [BIC] & (0.015) & (0.001) & (0.000) & (0.007) & (0.051) & (0) & (0.002) & (0.002) & (0.17) \\
  \midrule

  20\% & \textit{Lasso} & 2.084 & \color{Teal}0.995 & 0.948 & 0.391 & & & & & 0.86 \\
  & [cv.min] & (0.022) & (0.001) & (0.001) & (0.004) & & & & & (0.00) \\
  \cline{2-6}
  & \textit{Lasso} & 4.947 & 0.818 & \color{Teal}0.997 & 0.782 \\
  & [cv.1se] & (0.045) & (0.008) & (0.000) & (0.007) \\
  \cline{2-11}
  & \textit{missoNet} & \color{Red}\textbf{0.798} & 1 & \color{Teal}0.964 & 0.457 & 4.748 & 1 & \color{Teal}0.910 & \color{Teal}0.503 & 213.50 \\
  & [cv.min] & (0.012) & (0) & (0.001) & (0.005) & (0.153) & (0) & (0.003) & (0.007) & (1.58) \\
  \cline{2-10}
  & \textit{missoNet} & 1.979 & \color{Teal}0.986 & \color{Red}\textbf{0.998} & \color{Red}\textbf{0.899} & 5.273 & 1 & \color{Teal}0.902 & 0.485 &  \\
  & [cv.1se] & (0.051) & (0.002) & (0.000) & (0.006) & (0.173) & (0) & (0.004) & (0.007) \\
  \cline{2-11}
  & \textit{missoNet} & 1.246 & 1 & \color{Teal}0.992 & 0.747 & \color{Red}\textbf{4.347} & 1 & \color{Red}\textbf{0.918} & \color{Red}\textbf{0.522} & \color{Red}\textbf{30.78} \\
  & [BIC] & (0.023) & (0) & (0.000) & (0.006) & (0.082) & (0) & (0.001) & (0.003) & (0.26) \\
  \cline{2-11}
  & \textit{MRCE} & 1.549 & \color{Teal}0.998 & \color{Teal}0.960 & 0.439 & 9.800 & 1 & \color{Teal}0.907 & \color{Teal}0.494 & 1336.81 \\
  & [cv.min] & (0.018) & (0.001) & (0.001) & (0.005) & (0.197) & (0) & (0.004) & (0.008) & (4.89) \\
  \cline{2-11}
  & \textit{cglasso} & 0.893 & \color{Teal}0.997 & \color{Teal}0.987 & 0.662 & 5.696 & 1 & 0.777 & 0.323 & 73.19 \\
  & [BIC] & (0.016) & (0.001) & (0.000) & (0.006) & (0.047) & (0) & (0.002) & (0.002) & (0.18) \\
  \bottomrule
\label{ch1:tab_model2}
\end{xltabular}

\renewcommand{\arraystretch}{0.85}
\begin{xltabular}{\textwidth}{@{}>{\centering\arraybackslash}p{1.5cm} >{\arraybackslash}p{1.50cm} >{\centering\arraybackslash}p{1cm} >{\centering\arraybackslash}p{1cm} >{\centering\arraybackslash}p{1cm} >{\centering\arraybackslash}p{1cm} >{\centering\arraybackslash}p{1cm} >{\centering\arraybackslash}p{1cm} >{\centering\arraybackslash}p{1cm} >{\centering\arraybackslash}p{1cm} >{\centering\arraybackslash}p{0.9cm}@{}}
  \caption{Means and standard errors (in parentheses) of evaluation metrics for different methods under three levels of missing data, based on 100 simulation replications using Model 3 for data generation.}\\
  \toprule
  Model 3 & & \multicolumn{4}{c}{Estimation of $\mathbf{B}^{*}$} & \multicolumn{4}{c}{Estimation of $\mathbf{\Theta}_{\varepsilon\varepsilon}^{*}$} & \\
  \cmidrule(lr){3-6} \cmidrule(lr){7-10}
  Missing & Method & PE & TPR & TNR & MCC & KLL & TPR & TNR & MCC & Time \\
  \midrule
  \endfirsthead

  \multicolumn{11}{c}%
  {{Table \thetable\ Continued from Previous Page}} \\
  \toprule
  Model 3 & & \multicolumn{4}{c}{Estimation of $\mathbf{B}^*$} & \multicolumn{4}{c}{Estimation of $\mathbf{\Theta}_{\varepsilon\varepsilon}^*$} & \\
  \cmidrule(lr){3-6} \cmidrule(lr){7-10}
  Missing & Method & PE & TPR & TNR & MCC & KLL & TPR & TNR & MCC & Time \\
  \midrule
  \endhead

  \midrule \multicolumn{11}{r}{{Continued on next page}} \\
  \endfoot

  \endlastfoot

  1\% & \textit{Lasso} & 0.971 & \color{Teal}0.997 & 0.870 & 0.460 & & & & & 0.35 \\
  & [cv.min] & (0.015) & (0.001) & (0.002) & (0.004) & & & & & (0.00) \\
  \cline{2-6}
  & \textit{Lasso} & 2.724 & 0.932 & \color{Red}\textbf{0.987} & \color{Red}\textbf{0.838} \\
  & [cv.1se] & (0.036) & (0.006) & (0.000) & (0.004) \\
  \cline{2-11}
  & \textit{missoNet} & \color{Red}\textbf{0.678} & 1 & 0.893 & 0.499 & 1.475 & 1 & 0.825 & 0.489 & 6.79 \\
  & [cv.min] & (0.011) & (0) & (0.002) & (0.005) & (0.064) & (0) & (0.013) & (0.018) & (0.03) \\
  \cline{2-10}
  & \textit{missoNet} & 1.289 & \color{Teal}0.971 & \color{Teal}0.974 & 0.762 & 1.498 & 1 & 0.809 & 0.468 &  \\
  & [cv.1se] & (0.024) & (0.004) & (0.001) & (0.006) & (0.059) & (0) & (0.013) & (0.017) \\
  \cline{2-11}
  & \textit{missoNet} & 1.225 & \color{Teal}0.969 & \color{Teal}0.976 & 0.769 & \color{Red}\textbf{0.659} & 1 & \color{Red}\textbf{0.976} & \color{Red}\textbf{0.852} & \color{Red}\textbf{0.99} \\
  & [BIC] & (0.023) & (0.004) & (0.001) & (0.005) & (0.015) & (0) & (0.001) & (0.005) & (0.01) \\
  \cline{2-11}
  & \textit{MRCE} & 0.796 & \color{Teal}0.997 & 0.877 & 0.470 & 1.203 & \color{Teal}0.987 & 0.853 & 0.527 & 71.00 \\
  & [cv.min] & (0.012) & (0.001) & (0.003) & (0.005) & (0.063) & (0.010) & (0.014) & (0.020) & (0.20) \\
  \cline{2-11}
  & \textit{cglasso} & 1.105 & \color{Teal}0.980 & \color{Teal}0.960 & 0.687 & 1.659 & 1 & 0.894 & 0.600 & 9.67 \\
  & [BIC] & (0.020) & (0.003) & (0.001) & (0.005) & (0.025) & (0) & (0.003) & (0.007) & (0.02) \\
  \midrule

  10\% & \textit{Lasso} & 1.206 & \color{Teal}0.996 & 0.872 & 0.462 & & & & & 0.35 \\
  & [cv.min] & (0.015) & (0.001) & (0.002) & (0.004) & & & & & (0.00) \\
  \cline{2-6}
  & \textit{Lasso} & 3.343 & 0.900 & \color{Red}\textbf{0.987} & \color{Red}\textbf{0.805} & \\
  & [cv.1se] & (0.042) & (0.006) & (0.000) & (0.005) \\
  \cline{2-11}
  & \textit{missoNet} & \color{Red}\textbf{0.777} & 1 & 0.894 & 0.504 & 1.501 & 1 & 0.879 & 0.572 & 7.41 \\
  & [cv.min] & (0.013) & (0) & (0.002) & (0.004) & (0.057) & (0) & (0.008) & (0.017) & (0.03) \\
  \cline{2-10}
  & \textit{missoNet} & 1.421 & \color{Teal}0.964 & \color{Teal}0.972 & 0.748 & 1.514 & 1 & 0.862 & 0.542 &  \\
  & [cv.1se] & (0.027) & (0.004) & (0.001) & (0.007) & (0.051) & (0) & (0.009) & (0.016) \\
  \cline{2-11}
  & \textit{missoNet} & 1.351 & \color{Teal}0.967 & \color{Teal}0.974 & 0.755 & \color{Red}\textbf{0.833} & 1 & \color{Red}\textbf{0.971} & \color{Red}\textbf{0.828} & \color{Red}\textbf{1.07} \\
  & [BIC] & (0.028) & (0.004) & (0.001) & (0.005) & (0.019) & (0) & (0.002) & (0.007) & (0.01) \\
  \cline{2-11}
  & \textit{MRCE} & 1.088 & \color{Teal}0.995 & 0.880 & 0.476 & 1.656 & \color{Teal}0.970 & 0.857 & 0.515 & 68.39 \\
  & [cv.min] & (0.013) & (0.001) & (0.002) & (0.004) & (0.081) & (0.015) & (0.014) & (0.019) & (0.18) \\
  \cline{2-11}
  & \textit{cglasso} & 1.158 & \color{Teal}0.980 & \color{Teal}0.958 & 0.671 & 1.682 & 1 & 0.869 & 0.555 & 12.72 \\
  & [BIC] & (0.021) & (0.003) & (0.001) & (0.005) & (0.028) & (0) & (0.004) & (0.007) & (0.03) \\
  \midrule

  20\% & \textit{Lasso} & 1.601 & \color{Teal}0.994 & 0.874 & 0.466 & & & & & 0.35 \\
  & [cv.min] & (0.018) & (0.002) & (0.002) & (0.004) & & & & & (0.00) \\
  \cline{2-6}
  & \textit{Lasso} & 4.136 & 0.859 & \color{Red}\textbf{0.988} & \color{Red}\textbf{0.798} \\
  & [cv.1se] & (0.042) & (0.008) & (0.000) & (0.006) \\
  \cline{2-11}
  & \textit{missoNet} & \color{Red}\textbf{0.941} & 1 & 0.898 & 0.511 & 1.778 & 1 & 0.896 & 0.605 & 8.37 \\
  & [cv.min] & (0.015) & (0) & (0.002) & (0.005) & (0.069) & (0) & (0.009) & (0.017) & (0.04) \\
  \cline{2-10}
  & \textit{missoNet} & 1.567 & \color{Teal}0.958 & \color{Teal}0.969 & 0.728 & 1.797 & 1 & 0.882 & 0.576 &  \\
  & [cv.1se] & (0.030) & (0.004) & (0.001) & (0.007) & (0.068) & (0) & (0.009) & (0.017) \\
  \cline{2-11}
  & \textit{missoNet} & 1.503 & \color{Teal}0.961 & \color{Teal}0.969 & 0.728 & \color{Red}\textbf{1.054} & \color{Teal}0.982 & \color{Red}\textbf{0.971} & \color{Red}\textbf{0.808} & \color{Red}\textbf{1.16} \\
  & [BIC] & (0.032) & (0.004) & (0.001) & (0.005) & (0.021) & (0.003) & (0.002) & (0.006) & (0.01) \\
  \cline{2-11}
  & \textit{MRCE} & 1.547 & \color{Teal}0.993 & 0.879 & 0.476 & 2.465 & 0.926 & 0.859 & 0.500 & 66.40 \\
  & [cv.min] & (0.014) & (0.001) & (0.003) & (0.004) & (0.106) & (0.022) & (0.012) & (0.019) & (0.18) \\
  \cline{2-11}
  & \textit{cglasso} & 1.239 & \color{Teal}0.977 & \color{Teal}0.951 & 0.647 & 1.781 & \color{Teal}0.999 & 0.841 & 0.509 & 15.74 \\
  & [BIC] & (0.024) & (0.003) & (0.002) & (0.006) & (0.028) & (0.001) & (0.004) & (0.007) & (0.03) \\
  \bottomrule
\label{ch1:tab_model3}
\end{xltabular}

\renewcommand{\arraystretch}{0.85}
\begin{xltabular}{\textwidth}{@{}>{\centering\arraybackslash}p{1.5cm} >{\arraybackslash}p{1.50cm} >{\centering\arraybackslash}p{1cm} >{\centering\arraybackslash}p{1cm} >{\centering\arraybackslash}p{1cm} >{\centering\arraybackslash}p{1cm} >{\centering\arraybackslash}p{1cm} >{\centering\arraybackslash}p{1cm} >{\centering\arraybackslash}p{1cm} >{\centering\arraybackslash}p{1cm} >{\centering\arraybackslash}p{0.9cm}@{}}
  \caption{Means and standard errors (in parentheses) of evaluation metrics for different methods under three levels of missing data, based on 100 simulation replications using Model 4 for data generation.}\\
  \toprule
  Model 4 & & \multicolumn{4}{c}{Estimation of $\mathbf{B}^{*}$} & \multicolumn{4}{c}{Estimation of $\mathbf{\Theta}_{\varepsilon\varepsilon}^{*}$} & \\
  \cmidrule(lr){3-6} \cmidrule(lr){7-10}
  Missing & Method & PE & TPR & TNR & MCC & KLL & TPR & TNR & MCC & Time \\
  \midrule
  \endfirsthead

  \multicolumn{11}{c}%
  {{Table \thetable\ Continued from Previous Page}} \\
  \toprule
  Model 4 & & \multicolumn{4}{c}{Estimation of $\mathbf{B}^*$} & \multicolumn{4}{c}{Estimation of $\mathbf{\Theta}_{\varepsilon\varepsilon}^*$} & \\
  \cmidrule(lr){3-6} \cmidrule(lr){7-10}
  Missing & Method & PE & TPR & TNR & MCC & KLL & TPR & TNR & MCC & Time \\
  \midrule
  \endhead

  \midrule \multicolumn{11}{r}{{Continued on next page}} \\
  \endfoot

  \endlastfoot

  1\% & \textit{Lasso} & 0.950 & \color{Teal}0.996 & 0.875 & 0.468 & & & & & 0.35 \\
  & [cv.min] & (0.018) & (0.001) & (0.003) & (0.005) & & & & & (0.00) \\
  \cline{2-6}
  & \textit{Lasso} & 2.710 & 0.939 & \color{Red}\textbf{0.987} & \color{Red}\textbf{0.828} \\
  & [cv.1se] & (0.036) & (0.004) & (0.001) & (0.005) \\
  \cline{2-11}
  & \textit{missoNet} & \color{Red}\textbf{0.468} & 1 & 0.884 & 0.484 & 1.331 & \color{Teal}0.960 & \color{Teal}0.904 & \color{Teal}0.799 & 8.16 \\
  & [cv.min] & (0.008) & (0) & (0.002) & (0.005) & (0.046) & (0.002) & (0.009) & (0.015) & (0.09) \\
  \cline{2-10}
  & \textit{missoNet} & 1.377 & \color{Teal}0.968 & \color{Teal}0.984 & \color{Teal}0.826 & 1.428 & \color{Teal}0.952 & \color{Teal}0.887 & 0.765 \\
  & [cv.1se] & (0.039) & (0.004) & (0.001) & (0.007) & (0.048) & (0.002) & (0.010) & (0.016) \\
  \cline{2-11}
  & \textit{missoNet} & 0.909 & \color{Teal}0.985 & \color{Teal}0.973 & 0.755 & \color{Red}\textbf{1.183} & \color{Red}\textbf{0.961} & \color{Red}\textbf{0.913} & \color{Red}\textbf{0.813} & \color{Red}\textbf{1.24} \\
  & [BIC] & (0.022) & (0.002) & (0.001) & (0.006) & (0.018) & (0.002) & (0.003) & (0.005) & (0.01) \\
  \cline{2-11}
  & \textit{MRCE} & 0.561 & \color{Teal}0.999 & 0.869 & 0.458 & 2.090 & \color{Teal}0.945 & \color{Teal}0.905 & \color{Teal}0.785 & 126.37 \\
  & [cv.min] & (0.009) & (0.001) & (0.003) & (0.005) & (0.095) & (0.002) & (0.014) & (0.017) & (0.69) \\
  \cline{2-11}
  & \textit{cglasso} & 0.727 & \color{Teal}0.995 & \color{Teal}0.947 & 0.649 & 1.564 & \color{Teal}0.951 & 0.794 & 0.647 & 14.89 \\
  & [BIC] & (0.013) & (0.001) & (0.001) & (0.005) & (0.020) & (0.002) & (0.004) & (0.005) & (0.08) \\
  \midrule

  10\% & \textit{Lasso} & 1.206 & \color{Teal}0.995 & 0.877 & 0.472 & & & & & 0.35 \\
  & [cv.min] & (0.018) & (0.002) & (0.003) & (0.005) & & & & & (0.00) \\
  \cline{2-6}
  & \textit{Lasso} & 3.351 & 0.904 & \color{Red}\textbf{0.987} & \color{Red}\textbf{0.821} & \\
  & [cv.1se] & (0.040) & (0.005) & (0.001) & (0.005) \\
  \cline{2-11}
  & \textit{missoNet} & \color{Red}\textbf{0.583} & 1 & 0.885 & 0.486 & 1.725 & \color{Teal}0.936 & \color{Red}\textbf{0.932} & \color{Red}\textbf{0.828} & 9.26 \\
  & [cv.min] & (0.010) & (0) & (0.002) & (0.004) & (0.050) & (0.003) & (0.007) & (0.014) & (0.08) \\
  \cline{2-10}
  & \textit{missoNet} & 1.538 & \color{Teal}0.962 & \color{Teal}0.980 & \color{Teal}0.800 & 1.690 & \color{Teal}0.927 & \color{Teal}0.920 & \color{Teal}0.801 \\
  & [cv.1se] & (0.042) & (0.004) & (0.001) & (0.007) & (0.043) & (0.003) & (0.007) & (0.014) \\
  \cline{2-11}
  & \textit{missoNet} & 1.042 & \color{Teal}0.985 & \color{Teal}0.965 & 0.710 & 1.678 & \color{Teal}0.928 & \color{Teal}0.899 & 0.768 & \color{Red}\textbf{1.37} \\
  & [BIC] & (0.024) & (0.002) & (0.001) & (0.006) & (0.025) & (0.003) & (0.003) & (0.006) & (0.01) \\
  \cline{2-11}
  & \textit{MRCE} & 0.856 & \color{Teal}0.997 & 0.872 & 0.464 & 2.685 & \color{Teal}0.929 & \color{Teal}0.890 & 0.755 & 108.28 \\
  & [cv.min] & (0.011) & (0.001) & (0.003) & (0.004) & (0.116) & (0.003) & (0.014) & (0.017) & (0.54) \\
  \cline{2-11}
  & \textit{cglasso} & 0.751 & \color{Teal}0.996 & 0.940 & 0.620 &  \color{Red}\textbf{1.509} & \color{Red}\textbf{0.942} & 0.758 & 0.601 & 18.27 \\
  & [BIC] & (0.014) & (0.001) & (0.002) & (0.006) & (0.019) & (0.002) & (0.005) & (0.005) & (0.08) \\
  \midrule

  20\% & \textit{Lasso} & 1.578 & \color{Teal}0.993 & 0.875 & 0.468 & & & & & 0.35 \\
  & [cv.min] & (0.018) & (0.002) & (0.003) & (0.005) & & & & & (0.00) \\
  \cline{2-6}
  & \textit{Lasso} & 4.104 & 0.865 & \color{Red}\textbf{0.988} & \color{Red}\textbf{0.798} \\
  & [cv.1se] & (0.040) & (0.007) & (0.001) & (0.006) \\
  \cline{2-11}
  & \textit{missoNet} & \color{Red}\textbf{0.723} & 1 & 0.892 & 0.499 & 2.558 & \color{Teal}0.903 & \color{Red}\textbf{0.950} & \color{Red}\textbf{0.831} & 33.37 \\
  & [cv.min] & (0.012) & (0) & (0.002) & (0.004) & (0.068) & (0.003) & (0.005) & (0.010) & (0.72) \\
  \cline{2-10}
  & \textit{missoNet} & 1.708 & \color{Teal}0.958 & \color{Teal}0.978 & \color{Teal}0.768 & 2.378 & \color{Teal}0.893 & \color{Teal}0.939 & \color{Teal}0.807 \\
  & [cv.1se] & (0.044) & (0.004) & (0.001) & (0.006) & (0.060) & (0.003) & (0.006) & (0.011) \\
  \cline{2-11}
  & \textit{missoNet} & 1.187 & \color{Teal}0.974 & \color{Teal}0.957 & 0.671 & 2.769 & \color{Teal}0.884 & 0.882 & 0.708 & \color{Red}\textbf{1.57} \\
  & [BIC] & (0.026) & (0.003) & (0.001) & (0.005) & (0.066) & (0.003) & (0.004) & (0.007) & (0.02) \\
  \cline{2-11}
  & \textit{MRCE} & 1.300 & \color{Teal}0.995 & 0.874 & 0.468 & 3.517 & \color{Teal}0.912 & 0.898 & 0.753 & 95.52 \\
  & [cv.min] & (0.014) & (0.001) & (0.003) & (0.004) & (0.107) & (0.003) & (0.013) & (0.015) & (0.43) \\
  \cline{2-11}
  & \textit{cglasso} & 0.831 & \color{Teal}0.994 & 0.939 & 0.620 & \color{Red}\textbf{1.510} & \color{Red}\textbf{0.934} & 0.710 & 0.545 & 21.55 \\
  & [BIC] & (0.014) & (0.001) & (0.002) & (0.005) & (0.016) & (0.002) & (0.005) & (0.006) & (0.08) \\
  \bottomrule
\label{ch1:tab_model4}
\end{xltabular}

\subsubsection{Metrics of evaluation \label{ME_missVother}}

We evaluated the estimated coefficient and precision matrices using several metrics, including Prediction Error (PE), Kullback-Liebler Loss (KLL), True Positive Rate (TPR), True Negative Rate (TNR), and Matthews Correlation Coefficient (MCC):
\[
    \text{PE}(\widehat{\mathbf{B}}, \mathbf{B}^{*}) = \text{Tr}[(\widehat{\mathbf{B}} - \mathbf{B}^{*})^\top \mathbf{\Sigma}^{*}_{xx} (\widehat{\mathbf{B}} - \mathbf{B}^{*})],
\]
\[
    \text{KLL}(\widehat{\mathbf{\Theta}}_{\varepsilon\varepsilon}, \mathbf{\Theta}_{\varepsilon\varepsilon}^{*}) = \text{Tr}[(\mathbf{\Theta}_{\varepsilon\varepsilon}^{*})^{-1}\widehat{\mathbf{\Theta}}_{\varepsilon\varepsilon}] - \text{logdet}[(\mathbf{\Theta}_{\varepsilon\varepsilon}^{*})^{-1}\widehat{\mathbf{\Theta}}_{\varepsilon\varepsilon}] - q,
\]
where $\mathbf{\Sigma}^*_{xx}$ represents the population covariance matrix of predictors. Additional metrics include:
\begin{align*}
\text{TPR} &= \frac{\text{TP}}{\text{TP}+\text{FN}},\indent\indent
\text{TNR} = \frac{\text{TN}}{\text{TN}+\text{FP}},\\
\text{MCC} &= \frac{\text{TP}\times \text{TN} - \text{FP}\times \text{FN}}{\sqrt{(\text{TP}+\text{FP})(\text{TP}+\text{FN})(\text{TN}+\text{FP})(\text{TN}+\text{FN})}},\nonumber
\end{align*}
where
\begin{align*}
    \text{TP} & = |\{(j,k):\mathbf{A}_{jk}\neq 0\ \text{and}\  \mathbf{A}^*_{jk}\neq 0\}|,\ \ 
    \text{TN} = |\{(j,k):\mathbf{A}_{jk}=0\ \text{and}\  \mathbf{A}^*_{jk}=0\}|,\\
    \text{FP} &= |\{(j,k):\mathbf{A}_{jk}\neq 0\ \text{and}\  \mathbf{A}^*_{jk}=0\}|,\ \ 
    \text{FN} = |\{(j,k):\mathbf{A}_{jk}=0\ \text{and}\  \mathbf{A}^*_{jk}\neq 0\}|,
\end{align*}
represent counts of correctly or incorrectly identified matrix elements for any matrix $\mathbf{A}$ and the underlying true matrix $\mathbf{A}^{*}$.

\subsubsection{Tuning parameter selection \label{Tuning_missVother}}

All methods used the same regularization paths and equivalent iteration tolerances to ensure fair comparisons. The tuning parameters for \textit{MRCE} and \textit{Lasso} were chosen via five-fold cross-validation with minimum error (cv.min), and the missing values are imputed by column means before training. For \textit{cglasso}, BIC was used as recommended in its documentation. For \texttt{missoNet}, both cross-validation and BIC were employed since both options are compatible with our approach, providing flexibility in selecting models based on predictive accuracy or parsimony. Additionally, cross-validation one-standard-error estimates (cv.1se) were reported for \texttt{missoNet} and \textit{Lasso}, which prioritize model simplicity while maintaining comparable accuracy.

\section{Supplementary Materials \label{sup_tab_S1}}

\noindent
\textbf{Supplementary Table S1}: Region-specific results obtained using cross-validation using the one-standard-error rule, following the aggregation of nearby ACPA-associated differentially methylated CpGs into candidate regions.\\
\textit{Sheet 1}: Summary of results across all regions.\\
\textit{Region 1--16}: Detailed cross-validation results for each of the 16 individual regions.\\

\noindent
The full file is provided in Excel format as \texttt{Table\_S1.xlsx}.

\bibliography{Paper}
\bibliographystyle{biom}

\end{document}